\definecolor{purple}{cmyk}{.51,.91,0,.34}
\newtheorem{theorem}{Theorem}[section]
\newtheorem{lemma}[theorem]{Lemma}
\newtheorem{proposition}[theorem]{Proposition}
\newtheorem{corollary}[theorem]{Corollary}
\theoremstyle{definition}
\newtheorem{definition}[theorem]{Definition}
\theoremstyle{remark}
\newtheorem{example}[theorem]{Example}
\newtheorem{remark}[theorem]{Remark}
\numberwithin{equation}{section}
\newcommand{\vdotsequals}{\mathrel{\setbox0=\hbox{$\equals$}\makebox[\the\wd0]{\vdots}}}
\newcommand{\textto}{\unskip${}\to{}$}
\let\rel\mathbf         %
\let\clo\mathscr        %
\let\lang\mathcal       %
\let\tup\mathbf         %
\let\equals\approx      %
\let\epsilon\varepsilon
\let\meet\wedge
\DeclareMathOperator{\E}{\mathsf{E}}
\DeclareMathOperator{\Ref}{\mathsf{R}}
\DeclareMathOperator{\Pow}{\mathsf{P}}
\DeclareMathOperator{\Pfin}{\mathsf{P_{\mathrm{fin}}}}
\DeclareMathOperator{\ER}{\mathsf{ER}}
\DeclareMathOperator{\EPfin}{\mathsf{EP_{\mathrm{fin}}}}
\DeclareMathOperator{\ERPfin}{\mathsf{ERP_{\mathrm{fin}}}}
\DeclareMathOperator{\CSP}{\text{CSP}}
\DeclareMathOperator{\PCSP}{\text{PCSP}}
\DeclareMathOperator{\Pol}{Pol}
\DeclareMathOperator{\Ham}{Ham}
\newcommand{\GLC}{\textsc{GLC}}
\newcommand{\LC}{\textsc{LC}}
\newcommand{\GLLC}{\textsc{GLLC}}
\newcommand{\LLC}{\textsc{LLC}}
\newcommand{\MC}{\textsc{MC}}
\newcommand{\PMC}{\textsc{PMC}}
\newcommand{\LMC}{\textsc{LMC}}
\newcommand{\PLMC}{\textsc{PLMC}}
\let\ALC\MC
\let\PLC\PMC
\newcommand{\vc}[1]{\mathbf{#1}}
\newcommand{\Sat}{\textsc{Sat}}
\newcommand{\nae}{\text{\rm NAE}}
\newcommand{\NAE}{\rel H}
\newcommand{\NP}{\textsf{NP}}
\newcommand{\Ptime}{\textsf{P}}
\newcommand{\NL}{\textsf{NL}}
\newcommand{\Lspace}{\textsf{L}}
\newcommand{\AC}{\textsf{AC}}
\newcommand{\OneInThree}{\rel T}
\newcommand{\ttt}[1]{\langle #1 \rangle}
\newcommand{\rrr}[1]{[#1]}
\newcommand{\abs}[1]{\mathopen|#1\mathclose|}
\newcommand{\Proj}{\clo P}
\newcommand{\proj}{\ensuremath p}
\DeclareMathOperator{\ar}{ar}
\newcommand{\yes}{{\scshape yes}}
\newcommand{\no}{{\scshape no}}
\DeclareMathOperator{\BLP}{BLP}
\DeclareMathOperator{\LP}{LP}
\DeclareMathOperator{\AIP}{AIP}
\DeclareMathOperator{\IP}{IP}
\newcommand{\conv}{{\text{conv.}}}
\newcommand{\aff}{{\text{aff.}}}
\begin{document}
\author{Libor Barto}
\address{
  Department of Algebra, Fac. Math. \& Phys., Charles University,
  Pra\-gue, Czechia
}
\email{libor.barto@gmail.com}

\author{Jakub Bulín}
\address{
  Department of Algebra, Fac. Math. \& Phys., Charles University,
  Pra\-gue, Czechia
}
\email{jakub.bulin@mff.cuni.cz}

\author{Andrei Krokhin}
\address{
  Department of Computer Science, Durham University, UK
}
\email{andrei.krokhin@durham.ac.uk}

\author{Jakub Opršal}
\address{
  Department of Computer Science, Durham University, UK
}
\email{jakub.oprsal@durham.ac.uk}

\title{Algebraic approach to promise constraint satisfaction}
\date{\today}

\keywords{constraint satisfaction, promise problem, approximation, graph colouring, polymorphism}

\thanks{
  Preliminary versions of parts of this paper were published in the proceedings of STOC 2019 and LICS 2019 \cite{BKO19,Bar19}. Libor Barto has received funding from the European Research Council (ERC) under the European Unions Horizon 2020 research and innovation programme (Grant Agreement No.\ 771005, CoCoSym).
  Jakub Bul\'in was supported by the Austrian Science Fund project P29931, the Czech Science Foundation project 18-20123S, Charles University Research Centre program UNCE/SCI/022 and PRIMUS/SCI/12.
  Andrei Krokhin and Jakub Opr\v sal were supported by the UK EPSRC grant EP/R034516/1.
  Jakub Opr\v sal has also received funding from the European Research Council (ERC) under the European Unions Horizon 2020 research and innovation programme (Grant Agreement No.\ 681988, CSP-Infinity).}

  \begin{abstract}
    The complexity and approximability of the constraint satisfaction problem (CSP) has been actively studied over the last 20 years. A new version of the CSP, the promise CSP (PCSP) has recently been proposed, motivated by open questions about the approximability of variants of satisfiability and graph colouring. The PCSP significantly extends the standard decision CSP. The complexity of CSPs with a~fixed constraint language on a~finite domain has recently been fully classified, greatly guided by the algebraic approach, which uses polymorphisms --- high-dimensional symmetries of solution spaces --- to analyse the complexity of problems. The corresponding classification for PCSPs is wide open and includes some long-standing open questions, such as the complexity of approximate graph colouring, as special cases. 
    
    The basic algebraic approach to PCSP was initiated by Brakensiek and Guruswami, and in this paper we significantly extend it and lift it from concrete properties of polymorphisms to their abstract properties. We introduce a~new class of problems that can be viewed as algebraic versions of the (Gap) Label Cover problem, and show that every PCSP with a~fixed constraint language is equivalent to a~problem of this form. This allows us to identify a~``measure of symmetry'' that is well suited for comparing and relating the complexity of different PCSPs via the algebraic approach. We demonstrate how our theory can be applied by giving both general and specific hardness/tractability results. Among other things, we improve the state-of-the-art in approximate graph colouring by showing that, for any $k\geq 3$, it is $\NP$-hard to find a~$(2k-1)$-colouring of a~given $k$-colourable graph.
  \end{abstract}

  \maketitle

  \setcounter{tocdepth}{1}
  \tableofcontents

\section{Introduction}

What kind of inherent mathematical structure makes a~computational problem tractable, i.e., polynomial time solvable (assuming $\Ptime\ne\NP$)?
Finding a~general answer to this question is one of the fundamental goals of theoretical computer science. The \emph{constraint satisfaction problem (CSP)} and its variants are extensively used towards this ambitious goal for two reasons: on the one hand, the CSP framework is very general and includes a~wide variety of computational problems, and on the other hand, this framework has very rich mathematical structure providing an excellent laboratory both for complexity classification methods and for algorithmic techniques.

The basic aim in a~CSP is to decide whether there is an assignment of values from some domain $A$ to a~given set of variables, subject to constraints on the combinations of values which can be assigned simultaneously to certain specified subsets of variables. Important variants of the CSP include counting and optimisation (both exact and approximate) and extensions of the basic framework, e.g.\ by using real-valued functions instead of relations/predicates (to specify valued constraints) or allowing global constraints (see surveys in \cite{KZ17}). 
Since the basic CSP is \NP-complete (and, for other variants, as hard as it can be) in full generality, a~major line of research in the CSP focuses on identifying tractable cases and understanding the mathematical structure enabling tractability (see \cite{KZ17}).

One particular family of CSPs that receives a~great amount of attention consists of the CSPs with a~fixed \emph{constraint language}~\cite{FV98,KZ17}, i.e., with a~restricted set of types of constraints. Since constraints are usually given by relations, a~constraint language is simply a~set $\Gamma$ of relations on a~domain $A$. The restricted CSP where only relations from $\Gamma$ can specify constraints is denoted by $\CSP(\Gamma)$. Many computational problems, including various versions of logical satisfiability, graph colouring, and systems of equations can be represented in this form~\cite{FV98,KZ17}. It is well-known \cite{FV98} that the basic CSP can be cast as a~homomorphism problem from one relational structure to another (the latter is often called a~\emph{template}), and we will use this view.
Problems $\CSP(\Gamma)$ correspond to the case when the template structure is fixed.
There is an active line of research into CSPs with infinite $A$ (see e.g. surveys \cite{Bod08,BM17,Pin15}), but throughout this paper, we assume that $A$ is finite (unless specified otherwise).

In \cite{FV98}, Feder and Vardi conjectured that for each finite constraint language $\Gamma$, the (decision) problem $\CSP(\Gamma)$ is either in \Ptime{} or \NP-complete. This conjecture inspired a~very active research programme in the last 20 years, which recently culminated in two independent proofs of the conjecture, one by Bulatov \cite{Bul17} and the other by Zhuk \cite{Zhu17} (along with similar classification results for other CSP variants, e.g.\ \cite{Bul13,BK16,KKR17,TZ16}).
All of these proofs heavily use the so-called algebraic approach to the CSP. On a~very high level, this approach uses multivariate functions that preserve relations in a~constraint language (and hence solution sets of problem instances), called \emph{polymorphisms}. Thus polymorphisms can be seen as high-dimensional ``symmetries'' of solution sets. Roughly, lack of such symmetries implies hardness of the corresponding problem, while presence of symmetries implies tractability.
This approach was started in a~series of papers by Jeavons et al., e.g.\ \cite{JCG97,Jea98}, where  the key role of polymorphisms was established.
It was then taken to a~more abstract level in \cite{BJ01,BJK05}, where an~abstract view on polymorphisms was used, through universal algebras and varieties formed by algebras --- this allowed a~powerful machinery of structural universal algebra to be applied to the CSP. Another important general methodological improvement was \cite{BOP18}, where it was shown that special equations of simple form satisfied by polymorphisms govern the complexity of CSPs. Even though \cite{BOP18} did not impact specifically on the resolution of the Feder-Vardi conjecture, it strongly influenced the present paper.

A~new extended version of the CSP, the so-called \emph{promise constraint satisfaction problem} (\emph{PCSP}), has recently been introduced \cite{AGH17,BG16a,BG18}, motivated by open problems about (in)approximability for variants of SAT and graph colouring.
Roughly, this line of research in approximability concerns finding an approximately good  solution to an instance of a~(typically hard) problem when a~good solution is guaranteed to exist (see discussion and references in \cite{BG17}). Approximation can be understood in terms of relaxing constraints, or in terms of counting satisfied/violated constraints --- in this paper, we use the former.
Specifically, in the PCSP, each constraint in an instance has two relations: a~`strict' one, and a~`relaxed' one, and one needs to distinguish between the case when an instance has a~solution subject to the strict constraints and the case when it has no solution even subject to the relaxed constraints. One example of such a~problem (beyond CSPs) is the case when the only available strict relation is the disequality on a~$k$-element set and the corresponding relaxed relation is the disequality on a~$c$-element set (with $c\ge k$) --- the problem is then to distinguish $k$-colourable graphs from those that are not even $c$-colourable. This problem (and hence the problem of colouring a~given $k$-colourable graph with $c$ colours) has been conjectured \NP-hard, but the question in full generality is still open after more than 40 years of research.
We give more examples later.
Note that if the strict form and the relaxed form for each constraint coincide, then one gets the standard CSP, so the PCSP framework greatly generalises the CSP.

The problem of systematically investigating the complexity of PCSPs (with a~fixed constraint language) was  suggested in~\cite{AGH17,BG16a,BG18}. We remark that, beyond CSPs, the current knowledge of the complexity landscape of PCSPs is quite limited, and we do not even have analogues of full classification results for graph homomorphisms \cite{HN90} and Boolean CSPs \cite{Sch78} --- which were the most important basic special cases of CSP complexity classifications that inspired the Feder-Vardi conjecture.
The quest of complexity classification of PCSPs is of great interest for a~number of reasons. It brings together two very advanced methodologies: analysing the complexity of CSPs via algebra and the approximability of CSPs via PCP-based methodology, hence the possibility of fruitful cross-fertilisation and influence beyond the broad CSP framework.
It is perfect for further exploring the thesis that (high-dimensional) symmetries of solution spaces are relevant for complexity --- which is certainly true for most CSP-related problems, but may be applicable in a~wider context.
Finally, this quest includes long-standing open problems as special cases.

\subsubsection*{Related work.}
An accessible exposition of the algebraic approach to the CSP can be found in \cite{BKW17}, where many ideas and results leading to (but not including) the resolution \cite{Bul17,Zhu17} of the Feder-Vardi conjecture are presented. The volume \cite{KZ17} contains surveys about many aspects of the complexity and approximability of CSPs.

The first link between the algebraic approach and PCSPs was found by Austrin, Håstad, and Guruswami \cite{AGH17}, and it was further developed by Brakensiek and Guruswami \cite{BG16,BG16a,BG18,BG18b}. They use a~notion of polymorphism suitable for PCSPs to prove several hardness and tractability results.  Roughly, the polymorphisms of a~PCSP (template) are multivariate functions from the domain of its `strict' relations to that of its `relaxed' relations that map each strict relation into the corresponding relaxed relation.  For example, the $n$-ary polymorphisms of the PCSP template corresponding to $k$ vs.\ $c$ graph colouring (we say polymorphisms from $\rel K_k$ to $\rel K_c$) are the homomorphisms from the $n$-th Cartesian power of $\rel K_k$ to $\rel K_c$, i.e., the $c$-colourings of $\rel K_k^n$.
It is shown in~\cite{BG18b} that the complexity of a~PCSP is fully determined by its polymorphisms --- in the sense that two PCSPs with the same set of polymorphisms have the same complexity. 

Much of the previous work on the complexity of PCSPs was focused on specific problems, especially on approximate graph and hypergraph colouring and their variants. We describe this in more detail in Examples~\ref{ex:2+eps}--\ref{ex:rainbow} in the next section.
Let us note here that, despite much effort, there is a huge gap between known algorithmic and \NP-hardness results for colouring 3-colourable graphs with $c$ colours: the best known \NP-hardness result (without additional assumptions) prior to this paper went only as far as $c=4$ \cite{KLS00,GK04},
while the best (in terms of $c$) known efficient algorithm uses roughly $O(n^{0.199})$ colours to colour an $n$-vertex 3-colourable graph~\cite{KT17}.
There are also hardness results concerning hypergraph colouring with a~super-constant number of colours, e.g.~\cite{ABP19,GHHSV17}, but problems like this do not fall directly into the framework that we consider in this paper.

We remark that appropriate versions of polymorphisms have been used extensively in many CSP complexity/approximability classifications: standard polymorphisms for decision and counting CSPs, for approximating Min CSPs and testing solutions (in the sense of property testing) \cite{BKW17,Bul13,Bul17,BK16,CVY16,DKKMMO17,DKM15,Zhu17}, fractional polymorphisms for exact optimisation problems \cite{KKR17,TZ16}, $\alpha$-approximate polymorphisms for approximating Max CSPs \cite{BR15}. In all cases, the presence of nice enough polymorphisms (of appropriate kind) leads directly to efficient algorithms, while their absence leads to hardness.
Interestingly, it was shown in \cite{BR15} that the Unique Games Conjecture is equivalent to the \NP-hardness of approximating Max CSPs beyond a~specific numerical parameter of their (nice enough) approximate polymorphisms.

\subsubsection*{Our contribution}
The main contribution of the present paper is a~new abstract algebraic theory for the PCSP. A~crucial property of polymorphisms for PCSPs is that, unlike in CSPs, they cannot be composed (as functions). The ability to compose polymorphisms to produce new polymorphisms was used extensively in the algebraic theory of CSPs. 
This could be viewed as a~serious limitation on the applicability of the algebraic approach to PCSPs. Alternatively, it might indicate that the ability to compose is not that essential, and that a~composition-free abstract algebraic theory for PCSPs (and hence for CSPs) can be developed. Our results suggest that the latter is in fact the case.

We show that certain abstract properties of polymorphisms, namely systems of minor identities (i.e., function equations of a~simple form) satisfied by polymorphisms, fully determine the complexity of a~PCSP. This shifts the focus from concrete properties of polymorphisms to their abstract properties. Systems of minor identities satisfied by polymorphisms provide a~useful measure of how much symmetry a~problem has. This measure gives a~new tool to compare/relate the complexity of PCSPs, far beyond what was available before. We envisage that our paper will bring a~step change in the study of PCSPs, similar to what \cite{BJK05, BJ01} did for the CSP. Let us explain this in some more detail.

To be slightly more technical, a~\emph{minor identity} is a~formal expression of the form
\[
  f(x_1,\dots,x_n) \equals
  g(x_{\pi(1)},\dots,x_{\pi(m)})
\]
where $f,g$ are function symbols (of arity $n$ and $m$, respectively), $x_1$, \dots, $x_n$ are variables, and $\pi\colon [m] \to [n]$ (we use notation $[n]=\{1,\ldots,n\}$ throughout). 
A~minor identity can be seen as an equation where the function symbols are the unknowns, and if some specific functions $f$ and $g$ satisfy such an identity then $f$ is called a~\emph{minor} of $g$. We use the symbol $\equals$ instead of $=$ to stress the difference between a~formal identity (i.e. equation involving function symbols) and equality of two specific functions. A~\emph{minor condition} is a~finite system of minor identities (where the same function symbol can appear in several identities). A~bipartite minor condition is one where sets of function symbols appearing on the left- and right-hand sides of the identities are disjoint. Such condition is said to be satisfied in a~set $\clo F$ of functions if it is possible to assign a~function from $\clo F$ of the corresponding arity to each of the function symbols in such a~way that all the identities are simultaneously satisfied (as equalities of functions, i.e., for all possible values of the $x_i$'s).

Informally, the main results of our new general theory state that
\begin{enumerate}
  \item[(A)] If every bipartite minor condition satisfied in the polymorphisms of (the template of) one PCSP $\Pi_1$ is also satisfied in the polymorphisms of another PCSP $\Pi_2$, then $\Pi_2$ is log-space reducible to $\Pi_1$ (see Theorem~\ref{thm:main} for a~formal statement).
  Moreover, we characterise the premise of the above claim in many equivalent ways (see Theorem~\ref{thm:minor-homomorphism-is-pp-constructibility} and Corollary~\ref{cor:pp-constructible-minor-homomorphism}).

  \item [(B)] Every PCSP $\Pi$ is log-space equivalent to the problem deciding whether a~given bipartite minor condition is satisfiable by projections/dictators or not satisfiable even by polymorphisms of $\Pi$ (see Theorem~\ref{thm:pcsp-and-lc}).
\end{enumerate}

The first of the above results establishes the key role of bipartite minor conditions satisfied in poly\-mor\-phisms --- in particular, the hardness or tractability of a~PCSP can always be explained on this abstract level, since any two PCSPs have the same complexity if their polymorphisms satisfy the same bipartite minor conditions.  Moreover, this abstract level allows one to compare any two PCSPs, even when it does not makes sense to compare their sets of polymorphisms inclusion-wise (say, because the functions involved are defined on different sets).
The second result establishes that every PCSP is equivalent to what can be viewed as an algebraic version of the Gap Label Cover problem, which is the most common starting point of PCP-based hardness proofs in the inapproximability context. Our result uses the fact the Label Cover can be naturally interpreted as the problem, which we call MC, of checking triviality of a~system of minor identities. The gap version of MC has an algebraic component in place of the quantitative gap of Gap Label Cover. In particular, result (B) can provide a~general approach to proving \NP-hardness of PCSPs --- via analysis of bipartite minor conditions satisfied by polymorphisms.

Our general algebraic reductions use constructions that resemble dictatorship tests, which are also present in many inapproximability proofs.  All known hardness results for PCSPs (possibly except \cite{Kho01,Hua13}) can be proved using our view. We show how our theory can be used to obtain general algebraic sufficient conditions for \NP-hardness of a~PCSP --- we give an example of this (Theorem~\ref{thm:no-epsilon-robust-is-hard}) that covers several cases considered in the literature before. 
Our theory can also be used to translate specific hardness results into general hardness results (which of course can then be applied in new specific cases). We demonstrate how this works and answer specific questions from \cite{BG16,BG16a} by showing, among other things, the following.

\begin{enumerate}
  \item[(C)] For any $k\ge 3$, it is \NP-hard to distinguish $k$-colourable graphs from those that are not $(2k-1)$-colourable.
  (See Theorem~\ref{thm:hardness-k-vs-2k-1}).  
\end{enumerate}

In particular, it follows that it is \NP-hard to 5-colour a 3-col\-our\-able graph.  This might seem a small step towards closing the big gap in our understanding of approximate graph colouring, but we believe that it is important methodologically and that further development of our general theory and further analysis of polymorphisms for graph colouring will eventually lead to a proof of \NP-hardness for any constant number of colours.

We prove that the approximate graph colouring problem in result (C) has less symmetry (in the sense of bipartite minor conditions) than approximate hypergraph colouring, which is known to be \NP-hard. Then our theory implies the required reduction. Our theory also allows one to rule out the existence of certain reductions --- for example, we can explain exactly how $k$ vs.\ $c=2k-1$ colouring differs from the cases $c=2k-2$ and $c=2k$, and hence why we are able to improve the result from $c=2k-2$ \cite{BG16} to $c=2k-1$ and why moving to larger $c$ requires further analysis of polymorphisms and bipartite minor conditions. 

On the tractability side, practically all conditions describing tractable cases of CSP (as well as conditions describing the power of specific important algorithms) have the form of minor identities. It follows from our results that efficient algorithms should be based on such conditions too.
We provide such characterisation of the power of three algorithms for PCSPs. The first two of these algorithms bring a~tractability result from a~CSP to PCSPs. The second and the third algorithm are related to recent tractability results from \cite{BG18b}. While the the algorithmic novelty of these results is limited, they demonstrate how the applicability of certain algorithms is characterised by minor conditions satisfied by polymorphisms.

Interestingly, all known tractability results for finite-domain PCSPs (such as those above or those from \cite{BG18,BG18b}) follow the same scheme: the PCSP is shown to be a subproblem (inclusion-wise) of a tractable CSP, possibly with an infinite domain.
We show that using infinite domain CSPs in this scheme can be necessary --- namely, a simple specific PCSP, that is known to be a~subproblem of several tractable infinite-domain CSPs \cite{BG18,BG18b}, is \emph{not} a~subproblem of any tractable finite-domain CSP (see Theorem \ref{thm:infinity-necessary}).

\subsubsection*{Subsequent work}

Two new recent results are based on the general theory presented in this paper. A dichotomy for symmetric Boolean PCSPs (generalising a classification from \cite{BG18}) was proved in \cite{FKOS19}.  It was shown in \cite{KO19} that, for any fixed 3-colourable non-bipartite graph $\rel H$, it is NP-hard to find a 3-colouring for given graph that admits a homomorphism to $\rel H$. The latter result appears in this paper for the case when $\rel H$ is a 5-cycle, but the methodology in \cite{KO19} uses the general theory from this paper, rather than being just an extension of the proof for the special case.

\subsubsection*{Further discussion}

Let us now discuss how the complexity classification quest for PCSPs compares with that for CSPs. As we said above, the gist of the algebraic approach is that lack or presence of high-dimensional symmetries determines the complexity. For (finite-domain) CSPs, there is a~sharp algebraic dichotomy: having only trivial symmetries (i.e., satisfying only those systems of minor identities that are satisfied in polymorphisms of every CSP) leads to \NP-hardness, while  any non-trivial symmetry  implies rather strong symmetry and thus leads to tractability. Moreover, the algorithms for tractable cases are (rather involved) combinations of only two basic algorithms --- one is based on local propagation \cite{BK14} and the other can be seen as a~very general form of Gaussian elimination \cite{IMMVW10}.
It is already clear that the situation is more complicated for PCSPs: there are hard PCSPs with non-trivial (but limited in some sense) symmetries, and tractable cases are more varied \cite{AGH17,BG18,BG18b,DRS05}.  This calls for more advanced methods, and we hope that our paper will provide the basis for such methods.
There is an obvious question whether PCSPs exhibit a~dichotomy as CSPs do, but there is not enough evidence yet to conjecture an answer. More specifically, it is not clear whether there is any PCSP whose polymorphisms are not limited enough (in terms of satisfying systems of minor identities) to give \NP-hardness, but also not strong enough to ensure tractability. Classifications for special cases such as Boolean PCSPs and graph homomorphisms would help to obtain more intuition about the general complexity landscape of PCSPs, but these special cases are currently open. A more detailed discussion of possible directions of further research can be found in Section~\ref{sec:conclusion}.

\subsubsection*{Organisation of this paper}

The following section contains formal definitions of the concepts introduced in the introduction as well as several concrete examples of PCSPs.

Sections~\ref{sec:algebraic-reductions} and~\ref{sec:constructions} are the core of our general theory.
Sections~\ref{sec:algebraic-reductions} introduces reductions via algebraic conditions and proves the main results of our general theory.
In Section~\ref{sec:constructions}, we describe several relational constructions behind the reduction from Section~\ref{sec:algebraic-reductions}, and provide many equivalent ways to characterise the applicability of our main reduction.

Sections~\ref{sec:hardness-from-lc} and \ref{sec:hardness-from-pcsp} are focused on the hardness results. In particular, Section~\ref{sec:hardness-from-lc} contains algebraic sufficient conditions  leading to hardness by reduction from several versions of Label Cover and  also a~sketch of a~simplified proof of \NP-hardness of approximate hypergraph colouring. Section~\ref{sec:hardness-from-pcsp} contains our result on \NP-hardness of approximate (3 vs. 5) graph colouring.

Section~\ref{sec:tractable} characterises the power of three polynomial-time algorithms for PCSPs by means of minor conditions. 

In Section~\ref{sec:fin_intract}, we use a~specific tractable PCSP to show that the only currently known approach for proving tractability of a~PCSP --- by a~natural reduction to a~tractable CSP --- must in general involve CSPs with an infinite domain.

Section~\ref{sec:wonderland} describes algebraic counterparts of the constructions introduced in Section~\ref{sec:constructions} and gives additional information that can be useful for further developing our general algebraic theory.

Section~\ref{sec:graph_coloring} further discusses the approximate graph colouring problem and provides technical results about distinguishing various special cases of this problem by means of minor conditions.

Finally, Section~\ref{sec:conclusion} summarises the results of the paper and discusses possible directions of further research. 

\begin{figure}
  \begin{tikzpicture}[every edge/.append style = {->}]
    \foreach \i in {2,...,10}
      \node (\i) at (\i,0) {\i};

    \draw (2) edge  (3);
    \draw (3) edge  (4);
    \draw (4) edge [dotted] (5);
    \draw (5) edge [dotted] (6);
    \draw (7) edge [dotted] (8);
    \draw (3) edge [bend right] (5);
    \draw (4) edge [bend left] (6);
    \draw (4) edge [bend left] (7);
    \draw (4) edge [bend left] (8);
    \draw (6) edge [bend right] (10);
    \draw (4) edge [bend left] (9);
  \end{tikzpicture}

  \caption{Graph of dependencies of sections. Dotted edges express non-essential references between sections.}
\end{figure}
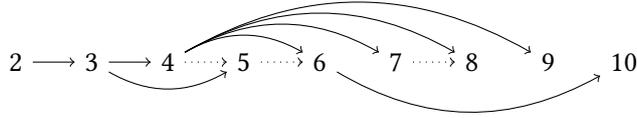   %
\section{Preliminaries}
  \label{sec:prelim}

This section contains formal definitions of the notions introduced above. For comparison, the algebraic theory behind fixed-template CSPs can be found in a~recent survey \cite{BKW17}.

\subsection{CSPs and PCSPs}

We use the notation $[n]=\{1,\ldots,n\}$ and $E_n=\{0,1,\ldots,n-1\}$ throughout the paper.

\begin{definition}
A \emph{constraint language} $\Gamma$ on a set $A$ is finite set of relations on $A$, possibly of different arity. Then $A$ is called the \emph{domain} of $\Gamma$.
\end{definition}

To work with several constraint languages (possibly on different domains), it is often convenient to fix an indexing of the relations in $\Gamma$.
This is formalised as follows.

\begin{definition}
A \emph{(relational) structure} is a tuple $\rel A =(A; R_1^{\rel A},\ldots,R_l^{\rel A})$ where each $R_i^{\rel A}\subseteq A^{\ar(R_i)}$ is a relation on $A$ of arity $\ar(R_i)\ge 1$. We say that $\rel A$ is finite if $A$ is finite. We will assume that all structures in this paper are finite unless specified otherwise.

Two structures $\rel A =(A; R_1^{\rel A},\ldots,R_l^{\rel A})$ and $\rel B =(B; R_1^{\rel B},\ldots,R_l^{\rel B})$ are called \emph{similar} if they have the same number of relations and $\ar(R_i^{\rel A})=\ar(R_i^{\rel B})$ for each~$i\in [l]$.
\end{definition}

For example, a~(directed) graph is relational structure with one binary relation. Any two graphs are similar structures.

We often use a~single letter instead of $R_i$ to denote a~relation of a~structure, e.g.\ $S^{\rel A}$ would denote a~relation of $\rel A$, the corresponding relation in a~similar structure $\rel B$, would be denoted by $S^{\rel B}$. Also, throughout the paper we denote the domains of structures $\rel A$, $\rel B$, $\rel K_n$ and so on by
$A$, $B$, $K_n$ etc., respectively.

\begin{definition}
For two similar relational structures $\rel A$ and $\rel B$, a~\emph{homomorphism} from $\rel A$ to $\rel B$ is a~map $h\colon A \rightarrow B$ such that, for each~$i$, 
\[
  \text{if } (a_1,\ldots,a_{\ar(R_i)})\in R_i^{\rel A}
  \text{ then }
  (h(a_1),\ldots,h(a_{\ar(R_i)}))\in R_i^{\rel B}.
\]
We write $h\colon \rel A\to \rel B$ to denote this, and simply $\rel A \rightarrow \rel B$ to denote that a homomorphism from $\rel A$ to $\rel B$ exists.  In the latter case, we also say that $\rel A$ maps homomorphically to $\rel B$.
\end{definition}

\begin{definition}
For a fixed structure $\rel B$, $\CSP(\rel B)$ is the problem of deciding whether a given input structure $\rel I$, similar to $\rel B$, admits a homomorphism to $\rel B$. In this case $\rel B$ is called the \emph{template} for $\CSP(\rel B)$.
\end{definition}

For example, when $\rel I$ and $\rel B$ are graphs and $\rel B=\rel K_k$ is a $k$-clique, a homomorphism from $\rel I$ to $\rel B$ is simply a $k$-colouring of $\rel I$. Then $\CSP(\rel B)$ is the standard $k$-colouring problem for graphs. 

To see how the above definition corresponds to the definition of CSP with variables and constraints, view the domain of the structure $\rel I$ as consisting of variables, relations in $\rel I$ specifying which tuples of variables the constraints should be applied to, and (the corresponding) relations in $\rel B$ as sets of allowed tuples of values.

\begin{definition}
A \emph{PCSP template} is a pair of similar structures $\rel A$ and $\rel B$ such that $\rel A \rightarrow \rel B$. The problem $\PCSP(\rel A,\rel B)$ is, given an input structure $\rel I$ similar to $\rel A$ and $\rel B$, output \yes{} if $\rel I \rightarrow \rel A$, and \no{} if $\rel I \not\rightarrow \rel B$.
\end{definition}

Note that $\PCSP(\rel A,\rel A)$ is simply $\CSP(\rel A)$.  The promise in the PCSP is that it is never the case that $\rel I \not\rightarrow \rel A$ and $\rel I \rightarrow \rel B$. Note also that the assumption $\rel A \rightarrow \rel B$ is necessary for the problem to make sense --- otherwise, the \yes- and \no-cases would not be disjoint. We define $\PCSP(\rel A,\rel B)$ as a~decision problem, but it can also be defined as a~search problem:

\begin{definition} Given two relational structures $\rel A$, $\rel B$ as above, the search version of $\PCSP(\rel A,\rel B)$ is, given an~input structure $\rel I$ that maps homomorphically to $\rel A$, \emph{find} a~homomorphism $h\colon \rel I\to \rel B$.
\end{definition}

There is an obvious reduction from the decision variant of every PCSP to its search variant.  Nevertheless, it is open whether these two problems are equivalent for all PCSP templates. Note that, for problems $\CSP(\rel A)$, these two versions are always equivalent \cite{BJK05}.
We would also like to note that all of our results are formulated and proved for the decision version. However, all of them can be generalised to the corresponding search versions of the problems.

Let us give some examples of the problems of the form $\PCSP(\rel A,\rel B)$ which are proper promise problems, i.e., not of the form $\CSP(\rel A)$. More examples (of both tractable and intractable PCSPs) can be found in \cite{BG16,BG18,BG18b}.
\begin{example}[$(2+\epsilon)$-\Sat] 
\label{ex:2+eps}
For a tuple $\mathbf t\in \{0,1\}^n$, let $\operatorname{Ham}(\mathbf t)$ be the Hamming weight of $\mathbf t$. Fix an integer $k\ge 1$. Let $\ne_2$ be the relation $\{(0,1),(1,0)\}$. Let 
\begin{align*}
\rel A &= (\{0,1\}; \{\mathbf t \in \{0,1\}^{2k+1}\mid \operatorname{Ham}(\mathbf t)\ge k\}, {\ne_2}),\\  
\rel B &= (\{0,1\}; \{\mathbf t \in \{0,1\}^{2k+1}\mid \operatorname{Ham}(\mathbf t)\ge 1\}, {\ne_2}).
\end{align*}
The problem $\PCSP(\rel A,\rel B)$ is then (equivalent to) the following variant of $(2k+1)$-\Sat: given an instance of $(2k+1)$-\Sat\ such that some assignment satisfies at least $k$ literals in each clause, find a~normal satisfying assignment.
This problem, called $(2+\epsilon)$-\Sat, was proved to be \NP-hard in \cite{AGH17}.
\end{example}

\begin{example}[1-in-3- vs. Not-All-Equal-\Sat] \label{example:1in3-nae}
Let 
\begin{align*}
\rel T &= (\{0,1\}; \{(1,0,0),(0,1,0),(0,0,1)\}),\\
\NAE_2 &= (\{0,1\}; \{0,1\}^3\setminus \{(0,0,0),(1,1,1)\}).
\end{align*}
Even though $\CSP(\rel T)$ and $\CSP(\NAE_2)$ are (well-known) $\NP$-hard problems, the problem $\PCSP(\rel T,\NAE_2)$ was shown to be in $\Ptime$ in \cite{BG18,BG18b}, along with a range of similar problems. All these results are obtained by using the same general scheme - a natural reduction to a tractable CSP, possibly with an infinite domain. We give several general results related to this scheme in Section~\ref{sec:tractable} and then show in
Section~\ref{sec:fin_intract} that any proof of tractability of $\PCSP(\rel T,\NAE_2)$ via this scheme must use a CSP with an infinite domain.
\end{example}

\begin{example}[Approximate Graph Colouring]
  \label{ex:approx-graph-col}
For $k\ge 2$, let ${\ne_k}$ denote the relation $\{(a,b)\in E_k^2\mid a\neq b\}$.  For $k \le c$, let
\begin{align*}
  \rel K_k &= (E_k; {\ne_k}),\\
  \rel K_c &= (E_c; {\ne_c}).
\end{align*}
Then $\PCSP(\rel K_k,\rel K_c)$ is the well-known {\em approximate graph colouring} problem: given a $k$-colourable graph, find a $c$-colouring. The decision version of this asks to distinguish between  $k$-colourable graphs and those that are not even $c$-colourable.
The complexity of this problem has been studied since 1976 \cite{GJ76} --- the problem has been conjectured to be \NP-hard for any fixed $3\le k\le c$, but this is still open in many cases (see \cite{BG16,GK04,Hua13,KLS00,Kho01}) if we assume only $\Ptime\ne\NP$. For $k=3$, the case $c=4$ was shown to be $\NP$-hard
\cite{KLS00,GK04}, but even the case $c=5$ was still open, and we settle it in this paper (see Section~\ref{sec:hardness-from-pcsp} for more details).
It was shown (using polymorphisms) in~\cite{BG16} that, for any $k\ge 3$, it is \NP-hard to distinguish $k$-colourable graphs from those that are not $(2k-2)$-colourable.
This gives the best known \NP-hardness results for small enough $k$, but we further improve this result in this paper. For large enough $k$, the best known $\NP$-hardness result is for $k$ vs.\ $2^{\Omega(k^{1/3})}$ colouring \cite{Hua13}.
By additionally assuming somewhat non-standard variants of the Unique Games Conjecture (with perfect completeness), \NP-hardness of all approximate graph colouring problems (with $k\ge 3$) was proved in \cite{DMR09}. 
\end{example}

\begin{example}[Approximate Graph Homomorphism]
\label{ex:approx-graph-hom}
The standard graph $k$-co\-lour\-ing problem has a~natural generalisation to graph homomorphisms. For a~fixed (undirected) graph $\rel H$, the problem $\CSP(\rel H)$ is, given a graph $\rel I$ (which can be assumed to be undirected), decide whether $\rel I\rightarrow \rel H$. This problem is often called $\rel H$-colouring. It has been extensively studied, see, e.g.~\cite{HN90,HN04,Lar17}. A~well-known result by Hell and Nešetřil~\cite{HN90} states that $\rel H$-colouring is solvable in polynomial time if $\rel H$ is bipartite or has a loop, and it is $\NP$-complete otherwise. A natural generalisation to the promise setting, i.e., problems $\PCSP(\rel G,\rel H)$, was suggested in \cite{BG18b}. It was conjectured there that the problem is \NP-hard for any non-bipartite loopless undirected $\rel G,\rel H$ with $\rel G \rightarrow\rel H$. Even the case $\rel G=\rel C_5$ (5-cycle), $\rel H=\rel K_3$ was mentioned in~\cite{BG18b} as an intriguing open problem, and we settle it in this paper.
See Section~\ref{sec:hardness-from-pcsp} for more details.
\end{example}

\begin{example}[Approximate Hypergraph Colouring] \label{ex:hypergraph-colouring}
This problem is similar to the previous one, but uses the ``not-all-equal'' relation 
$\nae_k=E_k^3\setminus \{(a,a,a)\mid a\in E_k\}$ instead of $\neq_k$, and similarly for $c$, i.e., we are talking about $\PCSP(\NAE_k,\NAE_c)$ where
\begin{align*}
  \NAE_k &= (E_k; {\nae_k}),\\
  \NAE_c & =(E_c; {\nae_c}).
\end{align*}
A colouring of a~hypergraph is an assignment of colours to its vertices that leaves no hyperedge monochromatic.
Thus, in (the search variant of) this problem one needs to find a $c$-colouring for a given $k$-colourable 3-uniform hypergraph.
This problem has been proved to be \NP-hard for any fixed $2\le k\le c$ \cite{DRS05}.
\end{example}

\begin{example}[Strong vs. Normal Hypergraph Colouring]
Let $k,c\ge 2$, and let
\begin{align*}
\rel A &= (E_{k+1}; \{(a_1,\ldots,a_k)\in E_{k+1}^k\mid a_i\ne a_j \mbox{ for all } i,j\}),\\
\rel B &= (E_c; \{(a_1,\ldots,a_k)\in E_c^k\mid a_i\ne a_j \mbox{ for some } i,j\}).
\end{align*}
Then $\PCSP(\rel A,\rel B)$ is the problem of distinguishing $k$-uniform hypergraphs
that admit a strong $(k+1)$-colouring (i.e., one without repetition of colours in any hyperedge) from those that do not admit a normal $c$-colouring. It was conjectured in \cite{BG16} that this problem is $\NP$-hard for all $(k,c)\ne (2,2)$, and some special cases ($k=3,4$ and $c=2$) were settled in that paper, but this conjecture remains wide open.
\end{example}

\begin{example}[Rainbow vs. Normal Hypergraph Colouring]
\label{ex:rainbow}
Let $k,q,c$ be positive integers with $k\ge q\ge 2$ and $c\ge 2$, and let
\begin{align*}
\rel R_{k,q} &= (E_q; \{(a_1,\ldots,a_k)\in E_q^k\mid \{a_1,\ldots,a_k\}=E_q\}),\\
\rel H_{k,c} &= (E_c; \{(a_1,\ldots,a_k)\in E_c^k\mid a_i\ne a_j \mbox{ for some } i,j\}).
\end{align*}
In $\PCSP(\rel R_{k,q},\rel H_{k,c})$, one is given a $k$-uniform hypergraph which has a~$q$-colouring such that all colours appear in each hyperedge, and one needs to find a normal $c$-colouring. This problem is known to be in \Ptime{} for $k=q$ and $c=2$; a~randomised algorithm for it can be found in~\cite{McD93}, and a deterministic algorithm due to Alon (unpublished) is mentioned in~\cite{BG18}.  $\PCSP(\rel R_{k,q},\rel H_{k,c})$ is \NP-hard if $2\le q\le\lfloor k/2\rfloor$ \cite{GL17} or if $2\le q\le k-2\lfloor\sqrt{k}\rfloor$ and $c=2$ \cite{ABP18}.  Further variations of such PCSPs were considered in \cite{ABP18,GL17}.
\end{example}

\subsection{Polymorphisms}

We now proceed to define polymorphisms, which are the main algebraic technical tools 
used in the analysis of CSPs.

\begin{definition}\label{def:nth-power}
The \emph{$n$-th power} of $\rel A$ is the structure $\rel A^n=(A^n; R_1^{\rel A^n},\ldots,R_l^{\rel A^n})$ whose relations are defined as follows: for every $\ar(R_i)\times n$ matrix $M$ such that all columns of $M$ are in $R_i^{\rel A}$, consider the rows of $M$ as elements of $A^n$ and put this tuple in $R_i^{\rel A^n}$.
\end{definition}

\begin{definition}
Given two similar relational structures $\rel A$ and $\rel B$, an $n$-ary \emph{polymorphism}\footnote{Called \emph{weak polymorphism} in \cite{AGH17,BG16a}.} \emph{from} $\rel A$ \emph{to} $\rel B$ is a~homomorphism from $\rel A^n$ to $\rel B$.
To spell this out, a polymorphism is a~mapping $f$ from $A^n$ to $B$ such that, for each $i\le l$ and all tuples $(a_{11},\dots,a_{\ar(R_i)1})$, \dots, $(a_{1n},\dots,a_{\ar(R_i)n}) \in R_i^{\rel A}$, we have
\[
  ( f( a_{11},\dots,a_{1n} ), \dots,
  f( a_{\ar(R_i)1},\dots,a_{\ar(R_i)n} ) ) \in R_i^{\rel B}
.\]
We denote the set of all polymorphisms from $\rel A$ to $\rel B$ by $\Pol(\rel A,\rel B)$, and we write simply $\Pol(\rel A)$ for $\Pol(\rel A,\rel A)$.
\end{definition}

For the case $\rel A=\rel B$, this definition coincides with the standard definition of a~polymorphism of a relational structure $\rel A$ (see, e.g., \cite[Section 4]{BKW17}).

\begin{definition}
A {\em projection} (also known as {\em dictator}\footnote{Projection is the standard name for these objects in universal algebra, while dictator is the standard name in approximability literature.}) on a set $A$ is an operation $p_i^{(n)}\colon A^n\rightarrow A$ of the form $p_i^{(n)}(x_1,\ldots,x_n)=x_i$.
\end{definition}

It is well-known and easy to see that every projection on $A$ is a~polymorphism of every relational structure on $A$. We will sometimes write simply $p_i$ when the arity $n$ of a projection is clear from the context.

\begin{example} \label{ex:poly-of-1in3-nae}
Consider the structures $\rel T$ and $\rel H_2$ from Example~\ref{example:1in3-nae}.
It is well-known and not hard to verify that 
\begin{itemize}
  \item $\Pol(\rel T)$ consists of all projections on $\{0,1\}$, and 
  \item $\Pol(\rel H_2)$ consists of all operations of the form $\pi(p_i^{(n)})$ where $\pi$ is a permutation on $\{0,1\}$ and $p_i^{(n)}$ is a projection.
\end{itemize}
However, $\Pol(\rel T,\rel H_2)$ contains many functions that are not like projections. For example, for any $k\ge 1$, define $f_k\colon\{0,1\}^{3k-1}\rightarrow \{0,1\}$ so that  $f_k(\mathbf t)=1$ if $\Ham(\mathbf t)\ge k$ and $f_k(\mathbf t)=0$ otherwise. It is easy to see that $f_k\in \Pol(\rel T,\rel H_2)$. Indeed, if $M$ is a~$3\times (3k-1)$ matrix whose columns come from the set $\{(1,0,0),(0,1,0),(0,0,1)\}$ then some row of $M$ contains strictly fewer than $k$ 1's and some other row contains at least $k$ 1's. Therefore, by applying $f_k$ to the rows of $M$, one obtains a tuple in $\{0,1\}^3\setminus \{(0,0,0),(1,1,1)\}$, as required.
\end{example}

\begin{example}
Recall Example~\ref{ex:approx-graph-col}. It is easy to check that, for any $n\ge 1$, the $n$-ary functions from $\Pol(\rel K_k,\rel K_c)$ are simply the $c$-colourings of $\rel K_k^n$, the $n$-th power of $\rel K_k$ (in the sense of Definition~\ref{def:nth-power}).
\end{example}

Unlike polymorphisms of a~single relational structure $\rel A$,  the set $\Pol(\rel A,\rel B)$ is not closed under composition --- for example, if $f(x,y,z)$ and $g(x,y)$ are in $\Pol(\rel A,\rel B)$, then $f(g(x,w),y,z)$ is not necessarily there. In general, the composition is not always well-defined, and even when it is (e.g.\ when $\rel A$ and $\rel B$ have the same domain), $f(g(x,w),y,z)$ may not be in $\Pol(\rel A,\rel B)$. 
However, it is always closed under taking minors.

\begin{definition} An~$n$-ary function $f\colon A^n\to B$ is called a~\emph{minor} of an $m$-ary function $g\colon A^m \to B$ given by a~map $\pi \colon [m] \to [n]$ if
\[
  f(x_1,\dots,x_n) = g(x_{\pi(1)},\dots,x_{\pi(m)})
\]
for all $x_1,\dots,x_n \in A$.
\end{definition}
Alternatively, one can say that $f$ is a~minor of $g$ if it is obtained from $g$ by identifying variables, permuting variables, and introducing dummy variables.

\begin{definition}
  Let $\clo O(A,B) = \{f\colon A^n\rightarrow B\mid n\ge 1\}$. A~\emph{(function) minion} 
  \footnote{What we define as a~function minion was called \emph{minor closed class} in \cite{Pip02}, and \emph{clonoid} in \cite{AM16}. Albeit clonoids are usually also required to be closed under composition with a~given clone from the outside.}
  $\clo M$ on a pair of sets $(A,B)$ is a~non-empty subset of $\clo O(A,B)$ that is closed under taking minors. For fixed $n\ge 1$, let $\clo M^{(n)}$ denote the set of $n$-ary functions from $\clo M$. Unless stated otherwise, we assume that all minions are defined on finite sets.
\end{definition}

We remark that clones have been used extensively in the algebraic theory of CSP --- a clone is simply a minion on $(A,A)$ that is closed under composition and contains all projections. For any structure $\rel A$, $\Pol(\rel A)$ is a clone. For more detail, see~\cite{BKW17}.

We now introduce one of the central notions of this paper; it generalises the notion of \emph{h1 clone homomorphisms} from \cite{BOP18}.

\begin{definition}\label{def:minor-hom} 
  Let $\clo M$ and $\clo N$ be two minions (not necessarily on the same pairs of sets). A~mapping $\xi\colon \clo M \to \clo N$ is called a~\emph{minion homomorphism} if 
  \begin{enumerate}
    \item it preserves arities, i.e., $\ar(g)=\ar(\xi(g))$ for all $g\in \clo M$, and 
    \item it preserves taking minors, i.e., for each $\pi\colon [m] \to [n]$ and each $g\in \clo M^{(m)}$ we have
    \[
      \xi (g)(x_{\pi(1)},\dots,x_{\pi(m)}) = \xi (g(x_{\pi(1)},\dots,x_{\pi(m)}))
    .\]
  \end{enumerate}
\end{definition}

Item (2) above can also be interpreted as `preserving satisfaction of minor identities', i.e.,  if 
\(
  f(x_1,\dots,x_n) =
  g(x_{\pi(1)},\dots,x_{\pi(m)})
\) for some $f \in \clo M^{(n)}$, $g\in \clo M^{(m)}$, and $\pi\colon [m] \to [n]$,
then
\[
  \xi(f)(x_1,\dots,x_n) =
  \xi(g)(x_{\pi(1)},\dots,x_{\pi(m)})
.\]

\begin{example} \label{ex:pol-3,4-maps-to-projections}
  We will construct a~minion homomorphism from $\Pol(\rel K_3,\rel K_4)$ to the minion $\clo P_2$ of all projections on a two-element set. 

  Our minion homomorphism is built on the following combinatorial statement proved in \cite[Lemma 3.4]{BG16}: for each $f\in \Pol^{(n)}(\rel K_3,\rel K_4)$, there exist $t\in K_4$ 
  (we will call any such $t$ a~\emph{trash colour}), a coordinate $i\in [n]$, and a map $\alpha\colon K_3 \to K_4$ such that
  \[
    f(a_1,\dots,a_n) \in \{ t, \alpha( a_i ) \}
  \]
  for all $a_1,\dots,a_n\in K_3$. In other words, if we erase the value $t$ from the table of $f$ then $f(x_1,\dots,x_n)$, which is now a partial function, depends only on $x_i$. Moreover, it is shown in \cite[Lemma 3.9]{BG16}, that while the trash colour $t$ is not necessarily unique, the coordinate $i$ is.%

  We define $\xi\colon \Pol(\rel K_3,\rel K_4) \to \Proj_2$ by mapping each $f$ to $p_i$ (preserving the arity) for the $i$ that satisfies the above. To prove that such $\xi$ is a~minion homomorphism, consider $f,g$ such that $f$ is a minor of $g$, i.e.,
  \[
    f(x_1,\dots,x_n) = g(x_{\pi(1)},\dots,x_{\pi(m)})
  \]
  for some $\pi\colon [m] \to [n]$.
  We claim that if $t$ is a~trash colour for $g$, then it is also a~trash colour for $f$. Indeed, if
 $g(a_1,\dots,a_m) \in \{ t,\alpha(a_i) \}$ for all $a_1,\dots,a_n\in K_3$, we get
  \[
    f(a_1,\dots,a_n) = g(a_{\pi(1)},\dots,a_{\pi(m)}) \in \{t,\alpha(a_{\pi(i)})\}.
  \]
  This also shows that the coordinate $j$ assigned to $f$ is $\pi(i)$, therefore $\xi(f) = p_{\pi(i)}$. We conclude that $\xi$ preserves taking minors. That is true since
  \[
    p_{\pi(i)}(x_1,\dots,x_n) = x_{\pi(i)} = p_i(x_{\pi(1)},\dots,x_{\pi(m)}).
  \]
\end{example}

\subsection{Primitive positive formulas}
\label{sec:pp-formulas}

Jeavons' proof \cite{Jea98} that polymorphisms capture the complexity of CSPs was built on a~Galois correspondence between relations and operations \cite{Gei68, BKKR69, BKKR69a}. A Galois correspondence appropriate for PCSP  was found in \cite{Pip02}, and Brakensiek and Guruswami \cite[Appendices D and E]{BG16a} used it to generalise Jeavons' result to PCSP. We build further on these results.

\emph{Primitive positive formulas (pp-formulas)} have been used extensively in the theory of CSPs --- see~\cite{BKW17} for many examples.

\begin{definition}
  For a~relational structure $\rel A = (A; R_1^{\rel A},\dots,R_l^{\rel A})$, a primitive positive formula (pp-formula) over $\rel A$ is an~existentially quantified conjunction of predicates of the form $(v_{j_1},\dots,v_{j_{k_{i}}}) \in R_i$ or $v_{j_1} = v_{j_2}$ where $v_i$ are variables and $k_i=\ar (R_i)$.
\end{definition}

Note that any instance of $\CSP(\rel A)$ can be interpreted as a conjunction of constraints, i.e., as a~pp-formula $\Psi$ without quantifiers and the equality predicate. Similarly, since an instance of $\CSP(\rel A)$ is also an instance of $\CSP(\rel B)$ for any two similar structures $\rel A$ and $\rel B$, and any pp-formula over $\rel A$ is also a~pp-formula over $\rel B$, we can talk about pp-formulas over a~PCSP template $(\rel A,\rel B)$.

To ease readability, we will write $\Psi^{\rel A}$ when we interpret each relational symbol $R_i$ appearing in $\Psi$ as the relation $R_i^{\rel A}$. Therefore, while a~pp-formula $\Psi$ contains formal expressions such as $(x_1,x_2)\in R_i$, $\Psi^{\rel A}$ contains expressions of the form $(x_1,x_2)\in R_i^{\rel A}$, and therefore satisfying assignments of $\Psi^{\rel A}$ form a~subset of $A^k$ where $k$ is the number of free variables of $\Psi$.

\begin{definition} \label{def:strict-relaxation}
  Let $(\rel A,\rel B)$ be a~PCSP template. A~template $(\rel A',\rel B')$ such that $A' = A$ and $B' = B$ is said to be
  \begin{enumerate}
\item \emph{primitive positive definable} (\emph{pp-definable}) in $(\rel A,\rel B)$ if for each relational symbol $R$ of $(\rel A',\rel B')$ there is a~pp-formula $\Psi_R$ over $(\rel A,\rel B)$ such that
\begin{align*}
  R^{\rel A'} &= \{ (a_1,\dots,a_{\ar (R)}) \in A^{\ar(R)} \mid \Psi_R^{\rel A} (a_1,\dots,a_{\ar (R)}) \}
\text{, and} \\
  R^{\rel B'} &= \{ (b_1,\dots,b_{\ar (R)}) \in B^{\ar(R)} \mid \Psi_R^{\rel B} (b_1,\dots,b_{\ar (R)}) \};
\end{align*}
\item a~\emph{strict relaxation} of $(\rel A,\rel B)$ if $\rel A, \rel A', \rel B, \rel B'$ are all similar, and for each relational symbol $R$, we have $R^{\rel A'} \subseteq R^{\rel A}$ and $R^{\rel B} \subseteq R^{\rel B'}$.
\end{enumerate}
We say that a~template is \emph{primitive positive promise definable} (\emph{ppp-definable}) in $(\rel A,\rel B)$, if it is it is obtained from $(\rel A,\rel B)$ by a~sequence of pp-definitions and strict relaxations.
\end{definition}

We remark that ppp-definability is closely related to polymorphisms and the reduction between PCSPs described in \cite{BG16a}. 

\begin{theorem}[\cite{Pip02,BG16a}] \label{thm:galois-corespondence}
  The following are equivalent for a~pair of PCSP templates $(\rel A,\rel B)$ and $(\rel A',\rel B')$ over the same pair of finite domains:
  \begin{enumerate}
    \item $(\rel A',\rel B')$ is ppp-definable in $(\rel A,\rel B)$.
    \item $\Pol(\rel A,\rel B) \subseteq \Pol(\rel A',\rel B')$.
  \end{enumerate}
Moreover, if these conditions hold, then $\PCSP(\rel A',\rel B')$ is polynomial-time reducible to $\PCSP(\rel A,\rel B)$.
\end{theorem}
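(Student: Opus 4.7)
The plan is to establish (1)$\Leftrightarrow$(2) as a Galois correspondence between ppp-definability and polymorphism containment, and then to extract the polynomial-time reduction directly from the ppp-definition produced in (2)$\Rightarrow$(1).

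For (1)$\Rightarrow$(2) I would argue by induction on the length of the sequence of constructions defining $(\rel A',\rel B')$ from $(\rel A,\rel B)$. For a single pp-definition $\Psi_R$: given an $n$-ary $f\in\Pol(\rel A,\rel B)$ and tuples $\mathbf{a}_1,\dots,\mathbf{a}_n$ satisfying $\Psi_R^{\rel A}$ with witness assignments $\phi_1,\dots,\phi_n$ for the existentially quantified variables, the componentwise application $f(\phi_1(\cdot),\dots,\phi_n(\cdot))$ provides witnesses showing that $(f(\mathbf{a}_1[j],\dots,\mathbf{a}_n[j]))_{j\in[k]}$ satisfies $\Psi_R^{\rel B}$; relational conjuncts are preserved by the polymorphism property of $f$ and equality conjuncts are preserved trivially. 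For a single strict relaxation the claim is immediate from $R^{\rel A'}\subseteq R^{\rel A}$ and $R^{\rel B}\subseteq R^{\rel B'}$.

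For (2)$\Rightarrow$(1) I would ppp-define each relation pair $(R^{\rel A'},R^{\rel B'})$ of arity $k$ via an ``indicator'' or free-structure construction. Enumerate $R^{\rel A'}=\{\mathbf{t}_1,\dots,\mathbf{t}_n\}$, view these as rows of an $n\times k$ matrix, and let $c_1,\dots,c_k\in A^n$ be its columns. Define a pp-formula $\Psi_R(x_1,\dots,x_k)$ over $(\rel A,\rel B)$ with one existentially quantified variable $y_{\mathbf{a}}$ per $\mathbf{a}\in A^n$, the conjunct $S(y_{\mathbf{a}^{(1)}},\dots,y_{\mathbf{a}^{(m)}})$ for every relational symbol $S$ of arity $m$ and every $(\mathbf{a}^{(1)},\dots,\mathbf{a}^{(m)})\in S^{\rel A^n}$, and the equalities $y_{c_j}=x_j$ for each $j\in[k]$. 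By Definition~\ref{def:nth-power}, satisfying assignments of $\Psi_R^{\rel A}$ correspond exactly to $n$-ary $f\in\Pol(\rel A)$ with $f(c_j)=x_j$, and satisfying assignments of $\Psi_R^{\rel B}$ to $n$-ary $f\in\Pol(\rel A,\rel B)$ with $f(c_j)=x_j$. The projections $p_i^{(n)}\in\Pol(\rel A)$ witness $R^{\rel A'}\subseteq\Psi_R^{\rel A}$ (since $p_i^{(n)}$ produces the row $\mathbf{t}_i$), and since by hypothesis every $f\in\Pol(\rel A,\rel B)$ also lies in $\Pol(\rel A',\rel B')$, applying $f$ to the $n$ rows $\mathbf{t}_1,\dots,\mathbf{t}_n\in R^{\rel A'}$ yields $(f(c_1),\dots,f(c_k))\in R^{\rel B'}$, giving $\Psi_R^{\rel B}\subseteq R^{\rel B'}$. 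A single strict relaxation then reshapes $(\Psi_R^{\rel A},\Psi_R^{\rel B})$ into $(R^{\rel A'},R^{\rel B'})$.

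The polynomial-time reduction follows at once: replace each constraint $R(v_1,\dots,v_k)$ in an instance $\rel I'$ of $\PCSP(\rel A',\rel B')$ by a fresh copy of $\Psi_R$, introducing fresh variables for the quantified $y_\mathbf{a}$; the resulting $\rel I$ is an instance of $\PCSP(\rel A,\rel B)$ of polynomial size, since $\Psi_R$ depends only on the templates. If $\rel I'\to\rel A'$, then witnesses from $R^{\rel A'}\subseteq\Psi_R^{\rel A}$ extend the homomorphism to $\rel I\to\rel A$, while any homomorphism $\rel I\to\rel B$ restricts on the original variables to $\rel I'\to\rel B'$ thanks to $\Psi_R^{\rel B}\subseteq R^{\rel B'}$. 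The main obstacle is the (2)$\Rightarrow$(1) step, specifically the free-structure construction and the identification of its homomorphisms with satisfying assignments of the pp-formula; once this Galois-style correspondence is in place, both the pp-definition and the final strict relaxation are essentially forced.
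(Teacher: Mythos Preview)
The paper does not actually prove Theorem~\ref{thm:galois-corespondence}; it is stated with attribution to \cite{Pip02,BG16a} and used as background. So there is no ``paper's own proof'' to compare against. That said, your argument is the standard one and is correct.

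A few remarks on your (2)$\Rightarrow$(1) step, since that is where the content lies. Your indicator construction is exactly right: the key identifications
\[
\Psi_R^{\rel A}(x_1,\dots,x_k)\ \Longleftrightarrow\ \exists f\in\Pol(\rel A)^{(n)}\ \forall j\ f(c_j)=x_j,
\qquad
\Psi_R^{\rel B}(x_1,\dots,x_k)\ \Longleftrightarrow\ \exists f\in\Pol(\rel A,\rel B)^{(n)}\ \forall j\ f(c_j)=x_j
\]
follow because the conjuncts of $\Psi_R$ encode precisely the homomorphism conditions for a map $A^n\to A$ (respectively $A^n\to B$) into the relations of $\rel A$ (respectively $\rel B$). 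The containments $R^{\rel A'}\subseteq\Psi_R^{\rel A}$ (via projections) and $\Psi_R^{\rel B}\subseteq R^{\rel B'}$ (via the hypothesis $\Pol(\rel A,\rel B)\subseteq\Pol(\rel A',\rel B')$) are exactly what is needed for the final strict relaxation. One small point you leave implicit: the intermediate pp-defined pair is itself a valid PCSP template, since the fixed homomorphism $h\colon\rel A\to\rel B$ sends any witness $f\in\Pol(\rel A)$ to $h\circ f\in\Pol(\rel A,\rel B)$, hence $h$ maps $\Psi_R^{\rel A}$ into $\Psi_R^{\rel B}$.

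Your reduction paragraph is also fine; it is the classical gadget-replacement reduction, and correctness follows immediately from the two containments you established. This is essentially the argument in \cite{BG16a}, and the paper's later machinery (free structures, Lemma~\ref{lem:free-is-pp}) can be read as a repackaging of the same indicator construction in the more general setting where the domains may differ.
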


\section{Algebraic reductions}
 \label{sec:algebraic-reductions}

In this section, we prove the two main results of our general theory. The first one can be formulated right away.

\begin{theorem} \label{thm:h1-reductions} \label{thm:main}
  Let $(\rel A_1, \rel B_1)$ and $(\rel A_2, \rel B_2)$ be two finite PCSP templates, and let $\clo M_i = \Pol(\rel A_i,\rel B_i)$ for $i=1,2$. If there is a~minion homomorphism $\xi\colon \clo M_1 \to \clo M_2$ then $\PCSP(\rel A_2,\rel B_2)$ is log-space reducible to $\PCSP(\rel A_1,\rel B_1)$.
\end{theorem}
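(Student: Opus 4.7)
The plan is to exhibit an explicit log-space computable map $\rel I \mapsto \rel I^\sharp$ sending instances of $\PCSP(\rel A_2,\rel B_2)$ to instances of $\PCSP(\rel A_1,\rel B_1)$ and to verify:
\begin{itemize}
\item (completeness) $\rel I \to \rel A_2$ implies $\rel I^\sharp \to \rel A_1$;
\item (soundness) $\rel I^\sharp \to \rel B_1$ implies $\rel I \to \rel B_2$.
\end{itemize}

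The construction I would try is a \emph{dictator-test style gadget replacement}. Setting $n = |A_2|$, for each variable $v$ of $\rel I$ I introduce a block of $|A_1|^n$ vertices indexed by $A_1^n$, and impose the internal constraints of $\rel A_1^n$ on this block, so that any homomorphism $\rel I^\sharp \to \rel B_1$ is forced to interpret the block as an $n$-ary function $f_v \in \clo M_1^{(n)} = \Pol^{(n)}(\rel A_1,\rel B_1)$ and any homomorphism into $\rel A_1$ to interpret it as a polymorphism of $\rel A_1$. For each constraint $(v_1,\dots,v_k) \in S^{\rel I}$ of $\rel I$, I add cross-block constraints linking $B_{v_1},\dots,B_{v_k}$ designed to encode exactly the bipartite minor identities whose satisfaction in $\clo M_2^{(n)}$, upon evaluation at the identity tuple $\mathrm{id}_{A_2} \in A_2^n$, forces $(\xi(f_{v_1})(\mathrm{id}_{A_2}),\dots,\xi(f_{v_k})(\mathrm{id}_{A_2})) \in S^{\rel B_2}$. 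Since each block and each gadget has size depending only on constants $|A_1|,|A_2|$ and the signature, $\rel I^\sharp$ is of polynomial size and log-space computable.

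For \textbf{completeness}, given $h\colon \rel I \to \rel A_2$, I assign to each block $B_v$ the projection $p_{h(v)}\colon A_1^n \to A_1$. Projections are polymorphisms of every structure, hence block-internal constraints are satisfied; cross-block constraints reduce, under the projection assignment, to the fact that $h$ sends each $S^{\rel I}$-constraint to an $S^{\rel A_2}$-tuple, which is exactly what those identities are designed to require of projection-systems. For \textbf{soundness}, extract from $h^\sharp\colon \rel I^\sharp \to \rel B_1$ the family $(f_v)_{v \in V} \subseteq \clo M_1^{(n)}$; the cross-block constraints then assert that this family satisfies a prescribed set of bipartite minor conditions. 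Applying $\xi$ coordinate-wise yields $(\xi(f_v))_{v\in V} \subseteq \clo M_2^{(n)}$ satisfying the same minor identities, by the defining property of a minion homomorphism. Defining $h(v) = \xi(f_v)(\mathrm{id}_{A_2}) \in B_2$, the preserved minor identities translate directly into $(h(v_1),\dots,h(v_k)) \in S^{\rel B_2}$ for every $(v_1,\dots,v_k) \in S^{\rel I}$, so $h\colon \rel I \to \rel B_2$ is the desired homomorphism.

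The main obstacle — and the creative heart of the argument — lies in designing the cross-block gadgets so that the induced bipartite minor conditions simultaneously (i) are automatically satisfied by projection-systems, giving completeness, (ii) are strong enough that $\xi$-transported evaluation at $\mathrm{id}_{A_2}$ forces membership in $S^{\rel B_2}$, giving soundness, and (iii) use disjoint function-symbol sets on the two sides, so that their preservation really does follow from $\xi$ being merely a minor-preserving map rather than a more structured morphism. Once this calibration is in place, log-space computability is a routine size estimate, completeness reduces to a projection check, and soundness is essentially the definition of a minion homomorphism; the remaining work is careful bookkeeping of arities and substitutions.
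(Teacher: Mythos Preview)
Your proposal is correct and is essentially the paper's approach: the paper factors the reduction through an intermediate promise problem $\PMC_{\clo M}(N)$ (deciding triviality vs.\ non-satisfaction in $\clo M$ of a bipartite minor condition), and composing its two explicit reductions recovers exactly your direct gadget construction, with your unspecified ``cross-block gadgets'' being fresh copies of $\rel A_1^{m}$ (one per constraint $C$, where $m=|S^{\rel A_2}|$) glued to the variable-blocks via the natural minor identities $f_{v_i}(x_1,\dots,x_n)\equals g_C(x_{\pi_i(1)},\dots,x_{\pi_i(m)})$ encoding the tuples of $S^{\rel A_2}$. The factoring also yields the stronger standalone result (Theorem~\ref{thm:pcsp-and-lc}) that every PCSP is log-space equivalent to such a $\PMC_{\clo M}$ problem, but the underlying reduction is the one you sketched.
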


The proof of this theorem is provided by the second of our results, but to formulate it, we need to first address a~few formalities about the problem of deciding minor conditions. 
We remark that  we characterise the existence of such a~minion homomorphism in many equivalent ways (see Theorem~\ref{thm:minor-homomorphism-is-pp-constructibility} and Corollary~\ref{cor:pp-constructible-minor-homomorphism}).

\subsection{Deciding satisfiability of bipartite minor conditions}
  \label{sec:MC}

The study of Maltsev conditions\footnote{Not to be confused with Maltsev identities $f(x,y,y) \equals f(y,y,x) \equals x$, which are a~specific case of a~Maltsev condition.} \cite{Tay73,Neu74} has a long history in universal algebra. We will use only a special case of (strong) Maltsev conditions called minor conditions and their restricted `bipartite' form.  

\begin{definition} \label{def:minor-condition-sat}
A~\emph{bipartite minor condition} is a~finite set $\Sigma$ of minor identities where the sets of function symbols used on the right- and left-hand sides are disjoint. More precisely, we say that a~minor condition $\Sigma$ is \emph{bipartite} over two disjoint sets of function symbols $\lang U$ and $\lang V$, if it contains only identities of the form
\[
  f(x_1,\dots,x_n) \equals
  g(x_{\pi(1)},\dots,x_{\pi(m)})
\]
where $f\in\lang U$ and $g\in \lang V$ are symbols of arity $n$ and $m$, respectively, $x_1$, \dots, $x_n$ are variables, and $\pi\colon [m] \to [n]$.

Such a~condition is said to be \emph{satisfied} in a~minion $\clo M$ on $(A,B)$ if there is an assignment $\zeta\colon \lang U \cup\lang V \to \clo M$ that assigns to each function symbol a function from $\clo M$ of the corresponding arity so that
\[
    \zeta(f)(a_1,\dots,a_n) = \zeta(g)(a_{\pi(1)},\dots,a_{\pi(m)})
\]
for each identity
\(
    f(x_1,\dots,x_n) \equals g(x_{\pi(1)},\dots,x_{\pi(m)})
\)
in $\Sigma$ and all $a_1,\dots,a_n\in A$. We say that a~minor condition is \emph{trivial} if it is satisfied in every minion, in particular, in the minion $\clo P_A$ consisting of all projections (dictators) on a~set $A$ that contains at least two elements.
\end{definition}

We note that as long as a~bipartite minor condition is satisfied in some $\clo P_A$ with $|A|\ge 2$, it is satisfied in every minion: Recall, that by definition every minion $\clo M$ is non-empty, and therefore it contains a~unary function $f$ (obtained by identifying all variables in a function from $\clo M$). Consequently, $\clo M$ contains functions defined by $(x_1,\dots,x_n) \mapsto f(x_i)$ for each $i$. These functions then behave similarly to projections in $\clo P_A$.

The symbols $f$, $g$, etc.\ in a~minor condition are abstract function symbols. Nevertheless, we sometimes (in particular, when working with specific simple minor conditions) use the same symbols to denote concrete functions that satisfy this condition.
When we want to stress the assignment of concrete functions to symbols, we use $\zeta(f)$ for the concrete function assigned to the abstract symbol~$f$.

\begin{example}\label{ex:cond-H2Hk}
Consider the following bipartite minor condition. We set $\lang U$ to contain a~single binary symbol $f$, and $\lang V$ a~single quaternary symbol $g$. The set $\Sigma$ then consists of identities:
\begin{align*}
  f(x,y) &\equals g(y,x,x,x) \\
  f(x,y) &\equals g(x,y,x,x) \\
  f(x,y) &\equals g(x,x,y,x) \\
  f(x,y) &\equals g(x,x,x,y).
\end{align*}
This condition is not trivial, since if $f = p_1$ and $g = p_i$ for some $i$ then the $i$-th identity is not satisfied, and if $f = p_2$ then the first identity forces that $g = p_1$ which contradicts any of the other identities.

The above bipartite minor condition is satisfied in the minion $\Pol(\rel H_2,\rel H_k)$ (recall Example~\ref{ex:hypergraph-colouring}) for each $k\geq 4$.  We define a~function $\zeta(g)$ by the following:
\[
  \zeta(g)(x,y,z,u) = \begin{cases}
    a & \text{if at least 3 arguments are equal to $a$;}\\
    x + 2 & \text{otherwise.}
  \end{cases}
\]
The function $\zeta(f)$ is defined by $\zeta(f)(x,y) = x$. Clearly $\zeta(f)$ and $\zeta(g)$ satisfy the required identities, also $\zeta(f)$ is in $\Pol(\rel H_2,\rel H_k)$. We now show that also $\zeta(g)$ is. 
Consider a $3\times 4$ matrix $M=(a_{ij})$ such that each column of $M$ is a triple in $\nae_2$. We need to show that applying $\zeta(g)$ to the rows of $M$ gives a triple in $\nae_k$. For contradiction, assume that we get a triple of the form $(a,a,a)$.
If $a$ is $0$ or $1$, then in each row of $M$ at least three entries are equal to $a$. In this case, one of the columns of $M$ is $(a,a,a)\not\in\nae_2$, a contradiction. Otherwise, $a$ is $2$ or $3$, which implies that the first column of $M$ is $(a-2,a-2,a-2)$, a contradiction again. 
\end{example}

\begin{example}\label{ex:cond-K3K5}
The condition $\Sigma$ from the previous example is also satisfied in the minion $\Pol(\rel K_3,\rel K_5)$ (recall Example~\ref{ex:approx-graph-col}). To see this, define $\zeta(f)(x,y)=x$ again and  define $\zeta(g)$ as follows:
\[
  \zeta(g)(x,y,z,u) = \begin{cases}
    a & \text{if at least three arguments are equal to $a$, else}\\
    0 & \text{if $x=0$ and at least one of $y,z,u$ is $0$, else}\\
    1 & \text{if $x=0$ and at least two of $y,z,u$ are $1$, else}\\
    2 & \text{if $x=0$ and at least two of $y,z,u$ are $2$, else}\\
    x + 2 & \text{otherwise.}
  \end{cases}
\]
It is straightforward to check that $\zeta(g)$ is a polymorphism and that $\zeta(f)$ and $\zeta(g)$ satisfy $\Sigma$.
\end{example}

\begin{example}\label{ex:nocond-H2Hk}
We now present a simple minor condition that is {\em not} satisfied
in $\Pol(\rel H_2, \rel H_k)$ for any $k\ge 2$:
\begin{align*}
  f(x,y) &\equals g( x,x,y,y,y,x ), \\
  f(x,y) &\equals g( x,y,x,y,x,y ), \\
  f(x,y) &\equals g( y,x,x,x,y,y ).
\end{align*}
Note that the columns of $x$'s and $y$'s on the right above correspond to the triples in $\nae_2$. Now assume that this condition is satisfied by some functions $\zeta(f), \zeta(g)$ in $\Pol(\rel H_2, \rel H_k)$, so the identities above become equalities.
Then substitute 0 for $x$ and 1 for $y$ in these equalities. The triple (column) on the right-hand side of the system is obtained by applying $\zeta(g)$
to the six triples in $\nae_2$, so it must be in $\nae_k$.
On the other hand, this triple is equal to $(b,b,b)$ where $b=\zeta(f)(0,1)$, which is not in $\nae_k$.

In Section~\ref{sec:graph_coloring} we prove that, for any $k\ge 3$, this minor condition is not satisfied in $\Pol(\rel K_k, \rel K_{2k-1})$ either --- this is the key part in our proof that $\PCSP(\rel K_k, \rel K_{2k-1})$ is \NP-hard.
\end{example}

\begin{remark} An identity of \emph{height 1} is an expression of the form
\[
    f(x_{\pi(1)},\dots,x_{\pi(n)}) \equals g(x_{\sigma(1)},\dots,x_{\sigma(m)})
\]
where $f,g$ are function symbols, $\pi\colon [n] \to [k]$ and $\sigma\colon [m] \to [k]$. Systems of such identities were considered in \cite{BOP18} (see also \cite{BKW17}). Clearly, any minor identity has height 1. Moreover, any height 1 condition, i.e. a finite set of height 1 identities, can be turned into a (bipartite) minor condition by replacing each height 1 identity by two minor identities, e.g.\ the identity above would be replaced by
\begin{align*}
    e(x_1,\dots,x_k) &\equals f(x_{\pi(1)},\dots,x_{\pi(n)})\\
    e(x_1,\dots,x_k) &\equals g(x_{\sigma(1)},\dots,x_{\sigma(m)})
\end{align*}
where $e$ is a~newly introduced symbol. It is obvious that any minion $\clo M$ satisfies the obtained bipartite minor condition if and only if it satisfies the original height~1 condition.
\end{remark}

\begin{definition}
  We define the problem $\ALC(N)$ (\emph{triviality of a~bipartite minor condition}) as the~problem where the input is a~triple $(\Sigma,\lang U,\lang V)$, with $\Sigma$ a~bipartite minor condition over $\lang U$ and $\lang V$ that involves function symbols of maximal arity $N$, and the goal is to decide whether the condition $\Sigma$ is trivial.
\end{definition}

Deciding triviality of bipartite minor conditions is essentially just a~different interpretation of the Label Cover problem that was introduced in \cite{ABSS97}. To compare these two problems, we use a~formulation of Label Cover that is closer to the one that appeared in e.g.~\cite{AGH17} and \cite{BG16}. In addition, we bound the size of the label sets by a~constant $N$. Some bounded version often appears in the literature as it is well-known that if $N\ge 3$ then it is a~NP-complete problem (see e.g.\ \cite[Lemma 4.2]{BG16}).

\begin{definition}[Label cover] \label{def:label-cover}
Fix a positive integer $N$. We define $\LC(N)$ as the following decision problem: The input is a~tuple $(U,V,E,l,r,\Pi)$ where 
\begin{itemize}
\item $G=(U,V;E)$ is a bipartite graph,
\item $l,r\leq N$ are positive integers, and 
\item $\Pi$ is a~family of maps $\pi_{e}\colon [r] \to [l]$, one for each $e\in E$.
\end{itemize} 
The goal is to decide whether there is a labelling of vertices from $U$ and $V$ with labels from $[r]$ and $[l]$, respectively, such that if $(u,v)\in E$ then the label of $v$ is mapped by $\pi_{(u,v)}$ to the label of $u$.
\end{definition}

\begin{remark}
Note that $\LC(N)$ is a CSP where vertices are variables and edges correspond to constraints: a constraint corresponding to an edge $e\in E$ is given by $\pi_e$.
\end{remark}

We interpret a~label cover instance $(U,V,E,l,r,\Pi)$ with $l,r\leq N$ as a~bipartite minor condition  $\Sigma$, an input to $\MC(N)$, as follows.
\begin{itemize}
  \item Each vertex $u \in U$ is interpreted as an~$l$-ary function symbol $f_u$, and each vertex $v\in V$ as an~$r$-ary function symbol~$g_v$.
  \item For each edge $e = (u,v)$ we add to $\Sigma$ the~identity
    \begin{equation} \tag{$\varheartsuit$}\label{id:lc}
      f_u( x_{1}, \dots, x_{l}) \equals g_v ( x_{\pi_e(1)},\dots,x_{\pi_e(r)} )
    .\end{equation}
\end{itemize}
Observe that $\Sigma$ is indeed a~bipartite minor condition over $\lang U = \{ f_u \mid u\in U \}$ and $\lang V = \{ g_v \mid v\in V \}$. We claim that the minor condition obtained in this way is trivial if and only if the original Label Cover instance has a~solution. The main difference is that a~solution to Label Cover is a~labelling while a~solution (a~witness) to triviality of minor conditions is an assignment of projections to the function symbols. Nevertheless, there is a~clear bijection between the labels and the projections: label $i$ corresponds to projection $p_i$. Clearly, a~constraint $((u,v),\pi)$ of the Label Cover is satisfied by a pair of labels $(i,j)$ if and only if assigning $p_i$ and $p_j$ to $f_u$ and $g_v$, respectively, satisfies (\ref{id:lc}).

The \emph{long code} is an~error-correcting code that can be defined as the longest code over the Boolean alphabet that does not repeat bits. Precisely, it encodes a~value $i \in [n]$ as the string of bits of length $2^n$ corresponding to the table of the function $p_i \in \clo P_2$.

Therefore, it is possible to see the problem \MC{} as just a~conjunction of Label Cover with the Long code. This conjunction has often been used before, e.g. \cite{BGS98,Hst01}. Our insight is that satisfaction of a~constraint can be extended to functions that are not projections (words that are not code words of the Long code), we can simply say that the constraint corresponding to the edge $(u,v)$ is satisfied if $f_u$ and $g_v$ satisfy (\ref{id:lc}). This approach circumvents some combinatorial difficulties of using Label Cover and the Long code, and is essential for our reduction to work.

The second (and main) advantage of using identities instead of Label Cover is that it allows us to define the following promise version. 
We will use this promise problem as an intermediate problem for our reduction, and it is interesting in its own right.

\begin{definition}\label{def:plc}
  Fix a~minion $\clo M$ and a~positive integer $N$. We define $\PLC_{\clo M}(N)$ (\emph{promise satisfaction of a~bipartite minor condition}) as the~promise problem in which, given a~bipartite minor condition $\Sigma$ that involves symbols of arity at most $N$,
one needs to output \yes{} if $\Sigma$ is trivial and \no{}
if $\Sigma$ is not satisfiable in $\clo M$.
\end{definition}

The promise in the above problem is that it is never the case that $\Sigma$ is non-trivial, but satisfied in $\clo M$.

\begin{remark}\label{rem:minor-hom-stronger-promise}
Let $\clo M_1,\clo M_2$ be two minions such that there is a minion homomorphism $\xi\colon\clo M_1 \rightarrow \clo M_2$. Then, for any $N$, $\PLC_{\clo M_2}(N)$ is obtained from $\PLC_{\clo M_1}(N)$ simply by strengthening the promise. 
To see this, observe that if some $\Sigma$ is not satisfied in $\clo M_2$ then it cannot be satisfied in $\clo M_1$. Indeed, suppose the contrary, say that some $f_i$'s and $g_j$'s from $\clo M_1$ satisfy $\Sigma$. Since $\xi$ is a~minion homomorphism from $\clo M_1$ to $\clo M_2$, it follows that $\xi(f_i)$'s and $\xi(g_j)$'s satisfy $\Sigma$ in $\clo M_2$.
\end{remark}

We can finally formulate our second main result.

\begin{theorem} \label{thm:pcsp-and-lc}
Let $(\rel A,\rel B)$ be a~template and let $\clo M$ denote its polymorphism minion.
\begin{enumerate}
  \item If $N$ is an upper bound on the size of any relation $R^{\rel A}$ of $\rel A$ as well as on $|A|$, then $\PCSP(\rel A,\rel B)$ can be reduced to $\PLC_{\clo M}(N)$ in log-space.
  \item For each $N > 0$, $\PLC_{\clo M}(N)$ can be reduced to $\PCSP(\rel A, \rel B)$ in log-space.
\end{enumerate}
\end{theorem}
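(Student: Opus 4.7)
The plan is to prove both reductions by direct syntactic encodings so that projection witnesses for triviality correspond to homomorphisms into~$\rel A$, while assignments of polymorphisms from $\clo M$ correspond to homomorphisms into~$\rel B$. In each direction a function symbol $f$ of arity $n$ will correspond to a variable whose ``range of evaluations'' is the cube $A^n$, and each identity will encode either one position of a single constraint (in direction~(1)) or one identification between cubes (in direction~(2)).

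For~(1), fix an enumeration $A=\{a_1,\dots,a_{|A|}\}$. I introduce one $|A|$-ary symbol $f_v$ per vertex $v$ of the input~$\rel I$, and for each constraint $C=(v_1,\dots,v_k)\in R^{\rel I}$ with $R^{\rel A}=\{\tup t_1,\dots,\tup t_m\}$, one $m$-ary symbol~$g_C$. Define $\pi_{C,j}\colon[m]\to[|A|]$ by $a_{\pi_{C,j}(i)}=(\tup t_i)_j$ and put into $\Sigma$ the identities
\[
  f_{v_j}(x_1,\dots,x_{|A|})\equals g_C(x_{\pi_{C,j}(1)},\dots,x_{\pi_{C,j}(m)})
\]
for every $j\in[k]$; all arities are bounded by~$N$. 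A homomorphism $h\colon\rel I\to\rel A$ yields a projection witness via $\zeta(f_v)=p_{i_v}$, with $h(v)=a_{i_v}$, and $\zeta(g_C)=p_{i_C}$, where $\tup t_{i_C}=(h(v_1),\dots,h(v_k))$: the identity for position~$j$ becomes $a_{i_{v_j}}=(\tup t_{i_C})_j$, which holds. Conversely, given $F_v,G_C\in\clo M$ satisfying~$\Sigma$, set $h(v)=F_v(a_1,\dots,a_{|A|})$; evaluating the identity at $x_\ell=a_\ell$ shows that $h(v_j)$ equals $G_C$ applied to the $j$-th row of the matrix whose columns $\tup t_1,\dots,\tup t_m$ all lie in $R^{\rel A}$, so the polymorphism property delivers $(h(v_1),\dots,h(v_k))\in R^{\rel B}$, i.e.\ $h\colon\rel I\to\rel B$.

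For~(2), given~$\Sigma$ over $\lang U\cup\lang V$ with all arities at most~$N$, form the disjoint union $\bigsqcup_f \rel A^{n_f}$ of cubes (one per symbol $f$ of arity~$n_f$) and let $\rel I$ be its quotient by the equivalence relation generated by: for each identity $f(x_1,\dots,x_n)\equals g(x_{\pi(1)},\dots,x_{\pi(m)})$ in~$\Sigma$ and each $\tup a\in A^n$, identify the vertex $\tup a$ of the $f$-copy with $(a_{\pi(1)},\dots,a_{\pi(m)})$ of the $g$-copy. If $\Sigma$ is trivial, witnessed by projections $p_{i_f}$, then sending $\tup a$ in the $f$-copy to $a_{i_f}$ descends to the quotient (the required compatibility $a_{i_f}=a_{\pi(i_g)}$ is exactly what $p_{i_f}$ and $p_{i_g}$ satisfying the identity asserts), and on each component it is the homomorphism $\rel A^{n_f}\to\rel A$ given by the $i_f$-th coordinate, yielding $\rel I\to\rel A$. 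Conversely, any $h\colon\rel I\to\rel B$ restricts on the $f$-copy to a homomorphism $\rel A^{n_f}\to\rel B$, that is, an $n_f$-ary polymorphism $F_f\in\clo M$; the identifications enforce $F_f(a_1,\dots,a_n)=F_g(a_{\pi(1)},\dots,a_{\pi(m)})$ for every identity and every input tuple, so $\Sigma$ is satisfied in~$\clo M$.

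The main point beyond bookkeeping is the log-space requirement. In~(1) the encoding is entirely local, so log-space computability is immediate. In~(2), since $|A|$ and $N$ are constants, each cube $\rel A^{n_f}$ has constant size and $\rel I$ has polynomial size in~$|\Sigma|$; computing the quotient reduces to picking a canonical representative of each connected component of the identification graph, whose edges can be listed directly from~$\Sigma$, which fits in log-space by Reingold's theorem for undirected reachability.
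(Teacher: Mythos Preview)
Your proof is correct and follows essentially the same approach as the paper: the construction of $\Sigma(\rel A,\rel I)$ in part~(1) and the quotiented union of cubes $\rel A^{n_f}$ in part~(2) are exactly the paper's reductions (Sections~\ref{sec:two-prover-protocol} and~\ref{sec:reduction-from-lc-to-pcsp}), with the same completeness/soundness arguments and the same appeal to Reingold's theorem for the log-space quotient.
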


Before we get to the proof, let us comment on how Theorem~\ref{thm:h1-reductions} follows from this result.

\begin{proof}[Proof of Theorem \ref{thm:h1-reductions} given Theorem \ref{thm:pcsp-and-lc}]
  We recall that we have two PCSP templates $(\rel A_1, \rel B_1)$ and $(\rel A_2, \rel B_2)$, and the corresponding polymorphism minions $\clo M_i=\Pol(\rel A_i,\rel B_i)$,  $i=1,2$. Our goal is to find a~log-space reduction from $\PCSP(\rel A_2,\rel B_2)$ to $\PCSP(\rel A_1,\rel B_1)$ given that there is a~minion homomorphism $\xi\colon \clo M_1 \to \clo M_2$. 

By Remark~\ref{rem:minor-hom-stronger-promise}, we have 
that, for any $N$, $\PLC_{\clo M_2}(N)$ is obtained from $\PLC_{\clo M_1}(N)$ by strengthening the promise. 
  Clearly, this gives a (trivial) log-space reduction from $\PLC_{\clo M_2}(N)$ to $\PLC_{\clo M_1}(N)$. 

  To conclude the proof, we connect this reduction with the two reductions from Theorem~\ref{thm:pcsp-and-lc}. Starting with $\PCSP(\rel A_2,\rel B_2)$, we reduce it to $\PLC_{\clo M_2}(N)$ where $N$ is given by the first item of Theorem~\ref{thm:pcsp-and-lc}. The above paragraph then gives us a~reduction to $\PLC_{\clo M_1}(N)$. Finally, the second item of Theorem~\ref{thm:pcsp-and-lc} provides a reduction to $\PCSP(\rel A_1,\rel B_1)$.
\end{proof}

The proof of Theorem~\ref{thm:pcsp-and-lc} is given in the following two subsections.

\subsection{From PCSP to minor conditions} \label{sec:two-prover-protocol}

We now prove Theorem~\ref{thm:pcsp-and-lc}(1). For that we need to provide a~reduction from $\PCSP$ to $\PMC$ for a given PCSP template and its polymorphism minion. This reduction follows a~standard way of proving hardness of Label Cover \cite{ABSS97}. It is built on a~two-prover protocol introduced by \cite{BGKW88}.
Our presentation of this reduction is a~generalisation of \cite[Lemma 4.2]{BG16}.

Even though we start with a~PCSP with template $(\rel A,\rel B)$, we only use the structure $\rel A$ for the construction of a bipartite minor condition from a given
instance $\rel I$ of $\PCSP(\rel A,\rel B)$.
The structure $\rel B$ will influence soundness of the reduction.
Fix an enumeration $A = \{a_1,\dots,a_n\}$ of the domain of $\rel A$, and consider an~instance $\rel I$, i.e., a~structure similar to $\rel A$. We construct a~bipartite minor condition $\Sigma=\Sigma(\rel A,\rel I)$ over $\lang U$ and $\lang V$ in the following way.
\begin{enumerate}
  \item Define $\lang U$ to be the set of symbols $f_v$ for $v\in I$, each of arity $n = |A|$.
  \item For each relation $R$ do the following: let $k=\ar(R)$, $m = |R^{\rel A}|$, and let
  	\(
      \{(a_{\pi_1(1)},\dots,a_{\pi_k(1)})\), \dots, \( (a_{\pi_1(m)},\dots,a_{\pi_k(m)})\}
  	\)
    be the list of all tuples from $R^{\rel A}$;
    for each constraint $C = ((v_1,\dots,v_k),R)$, i.e., each tuple $(v_1,\dots,v_k) \in R^{\rel I}$, introduce into $\lang V$ a~new symbol $g_{C}$ of arity $m$ and add to $\Sigma$ the following identities:
  \begin{align*}
    f_{v_1} (x_1,\dots,x_n) &\equals g_C ( x_{\pi_1(1)}, \dots, x_{\pi_1(m)} ) \\
      &\vdotsequals \\
    f_{v_k} (x_1,\dots,x_n) &\equals g_C ( x_{\pi_k(1)}, \dots, x_{\pi_k(m)} ).
  \end{align*}
\end{enumerate}
This assigns, to each $\rel I$ and $\rel A$, an instance $(\Sigma,\lang U,\lang V)$ of $\PMC(N)$. The bound $N$ is the larger of $|A|$ and the maximum of $|R^{\rel A}|$, for all relations of $\rel A$. Clearly, if $\rel A$ is fixed, the condition $\Sigma$ is computable from $\rel I$ in log-space.

\begin{example}
\label{ex:nae-sat}
We show a~reduction from NAE-\Sat{} to $\MC(6)$. NAE-\Sat{} is the same as $\CSP(\rel H_2)$ (see Example~\ref{example:1in3-nae}). Starting with an instance $\rel I$ of NAE-\Sat, for each variable $v\in I$ we add a~binary function symbol $f_v$, and for each constraint $C$ involving (not necessarily different) $v_1,v_2,v_3$ we add a~6-ary symbol $g_C$ and the following identities:
\begin{align*}
  f_{v_1} (x,y) &\equals g_C( x,x,y,y,y,x ), \\
  f_{v_2} (x,y) &\equals g_C( x,y,x,y,x,y ), \\
  f_{v_3} (x,y) &\equals g_C( y,x,x,x,y,y ).
\end{align*}
The arity of $f_v$ was chosen to be two so that the assignment of a~projection would correspond to an assignment of a~value to $v$. In particular, each variable in the above identities corresponds to one of the elements from the domain. The function $g_C$ has arity 6, since each constraint in NAE-\Sat{} has exactly $6$ satisfying assignments; the columns of variables on the right hand side of these identities correspond to these satisfying assignments (three $x$'s or three $y$'s never align). 
\end{example}

To conclude the proof of Theorem~\ref{thm:pcsp-and-lc}(1), it is enough to prove the following lemma.

\begin{lemma} \label{lem:pcsp-to-lc}
  Let $\rel A$, $\rel B$, and $\rel I$ be similar relational structures, and let $\Sigma=\Sigma(\rel A,\rel I)$ be constructed as above. Then
  \begin{enumerate}
    \item if there is a~homomorphism $h\colon \rel I \to \rel A$, then $\Sigma$ is trivial; and
    \item if $\Pol(\rel A,\rel B)$ satisfies $\Sigma$, then there is a~homomorphism from $\rel I$ to $\rel B$.
  \end{enumerate}
\end{lemma}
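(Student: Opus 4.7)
For part (1), given a homomorphism $h \colon \rel I \to \rel A$, the natural move is to turn $h$ into an assignment of projections witnessing triviality of $\Sigma$. Concretely, for each $v \in I$ I would set $\zeta(f_v) = p_j^{(n)}$ where $j$ is the unique index with $h(v) = a_j$. For each constraint $C = ((v_1,\dots,v_k), R)$, the tuple $(h(v_1),\dots,h(v_k))$ lies in $R^{\rel A}$, so by construction it coincides with $(a_{\pi_1(j_0)},\dots,a_{\pi_k(j_0)})$ for some $j_0 \in [m]$; I set $\zeta(g_C) = p_{j_0}^{(m)}$. Then the left hand side of the $i$-th identity associated with $C$ evaluates to $x_{\pi_i(j_0)}$ (since the index of $h(v_i)$ equals $\pi_i(j_0)$), which matches the right hand side $p_{j_0}(x_{\pi_i(1)},\dots,x_{\pi_i(m)}) = x_{\pi_i(j_0)}$. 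This verifies all identities of $\Sigma$, so $\Sigma$ is trivial.

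For part (2), given an assignment $\zeta$ satisfying $\Sigma$ in $\Pol(\rel A,\rel B)$, I would extract a homomorphism by evaluating the polymorphisms attached to the vertices of $\rel I$ on the full enumeration of $A$. Define $h \colon I \to B$ by $h(v) = \zeta(f_v)(a_1,\dots,a_n)$. To check that $h$ is a homomorphism, fix a constraint $C = ((v_1,\dots,v_k), R) \in R^{\rel I}$. Using the identities for $C$, for each $i \in [k]$,
\[
  h(v_i) = \zeta(f_{v_i})(a_1,\dots,a_n)
        = \zeta(g_C)(a_{\pi_i(1)},\dots,a_{\pi_i(m)}).
\]
Thus $(h(v_1),\dots,h(v_k))$ is obtained by applying the $m$-ary polymorphism $\zeta(g_C) \in \Pol(\rel A,\rel B)$ row-wise to the $k \times m$ matrix whose $j$-th column is $(a_{\pi_1(j)},\dots,a_{\pi_k(j)})$. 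By the construction of $\Sigma$, every column of this matrix is a tuple of $R^{\rel A}$, so the polymorphism property of $\zeta(g_C)$ (applied to $R$) yields $(h(v_1),\dots,h(v_k)) \in R^{\rel B}$, as required.

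\textbf{Where the work sits.} Neither direction is deep: both reduce to carefully tracking indices through the minor identities. The only potentially subtle point is making sure the enumeration of tuples in $R^{\rel A}$ that defines the maps $\pi_i$ is used consistently in both directions, and that the arities match ($f_v$ has arity $n = |A|$ so that projections on $f_v$ correspond to choices of elements of $A$, and $g_C$ has arity $m = |R^{\rel A}|$ so that projections on $g_C$ correspond to choices of satisfying tuples of $C$). Once this bookkeeping is in place, part (1) follows by a direct substitution check, and part (2) follows by recognising the defining property of a polymorphism in the identity pattern built into $\Sigma$.
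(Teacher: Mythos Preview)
Your proposal is correct and essentially identical to the paper's proof: both parts use exactly the same constructions (projections indexed by $h(v)$ and by the satisfying tuple for part (1), and $h(v) = \zeta(f_v)(a_1,\dots,a_n)$ for part (2)) and the same verification via the polymorphism property applied to the columns of the relevant matrix.
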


\begin{proof}
  To prove item (1), assume that $h\colon \rel I \to \rel A$ is a~homomorphism, and define $\zeta \colon \lang U \to \Proj_{A}$ by
  \(
    \zeta(f_v) = \proj_{i}
  \)
  where $i$ is chosen so that $h(v) = a_i$. We extend this assignment to symbols from $\lang V$: Let $C = ((v_1,\dots,v_k),R)$ be a~constraint of $\rel I$. Since $(h(v_1),\dots,h(v_k)) \in R^{\rel A}$, we can find a unique $j$ such that $(a_{\pi_1(j)}, \dots, a_{\pi_k(j)})$ is equal to  $(h(v_1),\dots,h(v_k))$. We set $\zeta(g_C) = \proj_{j}$. Clearly, this assignment satisfies all identities of $\Sigma$ involving $g_C$. Thus we found an~assignment from $\lang U \cup \lang V$ to projections that satisfies $\Sigma$, proving that $\Sigma$ is trivial.

  For item (2), let us first suppose that $\Sigma$ is satisfiable in projections and fix a satisfying assignment (of projections to symbols in $\lang U \cup \lang V$). In that case we can define a~map from $I$, equivalently $\lang U$, into $A$ by assigning to $v$ the $a_i$ corresponding to the projection assigned to $f_v$. One easy way to identify the projection is to interpret it as a~projection on the set $A$, i.e., suppose that $\zeta\colon \lang U \to \Proj_A$ is the assignment to projections, and define
  \begin{equation} \label{eq:proj-to-hom} \tag{$\clubsuit$}
    h(v) = \zeta(f_v)(a_1,\dots,a_n).
  \end{equation}
  This would give a~homomorphism to $\rel A$ by reversing the above argument. We only need a~homomorphism to $\rel B$, but we also only have $\zeta\colon \lang U \cup \lang V \to \Pol(\rel A,\rel B)$. Still, we define $h\colon I\to B$ by the rule (\ref{eq:proj-to-hom}). Clearly, this is a~well-defined assignment; we only need to prove that $h$ is a~homomorphism, i.e., that for each constraint $C = ((v_1,\dots,v_k),R)$, we have $(h(v_1),\dots,h(v_k))\in R^{\rel B}$. Consider the symbol $g_C$ and its image under $\zeta$. We know that $\zeta(g_C)$ is a~polymorphism from $\rel A$ to $\rel B$ that satisfies the corresponding identities in $\Sigma$. Let us therefore substitute $a_i$ for $x_i$, for $i\in[n]$, in those identities. We obtain the following:
  \begin{align*}
    \zeta(f_{v_1}) (a_1,\dots,a_n) &= \zeta(g_C) ( a_{\pi_1(1)}, \dots, a_{\pi_1(m)} ) \\
    &\vdotsequals \\
    \zeta(f_{v_k}) (a_1,\dots,a_n) &= \zeta(g_C) ( a_{\pi_k(1)}, \dots, a_{\pi_k(m)} ).
  \end{align*}
  But since all the tuples $(a_{\pi_1(j)},\dots,a_{\pi_k(j)})$, for $j\in[m]$, are in $R^{\rel A}$ and $\zeta(g_C)$ is a~polymorphism from $\rel A$ to $\rel B$, we get that the resulting tuple $(h(v_1),\dots,h(v_k))$ is in $R^{\rel B}$.
\end{proof}

\begin{remark}
Note that if we take $\rel A = \rel B$ in the previous lemma, we obtain that $\Sigma(\rel A,\rel I)$ is trivial if and only if $\Sigma(\rel A,\rel I)$ is satisfied in $\Pol (\rel A)$ if and only if $\rel I$ maps homomorphically to $\rel A$.
\end{remark}

\subsection{From minor conditions to PCSP} 
  \label{sec:reduction-from-lc-to-pcsp} \label{sec:lc-to-pcsp}

Our proof of Theorem \ref{thm:pcsp-and-lc}(2) follows another standard reduction in approximation: a~conjunction of Long code testing and Label Cover \cite{BGS98}. If both $\rel A$ and $\rel B$ are Boolean ($A = B = \{0,1\}$), the construction can be viewed as a~certain Long code test. Such analogy fails when $A$ and $B$ have a~different size.
Nevertheless, projections on the set $A$ (as opposed to $\{0,1\}$ for long codes) still play a~crucial role in the completeness of this reduction.
It has also appeared many times in the presented algebraic form and seems to be folklore (see e.g.\ \cite[Lemma 3.8]{CL17}).

The key idea of the construction is that the question `Is this bipartite minor condition satisfied by polymorphisms of $\rel A$?'\ can be interpreted as an~instance of $\CSP(\rel A)$. The main ingredient is that a~polymorphism is a~homomorphism from $\rel A^n$; this gives an instance whose solutions are exactly $n$-ary polymorphisms. 
Such instances are sometimes called ``indicator instances'' \cite{JCG97}. 
By considering the union of several such instances (one for each function symbol appearing in $\Sigma$) and then introducing equality constraints that reflect the identities, we get that a solution to the obtained instance corresponds to polymorphisms satisfying the identities. In detail, let us fix a~template $(\rel A,\rel B)$ and a~bound on arity $N$. We start with a~bipartite minor condition $(\Sigma,\lang U,\lang V)$ with arity bounded by $N$, and construct an instance $\rel I=\rel I_{\Sigma}(\rel A)$ of $\PCSP(\rel A,\rel B)$ in three steps:
\begin{enumerate}
  \item for each $n$-ary symbol $f$ in $\lang U \cup \lang V$, introduce into $\rel I$ a~fresh copy of $\rel A^n$ where each element $(a_1,\ldots,a_n)\in A^n$ is renamed to $v_{f(a_1,\dots,a_n)}$. To spell this out, for each relation $R^{\rel A}$, say of arity $k$, the relation $R^{\rel I}$ contains a~tuple $(v_{f(a_{11},\dots,a_{1n})},\ldots,v_{f(a_{k1},\dots,a_{kn})})$ if and only if, for each $1\le i\le n$, the tuple $(a_{1i},\ldots,a_{ki})$ is in $R^{\rel A}$.
  \item for each identity
    \(
      f( x_{1}, \dots, x_{l}) \equals g ( x_{\pi(1)},\dots,x_{\pi(r)} )
    \)
    in $\Sigma$, and all $a_1, \dots, a_l\in A$, add an~equality constraint ensuring
    \(
      v_{f( a_{1}, \dots, a_{l})} = v_{ g ( a_{\pi(1)},\dots,a_{\pi(r)} ) }
    \);
  \item identify all pairs of variables connected by (a path of) equality constraints and then remove the equality constraints.
\end{enumerate}

Clearly, the first and the second step can be done in log-space (note that the arity $n$ is bounded by the~constant $N$). The third step can be done in log-space by \cite{Rei08}.

\begin{lemma} \label{lem:long-code-test-from-pcsp}
Let $(\rel A,\rel B)$ be a~template, $N>0$, and let $\clo M = \Pol(\rel A,\rel B)$. The above construction gives a~log-space reduction from $\PLC_{\clo M}(N)$ to $\PCSP(\rel A,\rel B)$.
\end{lemma}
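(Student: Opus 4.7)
The plan is to verify the two promise conditions for the reduction: completeness (\yes{}-instances map to \yes{}-instances) and soundness (\no{}-instances map to \no{}-instances), after which the log-space claim follows from what is already noted about steps (1)--(3) (with Reingold's theorem handling step (3)).

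For \textbf{completeness}, suppose $\Sigma$ is trivial, so there is an assignment $\zeta$ of projections to symbols in $\lang U \cup \lang V$ satisfying $\Sigma$. I would define a candidate homomorphism $h\colon \rel I_\Sigma(\rel A) \to \rel A$ by the rule $h(v_{f(a_1,\dots,a_n)}) = \zeta(f)(a_1,\dots,a_n)$; since $\zeta(f)$ is a projection $p_i$ on $A$, this just picks out one of the $a_j$'s. Two things need checking: (a) $h$ is well-defined on equivalence classes after step (3), and (b) it is a homomorphism into $\rel A$. Point (a) is immediate because each equality constraint $v_{f(a_1,\dots,a_l)} = v_{g(a_{\pi(1)},\dots,a_{\pi(r)})}$ comes from an identity of $\Sigma$, and substituting the $a_j$'s into the identity $\zeta(f)(x_1,\dots,x_l) = \zeta(g)(x_{\pi(1)},\dots,x_{\pi(r)})$ forces the two endpoints to the same element of $A$. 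Point (b) follows from the construction of step (1): each tuple in $R^{\rel I}$ has the form $(v_{f(a_{11},\dots,a_{1n})},\dots,v_{f(a_{k1},\dots,a_{kn})})$ with each column $(a_{1j},\dots,a_{kj})$ lying in $R^{\rel A}$; since $\zeta(f)=p_i$ for some $i$, applying $h$ yields exactly the $i$-th column, which is in $R^{\rel A}$.

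For \textbf{soundness}, suppose $h\colon \rel I_\Sigma(\rel A) \to \rel B$ is a homomorphism; I will produce an assignment $\zeta\colon \lang U \cup \lang V \to \clo M$ satisfying $\Sigma$. Symmetrically to the previous step, define, for each $n$-ary symbol $f$, the function $\zeta(f)\colon A^n \to B$ by $\zeta(f)(a_1,\dots,a_n) = h(v_{f(a_1,\dots,a_n)})$. I then need to verify (a) $\zeta(f) \in \Pol(\rel A,\rel B)$ and (b) every identity of $\Sigma$ is satisfied. For (a), given a relational symbol $R$ of arity $k$ and any $k\times n$ matrix whose columns lie in $R^{\rel A}$, step (1) of the construction places the corresponding tuple $(v_{f(a_{11},\dots,a_{1n})},\dots,v_{f(a_{k1},\dots,a_{kn})})$ in $R^{\rel I}$; applying the homomorphism $h$ and unwinding the definition of $\zeta(f)$ shows that the image tuple lies in $R^{\rel B}$, which is precisely the polymorphism condition. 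For (b), each identity $f(x_1,\dots,x_l) \equals g(x_{\pi(1)},\dots,x_{\pi(r)})$ contributed equality constraints for all choices of $a_1,\dots,a_l \in A$, so after step (3) the variables $v_{f(a_1,\dots,a_l)}$ and $v_{g(a_{\pi(1)},\dots,a_{\pi(r)})}$ are identified; therefore $h$ (being a well-defined map on the quotient) assigns them the same value, giving $\zeta(f)(a_1,\dots,a_l) = \zeta(g)(a_{\pi(1)},\dots,a_{\pi(r)})$, as required.

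The main conceptual point, rather than a technical obstacle, is recognising the symmetry between the two directions: the ``indicator instance'' $\rel A^n$ copies in step (1) force any homomorphism $h$ to restrict on each copy to a polymorphism of $(\rel A,\rel B)$, while the equality constraints of step (2) precisely translate minor identities into syntactic identifications between variables of different copies. Completeness restricts the domain $B$ back to $A$ via a projection choice; soundness relaxes the codomain from $A$ to $B$, which is exactly why the same construction serves both ends of the promise. All three steps of the construction are log-space: step (1) processes each symbol independently with arity bounded by the constant $N$, step (2) enumerates identities and constant-size tuples, and step (3) is reducible to undirected reachability and thus log-space computable by Reingold's theorem~\cite{Rei08}.
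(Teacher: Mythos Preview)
Your proof is correct and follows essentially the same approach as the paper's own argument: both directions define $h$ and $\zeta$ by the mutually inverse rules $h(v_{f(a_1,\dots,a_n)}) = \zeta(f)(a_1,\dots,a_n)$, with completeness using that projections are polymorphisms of $\rel A$ and soundness using that step~(1) encodes the polymorphism condition while steps~(2)--(3) encode the identities. Your write-up is simply more explicit in unpacking these verifications than the paper's terse version.
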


\begin{proof}
  To prove completeness, suppose that a bipartite minor condition $(\Sigma,\lang U,\lang V)$ is trivial and let $\zeta\colon \lang U \cup \lang V \to \Proj_A$ be a witness to that. We define a~homomorphism from $\rel I$ to $\rel A$ by setting
  \[
    h( v_{f(a_1,\dots,a_n)} ) = \zeta(f) (a_1,\dots,a_n)
  .\]
  Note that $h$ is well-defined due to step (2) in the above construction, i.e. because $\zeta$-images of function symbols satisfy identities in $\Sigma$.
  Further, since every projection is a~polymorphism of $\rel A$, $h$ is a~homomorphism.

  To prove soundness, assume that there is a~homomorphism $h\colon \rel I \to\rel B$. We reverse the above argument and define $\zeta$ by 
  \(
    \zeta(f) (a_1,\dots,a_n) = h( v_{f(a_1,\dots,a_n)} )
  \)
  for all $a_1,\dots,a_n\in A$.  Now, $\zeta(f) \in \clo M$ follows from the first step of the construction of $\rel I$, and satisfaction of the identities from $\Sigma$ follows from the second and the third steps of the construction.
\end{proof}

This concludes the proof of Theorem~\ref{thm:pcsp-and-lc}.

\section{Relational constructions} \label{sec:pp-constructions} \label{sec:constructions}

In the previous section, we have shown that the existence of a~minion homomorphism between polymorphism minions of two templates gives us a~log-space reduction between the corresponding PCSPs. The presented proof is both self-contained and succinct. In fact, we proved a~generalisation of \cite[Theorem~D.1]{BG18}, without referencing this special case, or even the Galois correspondence that underlines it \cite{Pip02}.
This section describes several concepts that give understanding of how our result relates to these two results and the theory developed for CSPs in its current form \cite{BOP18,BKW17}.

Our goal is to present a~deeper understanding of when and how our reduction works. Namely, Theorem~\ref{thm:h1-reductions} shows that $\PCSP(\rel A_2,\rel B_2)$ can be reduced to $\PCSP(\rel A_1,\rel B_1)$ under the assumption
that there is a~minion homomorphism from $\Pol(\rel A_1,\rel B_1)$ to $\Pol(\rel A_2,\rel B_2)$. In this section we show that this assumption is satisfied if and only if $(\rel A_2,\rel B_2)$ can be obtained from $(\rel A_1,\rel B_1)$ using certain relational constructions that we call \emph{pp-constructions}, and that happens if and only if there is a~homomorphism from a~certain relational structure constructed using $\Pol(\rel A_1,\rel B_1)$ and $\rel A_2$, which we call the \emph{free structure}, to~$\rel B_2$.

\subsection{Free structures} \label{sec:free}

It follows from Theorem~\ref{thm:main} that if we have an NP-hard $\PCSP(\rel A_2,\rel B_2)$, then, for any PCSP template $(\rel A_1,\rel B_1)$ such that there is a~minion homomorphism $\xi$ from $\Pol(\rel A_1,\rel B_1)$ to $\Pol(\rel A_2,\rel B_2)$, the problem $\PCSP(\rel A_1,\rel B_1)$ is also NP-hard.
The existence of such a~minion homomorphism seems to be a~global property of a~minion $\Pol(\rel A_1,\rel B_1)$. Nevertheless, given that the target template is finite, it turns out that it is not the case, as we show in this section. This allows us to obtain a reduction from knowing only the structure of polymorphisms of arity bounded by a~fixed parameter that depends on the target template --- we will use this to obtain new hardness results in Section~\ref{sec:hardness-from-pcsp}.

One way to get a~bound on the arity of polymorphisms is to compose the two reductions used to obtain Theorem~\ref{thm:main} (with the trivial reduction between the corresponding \PLC-problems) and analyse the soundness of the composite reduction.
Here we will describe another way of getting such bound, generalising \cite[Section 7]{BOP18}.
Our proof is based on two facts: First, a~minion homomorphism from a~minion $\clo M$ to a~minion $\clo A$ such that $\clo A \subseteq \clo O(A,B)$ is fully determined by its image on functions of arity $|A|$.
Second, for a~minion $\clo M$ and a~relational structure $\rel A$ there is a~most general structure $\rel F$ such that there is a~minion homomorphism from $\clo M$ to $\Pol(\rel A,\rel F)$. Structure $\rel F$ is most general in the sense that it maps homomorphically to any other structure $\rel B$ similar to $\rel A$ for which there is a~minion homomorphism from $\clo M$ to $\Pol(\rel A,\rel B)$.%
\footnote{In the case $\clo M$ is a~clone, the structure $\rel F$ is obtained by considering the free algebra generated by $\rel A$ in the variety of actions of $\clo M$ (see \cite[Section 3.2]{Opr17} for more detailed description of this case).}

In order to simplify the notation, we assume that $A = [n]$. The general case, as well as the case when $\rel A$ is an infinite structure, can be dealt with as in \cite{BOP18,Opr17}.

\begin{definition}
  Let $\rel A$ be a~finite relational structure on the set $A = [n]$, and $\clo M$ a~minion (not necessarily related to $\rel A$). The \emph{free structure of $\clo M$ generated by $\rel A$} is a~relational structure $\rel F_{\clo M}(\rel A)$ similar to $\rel A$. Its universe consists of $n$-ary functions of $\clo M$, i.e., $F_{\clo M}(A) = \clo M^{(n)}$.
  For any relation of $\rel A$, say $R^{\rel A} = \{ \tup r_1, \dots, \tup r_m \} \subseteq A^k$, the relation $R^{\rel F_{\!\clo M}(\rel A)}$ is defined as the set of all $k$-tuples $(f_1,\dots,f_k) \in F_{\clo M}(A)$ such that there exists an~$m$-ary (note that $m=|R^{\rel A}|$) function $g\in \clo M$ that satisfies
  \(
    f_i( x_1, \dots, x_n ) = g( x_{\tup r_1(i)},\dots, x_{\tup r_m(i)} )
  \)
  for each $i = 1,\dots,k$.
\end{definition}

There is a~natural minion homomorphism $\phi$ from $\clo M$ to $\Pol (\rel A,\rel F_{\clo M}(\rel A))$. It is defined by
\[
  \phi(g)(a_1,\dots,a_n) = f(x_1,\ldots,x_n)
,\]
where $f(x_1,\ldots,x_n) = g(x_{a_1},\dots,x_{a_n})$.
The relation $R^{\rel F_{\!\clo M}(\rel A)}$ is then the smallest relation $S$ such that each function from $\phi(\clo M)$ is a~polymorphism from $(A; R^{\rel A})$ to $(F_{\clo M}(A);S)$.

\begin{example} \label{ex:C(A)}
  As an example let us describe the free structures of the minion $\clo H$ of all Boolean functions of the form $x_{i_1}\meet \dots \meet x_{i_k}$.  In fact, $\clo H$ is the set of polymorphisms of the CSP template for \textsc{Horn 3-Sat}, i.e., the structure
  \[
    \rel H = (\{0,1\}; x\wedge y\to z, x\wedge y\to \neg z, \{0\},\{1\})
  \]
  (see \cite[Example 5]{BKW17}).  Note that the $n$-ary functions $f\in \clo H$ can be identified with non-empty subsets $[n]$: we identify $\emptyset \neq I\subseteq [n]$ with $f_I = \bigwedge_{i\in I} x_i$. Clearly any function from $\clo H$ can be expressed in this way. Also $f_I = f_J$ if and only if $I = J$.

  Now, fix a~relational structure $\rel A$ and assume $A = [n]$. The free structure $\rel F=\rel F_{\clo H}(\rel A)$ of $\clo H$ generated by $\rel A$ is then defined as follows. The elements of $\rel F$ are the $n$-ary functions from $\clo H$, i.e., the non-empty subsets of $A$. For a~relation $R^{\rel A} = \{ \tup r_1,\dots,\tup r_m \}$ of arity $k$, the relation $R^{\rel F}$ consists of all $k$-tuples $(f_{I_1},\dots,f_{I_k})$ for which there exists a~function $g_J \in \clo H^{(m)}$ such that
  \[
    f_{I_i}(x_1,\dots,x_n) = g_J ( x_{\tup r_1(i)},\dots, x_{\tup r_m(i)}) 
  \]
  for all $i$. Note that this identity is satisfied if and only if $I_i = \{ \tup r_j(i) \mid j\in J \}$ since the left-hand side is $\bigwedge_{a\in I_i} x_a$ and similarly, the right-hand side is $\bigwedge_{j\in J} x_{ \tup r_j(i)}$.  In other words, the elements of $R^{\rel F}$ can be viewed as those $k$-tuples $(I_1,\dots,I_k)$ of non-empty subsets of $A$ for which there exists a~subset $\{\tup r_j \mid j\in J\} \subseteq R^{\rel A}$ such that
  \(
    I_i = \{ \tup r_j(i) \mid j\in J \}
  \)
  for all $i = 1,\dots,k$.
  The resulting structure is isomorphic to the structure introduced in \cite[Section 6.1.1]{FV98} (called $U$ there) to characterise the so-called CSPs of width 1 --- now it is often referred to as the ``power structure'' of $\rel A$.
\end{example}

One of useful properties of the free structure is the following lemma that connects it to the condition $\Sigma(\rel A,\rel I)$ constructed in Section~\ref{sec:two-prover-protocol}.

\begin{lemma} \label{lem:sigma-conditions}
  Let $\clo M$ be a~minion and $\rel A$ a~relational structure. Then $\clo M$ satisfies the condition $\Sigma(\rel A,\rel F_{\clo M}(\rel A))$.
\end{lemma}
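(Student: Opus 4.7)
The plan is to exhibit an explicit assignment $\zeta$ on the symbols of $\Sigma(\rel A,\rel F_{\clo M}(\rel A))$ that witnesses satisfiability in $\clo M$. Unwinding the construction: since $A=[n]$, each vertex of $\rel F=\rel F_{\clo M}(\rel A)$ is itself an $n$-ary function $f\in\clo M^{(n)}$, and for every vertex $v=f$ of $\rel F$ the condition $\Sigma(\rel A,\rel F)$ contains an $n$-ary symbol $f_v$. The obvious thing to do is tautological: assign $\zeta(f_v) := v$, viewing $v\in F_{\clo M}(A)=\clo M^{(n)}$ as an $n$-ary member of $\clo M$.

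Next I would handle the constraint symbols. A constraint $C=((v_1,\dots,v_k),R)$ of $\rel F$ is by definition a tuple $(v_1,\dots,v_k)\in R^{\rel F}$, so by the definition of the free structure there exists an $m$-ary $g\in\clo M$ (with $m=|R^{\rel A}|$) such that, for each $i=1,\dots,k$,
\[
  v_i(x_1,\dots,x_n) \;=\; g(x_{\tup r_1(i)},\dots,x_{\tup r_m(i)}),
\]
where $\tup r_1,\dots,\tup r_m$ is the fixed enumeration of $R^{\rel A}$. Set $\zeta(g_C)$ to be any such $g$. Now the identities in $\Sigma(\rel A,\rel F)$ attached to $C$ are, by construction, exactly
\[
  f_{v_i}(x_1,\dots,x_n) \;\equals\; g_C(x_{\tup r_1(i)},\dots,x_{\tup r_m(i)}),\qquad i=1,\dots,k,
\]
(using that, when $A=[n]$, the index $\pi_i(j)$ in the definition of $\Sigma$ coincides with $\tup r_j(i)$). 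Substituting $\zeta(f_{v_i})=v_i$ and $\zeta(g_C)=g$ turns each of these formal identities into the displayed equality guaranteed by membership of $(v_1,\dots,v_k)$ in $R^{\rel F}$, so each identity of $\Sigma(\rel A,\rel F)$ is satisfied.

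I do not anticipate a genuine obstacle: the lemma is essentially a restatement of the defining property of $\rel F_{\clo M}(\rel A)$ in the language of minor identities, and the whole argument is a direct unwinding of definitions. The only small care needed is bookkeeping — making sure the enumeration of $R^{\rel A}$ used when building $\Sigma$ is the same enumeration used when verifying that a tuple lies in $R^{\rel F}$, which one may fix once at the outset. No property of $\clo M$ beyond being a minion (in particular, no composition) is used, which is the whole point.
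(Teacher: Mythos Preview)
Your proposal is correct and matches the paper's proof essentially line for line: the paper also sets $\zeta(f_v)=v$ tautologically and then, for each constraint $C$ arising from $(v_1,\dots,v_k)\in R^{\rel F}$, chooses $\zeta(g_C)$ to be any $m$-ary $g\in\clo M$ witnessing that tuple's membership in $R^{\rel F}$. Your remark about aligning the enumeration of $R^{\rel A}$ (so that $\pi_i(j)=\tup r_j(i)$) is the only bookkeeping point, and the paper handles it implicitly in exactly the same way.
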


\begin{proof}
  Let $A = [n]$ and $\rel F = \rel F_{\clo M}(\rel A)$.
  We claim that $\Sigma(\rel A,\rel F)$ is satisfied in $\clo M$. Note that $\lang U$ consists of symbols $f_v$ where $v\in F = \clo M^{(n)}$. Therefore, we can define $\zeta\colon \lang U \to \clo M$ by $\zeta(f_v) = v$. To extend this map to $\lang V$, pick a~constraint $C$ corresponding to $(v_1,\dots,v_k)\in R^{\rel F}$ and let $R^{\rel A} = \{ \tup r_1,\dots,\tup r_m \}$. We need to find $\zeta(g_C)\in \clo M^{(m)}$ such that
  \[
    \zeta(f_{v_i})( x_1,\dots,x_n ) \equals \zeta(g_C)( x_{\tup r_1(i)},\dots, x_{\tup r_m(i)} )
  \]
  for all $i = 1,\dots,k$. Existence of such $\zeta(g_C)$ is guaranteed by the definition of $R^{\rel F}$ and the fact that the left-hand side is equal to $v_i(x_1,\dots,x_n)$.
\end{proof}

The following lemma relates free structures with minion homomorphisms between minions. This gives a~key correspondence between homomorphisms of certain relational structures and minion homomorphisms.

\begin{lemma} \label{lem:adjunction}
  Let $\clo M$ be a~minion, and $(\rel A,\rel B)$ a~PCSP-template.  There is a~1-to-1 correspondence between homomorphisms from $\rel F_{\clo M}(\rel A)$ to $\rel B$ and minion homomorphisms from $\clo M$ to $\Pol (\rel A,\rel B)$.
\end{lemma}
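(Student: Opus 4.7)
The plan is to construct explicit maps in both directions between $\Hom(\rel F_{\clo M}(\rel A), \rel B)$ and the set of minion homomorphisms $\clo M \to \Pol(\rel A,\rel B)$, and then verify that they are mutually inverse. The natural minion homomorphism $\phi\colon \clo M \to \Pol(\rel A, \rel F_{\clo M}(\rel A))$ described just before the lemma will be the key technical device, letting us trade abstractly between functions in $\clo M$ and elements of $F_{\clo M}(A)$.

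First I would describe the forward map. Given a homomorphism $h\colon \rel F_{\clo M}(\rel A) \to \rel B$, define a map $\xi\colon \clo M\to \Pol(\rel A,\rel B)$ by postcomposition with $h$, namely $\xi(g)(a_1,\dots,a_m) = h\bigl(\phi(g)(a_1,\dots,a_m)\bigr)$ for every $m$-ary $g \in \clo M$. Since $\phi(g)$ is a polymorphism from $\rel A$ to $\rel F_{\clo M}(\rel A)$ and $h$ is a homomorphism of similar structures, the composite $\xi(g)$ is a polymorphism from $\rel A$ to $\rel B$. Arity is preserved by construction, and minor preservation of $\xi$ follows immediately from the same property of $\phi$ (since postcomposing with $h$ affects only the output side and therefore commutes with identifying/permuting inputs).

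For the reverse map, given a minion homomorphism $\xi\colon \clo M \to \Pol(\rel A,\rel B)$, define $h\colon F_{\clo M}(A) \to B$ on the universe $\clo M^{(n)}$ by $h(f) = \xi(f)(1,\dots,n)$, using the convention $A = [n]$. To see that $h$ is a homomorphism, pick a tuple $(f_1,\dots,f_k) \in R^{\rel F_{\!\clo M}(\rel A)}$ with witness $g \in \clo M^{(m)}$, so that $f_i(x_1,\dots,x_n) = g(x_{\tup r_1(i)},\dots,x_{\tup r_m(i)})$ for each $i$, where $R^{\rel A} = \{\tup r_1,\dots,\tup r_m\}$. Applying the minor-preservation clause of $\xi$ and evaluating at $x_j = j$ gives $h(f_i) = \xi(g)(\tup r_1(i),\dots,\tup r_m(i))$. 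Viewing the $\tup r_j$ as the columns of an $k\times m$ matrix with rows $\bigl(\tup r_1(i),\dots,\tup r_m(i)\bigr)$ and applying $\xi(g) \in \Pol(\rel A,\rel B)$ row-wise, the resulting tuple $(h(f_1),\dots,h(f_k))$ lies in $R^{\rel B}$, as required.

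Finally, I would verify that these two assignments are inverse to one another, which is essentially a direct calculation. In one direction, starting from $h$, producing $\xi$, and then evaluating $h'(f) := \xi(f)(1,\dots,n) = h(\phi(f)(1,\dots,n)) = h(f)$ for $f\in \clo M^{(n)}$, since $\phi(f)(1,\dots,n) = f$ by the definition of $\phi$. In the other direction, starting from $\xi$, producing $h$, and reconstructing $\xi'(g)(a_1,\dots,a_m) = h(\phi(g)(a_1,\dots,a_m)) = \xi(f_a)(1,\dots,n)$ for the minor $f_a(x_1,\dots,x_n) = g(x_{a_1},\dots,x_{a_m})$ of $g$; then minor preservation of $\xi$ collapses this to $\xi(g)(a_1,\dots,a_m)$, giving $\xi' = \xi$. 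The mildly subtle step is the second direction of the bijection: making sure the minor-preservation axiom is applied with the right substitution $\pi(i) = a_i$ so that evaluating at the canonical tuple $(1,\dots,n)$ reproduces $\xi(g)$ on an arbitrary argument; this is the one place where it really matters that $\xi$ is a minion homomorphism rather than just an arity-preserving map.
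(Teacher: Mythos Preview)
Your proposal is correct and follows essentially the same approach as the paper: both directions of the correspondence are built exactly as you describe (via $h \mapsto h\circ\phi(-)$ and $\xi \mapsto \xi(-)(1,\dots,n)$), and the inverse checks are the same computations, with the paper phrasing one of them as ``uniqueness of extension'' rather than as an explicit composition-equals-identity argument. Your write-up is in fact slightly more detailed than the paper's on why $h$ preserves relations (the paper defers this to the argument of Lemma~\ref{lem:pcsp-to-lc}(2)), but the underlying idea is identical.
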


\begin{proof}
  Assume that $A = [n]$. A~homomorphism $c\colon \rel F_{\clo M}(\rel A)\to \rel B$ is then a~mapping from $\clo M^{(n)}$ to $B$. We would like to connect it to the restriction of a~minion homomorphism $\xi$ from $\clo M$ to $\Pol(\rel A,\rel B)$ on the $n$-ary functions. The only obstacle is that an~image of an~$n$-ary function under $\xi$ is a~mapping $A^n \to B$, not an element of $B$. Following the decoding used in Lemma~\ref{lem:pcsp-to-lc}, we identify such a~mapping with its image of the tuple $(1,\dots,n)$.
  Formally, to a minion homomorphism $\xi\colon \clo M \to \Pol(\rel A,\rel B)$, we assign the map  $c_\xi\colon f \mapsto \xi(f)(1,\dots,n)$. The fact that this map is a~homomorphism from $\rel F_{\clo M}(\rel A)$ to $\rel B$ follows by an argument similar to the proof of Lemma~\ref{lem:pcsp-to-lc}(2).

  It remains to prove that any such restriction can be extended in a~unique way.  Let $\phi$ denote the natural minion homomorphism from $\clo M$ to $\Pol(\rel A,\rel F_{\clo M}(\rel A))$, i.e., for $f\in \clo M$, say of arity $m$, $\phi(f)$ is the $m$-ary function from $\Pol(\rel A,\rel F_{\clo M}(\rel A))$ defined as follows: $\phi(f)(a_1,\dots,a_m)$ is the element $g \in \clo M^{(n)}$ such that 
  \begin{equation}
    f(x_{a_1},\dots,x_{a_m}) = g(x_1,\dots,x_n).
    \tag{$\vardiamondsuit$}\label{eq:col-minor}
  \end{equation}  
  Given $c\colon \rel F_{\clo M}(\rel A) \to \rel B$, we define $\xi_c\colon\clo M \to \Pol(\rel A,\rel B)$ by
  \[
    \xi_c(f)\colon (a_1,\dots,a_m) \mapsto c(\phi(f)(a_1,\dots,a_m)).
  \]
  It is easy to see that the mapping $\xi_c$ preserves taking minors. We need to prove that $\xi_c(f) \in \Pol(\rel A,\rel B)$ for each $f$, but that follows from the fact that $\phi(f) \in \Pol(\rel A,\rel F_{\clo M}(\rel A))$, i.e., it is a~homomorphism from $\rel A^m$ to $\rel F_{\clo M}(\rel A)$ and $c$ is a~homomorphism from $\rel F_{\clo M}(\rel A)$ to $\rel B$.

  For the uniqueness, suppose that $\chi\colon \clo M\to \Pol(\rel A,\rel B)$ is a~minion homomorphism such that $c_{\chi} = c$, i.e., $\chi(g)(1,\dots,n) = c(g)$ for all $g\in \clo M^{(n)}$. Since $\chi$ preserves minors, we get that for any $(a_1,\dots,a_m)\in A^m$ and $f$ satisfying (\ref{eq:col-minor}), we have $\chi(f)(a_1,\dots,a_m) = \chi(g)(1,\dots,n) = c(g)$ concluding $\chi = \xi_c$.
\end{proof}

We remark in passing that the correspondence in the above lemma is \emph{natural} in the categorical sense. More precisely, the mapping ${-}\mapsto \rel F_{-}(\rel A)$, assigning to a~minion the free structure generated by $\rel A$ is an~adjoint to the functor $\Pol(\rel A,{-})$.

\begin{remark} The previous lemma holds also for minions $\clo M$ over infinite sets since the proof applies to this case.
\end{remark}

\subsection{Pp-constructions}
\label{sec:pp-constr}

We introduce two relational constructions: \emph{pp-powers} that generalise the analogous notion for CSP templates \cite{BOP18}, and \emph{homomorphic relaxations} which generalise both strict relaxations (Definition \ref{def:strict-relaxation}) and homomorphic equivalence for CSP templates. Relaxation is in fact a~very natural notion for promise problems: the idea is that any problem that has a~stronger promise has to be at least as easy as the original problem.

\begin{definition} \label{def:relaxation}
  Assume that $(\rel A,\rel B)$ and $(\rel A',\rel B')$ are similar PCSP templates. We say that $(\rel A',\rel B')$ is a~\emph{homomorphic relaxation}\footnote{Homomorphic relaxations of CSP templates (i.e., $\rel A=\rel B$, but $\rel A'$ and $\rel B'$ can be different) have been considered in \cite{BG18b} where they are called \emph{homomorphic sandwiches}.} of $(\rel A,\rel B)$ if there are homomorphisms $h_A\colon\rel A' \to \rel A$ and $h_B\colon\rel B \to \rel B'$.
\end{definition}

All the relaxations that appear in this paper are homomorphic relaxations, therefore we will usually omit the word `homomorphic'. In particular, the strict relaxation defined in Definition \ref{def:strict-relaxation} is a~special case of homomorphic relaxation, namely one where both $h_A$ and $h_B$ are identity maps on the corresponding domains.

Clearly, if $(\rel A',\rel B')$ is a~relaxation of $(\rel A,\rel B)$, then the trivial reduction (which does not change the input) is a~reduction from $\PCSP(\rel A,\rel B)$ to $\PCSP(\rel A',\rel B')$, since it is a~strengthening of the promise, as mentioned above.

The following is a generalisation of the definition of a pp-power for CSP templates (see {\cite[Definition 14]{BKW17}}). 

\begin{definition}
  Let $(\rel A,\rel B)$ and $(\rel A',\rel B')$ be two PCSP templates. We say that $(\rel A',\rel B')$ is an ($n$-th)~\emph{pp-power} of $(\rel A,\rel B)$ if $A' = A^n$, $B' = B^n$, and, if we view $k$-ary relations on $\rel A'$ and $\rel B'$ as $kn$-ary relations on $A$ and $B$, respectively, then $(\rel A',\rel B')$ is pp-definable in $(\rel A,\rel B)$ in the sense of Definition~\ref{def:strict-relaxation}.
\end{definition}

\begin{lemma} \label{lem:pp-gives-minor}
  Let $(\rel A_1,\rel B_1)$ and $(\rel A_2,\rel B_2)$ be two PCSP templates. If 
  \begin{enumerate}
    \item $(\rel A_2,\rel B_2)$ is a~relaxation of $(\rel A_1,\rel B_1)$, or
    \item $(\rel A_2,\rel B_2)$ is a~pp-power of $(\rel A_1,\rel B_1)$,
  \end{enumerate}
  then there is a~minion homomorphism $\xi \colon \Pol(\rel A_1,\rel B_1) \to \Pol(\rel A_2,\rel B_2)$.
\end{lemma}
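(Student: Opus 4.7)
The plan is to exhibit an explicit candidate $\xi$ in each case and then check three things: (i) it preserves arities, (ii) its image lies in $\Pol(\rel A_2, \rel B_2)$, and (iii) it commutes with taking minors. In both cases $\xi$ will be built by a syntactic composition that is transparently compatible with reindexing variables, so (i) and (iii) will be immediate, and the substance of the argument is verifying (ii).

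For the relaxation case, let $h_A\colon \rel A_2 \to \rel A_1$ and $h_B\colon \rel B_1 \to \rel B_2$ be the given homomorphisms, and for each $k$-ary $f \in \Pol(\rel A_1,\rel B_1)$ I would set
\[
  \xi(f)(a_1,\dots,a_k) = h_B\bigl(f(h_A(a_1),\dots,h_A(a_k))\bigr).
\]
To verify $\xi(f) \in \Pol(\rel A_2,\rel B_2)$, take any $k$ tuples in a relation $R^{\rel A_2}$, push them through $h_A$ coordinate-wise to obtain $k$ tuples in $R^{\rel A_1}$, apply $f$ to land in $R^{\rel B_1}$, and finally apply $h_B$ to end up in $R^{\rel B_2}$. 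Minor preservation is immediate: pre-composing with $h_A$ in each argument slot commutes with any reindexing $\pi\colon [m] \to [n]$, and post-composing with $h_B$ does not touch the arguments.

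For the pp-power case, where $A_2 = A_1^n$ and $B_2 = B_1^n$, I would define $\xi(f)$ by the obvious coordinate-wise action: for $k$-ary $f$,
\[
  \xi(f)(\mathbf a_1,\dots,\mathbf a_k)(j) = f(\mathbf a_1(j),\dots,\mathbf a_k(j)), \quad j = 1,\dots,n.
\]
Minor preservation again follows at once from this coordinate-wise formula. For membership in $\Pol(\rel A_2,\rel B_2)$, view each arity-$l$ relation $R^{\rel A_2} \subseteq (A_1^n)^l$ as an $ln$-ary relation on $A_1$ pp-defined by some formula $\Psi_R$ over $(\rel A_1,\rel B_1)$, so that $R^{\rel B_2}$ is defined by the same $\Psi_R$ over $\rel B_1$. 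Under this identification, applying $\xi(f)$ to $k$ tuples from $R^{\rel A_2}$ coincides with applying $f$ coordinate-wise to $k$ $ln$-tuples satisfying $\Psi_R^{\rel A_1}$, so (ii) reduces to the standard fact that polymorphisms of $(\rel A_1,\rel B_1)$ preserve pp-definitions; this is a routine induction on the structure of the pp-formula (conjunctions use the polymorphism property for each atomic relation, equalities are trivial, and existentials are handled by picking witnesses coordinate-wise) and is essentially the content of Theorem~\ref{thm:galois-corespondence}.

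The only mildly subtle step is the pp-power case: one needs to carefully unpack the identification of $l$-ary relations on $A_1^n$ with $ln$-ary relations on $A_1$ and observe that the coordinate-wise action of $f$ in the $n$-direction inside $\xi(f)$ corresponds exactly to the coordinate-wise action of $f$ on satisfying assignments of $\Psi_R^{\rel A_1}$ required by the pp-preservation argument. Once this identification is in place, both cases are mechanical verifications.
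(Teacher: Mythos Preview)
Your proposal is correct and matches the paper's proof essentially line for line: in both cases you define $\xi$ exactly as the paper does (conjugation by $h_A,h_B$ for relaxations, coordinate-wise action for pp-powers) and verify membership in $\Pol(\rel A_2,\rel B_2)$ and minor preservation in the same way. The only difference is that you spell out the pp-preservation argument in slightly more detail, whereas the paper simply asserts that polymorphisms preserve all pp-definable relations.
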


\begin{proof}
  For $i=1,2$, let $\clo M_i=\Pol(\rel A_i,\rel B_i)$.
  \begin{enumerate}
    \item Assume $h_A\colon \rel A_2\to \rel A_1$ and $h_B\colon \rel B_1 \to \rel B_2$ are homomorphisms, and define $\xi\colon \clo M_1 \to \clo M_2$ by $\xi(f)\colon (a_1,\dots,a_n) \mapsto h_B(f(h_A(a_1),\dots,h_A(a_n)))$. It is easy to see that $\xi(f)$ is a~polymorphism of $(\rel A_2,\rel B_2)$ (as it is a~composition of homomorphisms), and also that $\xi$ preserves minors.

    \item Assume that $(\rel A_2,\rel B_2)$ is $n$-th pp-power of $(\rel A_1,\rel B_1)$. We define $\xi$ to map each $f$ to its component-wise action on $A_2 = A_1^n$, i.e.,
  \[
    \xi(f)\colon (\tup a_1,\dots,\tup a_m) \mapsto
      \bigl(f( \tup a_1(1),\dots, \tup a_m(1) ), \dots, f( \tup a_1(n),\dots, \tup a_m(n) )\bigr)
  .\]
  Clearly, $\xi$ is minor preserving. Also $\xi(f)$ is a~polymorphism, since $f$ is a~polymorphism from $\rel A_1$ to $\rel B_1$, and therefore it preserves all pp-definable relations.
  \qedhere
  \end{enumerate}
\end{proof}

The above together with Theorem~\ref{thm:h1-reductions} proves that both the constructions yield a~log-space reduction between the corresponding PCSPs. We remark that the same can also be proven directly. Since the both constructions yield a~reduction, we can combine them and still retain log-space reductions. This motivates the next definition, which again is similar to a notion for CSPs (see {\cite[Section 3.4]{BKW17}}). 

\begin{definition} \label{def:pp-constructible}
We say that $(\rel A',\rel B')$ is \emph{pp-constructible} from $(\rel A,\rel B)$ if there exists a~sequence
\[
  (\rel A,\rel B) = (\rel A_1,\rel B_1) , \dots, (\rel A_k,\rel B_k) = (\rel A',\rel B')
\]
of templates where each $(\rel A_{i+1},\rel B_{i+1})$ is a~pp-power or a~homomorphic relaxation of $(\rel A_{i},\rel B_{i})$.
We say that $(\rel A',\rel B')$ is pp-constructible from $\rel A$ to mean that it is pp-constructible from $(\rel A,\rel A)$.
\end{definition}

\begin{corollary} \label{cor:pp-constructible}
  If $(\rel A',\rel B')$ is pp-constructible from $(\rel A,\rel B)$ then there is a~minion homomorphism from $\Pol(\rel A,\rel B)$ to $\Pol(\rel A',\rel B')$.
\end{corollary}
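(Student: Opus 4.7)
The plan is a straightforward induction on the length $k$ of the sequence witnessing pp-constructibility in Definition~\ref{def:pp-constructible}, using Lemma~\ref{lem:pp-gives-minor} as the single-step fact and composition of minion homomorphisms as the glue. The base case $k=1$ is trivial since the identity map on $\Pol(\rel A,\rel B)$ is a minion homomorphism (it preserves arities and preserves minors vacuously).

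For the inductive step, suppose we have a sequence $(\rel A,\rel B) = (\rel A_1,\rel B_1), \dots, (\rel A_{k},\rel B_{k}) = (\rel A',\rel B')$ as in the definition. By the induction hypothesis applied to the prefix of length $k-1$, there is a minion homomorphism $\xi\colon \Pol(\rel A,\rel B) \to \Pol(\rel A_{k-1},\rel B_{k-1})$. Since $(\rel A_k,\rel B_k)$ is obtained from $(\rel A_{k-1},\rel B_{k-1})$ by either a pp-power or a homomorphic relaxation, Lemma~\ref{lem:pp-gives-minor} yields a minion homomorphism $\eta\colon \Pol(\rel A_{k-1},\rel B_{k-1}) \to \Pol(\rel A_k,\rel B_k)$. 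The composite $\eta \circ \xi$ is then the desired minion homomorphism from $\Pol(\rel A,\rel B)$ to $\Pol(\rel A',\rel B')$.

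The only thing that needs a quick sanity check is that the class of minion homomorphisms is closed under composition, i.e., that $\eta\circ\xi$ is itself a minion homomorphism. Arity preservation is immediate: if $g$ is $n$-ary then $\xi(g)$ is $n$-ary, hence $\eta(\xi(g))$ is $n$-ary. Preservation of minors also follows directly from the definition, since if $f(x_1,\dots,x_n) = g(x_{\pi(1)},\dots,x_{\pi(m)})$ then applying $\xi$ gives $\xi(f)(x_1,\dots,x_n)=\xi(g)(x_{\pi(1)},\dots,x_{\pi(m)})$, and then applying $\eta$ once more gives the required identity for $\eta\circ\xi$.

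There is no real obstacle here: the content of the corollary is entirely contained in Lemma~\ref{lem:pp-gives-minor}, and the corollary simply packages those single-step minion homomorphisms along the defining sequence. Hence the argument is a one-line induction plus the trivial observation that minion homomorphisms compose.
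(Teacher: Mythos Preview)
Your proof is correct and follows essentially the same approach as the paper: apply Lemma~\ref{lem:pp-gives-minor} at each step of the defining sequence and compose the resulting minion homomorphisms. The paper's version is terser (it simply says ``compose all $\xi_i$'s'' without the explicit induction framing or the verification that minion homomorphisms compose), but the content is identical.
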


\begin{proof} Assume that we have a~sequence $(\rel A_1,\rel B_1)$, \dots, $(\rel A_k,\rel B_k)$ as in the definition of pp-constructibility, and $\clo M_i = \Pol(\rel A_i,\rel B_i)$ for $i\in [k]$. That means, that by Lemma~\ref{lem:pp-gives-minor}, we have minion homomorphisms $\xi_i\colon \clo M_i\to \clo M_{i+1}$ for all $i<k$. The minion homomorphism from $\Pol (\rel A,\rel B) = \clo M_1$ to $\Pol (\rel A',\rel B') = \clo M_k$ is obtained by composing all~$\xi_i$'s.
\end{proof}

An example of a~template that can be pp-constructed is a~template obtained using the free structure.

\begin{lemma} \label{lem:free-is-pp}
  Let $(\rel A_1,\rel B_1)$ be a~PCSP template, $\clo M = \Pol (\rel A_1,\rel B_1)$, and $\rel A_2$ be a~relational structure. Then the template $(\rel A_2,\rel F_{\clo M}(\rel A_2))$ is a~relaxation of a~pp-power of $(\rel A_1,\rel B_1)$.
\end{lemma}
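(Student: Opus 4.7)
The plan is to exhibit an explicit pp-power $(\rel A^*,\rel B^*)$ of $(\rel A_1,\rel B_1)$ together with homomorphisms $h_A\colon\rel A_2\to\rel A^*$ and $h_B\colon\rel B^*\to\rel F_{\clo M}(\rel A_2)$, so that $(\rel A_2,\rel F_{\clo M}(\rel A_2))$ becomes a homomorphic relaxation of $(\rel A^*,\rel B^*)$. Writing $A_2=[n]$ and $N=|A_1|^n$, I will take $A^*=A_1^N$ and $B^*=B_1^N$, identifying the $N$ coordinates of each with tuples $\tup x\in A_1^n$. Under this identification an element of $A^*$ (resp.\ $B^*$) corresponds to an $n$-ary function $A_1^n\to A_1$ (resp.\ $A_1^n\to B_1$), and the universe $F_{\clo M}(A_2)=\clo M^{(n)}$ of the free structure sits naturally inside $B^*$ as the subset of functions that happen to be polymorphisms.

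For each $k$-ary relation symbol $R$ with $R^{\rel A_2}=\{\tup r_1,\dots,\tup r_m\}$, I would define $R^{\rel A^*}$ and $R^{\rel B^*}$ by a single pp-formula $\Psi_R(\phi_1,\dots,\phi_k)$ over $(\rel A_1,\rel B_1)$. The formula introduces $|A_1|^m$ existentially quantified auxiliary variables $y_{\tup z}$ (for $\tup z\in A_1^m$) and imposes two families of constraints: (i) a \emph{polymorphism clause} which asserts, for every relation symbol $S$ of arity $s$ and every tuple $(\tup z_1,\dots,\tup z_s)\in S^{\rel A_1^m}$, the atom $(y_{\tup z_1},\dots,y_{\tup z_s})\in S$; and (ii) a \emph{minor clause} which asserts, for each $i\in[k]$ and each $\tup x\in A_1^n$, the equality $\phi_i(\tup x)\equals y_{(x_{\tup r_1(i)},\dots,x_{\tup r_m(i)})}$. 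Interpreted in $\rel A_1$, clause (i) states that the family $y$ is an $m$-ary polymorphism of $\rel A_1$; interpreted in $\rel B_1$, it states that $y$ is an element of $\clo M^{(m)}$. In both interpretations clause (ii) forces each $\phi_i$ to be the $(j\mapsto\tup r_j(i))$-minor of the common $y$.

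I define $h_A(a)$ to be the $a$-th projection $\pi_a\in A_1^N$, $\pi_a(\tup x)=x_a$. For a tuple $(a_1,\dots,a_k)=\tup r_{j_0}\in R^{\rel A_2}$, taking $y=p_{j_0}^{(m)}$ (an $m$-ary projection, hence an $\rel A_1$-polymorphism) witnesses $(\pi_{a_1},\dots,\pi_{a_k})\in R^{\rel A^*}$, since $p_{j_0}^{(m)}(x_{\tup r_1(i)},\dots,x_{\tup r_m(i)})=x_{\tup r_{j_0}(i)}=x_{a_i}=\pi_{a_i}(\tup x)$. I define $h_B$ to act as the identity on $\clo M^{(n)}\subseteq B_1^N$ and to return some arbitrary fixed element of $\clo M^{(n)}$ on $B_1^N\setminus\clo M^{(n)}$; note $\clo M^{(n)}$ is non-empty because the existence of any homomorphism $\rel A_1\to\rel B_1$ yields an $n$-ary polymorphism by composing with a projection.

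The key step, which I expect to be the main point to check, is that under the $\rel B_1$-interpretation clauses (i) and (ii) together force every tuple $(\phi_1,\dots,\phi_k)\in R^{\rel B^*}$ to consist of polymorphisms: each $\phi_i$ is a minor of a common witness $g\in\clo M^{(m)}$, hence lies in $\clo M^{(n)}$ by closure of $\clo M$ under minors. Consequently $h_B(\phi_i)=\phi_i$ for each~$i$, and the very same witness $g$ verifies the defining condition of $R^{\rel F_{\clo M}(\rel A_2)}$, so the image tuple belongs to $R^{\rel F_{\clo M}(\rel A_2)}$. The main design subtlety is arranging the polymorphism clause so that a \emph{single} formula simultaneously encodes `is a polymorphism of $\rel A_1$' under the $\rel A_1$-interpretation and `is an element of $\clo M^{(m)}$' under the $\rel B_1$-interpretation; this is achieved by listing only the tuples from $S^{\rel A_1^m}$ (and not from $S^{\rel B_1^m}$) inside the constraints.
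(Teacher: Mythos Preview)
Your proposal is correct and follows essentially the same approach as the paper: the same $N$-th pp-power with $N=|A_1|^n$, the same pp-formula structure (a ``polymorphism clause'' running over tuples of $S^{\rel A_1}$ and a ``minor clause'' of equalities), the same $h_A$ via projections, and the same $h_B$ as the identity on $\clo M^{(n)}$ extended arbitrarily. Your exposition is arguably cleaner in isolating why a single pp-formula over $(\rel A_1,\rel B_1)$ yields ``polymorphism of $\rel A_1$'' on the $\rel A_1$-side and ``element of $\clo M^{(m)}$'' on the $\rel B_1$-side, and in explicitly noting that each $\phi_i\in R^{\rel B^*}$ is forced into $\clo M^{(n)}$ by closure under minors.
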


\begin{proof}
  Let us first comment on some ideas underlying the proof. We will argue that the whole reduction from $\PCSP(\rel A_2,\rel F)$, where $\rel F = \rel F_{\clo M}(\rel A_2)$, to $\PCSP(\rel A_1,\rel B_1)$ according to the proof of Theorem~\ref{thm:h1-reductions} is basically a~pp-construction of $(\rel A_2,\rel F)$ from $(\rel A_1,\rel B_1)$, and in particular it is a~relaxation of a~pp-power. Note that Theorem~\ref{thm:h1-reductions} applies, since we have a~minion homomorphism $\phi\colon\clo M\to \Pol(\rel A_2,\rel F)$, as defined in the proof of Lemma~\ref{lem:adjunction}.

  For the formal proof, assume that $A_2 = [n]$, and let $N = |A_1|^n$. We first describe an $N$-th pp-power of $(\rel A_1,\rel B_1)$, and then argue that $(\rel A_2,\rel F)$ is a~homomorphic relaxation of this power. Let $R^{\rel A_2} = \{ \tup r_1,\dots, \tup r_m \}$ be a~relation of $\rel A_2$ of arity $k$.
  Recall that the relation $R^{\rel F}$ is defined as the set of all $k$-tuples $(f_1,\dots,f_k)$ of functions from $A_1^n$ to $B_1$ for which there exists a~polymorphism $g\colon \rel A_1^m \to \rel B_1$ such that
  \begin{equation} \tag{$\spadesuit$} \label{eq:free-is-pp}
    f_i(x_1,\dots,x_n) = g(x_{\tup r_1(i)},\dots, x_{\tup r_m(i)})
  \end{equation}
  for all $i \in [k]$. Note that we do not need to require that $f_i$'s are polymorphisms since the mentioned identities enforce that all $f_i$'s are minors of the polymorphism $g$. We argue that the set of all such tuples is pp-definable in $\rel B_1$ as a~$kN$-ary relation. In fact, since this is just a~bipartite minor condition, we can use the construction from Section~\ref{sec:reduction-from-lc-to-pcsp} and adapt it to provide a~pp-formula: the free variables of the pp-formula are labelled by $f_i(a_1,\dots,a_n)$ where $i\in [k]$ and $a_1,\dots,a_n \in A_1$, the quantified variables are labelled by $g(a_1,\dots,a_m)$ where $a_1,\dots,a_m \in A_1$. Further, let $\lang R^{\rel A_1}$ denote the set of all relational symbols of $\rel A_1$. We use the following pp-formulas to define the pp-power (see Definition~\ref{def:strict-relaxation}):
  \begin{multline*}
    \Psi_R (v_{f_i(a_1,\dots,a_n)}, \dots) = \exists_{a_1,\dots,a_m\in A_1} v_{g(a_1,\dots,a_m)}\\
      \bigwedge_{S\in \lang R^{\rel A_1}}\bigwedge_{\tup s_1,\dots,\tup s_m \in S^{\rel A_1}}
        (v_{g(s_1(1),\dots,s_m(1))},\dots, v_{g(s_1(\ar(S)),\dots,s_m(\ar (S)))}) \in S
      \meet {}\\
      \bigwedge_{i\in [k]}\bigwedge_{a_1,\dots,a_n \in A_1}
        v_{f_i(a_1,\dots,a_n)} = v_{g(a_{\tup r_1(i)},\dots,a_{\tup r_m(i)})}.
  \end{multline*}
  The first conjunction ensures that the values assigned to $v_{g(a_1,\dots,a_m)}$'s give a~valid polymorphism $g$, the second conjunction then ensures that this polymorphism will satisfy (\ref{eq:free-is-pp}).
  Let us denote the resulting pp-power by $(\rel A_1',\rel B_1')$.

  We still need to find homomorphisms from $\rel A_2$ to $\rel A_1'$ and from $\rel B_1'$ to $\rel F$.
  The first one can be constructed following an argument from Lemma~\ref{lem:long-code-test-from-pcsp}: We define $h_A\colon \rel A_2 \to \rel A_1'$ by $h_A(a) = p_a$ where $p_a$ denotes the projection on the $a$-th coordinate (note that $\rel A_1'$ is an $|A_1|^{|A_2|}$-th pp-power of $\rel A_1$). Further, if $(a_1,\dots,a_k)\in R^{\rel A_2}$, i.e., $(a_1,\dots,a_k) = \tup r_i$ for some $i$, then choosing the value $a_i$ for $v_{g(a_1,\dots,a_m)}$ will give a~satisfying assignment of $\Psi_R^{\rel A_1}$.
  The homomorphism from $\rel B_1'$ to $\rel F$ is easier: since $F\subseteq B_1'$, and $\Psi_R^{\rel B_1}(f_1,\dots,f_k)$ if and only if $(f_1,\dots,f_k) \in R^{\rel F}$ (this was the motivation behind the definition of $\Psi_R$), we can define $h_B\colon B_1' \to F$ as any extension of the identity mapping on $F$.
\end{proof}

Finally, we are ready to formulate and prove the main result of this section.

\begin{theorem} \label{thm:minor-homomorphism-is-pp-constructibility} \label{thm:minionhom-and-pp}
  Let $(\rel A_i,\rel B_i)$ for $i=1,2$ be PCSP templates and $\clo M_i = \Pol(\rel A_i,\rel B_i)$.
  The following are equivalent:
  \begin{enumerate}
    \item There exists a~minion homomorphism $\xi \colon \clo M_1 \to \clo M_2$.
      \label{it3:h1pp}
    \item $\clo M_2$ satisfies all bipartite minor conditions satisfied in $\clo M_1$.
      \label{it4:h1pp}
    \item $\clo M_2$ satisfies the condition $\Sigma(\rel A_2,\rel F_{\!\clo M_1}(\rel A_2))$.
      \label{itnew:h1pp}
    \item There exists a~homomorphism from $\rel F_{\!\clo M_1}(\rel A_2)$ to $\rel B_2$.
      \label{it5:h1pp}
    \item $(\rel A_2,\rel B_2)$ is a~homomorphic relaxation of a~pp-power of $(\rel A_1,\rel B_1)$.
      \label{it1:h1pp}
    \item $(\rel A_2,\rel B_2)$ is pp-constructible from $(\rel A_1,\rel B_1)$.
      \label{it2:h1pp}
  \end{enumerate}
\end{theorem}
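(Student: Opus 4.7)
The plan is to establish the six conditions as a cycle of implications (1) $\Rightarrow$ (2) $\Rightarrow$ (3) $\Rightarrow$ (4) $\Rightarrow$ (5) $\Rightarrow$ (6) $\Rightarrow$ (1), since each link is short once the preparatory lemmas of this section are in place. The implication (1) $\Rightarrow$ (2) is immediate from Definition~\ref{def:minor-hom}: if $\zeta$ witnesses a bipartite minor condition in $\clo M_1$, then $\xi \circ \zeta$ witnesses it in $\clo M_2$, since a minion homomorphism preserves arities and satisfaction of minor identities. For (2) $\Rightarrow$ (3), apply (2) to the specific bipartite minor condition $\Sigma(\rel A_2, \rel F_{\clo M_1}(\rel A_2))$, which is satisfied by $\clo M_1$ in view of Lemma~\ref{lem:sigma-conditions}.

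For (3) $\Rightarrow$ (4), I plan to invoke Lemma~\ref{lem:pcsp-to-lc}(2) with the substitution $\rel A := \rel A_2$, $\rel B := \rel B_2$, and $\rel I := \rel F_{\clo M_1}(\rel A_2)$. The free structure $\rel F_{\clo M_1}(\rel A_2)$ is similar to $\rel A_2$ by construction, so $\Sigma(\rel A_2, \rel F_{\clo M_1}(\rel A_2))$ is a well-defined instance of the condition construction from Section~\ref{sec:two-prover-protocol}; by assumption it is satisfied in $\clo M_2 = \Pol(\rel A_2, \rel B_2)$, and the lemma then yields a homomorphism $\rel F_{\clo M_1}(\rel A_2) \to \rel B_2$. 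For (4) $\Rightarrow$ (5), Lemma~\ref{lem:free-is-pp} gives homomorphisms $h_A \colon \rel A_2 \to \rel A_1'$ and $h_B \colon \rel B_1' \to \rel F_{\clo M_1}(\rel A_2)$ for some pp-power $(\rel A_1', \rel B_1')$ of $(\rel A_1, \rel B_1)$; composing $h_B$ with the homomorphism provided by (4) exhibits $(\rel A_2, \rel B_2)$ as a homomorphic relaxation of $(\rel A_1', \rel B_1')$.

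The remaining two implications are essentially definitional or already in hand: (5) $\Rightarrow$ (6) follows directly from Definition~\ref{def:pp-constructible} (a pp-power followed by a relaxation is a pp-construction of length at most two), and (6) $\Rightarrow$ (1) is exactly Corollary~\ref{cor:pp-constructible}. I do not expect a substantive obstacle, since all the heavy lifting has been distributed across the preceding lemmas; the theorem is mainly a matter of threading them together correctly. The only minor point of care is making sure, in (3) $\Rightarrow$ (4), that the similarity of $\rel F_{\clo M_1}(\rel A_2)$ to $\rel A_2$ lets us apply Lemma~\ref{lem:pcsp-to-lc}(2). As an optional streamlining, one could observe that (1) $\Leftrightarrow$ (4) is precisely the adjunction of Lemma~\ref{lem:adjunction}, which can shorten the exposition by replacing the longer path (1) $\Rightarrow$ (2) $\Rightarrow$ (3) $\Rightarrow$ (4) with a direct reference, though going through (2) and (3) has the didactic advantage of exhibiting the role of bipartite minor conditions explicitly.
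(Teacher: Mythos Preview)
Your proposal is correct and follows essentially the same cycle of implications as the paper's proof, invoking the same lemmas at each step. The only cosmetic difference is that for (6) $\Rightarrow$ (1) you cite Corollary~\ref{cor:pp-constructible} directly, whereas the paper cites the underlying Lemma~\ref{lem:pp-gives-minor}; these are equivalent since the corollary is an immediate consequence of the lemma.
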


\begin{proof}
  The implication (\ref{it1:h1pp})\textto(\ref{it2:h1pp}) is trivial, (\ref{it2:h1pp})\textto(\ref{it3:h1pp}) follows from Lemma~\ref{lem:pp-gives-minor}, (\ref{it3:h1pp})\textto(\ref{it4:h1pp}) follows directly from the definition (see Definition~\ref{def:minor-hom} and the comment after it), (\ref{it4:h1pp})\textto(\ref{itnew:h1pp}) follows from Lemma~\ref{lem:sigma-conditions} for $\clo M = \clo M_1$ and $\rel A = \rel A_2$, and (\ref{itnew:h1pp})\textto(\ref{it5:h1pp}) follows directly from Lemma~\ref{lem:pcsp-to-lc}(2).
  
  Finally, we prove (\ref{it5:h1pp})\textto(\ref{it1:h1pp}). Lemma~\ref{lem:free-is-pp} gives that $(\rel A_2,\rel F_{\!\clo M_1}(\rel A_2))$ is a~relaxation of a~pp-power of $(\rel A_1,\rel B_1)$, the homomorphism from $\rel F_{\!\clo M_1}(\rel A_2)$ to $\rel B_2$ then proves that $(\rel A_2,\rel B_2)$ is a~relaxation of $(\rel A_2,\rel F_{\!\clo M_1}(\rel A_2))$. We obtain the desired claim  by composing the two relaxations into one.
\end{proof}

\section{Hardness from the PCP theorem}
  \label{sec:hardness-from-lc}

The celebrated PCP theorem \cite{ALMSS98,AS98} is a~starting point for many proofs of inapproximability of many problems. As an example, we refer to the work of H{\aa}stad \cite{Hst01} that derives inapproximability of several CSPs from the PCP theorem. Also note that the PCP theorem itself can be formulated as a~result on inapproximability of the CSP (see \cite[Theorem 1.3]{Din07}). Concrete results on hardness of many PCSPs rely on the PCP theorem (e.g.\ \cite{BG16,AGH17,Kho01,Hua13,DRS05}). A~common approach for using the PCP theorem is to first derive hardness of some approximation version of Label Cover, or some of its variants, and then using gadgets reduce from Label Cover to PCSP.

In the scope of this paper, we described a~reduction from $\MC$, which is essentially Label Cover, to $\PCSP$ (see Section~\ref{sec:MC}). The present section then derives some algebraic conditions for applicability of this reduction when applied to Label Cover itself.
We note that there are many approximation versions of Label Cover and many variants of PCPs. Most can be used as a~starting point for a~reduction, and we do not provide an exhaustive description. We focus on a~few versions of Label Cover including a~plain approximation version thereof and one that was used in \cite{DRS05} to obtain \NP-hardness of approximate hypergraph colouring.

As mentioned above, the general approach for a~reduction from (some version of) Label Cover to some $\PCSP(\rel A,\rel B)$ is to interpret an instance of Label Cover as a~minor condition (as described in the beginning of Section~\ref{sec:MC}), and then reduce from $\PMC_{\clo M}$, where $\clo M=\Pol(\rel A,\rel B)$, to $\PCSP(\rel A,\rel B)$ using Theorem~\ref{thm:pcsp-and-lc}. To ensure that this composite reduction works, it is enough to relate the \yes- and \no-instances of the LC, or its variant, and the corresponding PMC. Proving completeness (i.e., that \yes-instances are preserved) is usually straight-forward, while proving soundness (i.e., preserving \no-instances) require some extra work.

\begin{remark}
Not all known \NP-hardness proofs for PCSPs can be easily adapted for our approach. One such case is the hardness proof for $\PCSP(\rel K_k,\rel K_c)$ where $c=2^{\Omega(k^{1/3})}$ and $k$ is large enough \cite{Hua13}. The key difference is that in our approach the completeness part of reductions is trivial and the soundness is the hard part, but in the proof from \cite{Hua13} the situation is opposite. 
\end{remark}

Before, we get to more general cases, let us briefly focus on a~reduction from the plain (exact) Label Cover.

\subsection{Reduction from Label Cover}

As noted before, exact Label Cover is essentially the same as deciding non-triviality of bipartite minor conditions. This also means that we can use Theorem~\ref{thm:pcsp-and-lc}(1) to give an immediate proof that \MC{} is \NP-hard. A~similar reduction from some \NP-hard CSP is a~commonly used argument for \NP-hardness of Label Cover.

\begin{theorem}
  \label{thm:mc-is-np-hard}
  $\MC(N)$ is \NP-hard for each $N\geq 3$.
\end{theorem}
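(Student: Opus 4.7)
The plan is to give a log-space reduction from an NP-hard CSP into $\MC(3)$ via the $\Sigma$-construction of Section~\ref{sec:two-prover-protocol}, and then observe that $\MC(3)$ reduces trivially to $\MC(N)$ for every $N\ge 3$ (any bipartite minor condition whose symbols have arity at most $3$ is also one whose symbols have arity at most $N$, so the identity map is a reduction). Thus it suffices to establish NP-hardness of $\MC(3)$.

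For the base case I would take the structure $\rel T$ from Example~\ref{example:1in3-nae}, whose CSP is positive 1-in-3-SAT, well known to be NP-hard. Its parameters are exactly right: $|T|=2$, and the single relation of $\rel T$ has precisely $3$ tuples. Applying the construction of Section~\ref{sec:two-prover-protocol} with $\rel A=\rel B=\rel T$, each instance $\rel I$ of $\CSP(\rel T)$ is assigned a bipartite minor condition $\Sigma(\rel T,\rel I)$ whose symbols $f_v$ have arity $|T|=2$ and whose symbols $g_C$ have arity $|R^{\rel T}|=3$. Hence $\Sigma(\rel T,\rel I)$ is a legitimate instance of $\MC(3)$, and as noted just before Example~\ref{ex:nae-sat}, it is produced from $\rel I$ in log-space.

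To see that this is a correct reduction, I would invoke Lemma~\ref{lem:pcsp-to-lc} with $\rel A=\rel B=\rel T$. Part~(1) gives that $\rel I\to\rel T$ implies $\Sigma(\rel T,\rel I)$ is trivial, and part~(2) gives that if $\Pol(\rel T)$ satisfies $\Sigma(\rel T,\rel I)$ then $\rel I\to\rel T$. Since projections are polymorphisms of every structure, triviality of $\Sigma(\rel T,\rel I)$ in particular implies that it is satisfied in $\Pol(\rel T)$. Chaining these implications yields the equivalence
\[
  \Sigma(\rel T,\rel I)\text{ is trivial} \iff \rel I\to \rel T,
\]
so the map $\rel I\mapsto \Sigma(\rel T,\rel I)$ is the desired log-space reduction from $\CSP(\rel T)$ to $\MC(3)$. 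Combined with the trivial reduction $\MC(3)\le \MC(N)$, this establishes NP-hardness of $\MC(N)$ for every $N\ge 3$.

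The only delicate point is matching the size parameters of the source CSP to the arity bound $N=3$, which is why $\rel T$ (with $|T|=2$ and a ternary relation of size $3$) is the natural choice; once the template is fixed, the remainder is an immediate application of Lemma~\ref{lem:pcsp-to-lc}, and in fact this is a special case of Theorem~\ref{thm:pcsp-and-lc}(1) where the promise of $\PLC_{\Pol(\rel T)}(3)$ collapses onto the decision problem $\MC(3)$ precisely because of the equivalence above.
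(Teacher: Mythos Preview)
Your proposal is correct and takes essentially the same approach as the paper: reduce from $\CSP(\rel T)$ using the construction of Section~\ref{sec:two-prover-protocol}, noting that the parameters $|T|=2$ and $|R^{\rel T}|=3$ yield an instance of $\MC(3)$. The paper packages this via Theorem~\ref{thm:pcsp-and-lc}(1) together with the fact that $\Pol(\rel T)=\clo P_2$ (so that $\PMC_{\clo P_2}(N)$ \emph{is} $\MC(N)$), whereas you invoke Lemma~\ref{lem:pcsp-to-lc} directly and derive the equivalence without needing that characterisation of $\Pol(\rel T)$; both routes are equivalent, and your observation that the argument works for any $\rel A=\rel B$ is exactly the content of the remark following Lemma~\ref{lem:pcsp-to-lc}.
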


\begin{proof}
We reduce from 1-in-3-\Sat{} using Theorem \ref{thm:pcsp-and-lc}(1): Let $\rel T$ denote the CSP template of 1-in-3-\Sat{} and $R^{\rel T}$ its ternary relation (see Example~\ref{example:1in3-nae}).
It is well-known that every polymorphism of $\rel T$ is a~projection, i.e., $\Pol(\rel T)=\clo P_2$.
The mentioned theorem then gives a~reduction from $\CSP(\rel T)$ to $\PMC_{\clo P_2}(N)$ for each $N$ that is larger then both the size of the domain of $\rel T$, which is 2, and the number of tuples in $R^{\rel T}$, which is 3. Therefore, $\PMC_{\clo P_2}(N)$ is \NP-hard for each $N\geq 3$. Since $\PMC_{\clo P_2}(N)$ is the same as $\MC(N)$, we obtain the desired hardness.
\end{proof}

To align \no-instances of $\MC(N)$ and $\PMC_{\clo M}(N)$, we need that $\clo A$ does not satisfy any non-trivial bipartite minor conditions involving symbols of arity at most $N$. Since the theorem above gives \NP-hardness of $\MC(N)$ for any $N = 3$, this means that arity at most 3 is enough, as we state in the following direct corollary of Theorem~\ref{thm:pcsp-and-lc}(2) and Theorem~\ref{thm:mc-is-np-hard}.

\begin{corollary} \label{prop:ternaries-are-enough-for-hardness}
  If $\Pol(\rel A,\rel B)$ does not satisfy any non-trivial bipartite minor condition of arity at most three, $\PCSP(\rel A,\rel B)$ is NP-hard.
  \qed
\end{corollary}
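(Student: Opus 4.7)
The plan is to compose the two cited results, where the hypothesis on $\Pol(\rel A,\rel B)$ serves precisely to convert the promise problem $\PLC_{\clo M}(3)$ into the plain decision problem $\MC(3)$. Let me set $\clo M=\Pol(\rel A,\rel B)$, fix the arity bound $N=3$ throughout, and unpack what the assumption buys us.

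First, I would observe that $\PLC_{\clo M}(3)$ and $\MC(3)$ have the same set of inputs, namely bipartite minor conditions $(\Sigma,\lang U,\lang V)$ involving only function symbols of arity at most $3$. The two problems differ only in what counts as a legal instance: $\MC(3)$ accepts all such $\Sigma$ and asks whether $\Sigma$ is trivial, whereas $\PLC_{\clo M}(3)$ only promises inputs where either $\Sigma$ is trivial or $\Sigma$ is not satisfiable in $\clo M$. But under the hypothesis that $\clo M$ satisfies no non-trivial bipartite minor condition of arity at most $3$, any non-trivial such $\Sigma$ is automatically unsatisfiable in $\clo M$. Hence every input to $\MC(3)$ is a legitimate input to $\PLC_{\clo M}(3)$, with the \yes/\no{} answers coinciding. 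This gives a trivial log-space reduction from $\MC(3)$ to $\PLC_{\clo M}(3)$, since the identity map on inputs suffices.

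Second, I would invoke Theorem~\ref{thm:mc-is-np-hard} to conclude that $\MC(3)$ is \NP-hard, and then Theorem~\ref{thm:pcsp-and-lc}(2) with $N=3$ to get a log-space reduction from $\PLC_{\clo M}(3)$ to $\PCSP(\rel A,\rel B)$. Chaining these three log-space reductions ($\MC(3) \to \PLC_{\clo M}(3) \to \PCSP(\rel A,\rel B)$) yields \NP-hardness of $\PCSP(\rel A,\rel B)$.

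There is no real obstacle here; the content is entirely in the two theorems already proved. The only point worth emphasising in the write-up is the conceptual one: the hypothesis ``no non-trivial bipartite minor condition of arity at most $3$ holds in $\clo M$'' is exactly the statement that the \PLC{} promise is free on all $\MC(3)$ inputs, so the promise problem collapses to the plain one at arity $3$, and then Theorem~\ref{thm:mc-is-np-hard} applies. (One should briefly check that the reduction produced by Theorem~\ref{thm:pcsp-and-lc}(2) respects the promise, but this is explicit in that theorem's statement.)
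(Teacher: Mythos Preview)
Your proposal is correct and follows exactly the paper's approach: the paper states this as a direct corollary of Theorem~\ref{thm:pcsp-and-lc}(2) and Theorem~\ref{thm:mc-is-np-hard}, and your argument spells out precisely how the hypothesis collapses $\PLC_{\clo M}(3)$ to $\MC(3)$ so that these two theorems chain together.
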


The assumption of the above corollary can be also satisfied by constructing a~minor preserving map $\xi\colon \clo M^{(3)} \to \clo P_2^{(3)}$. The search for such a~minion homomorphism can be easily automated, which, with a~suitable implementation, can be useful for small enough structures. We also remark that since it is enough to work with binary and ternary functions, this might simplify some combinatorial arguments.

We now show that a~minion that does not satisfy any non-trivial bipartite minor condition of small arities cannot satisfy such a~condition of large arity. 
This has been also observed in \cite[Section 5.3]{BP18}.

\begin{proposition} \label{prop:3-is-enough}
  The following are equivalent for every minion $\clo M$.
  \begin{enumerate}
    \item $\clo M$ does not satisfy any non-trivial bipartite minor condition of arity at most three.
    \item There exists a~minion homomorphism from $\clo M$ to $\clo P_2$.
  \end{enumerate}
\end{proposition}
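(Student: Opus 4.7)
The easy direction (2)~$\Rightarrow$~(1) is immediate: a~minion homomorphism transports any satisfying assignment of a~bipartite minor condition $\Sigma$ from $\clo M$ to $\clo P_2$, so if $\Sigma$ is satisfied in $\clo M$ it must be trivial. (No arity bound is needed for this direction.)

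For (1)~$\Rightarrow$~(2), the plan is to construct the homomorphism $\xi\colon \clo M \to \clo P_2$ in two stages. First, I would use~(1) to define $\xi$ on the finite set $\clo M^{(\leq 3)}$ by considering the bipartite minor condition on the disjoint symbol sets $\{L_f\mid f\in \clo M^{(\leq 3)}\}$ and $\{R_f\mid f\in \clo M^{(\leq 3)}\}$ containing the identity $L_f(x_1,\dots,x_n) \equals R_g(x_{\pi(1)},\dots,x_{\pi(m)})$ for every minor relation $f(x_1,\dots,x_n) = g(x_{\pi(1)},\dots,x_{\pi(m)})$ that holds inside $\clo M^{(\leq 3)}$. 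This is a~finite bipartite condition of arity at most three, satisfied in $\clo M$ by the assignment $L_f,R_f\mapsto f$; by~(1) it must therefore be trivial, hence satisfiable in $\clo P_2$. Any such witness defines a~minor-preserving $\xi_3\colon \clo M^{(\leq 3)} \to \clo P_2^{(\leq 3)}$ (the identities of the form $L_f \equals R_f$ ensure that both copies receive the same projection).

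The main obstacle will be the second stage, extending $\xi_3$ to all arities. For $f\in\clo M^{(n)}$ and $T\subseteq [n]$, let $f_T(x,y)$ denote the binary minor of $f$ that places $x$ on positions in $T$ and $y$ elsewhere, and set $\clo F_f := \{T\subseteq [n] : \xi_3(f_T) = p_1^{(2)}\}$. I plan to set $\xi(f) = p_{i(f)}^{(n)}$, where $i(f)$ is the principal element of $\clo F_f$, once it is known that $\clo F_f$ is a~principal ultrafilter on $[n]$. Complement-closure $T\in \clo F_f \iff T^c\notin \clo F_f$ is immediate from $f_{T^c}(x,y)=f_T(y,x)$, and $[n]\in \clo F_f$ because $f_{[n]}$ depends only on its first argument. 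The crucial step is showing that a~minimum-size element $T_0$ of $\clo F_f$ is a~singleton: if $|T_0|\geq 2$, I would pick a~proper non-empty $S\subsetneq T_0$ and consider the ternary minor $g$ of $f$ obtained from the 3-partition $S \cup (T_0\setminus S) \cup ([n]\setminus T_0)$. With $\xi_3(g) = p_l^{(3)}$, the binary minors $g(x,x,z)$, $g(x,z,z)$, and $g(z,y,z)$ identify with $f_{T_0},f_S,f_{T_0\setminus S}$; from $T_0\in \clo F_f$ we get $l\in\{1,2\}$, from $S\notin \clo F_f$ (by minimality) we get $l\neq 1$, so $l=2$, but then $T_0\setminus S\in \clo F_f$, contradicting minimality. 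An analogous 3-partition argument applied to the split $T \cup (T'\setminus T) \cup ([n]\setminus T')$ gives upward closure, and together with complement-closure this pins down $\clo F_f = \{T : i(f)\in T\}$ for a~unique $i(f)\in [n]$.

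The final step is to verify that the extended $\xi$ preserves minors. If $f$ is a~minor of $g$ via $\pi\colon[m]\to[n]$, the elementary identity $f_T = g_{\pi^{-1}(T)}$ (for $T\subseteq[n]$) yields $\clo F_f = \{T : \pi(i(g))\in T\}$, whence $i(f)=\pi(i(g))$. The boundary cases where either $m$ or $n$ is at most three reduce to the observation that for $g$ of arity at most three, $\xi_3(g_{T'})=p_1^{(2)}$ iff $i(g)\in T'$, which is immediate from $\xi_3(g)=p_{i(g)}^{(\ar g)}$. The minimality-plus-3-partition argument establishing $|T_0|=1$ is the one step that genuinely uses hypothesis~(1); all the rest is routine bookkeeping with minor relations.
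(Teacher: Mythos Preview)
Your argument is correct, but it is genuinely different from the paper's. The paper does not construct the minion homomorphism by hand; instead it invokes the relational machinery of Sections~\ref{sec:algebraic-reductions} and~\ref{sec:constructions}. Concretely, it considers the 1-in-3-\Sat{} template $\rel T$ and observes that the condition $\Sigma(\rel T,\rel F_{\clo M}(\rel T))$ has arity at most three (because $|T|=2$ and $|R^{\rel T}|=3$), is satisfied in $\clo M$ by Lemma~\ref{lem:sigma-conditions}, and is therefore trivial by hypothesis~(1). Triviality gives $\rel F_{\clo M}(\rel T)\to\rel T$ by Lemma~\ref{lem:pcsp-to-lc}(2), and then Lemma~\ref{lem:adjunction} converts this into a minion homomorphism $\clo M\to\Pol(\rel T)=\clo P_2$. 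So the paper's proof is a two-line application of the free-structure correspondence, and the numbers $2$ and $3$ acquire a clear meaning: they are the domain size and relation size of~$\rel T$.

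Your route is more elementary and self-contained: you first compress all minor relations inside $\clo M^{(\leq 3)}$ into a single finite bipartite condition to obtain $\xi_3$, and then extend to higher arities by the principal-ultrafilter argument on the sets $\clo F_f=\{T:\xi_3(f_T)=p_1^{(2)}\}$, using ternary minors to force closure under shrinking to a singleton. This avoids the free structure entirely and yields an explicit description of $\xi$ (namely $\xi(f)=p_{i(f)}$ where $\{i(f)\}$ generates $\clo F_f$). The cost is a longer and more combinatorial proof; the benefit is that it stands on its own without Lemmas~\ref{lem:pcsp-to-lc}, \ref{lem:sigma-conditions}, and~\ref{lem:adjunction}, and it makes transparent why arity three suffices: binary minors detect membership in $\clo F_f$, and ternary minors are exactly what is needed to compare two nested binary splits.
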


\begin{proof}
  The implication (2)\textto(1) is obvious, let us prove (1)\textto(2).
  We use the notation from the proof of Theorem~\ref{thm:mc-is-np-hard} and, follow the proof of Theorem~\ref{thm:minor-homomorphism-is-pp-constructibility}(\ref{itnew:h1pp})\textto(\ref{it3:h1pp}). The key observation is that the condition $\Sigma(\rel T,\rel F_{\clo M}(\rel T))$ is of arity at most three, and therefore it is trivial by the assumption and Lemma~\ref{lem:sigma-conditions}. The bound on arity is clear from the construction: the condition is composed of identities of the form
  \begin{align*}
    f_{v_1}(x,y) &\equals g_{(v_1,v_2,v_3),R}(x,x,y)\\
    f_{v_2}(x,y) &\equals g_{(v_1,v_2,v_3),R}(x,y,x)\\
    f_{v_3}(x,y) &\equals g_{(v_1,v_2,v_3),R}(y,x,x)
  \end{align*}
  where $(v_1,v_2,v_3) \in R^{\rel F_{\!\clo M}(\rel T)}$. Thus, by Lemma~\ref{lem:pcsp-to-lc}(2), we get that $\rel F_{\clo M}(\rel T) \to \rel T$ which implies that there is a~minor homomorphism from $\clo M$ to $\Pol(\rel T) = \clo P_2$ by (e.g.) Lemma~\ref{lem:adjunction}.
\end{proof}

\begin{remark}
  The above proposition can be easily generalised for minions on infinite sets: the finiteness was used only to ensure that $\clo M^{(2)}$, and consequently also the free structure $\rel F_{\clo M}(\rel T)$, is finite. This can be circumvented by a~standard compactness argument. See e.g.\ \cite[Lemma III.5]{BMOOPW19}.
\end{remark}

The significance of the above proposition and the considerations of this subsection is that it implies that a~reduction from $\LC(N)$ using polymorphism gadgets (following Section~\ref{sec:lc-to-pcsp}) works for some $N$ if and only if the same reduction works for $N=3$.

\begin{example}[See also Example~\ref{ex:pol-3,4-maps-to-projections}]
  \label{ex:pol-3,4-maps-to-projections-2}
  In \cite{BG16}, the authors prove that $\PCSP(\rel K_3,\rel K_4)$ is \NP-hard by a~reduction from $\LC$. This in fact means that there is a~simple gadget reduction from e.g.\ 1-in-3-\Sat, or equivalently, from $\CSP(\rel T)$. This reduction then uses only properties of binary and ternary polymorphisms from $\rel K_3$ to $\rel K_4$.

  We use Section~\ref{sec:algebraic-reductions} for the reduction: Given an instance $\rel I$ of 1-in-3-\Sat, we first obtain a~bipartite condition $\Sigma = \Sigma(\rel I,\rel T)$. This condition (similarly as in the proof of Proposition~\ref{prop:3-is-enough} involves only symbols of arity two and three. From $\Sigma$, we then obtain a~graph $\rel G = \rel I_{\Sigma}(\rel K_3)$. The goal is to show that if $\rel I \to \rel T$ then $\rel G$ is 3-colourable (which is the easier part, and we refer to Section~\ref{sec:algebraic-reductions} for the proof), and that if $\rel G$ is 4-colourable then $\rel I \to \rel T$. The latter can be observed by constructing a~homomorphism from the free structure $\rel F_{\clo M}(\rel T)$ to $\rel T$ where $\rel M = \Pol(\rel K_3,\rel K_4)$. Let us sketch one such homomorphism $h\colon \rel F_{\clo M}(\rel T) \to \rel T$.

  The elements of $\rel F_{\clo M}(\rel T)$ are binary polymorphisms of $(\rel K_3,\rel K_4)$. Therefore, to construct a~homomorphism, we need to choose a~Boolean value for each of the colourings of $\rel K_3^2$ with $4$ colours. Such colourings are easy to describe; in fact the `trash colour lemma' that we mentioned in Example~\ref{ex:pol-3,4-maps-to-projections} is much easier to prove for binary functions. Therefore, we can map a~colouring $f$ to $0$ if, after removing the trash colour, $f$ depends on the first variable, and $1$ if it depends on the second. This defines the mapping $h$. The only hard part is to prove that $h$ is a~homomorphism, for that it is necessary to look at ternary polymorphisms of $(\rel K_3,\rel K_4)$ since they determine the relational structure $\rel F_{\clo M}(\rel T)$.
\end{example}

\subsection{Reduction from Gap Label Cover}
  \label{sec:hardness-from-gap-lc}

Let us continue with a~more general reduction from Gap Label Cover. The framework that we present generalises the approach of e.g.\ \cite{AGH17,BG16}, where the authors proved \NP-hardness of various fixed-template PCSPs using polymorphisms by reducing from Gap Label Cover.

\begin{definition}
The \emph{Gap Label Cover} problem with parameters $\delta$~(\emph{completeness}), $\epsilon$~(\emph{soundness}), and $N$, denoted by $\GLC_{\delta,\epsilon}(N)$, is a~promise problem which, given an instance of $\LC(N)$,
\begin{itemize}
  \item accepts if there is an~assignment that satisfies at least $\delta$-fraction of the given constraints, or
  \item rejects if no assignment satisfies more than $\epsilon$-fraction of the given constraints.
\end{itemize}
\end{definition}

The hardness of Gap Label Cover with perfect completeness (i.e., $\delta = 1$) and some soundness $\epsilon < 1$ can be directly obtained from the PCP theorem of \cite{ALMSS98,AS98}. The soundness parameter can then be brought down to arbitrarily small $\epsilon > 0$ using the parallel repetition theorem of Raz \cite{Raz98} at the cost of increasing~$N$.

\begin{theorem}[\cite{ALMSS98,AS98,Raz98}] \label{thm:glc}
  There exist constant $K_1,K_2 > 0$ such that for every $\epsilon > 0$ and every $N \geq K_1\epsilon^{-K_2}$, $\GLC_{1,\epsilon}(N)$ is \NP-hard.
\end{theorem}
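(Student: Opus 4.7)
The plan is to assemble the statement from two well-known ingredients: the PCP theorem in its Label Cover formulation, and Raz's parallel repetition theorem used as a gap-amplification device. Neither is proved here; they are invoked as black boxes.

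First, I would record the consequence of the PCP theorem of \cite{ALMSS98,AS98}: there exist absolute constants $\epsilon_0 < 1$ and $N_0$ such that $\GLC_{1,\epsilon_0}(N_0)$ is \NP-hard. This can be obtained, for instance, by reducing from a suitable \NP-hard CSP via the standard two-prover interpretation of Label Cover (a direct route is to note that \Sat{} has a PCP verifier with perfect completeness and constant soundness, then cast the induced constraint graph as a Label Cover instance with constant label set, as described e.g.\ in Section~\ref{sec:two-prover-protocol}).

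Second, I would apply parallel repetition. For a positive integer $k$, the $k$-fold repetition of a Label Cover instance $(U,V,E,l,r,\Pi)$ is the Label Cover instance whose variables are $k$-tuples of the original variables, whose labels are $k$-tuples of the original labels (so label sets have size $l^k$ and $r^k$, both bounded by $N_0^k$), and whose constraints are coordinatewise conjunctions. Completeness is preserved exactly: if the original instance had a labelling satisfying all constraints, then the coordinatewise labelling satisfies all constraints of the repeated instance. Raz's theorem \cite{Raz98} provides the soundness: there is an absolute constant $c > 0$ (depending on $\epsilon_0$ and $N_0$ only through their fixed values) such that if the original instance admits no labelling satisfying more than $\epsilon_0$-fraction of constraints, then the $k$-fold repeated instance admits no labelling satisfying more than $2^{-ck}$-fraction of constraints.

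Third, I would pick $k$ so that $2^{-ck} \leq \epsilon$, i.e., $k = \lceil (1/c)\log_2(1/\epsilon)\rceil$. The resulting label set size is at most $N_0^k = 2^{k \log_2 N_0} \leq (1/\epsilon)^{(\log_2 N_0)/c}\cdot N_0$, which is of the form $K_1 \epsilon^{-K_2}$ for absolute constants $K_1 = N_0$ and $K_2 = (\log_2 N_0)/c$. Since the repetition is a polynomial-time reduction (as $k = O(\log(1/\epsilon))$ is constant once $\epsilon$ is fixed, so $N_0^k$ is a constant and each repeated constraint is built from $k$ original constraints), \NP-hardness of $\GLC_{1,\epsilon_0}(N_0)$ transfers to \NP-hardness of $\GLC_{1,\epsilon}(N_0^k)$. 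A final monotonicity observation, namely that $\GLC_{1,\epsilon}(N')$ reduces trivially to $\GLC_{1,\epsilon}(N)$ whenever $N' \leq N$ by padding the label sets with unused labels, yields the claim for all $N \geq K_1 \epsilon^{-K_2}$.

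The only genuinely nontrivial step is invoking Raz's theorem with its quantitatively correct exponential soundness decay; everything else is bookkeeping. Since both cited results are stated as inputs to this theorem, the proof is essentially a calculation matching parameters.
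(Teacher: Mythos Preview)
Your proposal is correct and matches the approach the paper indicates: the paper does not actually prove this theorem but simply cites it, noting just before the statement that hardness with constant soundness follows from the PCP theorem of \cite{ALMSS98,AS98} and that the soundness is then driven down to arbitrary $\epsilon$ via Raz's parallel repetition \cite{Raz98} at the cost of increasing $N$. Your write-up fleshes out exactly this two-step argument with the parameter bookkeeping, so there is nothing to add.
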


Fix some $\PCSP(\rel A,\rel B)$ and let $\clo M=\Pol(\rel A,\rel B)$.
As before, we can reduce Gap Label Cover to $\PCSP(\rel A,\rel B)$ via Theorem~\ref{thm:pcsp-and-lc}(2) if we can ensure that the~standard transformation from $\GLC_{1,\epsilon}(N)$ to $\PLC_{\clo M}(N)$ (i.e. interpreting a~Label Cover instance as a~bipartite minor condition) is a valid reduction. The completeness is immediate. To prove that \no-answers are preserved, one typically uses some limitations of the structure (which may be non-trivial to obtain) of bipartite minor conditions satisfied in $\clo M$ and shows the contrapositive.
More specifically, one proves that if a~bipartite minor condition $\Sigma$ is satisfied in $\clo M$ then some part of $\Sigma$ containing more than $\epsilon$-fraction of identities can be satisfied in projections, and hence the corresponding Label Cover instance has an assignment satisfying more than $\epsilon$-fraction of constraints.
To formulate a~theorem characterising this approach, we use the following definition that captures \no-instances of Gap Label Cover in an algebraic language.

\begin{definition} Let $\epsilon > 0$. We say that a~bipartite minor condition $\Sigma$ is \emph{$\epsilon$-robust} if no $\epsilon$-fraction of identities from $\Sigma$ is trivial.
\end{definition}

\begin{theorem} \label{thm:no-epsilon-robust-is-hard}
  There exist constants $K_1,K_2 > 0$ such that the following holds. If there exists an $\epsilon > 0$ and $N\geq K_1\epsilon^{-K_2}$ such that $\Pol(\rel A,\rel B)$ does not satisfy any $\epsilon$-robust bipartite minor condition involving symbols of arity at most $N$, then $\PCSP(\rel A,\rel B)$ is \NP-hard.
\end{theorem}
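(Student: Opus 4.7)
The plan is to reduce from Gap Label Cover $\GLC_{1,\epsilon}(N)$, whose \NP-hardness for $N \ge K_1\epsilon^{-K_2}$ is exactly Theorem~\ref{thm:glc}. The constants $K_1, K_2$ in the statement will be inherited directly from that theorem. Given an instance $(U,V,E,l,r,\Pi)$ of $\GLC_{1,\epsilon}(N)$, I would first convert it into a bipartite minor condition $\Sigma$ via the canonical translation described at the beginning of Section~\ref{sec:MC}: introduce an $l$-ary symbol $f_u$ for each $u\in U$, an $r$-ary symbol $g_v$ for each $v\in V$, and for each edge $e=(u,v)\in E$ add the identity $f_u(x_1,\dots,x_l) \equals g_v(x_{\pi_e(1)},\dots,x_{\pi_e(r)})$. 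Since $l,r \le N$, the arities of symbols in $\Sigma$ are bounded by $N$. Then I would apply the construction from Lemma~\ref{lem:long-code-test-from-pcsp} to $\Sigma$ and the template $(\rel A,\rel B)$, obtaining in log-space an instance $\rel I_\Sigma(\rel A)$ of $\PCSP(\rel A,\rel B)$.

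For completeness, in the \yes-case there is a labelling of $U\cup V$ satisfying every constraint of the given Label Cover instance. Under the canonical bijection between labels $i$ and projections $\proj_i$, this labelling witnesses the triviality of $\Sigma$, so by the completeness part of Lemma~\ref{lem:long-code-test-from-pcsp} we get $\rel I_\Sigma(\rel A) \to \rel A$. For soundness, I need to argue the contrapositive: suppose $\rel I_\Sigma(\rel A) \to \rel B$. Then by the soundness part of Lemma~\ref{lem:long-code-test-from-pcsp}, $\Sigma$ is satisfied in $\clo M=\Pol(\rel A,\rel B)$. By the hypothesis of the theorem applied to $\Sigma$ (whose symbols have arity at most $N$), $\Sigma$ cannot be $\epsilon$-robust, so some sub-collection $\Sigma'$ consisting of at least an $\epsilon$-fraction of the identities of $\Sigma$ is trivial, i.e., satisfied by projections. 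Translating this projection-witness back into a labelling via the same bijection as above, we obtain a labelling of $U\cup V$ that satisfies at least an $\epsilon$-fraction of the Label Cover constraints, so the Label Cover instance is not a \no-instance of $\GLC_{1,\epsilon}(N)$.

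The main technical point to nail down, and the only step that requires any care, is the precise quantitative correspondence between \emph{fractions of identities of $\Sigma$ that are simultaneously satisfied by an assignment of projections} and \emph{fractions of constraints of the Label Cover instance satisfied by a labelling}. This correspondence is one-to-one through the bijection $i \leftrightarrow \proj_i$, because each identity of $\Sigma$ comes from exactly one edge of $E$ and its satisfaction by $(\zeta(f_u),\zeta(g_v))=(\proj_i,\proj_j)$ is literally the condition $\pi_e(j)=i$; matching this up with the definition of $\epsilon$-robustness (no $\epsilon$-fraction of identities is trivial) versus the definition of the \no-case of $\GLC_{1,\epsilon}$ (no labelling satisfies more than an $\epsilon$-fraction) is straightforward but is the only place where one must be careful about strict vs.\ non-strict inequalities (a minor shift in $\epsilon$ is absorbed in the constants $K_1,K_2$).

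Combining everything, the map sending a $\GLC_{1,\epsilon}(N)$ instance to $\rel I_\Sigma(\rel A)$ is a polynomial-time (in fact log-space, by Lemma~\ref{lem:long-code-test-from-pcsp}) reduction to $\PCSP(\rel A,\rel B)$, and Theorem~\ref{thm:glc} finishes the argument. I expect no substantial obstacles beyond the bookkeeping above, since all the heavy machinery — the indicator/long-code-style gadget and its soundness in terms of polymorphisms — is already packaged in Lemma~\ref{lem:long-code-test-from-pcsp}.
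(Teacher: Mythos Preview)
Your proposal is correct and follows essentially the same route as the paper: reduce from $\GLC_{1,\epsilon}(N)$ (with $K_1,K_2$ inherited from Theorem~\ref{thm:glc}), translate the Label Cover instance to a bipartite minor condition, and then invoke the reduction of Theorem~\ref{thm:pcsp-and-lc}(2)/Lemma~\ref{lem:long-code-test-from-pcsp}; soundness is exactly the contrapositive you describe, using that a \no-instance of $\GLC_{1,\epsilon}(N)$ yields an $\epsilon$-robust condition. The paper's proof is slightly terser (it phrases the middle step as a reduction to $\PMC_{\clo M}(N)$ rather than composing all the way to $\rel I_\Sigma(\rel A)$ explicitly), but the content is identical.
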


\begin{proof}
  Let $K_1$ and $K_2$ be the same as in Theorem~\ref{thm:glc}, so $\GLC_{1,\epsilon}(N)$ is \NP-hard for any $\epsilon$ and $N$ satisfying the assumptions, and let $\clo M = \Pol(\rel A,\rel B)$.  We transform an instance of $\GLC_{1,\epsilon}(N)$ to an instance of $\PLC_{\clo M}(N)$ in the usual way. Since label cover constraints are in 1-to-1 correspondence with identities in the bipartite minor condition, any \no-instance of $\GLC_{1,\epsilon}(N)$ is transformed into a~bipartite minor condition that is $\epsilon$-robust, and therefore fails in $\clo M$, i.e., this condition is a~\no{} instance of $\PLC_{\clo M}(N)$. This shows soundness of the reduction, and completeness is obvious. The statement now follows from Theorem~\ref{thm:pcsp-and-lc}(2).
\end{proof}

Since for every $\epsilon > 0$ there exists a~suitable $N$ in the above theorem, we can formulate the following useful weaker version of this theorem.

\begin{corollary} \label{cor:disregard-N}
  Let $\epsilon > 0$, if $\Pol(\rel A,\rel B)$ does not satisfy any $\epsilon$-robust bipartite minor condition, then $\PCSP(\rel A,\rel B)$ is \NP-hard.
\end{corollary}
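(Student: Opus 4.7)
The plan is to derive this corollary directly from Theorem~\ref{thm:no-epsilon-robust-is-hard} by observing that its hypothesis is a weakening of the hypothesis here. Concretely, suppose $\Pol(\rel A,\rel B)$ satisfies no $\epsilon$-robust bipartite minor condition whatsoever. Let $K_1,K_2 > 0$ be the constants supplied by Theorem~\ref{thm:no-epsilon-robust-is-hard}, and set $N = \lceil K_1 \epsilon^{-K_2} \rceil$, so that the arity bound required by the theorem is met for our chosen $\epsilon$.

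Then the premise of Theorem~\ref{thm:no-epsilon-robust-is-hard} is automatically satisfied: we need that $\Pol(\rel A,\rel B)$ satisfies no $\epsilon$-robust bipartite minor condition involving symbols of arity at most $N$, but by assumption there is no $\epsilon$-robust bipartite minor condition satisfied by $\Pol(\rel A,\rel B)$ at all, regardless of the arities of the function symbols involved. Invoking Theorem~\ref{thm:no-epsilon-robust-is-hard} therefore yields $\NP$-hardness of $\PCSP(\rel A,\rel B)$.

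There is no real obstacle here; the corollary is a trivial weakening whose sole purpose is to isolate a clean, arity-independent sufficient condition for hardness that is convenient to apply in subsequent sections. The only thing to note is that the hypothesis of the corollary quantifies over bipartite minor conditions of unbounded arity, so we do not need to worry about the precise dependence of $N$ on $\epsilon$ given by the parallel repetition theorem; fixing any $N$ large enough (depending on $\epsilon$) suffices.
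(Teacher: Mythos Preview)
Your proposal is correct and matches the paper's approach exactly: the paper simply notes that for every $\epsilon>0$ there exists a suitable $N$ in Theorem~\ref{thm:no-epsilon-robust-is-hard}, so the corollary follows immediately as a weakening. Your explicit choice $N=\lceil K_1\epsilon^{-K_2}\rceil$ just spells out what the paper leaves implicit.
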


The only general approach (that we are currently aware of) to verify that a~minion $\clo M$ satisfies no $\epsilon$-robust bipartite condition is a~probabilistic method applied as follows.
Suppose we can find a probability distribution on the set of pairs of arity-preserving mappings $\clo M \rightarrow \clo P_2$ so that, for each minor identity
  \[
    f(x_1,\dots,x_n) \equals g( x_{\pi(1)},\dots, x_{\pi(m)} ),
  \]
the following happens. If the identity is satisfied by functions $\zeta(f)$ and $\zeta(g)$ in $\clo M$, and we select $\tau, \tau'\colon \clo M \rightarrow \clo P_2$ according to the probability distribution, then the probability that the identity is satisfied by the projections $p_i = \tau(\zeta(f))$ and $p_j = \tau'(\zeta(g))$ (i.e., that $i = \pi(j)$) is at least $\epsilon$.  In this case, $\clo M$ satisfies no $\epsilon$-robust bipartite minor condition. Indeed, if a~bipartite minor condition $(\Sigma,\lang U,\lang V)$ is satisfied in $\clo M$ through a~map $\zeta\colon \lang U\cup \lang V\to \clo M$ and we randomly assign a~projection to each function $h$ in the image of $\zeta$ as above, then the probability that a~single minor identity in $\Sigma$ is satisfied is at least $\epsilon$ and it follows that the expected fraction of satisfied identities in $\Sigma$ is at least $\epsilon$. Consequently, some  $\epsilon$-fraction of identities in $\Sigma$ is trivial.

The random procedure of assigning projections to function in $\clo M$ can be thought of as a~generalised minion homomorphisms from $\clo M$ to $\clo P_2$. We will not study this concept in full generality since, in applications, a~very simple version has been sufficient so far. Namely, projections $\tau(f)$ as well as $\tau'(f)$ are chosen independently for each $f \in \clo M$ and uniformly from a~subset of projections (which corresponds to a~subset of coordinates). In this case, the probabilistic argument gives us the following lemma.

\begin{lemma} \label{lem:small-sets}
Let $\clo M$ be a~minion and let $C\colon \mathbb N \to \mathbb N$. Assume that there exists a~mapping $I$ assigning to each $h \in \clo M^{(n)}$ a~subset of $[n]$ of size at most $C(n)$ such that for each $\pi\colon [m] \to [n]$ and each $g\in \clo M^{(m)}$ we have
\[
\pi(I(g(x_1, \dots, x_m)) \cap I(g(x_{\pi(1)},\dots,x_{\pi(m)})) \neq \emptyset
.\]
Then $\clo M$ satisfies no $(1/C(N)^2)$-robust bipartite minor condition involving symbols of arity at most $N$.
\end{lemma}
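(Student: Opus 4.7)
The plan is to formalise the probabilistic argument sketched just before the lemma, using the set-valued decoration $I$ in place of a single coordinate. Without loss of generality we may replace $C$ by $n\mapsto\max_{k\leq n}C(k)$, so we may assume $C$ is non-decreasing; in particular $C(n),C(m)\leq C(N)$ whenever $n,m\leq N$.

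Let $(\Sigma,\lang U,\lang V)$ be a~bipartite minor condition with all function symbols of arity at most $N$, and suppose that $\zeta\colon \lang U\cup\lang V\to\clo M$ witnesses satisfaction of $\Sigma$ in $\clo M$. Using the disjointness of $\lang U$ and $\lang V$, define two \emph{independent} random assignments $\tau_L\colon \lang U\to \clo P_A$ and $\tau_R\colon \lang V\to \clo P_A$ as follows: for each $f\in\lang U$ of arity $n$, choose $i$ uniformly at random from $I(\zeta(f))\subseteq[n]$ and put $\tau_L(f) = \proj_i^{(n)}$; proceed analogously on $\lang V$. All these choices are mutually independent across distinct symbols.

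Consider now a~single identity of $\Sigma$, say
\[
  f(x_1,\dots,x_n) \equals g(x_{\pi(1)},\dots,x_{\pi(m)}).
\]
Setting $h = \zeta(f)$ and $g' = \zeta(g)$, the hypothesis that $\zeta$ satisfies the identity in $\clo M$ gives $h = g'(x_{\pi(1)},\dots,x_{\pi(m)})$ as $n$-ary functions. The assumption of the lemma, applied to $g'$ and $\pi$, then yields $\pi(I(g'))\cap I(h)\neq\emptyset$, so we may pick $j^*\in I(g')$ with $i^* := \pi(j^*)\in I(h)$. The projection pair $\tau_L(f) = \proj_{i^*}^{(n)}$, $\tau_R(g) = \proj_{j^*}^{(m)}$ satisfies the identity (since it reduces to $x_{i^*}\equals x_{\pi(j^*)}$), and this particular outcome is chosen with probability at least
\[
  \frac{1}{|I(h)|\cdot|I(g')|} \;\geq\; \frac{1}{C(n)C(m)} \;\geq\; \frac{1}{C(N)^2}.
\]

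By linearity of expectation, the expected number of identities of $\Sigma$ satisfied by the random pair $(\tau_L,\tau_R)$ is at least $|\Sigma|/C(N)^2$. Hence there exists a~deterministic assignment of projections to the symbols in $\lang U\cup\lang V$ that trivially satisfies at least a~$1/C(N)^2$-fraction of the identities of $\Sigma$, contradicting the assumption that $\Sigma$ is $(1/C(N)^2)$-robust. I do not foresee any serious obstacle: once one commits to choosing $i$ uniformly from $I(h)$ rather than from all of $[n]$, the only subtlety is to exploit the bipartite structure to keep the two randomisations on $\lang U$ and $\lang V$ independent, which is what makes the single-identity success probability factor as $1/(|I(h)||I(g')|)$.
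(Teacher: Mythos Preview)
Your proof is correct and follows essentially the same probabilistic argument as the paper: pick a coordinate uniformly from $I(\zeta(f))$ for each symbol, use the hypothesis to guarantee a good pair $(i^*,j^*)$ exists for each identity, and conclude via linearity of expectation. Your explicit replacement of $C$ by its monotone hull is a nice touch that the paper glosses over when writing $1/C(n)C(m)\geq 1/C(N)^2$, and your emphasis on the bipartite split $\tau_L,\tau_R$ just makes explicit what the paper uses implicitly (that $f$ and $g$ in any identity are distinct symbols, so their random choices are independent).
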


\begin{proof}
  Let $(\Sigma,\lang U,\lang V)$ be a~bipartite minor condition that involves symbols of arity at most $N$ and is satisfied in $\clo M$ which is witnessed by $\zeta\colon \lang U\cup \lang V \to \clo M$. To ease readability, we write $f^{\clo M} = \zeta(f)$ for each $f$. Let $\epsilon = 1/C(N)^2$. We want to show that $\Sigma$ is not $\epsilon$-robust, i.e., we want to find an assignment $\rho\colon \lang U \cup \lang V \to \clo P_2$ that satisfies at least $\epsilon$-fraction of identities in $\Sigma$.

  Let us choose such an assignment by choosing a~coordinate $i\in T(f^{\clo M})$ uniformly at random, and setting $\rho(f) = p_i$. We claim that the probability that any single identity in $\Sigma$ is satisfied is at least $\epsilon$. Indeed, if we consider an identity
  \begin{equation} \label{eq:5a-Sigma}
    f(x_1,\dots,x_n) \equals g( x_{\pi(1)},\dots, x_{\pi(m)} ),
  \end{equation}
  in $\Sigma$, we get that $I(f^{\clo M}) \cap \pi(I(g^{\clo M})) \neq \emptyset$, and consequently, there is a~choice of $i\in I(f^{\clo M})$ and $j\in I(g^{\clo M})$ such that $i = \pi(j)$, i.e., $f = p_i$ and $g = p_j$ satisfies (\ref{eq:5a-Sigma}). This means that the probability is at least
  \[
    1/ \bigl(\abs{I(f^{\clo M})}\cdot \abs{I(g^{\clo M})}\bigr) \geq 1 / C(n)C(m) \geq 1/ C(N)^2
  .\]
  It follows that the expected fraction of identities that gets satisfied by $\rho$ is at least $\epsilon$, which means that there is an assignment of projections that satisfies at least $\epsilon$-fraction of identities in~$\Sigma$.
\end{proof}

Directly from the above and Corollary~\ref{cor:disregard-N}, we get the following.

\begin{corollary} \label{cor:small-sets}
Let $\clo M = \Pol(\rel A,\rel B)$ and let $C$ be a~constant such that there exists a~mapping $I$ assigning to each $h \in \clo M^{(n)}$ a~subset of $[n]$ of size at most $C$ such that for each $\pi\colon [m] \to [n]$ and each $g\in \clo M^{(m)}$ we have
\[
\pi(I(g(x_1, \dots, x_m)) \cap I(g(x_{\pi(1)},\dots,x_{\pi(m)})) \neq \emptyset
.\]
Then $\PCSP(\rel A,\rel B)$ is \NP-hard.
\end{corollary}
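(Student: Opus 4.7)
The plan is to apply the two preceding results back-to-back: Lemma~\ref{lem:small-sets} supplies the structural non-existence of $\epsilon$-robust bipartite minor conditions, and Corollary~\ref{cor:disregard-N} converts this into NP-hardness of $\PCSP(\rel A,\rel B)$.

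More concretely, I would invoke Lemma~\ref{lem:small-sets} with the constant function $C(n)=C$ for every $n$. The hypothesis of that lemma, namely the existence of the map $I$ satisfying $\pi(I(g(x_1,\dots,x_m)))\cap I(g(x_{\pi(1)},\dots,x_{\pi(m)}))\neq\emptyset$ for all $\pi$ and $g$, is precisely the hypothesis we are given in the statement of the corollary, so it is free of charge. The lemma's conclusion then reads: for every $N$, the minion $\clo M$ satisfies no $(1/C^2)$-robust bipartite minor condition involving symbols of arity at most~$N$. Since this holds for every arity bound $N$, setting $\epsilon=1/C^2>0$ yields that $\clo M$ satisfies no $\epsilon$-robust bipartite minor condition whatsoever.

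The second step is a direct appeal to Corollary~\ref{cor:disregard-N}, which states that if $\Pol(\rel A,\rel B)$ fails every $\epsilon$-robust bipartite minor condition for some fixed $\epsilon>0$, then $\PCSP(\rel A,\rel B)$ is NP-hard. Applying this with our $\epsilon=1/C^2$ immediately gives the desired conclusion.

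There is essentially no obstacle here: the real content was already packaged into Lemma~\ref{lem:small-sets} (the probabilistic projection-assignment argument) and Corollary~\ref{cor:disregard-N} (the reduction from Gap Label Cover via Theorem~\ref{thm:no-epsilon-robust-is-hard}). The only thing to check is that a constant-valued choice function $C(n)\equiv C$ is admissible in Lemma~\ref{lem:small-sets}, which is immediate from its proof since the bound $1/C(N)^2$ depends only on the arity bound~$N$ through $C$, and a constant $C$ makes the conclusion uniform in~$N$.
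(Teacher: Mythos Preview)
Your proposal is correct and is exactly the approach the paper takes: it states the corollary follows ``directly from the above and Corollary~\ref{cor:disregard-N}'', i.e., from Lemma~\ref{lem:small-sets} (specialised to the constant function $C(n)=C$) together with Corollary~\ref{cor:disregard-N}.
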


We will now explain some known hardness results that can be obtained using the above corollary. Possibly the simplest way to choose the set $I(f)$ of coordinates for a~function $f$ is to choose only coordinates that can influence the value of $f$. This is formalised in the following definition.

\begin{definition}
Let $f\colon A^n \to B$, a~coordinate $i\in [n]$ is called \emph{essential} if there exist $a_1,\ldots,a_n$ and $b_i$ in $A$ such that
\[
  f(a_1,\ldots,a_{i-1},a_i,a_{i+1},\ldots, a_n)\ne f(a_1,\ldots,a_{i-1},b_i,a_{i+1},\ldots, a_n). 
\]
A~minion $\clo N$ on $(A,B)$, where $A$ and/or $B$ can be infinite, is said to have \emph{essential arity at most $k$}, if each function $f\in \clo N$ has at most $k$ essential variables.  We say that $\clo N$ has bounded essential arity if it has essential arity at most $k$ for some~$k$.
\end{definition}

The following is a~generalisation of \cite[Theorem~4.7]{AGH17}.

\begin{proposition}
  \label{prop:bounded-arity}
  Let $(\rel A,\rel B)$ be a~PCSP template and let $\clo M = \Pol(\rel A,\rel B)$. Assume that there exists a~minion homomorphism $\xi\colon \clo M \rightarrow \clo N$ some minion $\clo N$, possibly on infinite sets, which has bounded essential arity and does not contain a~constant function (i.e., a~function without essential variables). Then $\PCSP(\rel A,\rel B)$ is \NP-hard.
\end{proposition}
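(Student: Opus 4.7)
The plan is to derive this directly from Corollary~\ref{cor:small-sets} by using the minion homomorphism $\xi$ to transport the bounded-essential-arity structure from $\clo N$ back into $\clo M$. Let $k$ be the bound on the essential arity of functions in $\clo N$. For each $h \in \clo M^{(n)}$, I would define
\[
  I(h) = \{\, i \in [n] \mid i \text{ is an essential coordinate of } \xi(h) \,\} \subseteq [n].
\]
Since $\xi$ preserves arities, $\xi(h)$ is an $n$-ary function in $\clo N$, so $|I(h)| \le k$. Moreover, $I(h)$ is non-empty: if $I(h)$ were empty then $\xi(h)$ would have no essential variables, hence would be a constant function, contradicting the hypothesis that $\clo N$ contains no constant.

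Next I would verify the intersection condition of Lemma~\ref{lem:small-sets}. Fix $\pi\colon [m] \to [n]$ and $g \in \clo M^{(m)}$, and let $f(x_1,\dots,x_n) = g(x_{\pi(1)},\dots,x_{\pi(m)})$, so that $f$ is a minor of $g$ in $\clo M$. Because $\xi$ is a minion homomorphism (item (2) of Definition~\ref{def:minor-hom}),
\[
  \xi(f)(x_1,\dots,x_n) = \xi(g)(x_{\pi(1)},\dots,x_{\pi(m)}).
\]
Now suppose $i \in I(f)$, so $i$ is essential for $\xi(f)$. Using the identity above, changing the value of $x_i$ only affects the right-hand side through those $x_{\pi(j)}$ with $\pi(j) = i$; hence at least one such $j$ must be essential for $\xi(g)$, i.e.\ $j \in I(g)$ with $\pi(j) = i$. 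Thus $I(f) \subseteq \pi(I(g))$, and since $I(f)$ is non-empty, $\pi(I(g)) \cap I(f) \neq \emptyset$.

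Applying Corollary~\ref{cor:small-sets} with the constant $C = k$ then yields NP-hardness of $\PCSP(\rel A,\rel B)$. I do not foresee a genuine obstacle: the only subtle point is the non-emptiness of $I(h)$, which is exactly why the hypothesis rules out constants in $\clo N$. Everything else is bookkeeping using the minor-preservation property of $\xi$ and the definition of essential coordinate.
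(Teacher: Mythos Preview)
Your proof is correct and essentially identical to the paper's: both define $I(h)$ as the set of essential coordinates of $\xi(h)$, use the no-constants assumption to ensure $I(h)\neq\emptyset$, argue that $I(f)\subseteq\pi(I(g))$ via the same ``changing $x_i$ only affects coordinates in $\pi^{-1}(i)$'' reasoning, and conclude by Corollary~\ref{cor:small-sets}. The only difference is notation ($k$ versus $C$ for the arity bound).
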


\begin{proof}
  Let $C$ be the bound on the essential arity of $\clo N$. For $f\in \clo M$, we set $I(f)$ to be the set of all essential coordinates of $\xi(f)$. Clearly, $1\le \abs{I(f)} \leq C$. In order to apply Corollary~\ref{cor:small-sets}, we only need to show that if
  \[
    f(x_1,\dots,x_n) \equals g(x_{\pi(1)},\dots,x_{\pi(m)})
  \]
  then $I(f) \cap \pi(I(g)) \neq \emptyset$. Indeed, if $i\in [m]$ is an~essential coordinate of $\xi(f)$, there are two tuples $a_1,\dots,a_n$ and $b_1,\dots,b_n$ such that $a_i \neq b_i$, $a_{i'} = b_{i'}$ for all $i\neq i'$ and $\xi(f)(a_1,\dots,a_n) \neq \xi(f)(b_1,\dots,b_n)$. The last disequality together with the above identity (which is preserved by $\xi$) gives that
  \[
    \xi(g)(a_{\pi(1)},\dots,a_{\pi(n)}) \neq \xi(g)(b_{\pi(1)},\dots,b_{\pi(n)})
  .\]
  The two tuples of arguments differ only on coordinates $j'$ with $\pi(j')=i$, therefore $\xi(g)$ has to depend essentially on at least one coordinate from $\pi^{-1}(i)$. This shows that $i\in \pi(I(g))$, and therefore $I(f) \subseteq \pi(I(g))$. We get the claim since $I(f)\neq \emptyset$ by assumption.
\end{proof}

The following technical notion is a~slight strengthening of one that was used in \cite{BG18} as a~sufficient condition for \NP-hardness of some Boolean PCSPs.

\begin{definition} \label{def:C-fixing}
Let $C>0$ be a~constant. A~minion $\clo M \subseteq \clo O(\{0,1\})$ is said to be \emph{strongly $C$-fixing} if for each $f\in \clo M$ there exists a~(\emph{fixing}) set $I_f\subseteq[\ar(f)]$, $|I_f| \le C$ such that $f(x_1,\dots,x_{\ar(f)}) = 0$ whenever $x_i = 0$ for all $i\in I$, and similarly, $f(x_1,\dots,x_{\ar(f)}) = 1$ whenever $x_i = 1$ for all $i\in I$.
\end{definition}

We remark that a~strongly C-fixing minion does not need to have bounded essential arity.
The following proposition is a~generalisation of \cite[Theorem~5.1]{BG18}.

\begin{proposition} \label{corollary:C-fixing}
Let $(\rel A,\rel B)$ be a~PCSP template and let $\clo M=\Pol(\rel A,\rel B)$.
Assume that there exists a~minion homomorphism $\xi\colon \clo M\to \clo N$ for some minion $\clo N\subseteq \clo O(\{0,1\})$ which is strongly $C$-fixing for some $C>0$.
Then $\PCSP(\rel A,\rel B)$ is \NP-hard.
\end{proposition}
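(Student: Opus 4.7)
The plan is to apply Corollary~\ref{cor:small-sets} directly, with the sets $I(f)$ obtained by transporting fixing sets from $\clo N$ back via the minion homomorphism~$\xi$. Concretely, for each function $h \in \clo N$, use the hypothesis on $\clo N$ to fix (once and for all) a set $I_h \subseteq [\ar(h)]$ of size at most $C$ with the 0/1-fixing property of Definition~\ref{def:C-fixing}. Then define, for each $f \in \clo M^{(n)}$, the set
\[
  I(f) = I_{\xi(f)} \subseteq [n],
\]
which has cardinality at most $C$ and is non-empty (assuming the convention that the fixing set may be chosen non-empty when $\ar(h) \geq 1$; note that $\xi$ preserves arities).

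Next I would verify the intersection property required by Corollary~\ref{cor:small-sets}. Suppose $\pi\colon[m]\to[n]$ and $g \in \clo M^{(m)}$, and write $f$ for the minor $f(x_1,\dots,x_n) = g(x_{\pi(1)},\dots,x_{\pi(m)})$. Since $\xi$ is a minion homomorphism, the same identity holds after applying $\xi$:
\[
  \xi(f)(x_1,\dots,x_n) = \xi(g)(x_{\pi(1)},\dots,x_{\pi(m)}).
\]
Assume for contradiction that $\pi(I(g)) \cap I(f) = \emptyset$. Consider the Boolean input $a = (a_1,\dots,a_n)$ with $a_i = 0$ for $i \in I(f) = I_{\xi(f)}$ and $a_i = 1$ otherwise. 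By the fixing property of $I_{\xi(f)}$, we have $\xi(f)(a) = 0$. On the other hand, the $j$-th argument of the right-hand side at this input is $a_{\pi(j)}$, and for every $j \in I(g) = I_{\xi(g)}$ the assumption gives $\pi(j) \notin I(f)$, hence $a_{\pi(j)} = 1$. By the fixing property of $I_{\xi(g)}$, this forces $\xi(g)(a_{\pi(1)},\dots,a_{\pi(m)}) = 1$, contradicting the identity above.

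The hardness then follows immediately from Corollary~\ref{cor:small-sets}. The only genuinely substantive step is the contradiction argument in the second paragraph, which hinges on exploiting both halves of the strong $C$-fixing definition (the $0$-fixing half on one side of the identity and the $1$-fixing half on the other); this is what makes the constant input $a$ produce opposite values on the two sides. The rest is bookkeeping: checking that $\xi$ is arity-preserving, that it carries minor identities in $\clo M$ to the corresponding identities in $\clo N$, and that the cardinality bound on $I(f)$ is inherited from $\clo N$. I do not expect any delicate case analysis beyond this single Boolean argument.
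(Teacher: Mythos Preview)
Your proposal is correct and follows essentially the same approach as the paper: define $I(f)$ as a fixing set of $\xi(f)$ and verify the intersection hypothesis of Corollary~\ref{cor:small-sets}. The paper factors the contradiction argument into two observations --- that the $\pi$-image of a fixing set for $\xi(g)$ is a fixing set for the minor $\xi(f)$, and that no Boolean function can have two disjoint fixing sets --- whereas you carry out the same Boolean argument directly on a single witness tuple; the content is identical.
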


\begin{proof}
  We set $I(f)$ to be some fixing set of $\xi(f)$ of size at most~$C$. Observe that no function $f\in \clo N$ can have two disjoint fixing sets: if $I$ and $J$ would be disjoint and fixing, we would get that for a~tuple $(x_1,\dots,x_n)$ such that $x_i = 0$ for all $i\in I$ and $x_j = 1$ for all $j\in J$, we would get that $f(x_1,\dots,x_n)$ is both $0$ and $1$. On the other hand, a~$\pi$-image of a~fixing set $I$ of $g$ is a~fixing set for the minor $g(x_{\pi(1)},\dots,x_{\pi(m)})$. This shows that if $f$ and $g$ satisfy
  \[
    f(x_1,\dots,x_n) \equals g(x_{\pi(1)},\dots,x_{\pi(m)})
  ,\]
  and consequently their $\xi$-images satisfy the same identity, then $I(f) \cap \pi(I(g)) \neq \emptyset$. Hence Corollary~\ref{cor:small-sets} applies.
\end{proof}

\subsection{Reduction from Multilayered Label Cover}
  \label{sec:hardness-from-lglc}

\NP-hardness of Gap Label Cover is just one of the consequences of the PCP theorem, although arguably the most prevalent one. Other variants of GLC can also be used in proving \NP-hardness of some PCSP. In this section, we describe in algebraic terms a~reduction from the so-called \emph{Layered Label Cover} (or \emph{Multilayered PCP}) that has been used in \cite{DRS05} to prove hardness of approximate hypegraph colouring, i.e., $\PCSP(\rel H_2,\rel H_k)$. We also briefly comment on how Layered Label Cover is applied in this case. Gap Layered Label Cover was introduced in \cite{DGKR05}.

The Layered Label Cover ($\LLC{}$) is a generalisation of $\LC{}$ from bipartite graphs to $L$-partite graphs. The partite sets are referred to as \emph{layers}.

\begin{definition}[Layered Label Cover]
Fix positive integers $N$ and $L \geq 2$. We define $\LLC(L,N)$ as the following decision problem. The input is a~tuple 
\[
( (V_i, r_i)_{1 \leq i \leq L},  (E_{ij},\Pi_{ij})_{1 \leq i < j \leq L})
\]
where 
\begin{itemize}
  \item Each $G_{ij}=(V_i,V_j;E_{ij})$ is a~bipartite graph,
  \item $r_i\leq N$ are positive integers, and 
  \item each $\Pi_{ij}$ is a~family of maps $\pi_{ij,e}\colon [r_j] \to [r_i]$, one for each $e\in E_{ij}$.
\end{itemize} 
The goal is to decide whether there is a~labelling of vertices from $V_1, \ldots, V_L$ with labels from $[r_1], \ldots, [r_L]$, respectively, such that if $(u,v) \in E_{ij}$ then the label of $v$ is mapped by $\pi_{ij,(u,v)}$ to the label of $u$.
\end{definition}

Note that the LLC is a~CSP, where edges $(u,v) \in E_{ij}$ correspond to constraints between the layers $i$ and $j$.

Just like $\LC(N) = \LLC(2,N)$ is essentially the same problem as $\MC(N)$ (recall Subsection~\ref{sec:MC}), the $L$-layered version $\LLC(L,N)$ is essentially the same problem as $\LMC(L,N)$, the problem of deciding triviality of \emph{$L$-layered minor conditions}.
To be precise, for pairwise disjoint sets $\lang V_1,\ldots,\lang V_L$ of function symbols, an $L$-layered minor condition is a~tuple $(\Sigma,\lang V_1,\ldots,\lang V_L)$ where $\lang V_i$ are disjoint sets of function symbols, and $\Sigma$ is a~set of identities of the form $f(x_1,\ldots,x_{r_i})=g(x_{\pi(1)},\ldots,x_{\pi(r_j)})$ where $i<j$, $f\in \lang V_i$ and $g\in \lang V_j$.

For a minion $\clo M$, one can naturally define $\PLMC_{\clo M}(L,N)$, the promise version of $\LMC(L,N)$, in the same way as $\PMC_{\clo M}(N)$ is obtained from $\MC(N)$, i.e., the \yes-instances of $\LMC(L,N)$ are all the trivial $L$-layered minor conditions $\Sigma$ involving symbols of arity at most $N$, and \no-instances those $L$-layered minor conditions $\Sigma$ involving symbols of arity at most $N$ that are not satisfied in $\clo M$.
Moreover, if $\clo M = \Pol(\rel A,\rel B)$, then, for any fixed $L$ and $N$, $\PLMC_{\clo M}(L,N)$ can be reduced to $\PCSP(\rel A,\rel B)$ in log-space in the same way as in the proof of Theorem~\ref{thm:pcsp-and-lc}(2) (see also Subsection~\ref{sec:reduction-from-lc-to-pcsp}).

We will exploit these observations in a~similar way as in the previous subsection --- by exploring when a~known NP-hard gap version of $\LLC{}$ can be naturally reduced to $\PLMC_{\clo M}(L,N)$. An important gain of using more layers is a certain kind of density of $\LLC{}$ instances that can be required while preserving hardness.

\begin{definition}
An instance of $\LLC(L,N)$ is called \emph{weakly dense} if for any $1 < m <L$, any $m$ layers $i_1 < \cdots < i_m$, and any sets $S_j \subseteq V_{i_j}$ such that $\abs{S_j} \geq 2\abs{V_{i_j}}/m$ for every $1 \leq j \leq m$, there exist $1 \leq j < j' \leq m$ such that $\abs{E_{i_ji_{j'}} \cap (S_j \times S_{j'})} \geq \abs{E_{i_ji_{j'}}}/m^2$.
\end{definition}

We are ready to state a~gap version of Layered Label Cover from~\cite{DGKR05}.

\begin{definition}
The \emph{Gap Layered Label Cover} problem with parameters $\epsilon$, $L$, and $N$ denoted by $\GLLC_{\epsilon}(L,N)$, is a~promise problem in which, given a weakly dense instance of $\LLC(L,N)$, one needs to
\begin{itemize}
  \item accept if there is an~assignment that satisfies all the constraints, or
  \item reject if for every $1 \leq i < j \leq L$ no assignment satisfies more than $\epsilon$-fraction of the constraints between the layers $i$ and $j$.
\end{itemize}
\end{definition}

\begin{theorem} [{\cite[Theorem 4.2]{DGKR05}}] \label{thm:gllc}
  There exist constants $K_1, K_2 > 0$ such that for every $1 \geq \epsilon >0$, every $L \geq 2$, and every $N \geq K_1 \epsilon^{-K_2L}$ the problem $\GLLC_{\epsilon}(L,N)$ is NP-hard.
\end{theorem}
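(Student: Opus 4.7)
The plan is to derive Theorem~\ref{thm:gllc} by combining the PCP theorem with Raz's parallel repetition theorem and a~layered projection construction of the kind used in~\cite{DGKR05}.

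First, I would take a~two-prover one-round projection game $\mathcal G$ (equivalently, an instance of $\LC$) with perfect completeness and some absolute soundness $\delta_0 < 1$, as guaranteed by the PCP theorem. Let $N_0$ be its constant answer alphabet size. Applying $k$-fold parallel repetition yields, by Raz's theorem, a~projection game $\mathcal G^{\otimes k}$ of perfect completeness and soundness at most $\delta_0^{\Omega(k)}$ with alphabet size $N_0^k$. I would then partition the $k$ repetition coordinates into $L$ equal blocks $B_1,\dots,B_L$ of size $k/L$, and set $C_i := B_1 \cup \cdots \cup B_i$. Layer $i$ consists of vertices indexed by question tuples (to one designated prover) restricted to the coordinates in~$C_i$, with label alphabet $[N_0^{|C_i|}]$. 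For $i < j$, the edges $E_{ij}$ pair a~question tuple $u$ over $C_i$ with a~question tuple $v$ over $C_j$ whenever $v$ restricted to $C_i$ agrees with $u$, and the projection $\pi_{ij,(u,v)}\colon [N_0^{|C_j|}] \to [N_0^{|C_i|}]$ is the natural coordinate restriction (composed with the base projections of $\mathcal G$ where needed).

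Completeness is immediate: any assignment to $\mathcal G$ lifts coordinate-wise to a~consistent labeling of all $L$ layers. For soundness, I would argue that if some labeling satisfies an $\epsilon$-fraction of edges between layers $i < j$, then averaging over the shared questions in $C_i$ yields a~strategy for the $|C_j \setminus C_i|$-fold repeated subgame on the coordinates $C_j \setminus C_i$ with value $\geq \epsilon$; parallel repetition then forces $\epsilon \geq \delta_0^{\Omega(k/L)}$. Choosing $k = \Theta(L \log(1/\epsilon))$ delivers soundness $\epsilon$ while keeping $N = N_0^k = \epsilon^{-\Theta(L)}$, matching the claimed threshold. Weak density I would prove by pigeonhole: if $m$ layers $i_1 < \cdots < i_m$ had subsets $S_{i_j}$ of measure $\geq 2/m$ all of whose pairwise edge-measures $|E_{i_j i_{j'}} \cap (S_{i_j} \times S_{i_{j'}})|/|E_{i_j i_{j'}}|$ were below $1/m^2$, then summing across all $\binom{m}{2}$ pairs would contradict the totals forced by the sizes of the~$S_{i_j}$'s.

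The main obstacle I expect is the soundness analysis in the layered regime: one must verify, uniformly across all $\binom{L}{2}$ pairs $(i,j)$, that the induced subgame on $C_j \setminus C_i$ (after conditioning on the shared questions in $C_i$) remains a~genuine projection game to which Raz's theorem applies, and that the quantitative dependence tracks through to give exactly the $\epsilon^{-\Theta(L)}$ alphabet bound. A~secondary concern is that the construction must simultaneously guarantee weak density and $\epsilon$-soundness across every pair of layers within a~single instance, which requires the layered graph structure to be chosen so that the averaging step above does not degrade the effective repetition count below $\Omega(k/L)$ on any pair.
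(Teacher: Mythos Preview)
First, note that the paper does not prove this theorem; it quotes it as \cite[Theorem 4.2]{DGKR05} and uses it as a black box. There is no proof in the paper to compare against.

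As for your sketch of the DGKR result: the layered construction you describe is not theirs and, as written, loses the hardness. You place in layer $i$ the question tuples \emph{to one designated prover} over the nested coordinate sets $C_i$, with labels being that prover's answer tuples and inter-layer projections being coordinate restrictions. But then every constraint between layers says only ``your partial answer tuple extends mine''; any fixed answer strategy for that single prover, applied coordinatewise, satisfies all constraints simultaneously. The two-prover interaction---the source of Label Cover's hardness---never enters. Your parenthetical ``composed with the base projections of $\mathcal G$ where needed'' has nowhere to apply in this setup, and consequently your soundness reduction to a repeated subgame on $C_j\setminus C_i$ has nothing to reduce from.

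The actual DGKR construction uses \emph{hybrid} layers: starting from the base bipartite game on $(U,V)$, layer $\ell$ consists of $L$-tuples whose first $\ell$ coordinates are $U$-questions and whose remaining coordinates are $V$-questions; an edge between layers $\ell<\ell'$ requires agreement on the shared $U$- and $V$-coordinates and that the $\ell'-\ell$ ``switched'' coordinates form base-game edges, with projections acting via the base $\pi$'s on exactly those coordinates. Soundness between layers $\ell$ and $\ell'$ then reduces to the $(\ell'-\ell)$-fold repeated base game (giving the $\epsilon^{-\Theta(L)}$ alphabet bound after choosing the repetition parameter), and weak density follows from regularity of the base bipartite graph. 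Your high-level plan (PCP + Raz + layering) is correct, but the specific layering you wrote down does not carry it.
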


An analogue of Theorem~\ref{thm:no-epsilon-robust-is-hard}, which we state now, allows us to break a~polymorphism minion into finitely many sets that do not need to be minions and check the `no $\epsilon$-robust minor condition' property for each of the pieces separately (note that Definition~\ref{def:minor-condition-sat} of satisfaction of a~minor condition in $\clo M$ makes sense for an arbitrary subset of $\clo O(A,B)$).

\begin{theorem} \label{thm:no-piece_epsilon-robust-is-hard}
  Let $(\rel A,\rel B)$ be a PCSP template and let $\epsilon: \mathbb{N} \to \mathbb{N}$ be a function such that $\epsilon(N) \in \Omega(N^{-K})$ for each $K > 0$. Assume that $\clo M = \Pol(\rel A,\rel B)$ is a union of finitely many sets $\clo M_1, \ldots, \clo M_{k}$, none of which satisfies (for any $N$) any $\epsilon(N)$-robust bipartite minor condition involving symbols of arity at most $N$. Then $\PCSP(\rel A,\rel B)$ is \NP-hard.
\end{theorem}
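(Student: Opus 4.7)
The plan is to imitate the reduction behind Theorem~\ref{thm:no-epsilon-robust-is-hard} but reduce from Gap Layered Label Cover in place of Gap Label Cover, exploiting weak density to cope with the fact that a witness $\zeta\colon \bigcup_i \lang V_i \to \clo M$ to satisfiability of the whole minor condition may spread across all pieces $\clo M_1,\ldots,\clo M_k$. First I would observe that the construction of Subsection~\ref{sec:reduction-from-lc-to-pcsp} never uses bipartiteness, so it gives a log-space reduction from the obvious layered promise problem $\PLMC_{\clo M}(L,N)$ to $\PCSP(\rel A,\rel B)$. It then suffices to reduce a gap layered label cover instance $\GLLC_{\epsilon'}(L,N)$ to $\PLMC_{\clo M}(L,N)$ by interpreting the $\LLC$-instance as an $L$-layered minor condition $\Sigma$ just as in Subsection~\ref{sec:MC}; completeness is immediate, since any total labelling yields projections satisfying $\Sigma$.

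For soundness, I would suppose that $\Sigma$ is satisfied in $\clo M$ via $\zeta$ and extract a labelling of the original $\LLC$-instance that satisfies at least an $\epsilon'$-fraction of constraints between some pair of layers. The idea is to colour each symbol $v$ by an index $c(v) \in [k]$ with $\zeta(v) \in \clo M_{c(v)}$; a pigeonhole in each layer produces a dominant colour $c_i^* \in [k]$ hit by at least $|V_i|/k$ symbols, and, choosing $L = k(2k-1)+1$, a second pigeonhole among $c_1^*,\ldots,c_L^*$ produces $m = 2k$ layers $i_1 < \cdots < i_m$ sharing a common colour $c$. Setting $S_{i_j}=\{v \in V_{i_j}: c(v)=c\}$, the inequalities $|S_{i_j}| \geq |V_{i_j}|/k = 2|V_{i_j}|/m$ fulfil the hypothesis of weak density, which supplies a pair $j<j'$ with at least $|E_{i_j i_{j'}}|/m^2 = |E_{i_j i_{j'}}|/(4k^2)$ edges inside $S_{i_j}\times S_{i_{j'}}$. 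The bipartite sub-condition $\Sigma'$ indexed by these edges is satisfied in the single piece $\clo M_c$, so by hypothesis $\Sigma'$ is not $\epsilon(N)$-robust, and the corresponding $\epsilon(N)$-fraction of trivial identities lifts to a labelling satisfying an $(\epsilon(N)/(4k^2))$-fraction of the constraints between layers $i_j$ and $i_{j'}$.

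Taking $\epsilon' = \epsilon(N)/(4k^2)$ thus makes the composite reduction valid, and by Theorem~\ref{thm:gllc} its starting point $\GLLC_{\epsilon'}(L,N)$ is \NP-hard as soon as $N \geq K_1 (4k^2)^{K_2 L}\epsilon(N)^{-K_2 L}$. The main obstacle is to find such an $N$, which is exactly where the super-polynomial decay hypothesis on $\epsilon$ is used: since $\epsilon(N) \in \Omega(N^{-K})$ for every $K>0$, picking $K$ small enough that $K K_2 L < 1$ gives $\epsilon(N)^{-K_2 L} \in O(N^{K K_2 L}) = o(N)$, so the required inequality holds for all sufficiently large $N$. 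Selecting any such $N$ completes the reduction and yields \NP-hardness of $\PCSP(\rel A,\rel B)$.
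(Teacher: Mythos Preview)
Your proposal is correct and follows essentially the same route as the paper: reduce from $\GLLC_{\epsilon'}(L,N)$ via $\PLMC_{\clo M}(L,N)$, colour symbols by the piece $\clo M_c$ they land in, pick the dominant colour per layer, pigeonhole to get $m=2k$ layers sharing a colour, invoke weak density to find a dense pair, and use the non-robustness hypothesis on $\clo M_c$ to extract a partial labelling. The only cosmetic difference is that the paper takes $L=2k^2$ rather than your tighter $L=k(2k-1)+1$; both yield $m=2k$ and the rest of the argument is identical.
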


\begin{proof}
  We show that, with an appropriate choice of $\epsilon', L$ and $N$, the usual transformation of a~$\GLLC_{\epsilon'}(L,N)$ to a~$\PLMC_{\clo M}(L,N)$ is a~reduction. Since $\GLLC_{\epsilon'}(L,N)$ is NP-hard by Theorem~\ref{thm:no-piece_epsilon-robust-is-hard} (for a~large enough $N$) and  $\PLMC_{\clo M}(L,N)$ reduces to $\PCSP(\rel A,\rel B)$, the claim will follow.

  It is straightforward that the \yes-instances are mapped to \yes-instances with any choice of parameters. The non-trivial part is to show that, with a suitable choice of the parameters, \no-instances are mapped to \no-instances. We set $L = 2k^2$ and choose $N,\epsilon'$ so that
  \[
    \epsilon' = \epsilon(N)/(4k^2) \mbox{ and } N \geq K_1 (\epsilon')^{-K_2L}
  ,\]
  which is possible as $\epsilon(N) = \Omega(N^{-K})$ for $K < 1/K_2L$.

  We verify the contrapositive. Consider a~weakly dense instance $I = ( (V_i, r_i)_{i \leq L},$  $(E_{ij},\Pi_{ij})_{i < j})$ of $\GLLC_{\epsilon'}(L,N)$ that is not mapped to a~\no-instance $\Sigma$ (over the sets of symbols $(\lang V_i)_{1 \leq i \leq L}$) of $\PLMC_{\clo M}(L,N)$, i.e., there exists a mapping $\zeta$ from $\cup_{i=1}^L{\lang V_i}$ to $\clo M$ which witnesses that $\clo M$ satisfies~$\Sigma$. 

  We colour each symbol $f$ in $\lang V_i$ by some index $i$ such that $\zeta(f) \in \clo M_i$.  Next we colour each layer $\lang V_i$ by the most popular colour among its members. Since the number of colours is at most $k$ and the number of layers is $2k^2$, at least $m = 2k$ layers $i_1, \dots, i_m$ received the same colour, say $c$. Finally, let $S_j$ (where $j=1, \dots, m$) denote the set of all elements of $\lang V_{i_j}$ with colour $c$.

  We have $\abs{S_j} \geq \abs{V_{i_j}}/k = 2\abs{V_{i_j}}/m$, therefore, as $I$ is weakly dense, there exist $j < j'$ such that at least $1/m^2$-fraction of identities in $\Sigma$, which are between $\lang V_{j}$ and $\lang V_{j'}$, is between $S_j$ and $S_{j'}$. The assignment $\zeta$ witnesses that the system of all identities between $S_j$ and $S_{j'}$ is satisfied in $\clo M_c$. Since $\clo M_c$ does not satisfy any $\epsilon(N)$-robust condition with symbols of arity at most $N$, some $\epsilon(N)$-fraction of identities between $S_j$ and $S_{j'}$ is trivial. Since $\epsilon' = \epsilon(N)/(4k^2) = \epsilon(N)/m^2$, it follows that some $\epsilon'$-fraction of identities between $\lang V_{i_j}$ and $\lang V_{i_{j'}}$ is trivial, which for the original instance $I$ means that at least $\epsilon'$-fraction of the constraints between $V_{i_j}$ and $V_{i_{j'}}$ is satisfied, so $I$ is not a~\no-instance of $\GLLC_{\epsilon'}(L,N)$.
\end{proof}

The assumption that $\clo M_i$ does not satisfy any $\epsilon(N)$-robust bipartite minor condition may be  verified by a probabilistic argument as in the previous section. We remark that Lemma~\ref{lem:small-sets} applies to sets $\clo M$ of functions that are not minions, e.g., sets $\clo M_i$ from the previous theorem.

\subsubsection*{Hardness of approximate hypergraph colouring}

In the rest of this subsection we will sketch a~proof of \NP-hardness of approximate graph colouring (see Example~\ref{ex:hypergraph-colouring}). This was proved in \cite{DRS05}, and in our sketch we will reuse some of the combinatorial arguments from that paper. The theorem can be stated as follows.

\begin{theorem}[\cite{DRS05}] \label{thm:DRS05}
  For any $k\geq 2$, $\PCSP(\NAE_2, \NAE_k)$ is \NP-hard.
\end{theorem}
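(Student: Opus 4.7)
The plan is to apply Theorem~\ref{thm:no-piece_epsilon-robust-is-hard} to the polymorphism minion $\clo M = \Pol(\NAE_2, \NAE_k)$. I will exhibit a partition $\clo M = \clo M_1 \cup \cdots \cup \clo M_r$ into finitely many pieces such that no piece satisfies any $\epsilon(N)$-robust bipartite minor condition, with $\epsilon(N) = \Omega(N^{-K})$ for some constant $K$; NP-hardness then follows directly.

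First, I would unpack what it means for $f\colon \{0,1\}^n \to E_k$ to lie in $\clo M^{(n)}$. Writing $A_c(f) := f^{-1}(c)$ for each colour $c \in E_k$, the polymorphism condition is equivalent to the following ``any-three-agree'' property: whenever $u,v,w \in A_c(f)$, there exists a coordinate $i$ with $u_i = v_i = w_i$. (Indeed, if no such $i$ existed, then every column of the $3 \times n$ matrix $[u;v;w]$ would lie in $\nae_2$, whereas applying $f$ row-wise yields $(c,c,c) \notin \nae_k$, violating that $f$ is a polymorphism.) This structural restriction on each colour class is the combinatorial lever we exploit.

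The partition is designed so that a single distinguished colour $c^*(f) \in E_k$ governs each piece: for each $f$ I assign a canonical colour $c^*(f)$ (e.g.\ the least index $c$ for which $A_c(f)$ exhibits a prescribed non-degenerate behaviour, chosen so as to be invariant under minor operations) and define $\clo M_c := \{ f \in \clo M : c^*(f) = c \}$, giving at most $k$ pieces. For $f \in \clo M_c$, the key combinatorial lemma --- essentially the structural content of the DRS05 hypergraph long-code test, rephrased in our combinatorial language --- produces a set $I(f) \subseteq [\ar(f)]$ with $|I(f)| \leq C$, for a constant $C = C(k)$, that captures the ``influential'' coordinates of $A_c(f)$. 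The minor-compatibility required by Lemma~\ref{lem:small-sets}, namely $\pi(I(g)) \cap I(f) \neq \emptyset$ whenever $f(x_1,\ldots,x_n) = g(x_{\pi(1)},\ldots,x_{\pi(m)})$ with both in $\clo M_c$, is forced by the fact that $A_c(f)$ is the pullback of $A_c(g)$ along the column map $\pi$: any coordinate genuinely influencing $A_c(f)$ must lie in $\pi^{-1}$ of an influential coordinate of $A_c(g)$. With this in hand, Lemma~\ref{lem:small-sets}, applied piecewise exactly as in the proof of Theorem~\ref{thm:no-piece_epsilon-robust-is-hard}, shows that each $\clo M_c$ satisfies no $(1/C^2)$-robust bipartite minor condition, and Theorem~\ref{thm:no-piece_epsilon-robust-is-hard} concludes.

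The main obstacle is the structural lemma itself: making the choice of $c^*(f)$ and $I(f)$ canonical enough that bounded size and minor-compatibility hold simultaneously. Naive choices (``the majority colour,'' ``the coordinates of highest influence'') need not be preserved when variables are identified via a non-injective $\pi$. In \cite{DRS05} this is navigated using Fourier/influence analysis of the long-code test; in the purely algebraic framework of this paper, one expects a sunflower- or Helly-type combinatorial argument, tailored to the ``any-three-agree'' property of the colour classes $A_c(f)$, to yield the required canonical choice. Establishing this structure lemma and verifying that the chosen $I(f)$ pulls back correctly under arbitrary minors is where the real work lies.
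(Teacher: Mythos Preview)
Your high-level architecture matches the paper's: partition $\clo M$ by a distinguished colour, assign to each $f$ a coordinate set $I(f)$, invoke Lemma~\ref{lem:small-sets} piecewise, and conclude via Theorem~\ref{thm:no-piece_epsilon-robust-is-hard}. Two concrete points, however, diverge from what actually works.

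First, the bound $|I(f)| \le C(k)$ is too optimistic. The paper's $I(f)$ has size $O(\log N)$ where $N = \ar(f)$, not a constant; this is why Theorem~\ref{thm:no-piece_epsilon-robust-is-hard} is stated with an arity-dependent $\epsilon(N)$ rather than Corollary~\ref{cor:small-sets} with a fixed $\epsilon$. The reason is structural: the natural small object one extracts from the ``any-three-agree'' property is a \emph{$c$-avoiding set} $A$ of size $\le k$ (meaning $f(\mathbf w) \neq c$ whenever $w_i = 1$ for all $i \in A$), obtained from Lov\'asz's theorem on Kneser graphs (Theorem~\ref{thm:lovasz}). The $\pi$-image of a $c$-avoiding set for $g$ is $c$-avoiding for the minor $f$ --- but a single such set need not intersect another $c$-avoiding set of $f$, because a function can have many pairwise disjoint $c$-avoiding sets (trivially so if the colour $c$ is rarely used). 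The paper therefore takes $I(f)$ to be the \emph{union of a maximal pairwise-disjoint family} of small $c^f$-avoiding sets; maximality forces the intersection with $\pi(I(g))$, but now one must bound how many such disjoint sets exist. That bound, $O(\log N)$, comes from a density statement (Lemma~2.2 of \cite{DRS05}) that strengthens Lov\'asz's result to Schrijver graphs, plus a counting argument.

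Second, the combinatorial engine is not Fourier/influence analysis, nor a sunflower/Helly argument, but the topological lower bound on the chromatic number of Kneser (and Schrijver) graphs. Your ``influential coordinates pull back'' heuristic does not survive contact with the disjointness issue above: a coordinate can be entirely uninfluential for $A_c(f)$ and still belong to a $c$-avoiding set, and conversely. The paper's choice of $c^f$ is dictated not by canonicity under minors but by the density guarantee from the Schrijver strengthening, which is precisely what caps the number of disjoint small avoiding sets.
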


Our proof loosely follows the one in \cite{DRS05} with the main difference being the usage of our general theory. In particular, we use the above universal reduction from LLC to PLMC instead of an ad-hoc reduction from LLC to hypergraph colouring used in \cite{DRS05}.

The combinatorial core of the proof is a~strengthening Lov\'asz's theorem on the chromatic number of Kneser graphs. To state the theorem in the algebraic language, recall that $\Ham(\tup{t})$ denotes the Hamming weight of a~tuple $\tup t \in \{0,1\}^n$. We call two tuples $\tup{u},\tup{v} \in \{0,1\}^n$ \emph{disjoint} if $u_i = 0$ or $v_i = 0$ (or both) for every $i\in [n]$.

\begin{theorem}[\cite{Lov78}] \label{thm:lovasz}
 Let $f \colon E_2^N \to E_k$ and $s = \lfloor (N+1-k)/2 \rfloor$. Then there exist disjoint tuples $\vc{u}^f,\vc{v}^f \in E_2^N$ such that $\Ham(\vc{u}^f) = \Ham(\vc{v}^f) = s$ and $f(\vc{u}^f) = f(\vc{v}^f)$.
\end{theorem}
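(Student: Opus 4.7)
The plan is to recognize the statement as a direct algebraic reformulation of Lovász's theorem on the chromatic number of the Kneser graph, and reduce to it by a straightforward bijection between weight-$s$ Boolean tuples and $s$-subsets of $[N]$.

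First I would set up the correspondence. Restrict $f$ to the set $W_s = \{\tup{t} \in E_2^N \mid \Ham(\tup{t}) = s\}$, and identify each $\tup{t} \in W_s$ with its support $S(\tup{t}) = \{i \in [N] \mid t_i = 1\}$, which is an $s$-subset of $[N]$. Two tuples $\tup{u},\tup{v} \in W_s$ are disjoint in the sense defined immediately before the theorem (i.e.\ $u_i = 0$ or $v_i = 0$ for all $i$) if and only if $S(\tup{u}) \cap S(\tup{v}) = \emptyset$. Thus $f\restriction_{W_s}$ can be viewed as a colouring $c \colon V(KG(N,s)) \to E_k$ of the Kneser graph $KG(N,s)$, whose vertices are the $s$-subsets of $[N]$ and whose edges join disjoint pairs. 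Finding $\tup{u}^f$ and $\tup{v}^f$ as required is then exactly the same as finding an edge of $KG(N,s)$ whose endpoints receive the same colour under $c$, i.e.\ showing that $c$ is \emph{not} a proper $k$-colouring.

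Next I would invoke Lovász's theorem \cite{Lov78} on the chromatic number of Kneser graphs, which asserts $\chi(KG(N,s)) = N - 2s + 2$ (valid for $N \geq 2s$; when $N < 2s$ the graph is edgeless and the statement we want is vacuous, but this regime does not arise given our choice of $s$). To finish it suffices to check the arithmetic inequality $N - 2s + 2 > k$ under the hypothesis $s = \lfloor (N+1-k)/2 \rfloor$. If $N-k$ is even, write $N-k = 2m$; then $s = m$ and $N - 2s + 2 = k + 2 > k$. If $N-k$ is odd, write $N - k = 2m+1$; then $s = m+1$ and $N - 2s + 2 = k + 1 > k$. In both cases $k < \chi(KG(N,s))$, so no $k$-colouring of $KG(N,s)$ is proper; hence $c$ must identify two adjacent vertices, yielding the desired disjoint pair $\tup{u}^f, \tup{v}^f \in W_s$ with $f(\tup{u}^f) = f(\tup{v}^f)$.

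There is no genuine obstacle: all the content is packaged inside Lovász's theorem, and the rest is bookkeeping. The only mild subtlety is the case analysis on the parity of $N - k$ in the floor function, which is needed to confirm that the chosen $s$ is exactly on the boundary where the chromatic number of $KG(N,s)$ just exceeds $k$; this is why this particular value of $s$ (rather than something smaller) is the extremal one used in the sequel.
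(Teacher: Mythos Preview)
Your proposal is correct. The paper does not give its own proof of this statement: it is stated as Lov\'asz's theorem and cited to \cite{Lov78}, treated as a black box. Your write-up makes explicit precisely the translation the citation is meant to convey---that weight-$s$ tuples correspond to $s$-subsets of $[N]$, disjointness of tuples corresponds to disjointness of subsets, and the required pair exists because $\chi(KG(N,s)) = N - 2s + 2 > k$ for the chosen $s$---so there is nothing to add or correct.
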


This theorem has an immediate consequence for polymorphisms of $(\NAE_2,\NAE_k)$. Namely, it implies that, for every polymorphism $f\colon \NAE_2^N \to \NAE_k$, there exists $c \in E_k$ and a~\emph{$c$-avoiding set $A$} of coordinates of size at most $k$, by which we mean that  $f(\vc{w}) \neq c$ for every tuple $\vc{w} \in E_2^N$ such that $w_i = 1$ for every $i \in A$. Indeed, take $c = f(\vc{u}^f) = f(\vc{v}^f)$ and
\[
  A = \{ i \in [N] \mid u^f_i = v^f_i = 0 \}.
\]
Since $\vc{w}$ is such that $w_i = 1$ for every $i \in A$, the matrix with rows $\vc{u}^f$, $\vc{v}^f$, $\vc{w}$ has all columns in $\nae_2$, and therefore $f$ applied to the rows of this matrix gives a tuple in $\nae_k$. As $c = f(\vc{u}^f) = f(\vc{v}^f)$, we get $f(\vc{w}) \neq c$ for every such a tuple. The size of $A$ is indeed at most $\abs{N - 2\lfloor (N+1-k)/2\rfloor} \leq k$.

These $c$-avoiding sets have similar properties as fixing sets (see~Definition~\ref{def:C-fixing}): the $\pi$-image of a $c$-avoiding set for a function is clearly $c$-avoiding for the corresponding minor. However, a~function can have two disjoint $c$-avoiding sets (indeed, every set can be $c$-avoiding if this colour is never used by $f$), so exactly the same argument does not work. 

This issue can be resolved by strengthening Theorem~\ref{thm:lovasz} from Kneser graphs to Schrijver graphs~\cite{Schri78}. A~consequent counting argument then gives the following. 

\begin{lemma} [{\cite[Lemma 2.2]{DRS05}}]
Let $f \colon E_2^N \to E_k$ and $s = \lfloor (N+1-k)/2 \rfloor$. There exists $c^f \in E_k$ such that
 \begin{itemize}
 \item there exist disjoint tuples $\vc{u}^f,\vc{v}^f \in E_2^N$ such that $\Ham(\vc{u}^f) = \Ham(\vc{v}^f) = s$ and $f(\vc{u}^f) = f(\vc{v}^f) = c^f$ and
 \item the fraction of elements of $\{\vc{u} \colon \Ham(\vc{u}) = s\}$ such that $f(\vc{u}) = c^f$ is $\Omega(N^{-k-1})$ (where the constant hidden in $\Omega$ depends only on $k$).
 \end{itemize}
\end{lemma}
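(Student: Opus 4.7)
The plan is to derive the lemma from Schrijver's refinement of Lov\'asz's Kneser theorem together with a double-counting argument over cyclic reorderings of $[N]$ that promotes mere existence of a monochromatic disjoint pair into a density statement. First I would invoke Schrijver's theorem: the \emph{stable} Kneser graph $SG(N,s)$ --- the induced subgraph of the Kneser graph on those $s$-subsets of $[N]$ that contain no two cyclically consecutive elements --- still has chromatic number $N-2s+2$. Since our choice $s = \lfloor(N+1-k)/2\rfloor$ forces $N-2s+2 \geq k+1$, any $k$-colouring of stable $s$-subsets must produce a monochromatic edge, i.e., two disjoint stable $s$-subsets on which $f$ agrees. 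This already yields the first bullet, but with the extra rigidity that the witnesses may be taken stable, which is what makes the subsequent counting work.

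Next I would upgrade this to a density statement by applying Schrijver's theorem under every cyclic reordering $\pi$ of $[N]$: the relabelled function $f \circ \pi$ is again a $k$-colouring, so the previous step produces a monochromatic disjoint pair $(\vc u_\pi, \vc v_\pi)$ of $\pi$-stable $s$-subsets together with a common colour $c_\pi \in E_k$. Pigeonholing over the $k$ colours picks one colour $c^f$ that is the common colour for at least a $1/k$ fraction of the $(N-1)!$ cyclic orders. The key counting step is then to bound, for a fixed disjoint pair $(\vc u, \vc v)$ of weight-$s$ subsets, the number of cyclic orders under which both $\vc u$ and $\vc v$ are stable. Since the leftover set $L = [N] \setminus (\vc u \cup \vc v)$ has size only $N - 2s \leq k$, stability is essentially a constraint on how the $|L|$ leftover positions interleave with $\vc u$ and $\vc v$, and a direct count shows that this multiplicity is suppressed by a factor of order $N^{k+1}$ relative to all $(N-1)!$ orders. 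Dividing, one obtains $\Omega(\binom{N}{s}^2 / N^{k+1})$ distinct disjoint monochromatic pairs of colour $c^f$.

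Finally, since any $T \subseteq \{\vc u : \Ham(\vc u) = s\}$ of size $M$ contains at most $M \cdot \binom{N-s}{s}$ disjoint pairs (each $\vc u \in T$ has at most $\binom{N-s}{s}$ disjoint partners of weight $s$), the lower bound on the number of monochromatic disjoint pairs of colour $c^f$ implies $|f^{-1}(c^f) \cap \{\Ham(\vc u)=s\}| = \Omega(\binom{N}{s}^2 / (N^{k+1}\binom{N-s}{s}))$. A short calculation with binomial coefficients, using $N - 2s \leq k$, shows this is $\Omega(N^{-k-1}\binom{N}{s})$, which is exactly the claimed density.

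The main obstacle will be the multiplicity count in the middle paragraph: getting the exponent of $N$ to be exactly $k+1$, rather than a larger polynomial, relies on genuinely using the \emph{stability} restriction given by Schrijver (without it the multiplicity would be inflated by extra factorials coming from freely permuting positions of $L$). A secondary subtle point is the final binomial-coefficient step, where one must carefully exploit that $N-2s$ is a constant depending only on $k$ to ensure $\binom{N-s}{s}/\binom{N}{s}$ is bounded by a polynomial of degree depending only on $k$; if instead the density were computed via Cauchy--Schwarz applied to pair counts one would lose a square root and obtain the weaker exponent $(k+1)/2$, so the argument really needs the straight pair-counting inequality $|T| \geq (\#\text{disjoint pairs})/\binom{N-s}{s}$.
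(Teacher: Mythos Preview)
The paper does not give its own proof of this lemma: it is quoted verbatim as \cite[Lemma~2.2]{DRS05} and used as a black box. The only hint the paper offers is the sentence immediately preceding the lemma, which says the issue ``can be resolved by strengthening Theorem~\ref{thm:lovasz} from Kneser graphs to Schrijver graphs~\cite{Schri78}. A~consequent counting argument then gives the following.''

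Your proposal is exactly that: Schrijver's theorem applied under every cyclic reordering, pigeonhole on the colour, a multiplicity bound exploiting that the leftover set has size $N-2s\le k$, and a final conversion from pair-count to density. This is the approach of \cite{DRS05}, and your identification of the two delicate points (the multiplicity count needs stability to get the exponent down to $k+1$, and the final binomial manipulation needs $N-2s=O_k(1)$) is accurate. So your sketch matches both the paper's one-line hint and the original argument it cites.
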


The first item gives us a~small (of size at most $k$) $c^f$-avoiding set for every function $f$. The second item together with another counting argument (see \cite[Claim 4.5]{DRS05}) implies that there is no collection of pairwise disjoint small $c^f$-avoiding sets of size bigger than $K_3\log N$, where the constant $K_3$ depends only on $k$. Now we can finish the proof using Theorem~\ref{thm:no-piece_epsilon-robust-is-hard} and Lemma~\ref{lem:small-sets}.
For $c=1, \dots, k$ we define $\clo M_c = \{f \in \clo M \mid c^f = c\}$ and, for $f \in \clo M_c$, we define $I(f)$ as the union of a~maximal collection of pairwise disjoint small $c^f$-avoiding sets. For any $\pi\colon [m] \to [n]$ and an $m$-ary polymorphism $g$ such that both $g(x_1, \dots, x_m)$ and $f(x_1, \dots, x_n) = g(x_{\pi(1)}, \dots, x_{\pi(m)})$ are in $\clo M_c$ we have that $\pi(I(g)) \cap I(f) \neq \emptyset$ since the $\pi$-image of a $c$-avoiding set (and hence of $I(g)$) is a $c$-avoiding set for $f$. Moreover, $\abs{I(h)} \leq K_3\log N$ if $\ar(h) \leq N$. By Lemma~\ref{lem:small-sets}, no $\clo{M}_c$ satisfies any $(1/K_3^2\log^2 N)$-robust bipartite minor condition, and then $\PCSP(\NAE_2,\NAE_k)$ is NP-hard by Theorem~\ref{thm:no-piece_epsilon-robust-is-hard}.
\section{Hardness from other PCSPs}
    \label{sec:hardness-from-pcsp}

In this section we derive hardness results by reductions from other PCSPs by directly applying results from Sections \ref{sec:algebraic-reductions} and \ref{sec:pp-constructions}.
We give a~concise algebraic characterisation of PCSPs that admit a~reduction from approximate hypergraph colouring, i.e., from $\PCSP(\rel H_2,\rel H_k)$ for some $k$ (which is \NP-hard by Theorem~\ref{thm:DRS05}), and we apply this characterisation to special cases of approximate graph colouring and graph homomorphism. We give similar characterisation results for the existence of a~reduction from approximate graph colouring or graph homomorphism problems (which are currently not known to be \NP-hard in full generality).

\subsection{Hardness from approximate hypergraph colouring}

As a~starting point for our reductions, we can use Theorem~\ref{thm:DRS05}. Following our approach, the first step in a reduction from $\PCSP(\NAE_2,\NAE_k)$ to $\PCSP(\rel A,\rel B)$ is a~reduction from $\PCSP(\NAE_2,\NAE_k)$ to $\PMC_{\clo H_k}(6)$ where $\clo H_k = \Pol(\NAE_2,\NAE_k)$, given by Theorem~\ref{thm:pcsp-and-lc}(2). To analyse when we can continue further, it is useful to understand which bipartite minor conditions are not satisfied in $\clo H_k$.

We explained in Example \ref{ex:nocond-H2Hk} that the following condition is not satisfied in $\clo H_k$ for any $k\ge 2$:
\begin{align*}
  t(x,y) &\equals o(x,x,y,y,y,x)\\
  t(x,y) &\equals o(x,y,x,y,x,y)\\
  t(x,y) &\equals o(y,x,x,x,y,y),
\end{align*}
We prove below that having polymorphisms satisfying this condition is the only obstacle for a~reduction from approximate hypergraph colouring to a~given PCSP template.

\begin{definition}
An \emph{Olšák function} is a~6-ary function $o$ that satisfies
\[
  o(x,x,y,y,y,x) \equals o(x,y,x,y,x,y) \equals o(y,x,x,x,y,y).
\]
\end{definition}

The above identities appeared in Olšák's paper \cite{Ols17}. The algebraic significance of these identities is that they give the weakest non-trivial Maltsev condition for all idempotent algebras (also infinite ones), though we will not use this fact in our paper.

\begin{theorem} \label{thm:no-olsak}
  Let $\clo M$ be a~minion. The following are equivalent.
  \begin{enumerate}
    \item There exists $K\geq 2$ and a~minion homomorphism $\xi\colon \clo M \to \clo H_K$;
    \item $\clo M$ does not contain an~Olšák function.
  \end{enumerate}
\end{theorem}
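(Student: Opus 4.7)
The plan is to prove both directions separately, using the free-structure machinery of Section~\ref{sec:free}.

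For (1)$\Rightarrow$(2): Minion homomorphisms preserve satisfaction of all bipartite minor conditions (Definition~\ref{def:minor-hom}), so it suffices to observe that the Ol\v s\'ak condition, written in bipartite form as the three identities $t(x,y) \equals o(x,x,y,y,y,x)$, $t(x,y) \equals o(x,y,x,y,x,y)$, $t(x,y) \equals o(y,x,x,x,y,y)$ with a fresh binary symbol $t$, is not satisfied in $\clo H_K$ for any $K$. This is precisely the content of Example~\ref{ex:nocond-H2Hk}. Hence an Ol\v s\'ak function $o \in \clo M$ would be transported by $\xi$ to an Ol\v s\'ak function $\xi(o) \in \clo H_K$, contradicting that example.

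For (2)$\Rightarrow$(1), I would invoke Lemma~\ref{lem:adjunction} with $\rel A = \NAE_2$, which reduces the task to finding some $K \geq 2$ and a homomorphism from the finite structure $\rel F := \rel F_{\clo M}(\NAE_2)$ (whose universe is the finite set $\clo M^{(2)}$) to $\NAE_K$. For $K$ large enough, say $K = \max(2,|F|)$, any injection $h\colon F \to E_K$ is such a homomorphism, unless the ternary relation of $\rel F$ contains a ``loop'' $(f,f,f)$, which can never be sent into $\nae_K$. Thus it suffices to rule out such loops. Unpacking the definition of $R^{\rel F}$ using some enumeration $\tup r_1,\dots,\tup r_6$ of $\nae_2$, a loop $(f,f,f)$ is witnessed by some $g \in \clo M^{(6)}$ whose three binary minors read off from the rows of the $3\times 6$ matrix with columns $\tup r_1,\dots,\tup r_6$ all coincide with $f$. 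After a fixed permutation $\sigma$ of the six arguments of $g$---which stays inside $\clo M$ by closure under minors---this identification becomes exactly the canonical Ol\v s\'ak identities, yielding an Ol\v s\'ak function in $\clo M$ and contradicting~(2).

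The main technical subtlety is the bookkeeping in this last step: one must check that the arbitrariness in the enumeration of $\nae_2$ used to define $\rel F$ differs from the canonical listing implicit in the Ol\v s\'ak identities of the theorem only by a coordinate permutation, so the witness $g$ really produces an Ol\v s\'ak function after passing to a suitable minor. Everything else is a direct combination of the adjunction in Lemma~\ref{lem:adjunction} with the trivial injective hypergraph colouring of a loop-free ternary structure.
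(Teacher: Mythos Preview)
Your proposal is correct and follows essentially the same route as the paper: for (1)$\Rightarrow$(2) the paper also argues that a minion homomorphism would carry an Ol\v s\'ak function to one in $\clo H_K$, which is impossible; for (2)$\Rightarrow$(1) the paper likewise forms $\rel F_{\clo M}(\NAE_2)$, observes that the absence of an Ol\v s\'ak function is exactly the absence of a loop $(f,f,f)$ in this $3$-uniform hypergraph, colours it injectively with $K=|\clo M^{(2)}|$ colours, and then invokes Lemma~\ref{lem:adjunction}. Your extra remarks about the $\max(2,|F|)$ edge case and the coordinate-permutation bookkeeping are correct refinements of details the paper leaves implicit.
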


\begin{proof}
  Clearly, if $\clo M$ contains an~Olšák function, then its image  under a minion homomorphism would also be an~Olšák function. Since $\clo H_K$ does not contain such a~function, this proves the implication (1)\textto(2).  

  For the other implication assume that $\clo M$ does not contain an~Olšák function. Consider the 3-uniform hypergraph $\rel F = \rel F_{\clo M}(\NAE_2)$ obtained as the free structure of $\clo M$ generated by $\NAE_2$. Note that the vertices of $\rel F$ are binary functions in $\clo M$, and three such vertices $f,g,h$ are connected by a~hyperedge if there is a~$6$-ary function $o'$ such that
  \begin{align*}
    f(x,y) &\equals o'(x,x,y,y,y,x)\\
    g(x,y) &\equals o'(x,y,x,y,x,y)\\
    h(x,y) &\equals o'(y,x,x,x,y,y).
  \end{align*}
  Since $\clo M$ does not contain an Olšák function, we get that $\rel F$ does not contain hyperedges $(f,g,h)$ with $f=g=h$, which is enough to guarantee that it is colourable with $K = |F| = |\clo M^{(2)}|$ colours (note that $|\clo M^{(2)}|$ is finite). In other words, there is a homomorphism from $\rel F$ to $\rel H_K$, and we have a~minion homomorphism from $\clo M$ to $\Pol(\NAE_2,\NAE_K) = \clo H_K$ by Lemma~\ref{lem:adjunction}.
\end{proof}

\begin{corollary}  \label{cor:no-olsak}
  For every finite template $(\rel A,\rel B)$ that does not have an~Olšák polymorphism, $\PCSP(\rel A,\rel B)$ is \NP-hard.
\end{corollary}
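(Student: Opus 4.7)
The plan is to chain together three results already established in the excerpt, with essentially no extra work required. Set $\clo M = \Pol(\rel A,\rel B)$ and suppose, by hypothesis, that $\clo M$ contains no Olšák function.

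First I would invoke Theorem~\ref{thm:no-olsak} in the direction (2)$\to$(1): since $\clo M$ has no Olšák polymorphism, there exist an integer $K \geq 2$ and a minion homomorphism $\xi \colon \clo M \to \clo H_K = \Pol(\NAE_2,\NAE_K)$. Note that because $(\rel A,\rel B)$ is a finite template, $\clo M^{(2)}$ is finite and hence the $K$ produced by the proof of Theorem~\ref{thm:no-olsak} is a genuine finite integer, so $\PCSP(\NAE_2,\NAE_K)$ is a well-defined finite-template PCSP.

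Next I would apply Theorem~\ref{thm:h1-reductions} (i.e.\ Theorem~\ref{thm:main}) to the minion homomorphism $\xi \colon \Pol(\NAE_2,\NAE_K) \gets \clo M = \Pol(\rel A,\rel B)$ in the appropriate direction: the existence of the minion homomorphism $\xi$ from $\Pol(\rel A,\rel B)$ to $\Pol(\NAE_2,\NAE_K)$ yields a log-space reduction from $\PCSP(\NAE_2,\NAE_K)$ to $\PCSP(\rel A,\rel B)$.

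Finally, Theorem~\ref{thm:DRS05} asserts that $\PCSP(\NAE_2,\NAE_K)$ is \NP-hard for every $K \geq 2$, so NP-hardness transfers along the reduction and $\PCSP(\rel A,\rel B)$ is \NP-hard. There is no real obstacle here; the only thing to double-check is that the roles of source and target in Theorem~\ref{thm:h1-reductions} are applied correctly (the minion homomorphism $\clo M_1 \to \clo M_2$ gives a reduction from $\PCSP_2$ to $\PCSP_1$), which matches what we need with $(\rel A_1,\rel B_1) = (\rel A,\rel B)$ and $(\rel A_2,\rel B_2) = (\NAE_2,\NAE_K)$.
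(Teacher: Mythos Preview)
Your proposal is correct and follows exactly the same approach as the paper's proof: invoke Theorem~\ref{thm:no-olsak} to obtain a minion homomorphism $\clo M \to \clo H_K$, then combine Theorem~\ref{thm:main} with Theorem~\ref{thm:DRS05}. Your additional remarks about the finiteness of $K$ and the direction of the reduction in Theorem~\ref{thm:h1-reductions} are accurate sanity checks that the paper leaves implicit.
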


\begin{proof}
  Let $\clo M = \Pol(\rel A,\rel B)$. From the previous theorem, we know that there is a~minion homomorphism from $\clo M$ to $\clo H_K$ for some $K$. Therefore, the claim follows from Theorem~\ref{thm:main} and Theorem~\ref{thm:DRS05}.
\end{proof}

We give two applications of the above corollary in the next subsection. 
Note that the absence of Olšák polymorphism is a useful sufficient condition for hardness of PCSPs, but by no means a universal tool for this, as witnessed by, e.g., the hardness results of \cite{KO19} (see also Proposition~\ref{prop:no-olsak-in-d-2d}).

\subsection{Hardness of approximate graph colouring and homomorphism}

In this section we settle some of the open problems mentioned in Examples~\ref{ex:approx-graph-col} and~\ref{ex:approx-graph-hom}. We start by considering approximate graph colouring $\PCSP(\rel K_k,\rel K_c)$. Recall that the strongest known \NP-hardness results (without additional assumptions) for this problem are the cases when $k\ge 3$ is arbitrary and  $c\le 2k-2$~\cite{BG16} and when $c\le 2^{\Omega(k^{1/3})}$ and $k$ is large enough~\cite{Hua13}.

We prove that distinguishing between $k$-colourable graphs and those not even $(2k-1)$-colourable is \NP-hard for all $k\geq 3$. In particular, we prove that colouring a $3$-colourable graph with $5$ colours is \NP-hard. This improves the results of \cite{KLS00,BG16} which are the best known bounds for small $k$.

By Corollary~\ref{cor:no-olsak}, it is enough to show that no polymorphism from $\rel K_k$ to $\rel K_{2k-1}$ is an~Olšák function. 

\begin{lemma} \label{lem:k,2k-1}
  $\Pol(\rel K_k,\rel K_{2k-1})$ does not contain an Olšák function.
\end{lemma}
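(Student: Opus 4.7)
The plan is to reduce the lemma to a~structural description of the polymorphism minion $\Pol(\rel K_k,\rel K_{2k-1})$, generalising the ``trash colour'' analysis of Brakensiek--Guruswami for the case $(k,c)=(3,4)$ (see \cite[Lemmas 3.4, 3.9]{BG16} and Example~\ref{ex:pol-3,4-maps-to-projections}). The key technical ingredient, which I~expect to be established in Section~\ref{sec:graph_coloring}, should state: for every $n$ and every $f \in \Pol^{(n)}(\rel K_k,\rel K_{2k-1})$, there exist a~coordinate $i\in [n]$, an~injective map $\alpha\colon K_k \to K_{2k-1}$, and a~``trash'' set $T\subseteq K_{2k-1}\setminus\alpha(K_k)$ with $|T|\leq k-1$, such that $f(x_1,\dots,x_n) \in T \cup \{\alpha(x_i)\}$ for every $x\in K_k^n$. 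Establishing this structural claim is the hard part, as it requires analysing the colour classes of $f$ -- which are independent sets in the categorical power $\rel K_k^n$ -- by a~Greenwell--Lov\'asz-style argument that rules out every ``large'' colour class except those arising from a~single dictator coordinate, leaving at most $k-1$ residual classes to absorb into $T$.

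Given this structural lemma, the rest of the argument is short. I~would argue by contradiction: assume that $o\in\Pol^{(6)}(\rel K_k,\rel K_{2k-1})$ is an~Olšák function, and let $T$, $\alpha$, $i$ be as above. For each pair of distinct $a, b\in K_k$, set
\[
    c(a,b) = o(a,a,b,b,b,a) = o(a,b,a,b,a,b) = o(b,a,a,a,b,b),
\]
and write $T_j(a,b)$ for the $j$-th argument tuple above ($j=1,2,3$). The six columns of the $3\times 6$ Olšák matrix are exactly the six non-constant triples in $\{x,y\}^3$, so at every coordinate position $i\in[6]$ both values $a$ and $b$ occur among $T_1(a,b)_i$, $T_2(a,b)_i$, $T_3(a,b)_i$. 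Because $T$ and $\alpha(K_k)$ are disjoint, the common value $c(a,b)$ either lies in $T$ for all three Olšák tuples simultaneously, or equals $\alpha(T_j(a,b)_i)$ for all three simultaneously; in the latter case injectivity of $\alpha$ and the presence of both $a$ and $b$ among the $T_j(a,b)_i$ would force $\alpha(a)=\alpha(b)$, a~contradiction. Hence $c(a,b)\in T$ for every pair $a\neq b$.

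To finish, I~would apply the polymorphism property of $o$ to the two rows $(a,a,b,b,b,a)$ and $(c,c,d,d,d,c)$: their six columns are all edges of $\rel K_k$ precisely when $a\neq c$ and $b\neq d$, so in that case $c(a,b)\neq c(c,d)$. Thus $c$ restricted to $\{(a,b)\in K_k^2 : a\neq b\}$ is a~proper colouring of the graph $G$ in which $(a,b)\sim(c,d)$ iff $a\neq c$ and $b\neq d$. But $G$ contains the $k$-clique $(0,1),(1,2),\dots,(k-2,k-1),(k-1,0)$ (each pair of these vertices differs in both coordinates), so $\chi(G)\geq k$, while $c$ takes values in the set $T$ of size at most $k-1$. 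This is the desired contradiction, and once the structural lemma is in hand, the combinatorial verification above -- identifying the Olšák columns as the non-constant triples in $\{x,y\}^3$ and noting the rotational clique in $G$ -- is entirely routine.
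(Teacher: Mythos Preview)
Your proposal rests on a ``structural lemma'' that you expect Section~\ref{sec:graph_coloring} to provide, asserting that every $f\in\Pol^{(n)}(\rel K_k,\rel K_{2k-1})$ has a dictator coordinate $i$, an injection $\alpha\colon K_k\to K_{2k-1}$, and a trash set $T$ of size at most $k-1$ disjoint from $\alpha(K_k)$ with $f(\tup x)\in T\cup\{\alpha(x_i)\}$. This lemma is not in the paper, and more importantly it is \emph{false} already for $k=3$. Take the quaternary polymorphism $g\in\Pol(\rel K_3,\rel K_5)$ from Example~\ref{ex:cond-K3K5}. One checks directly that $g(0,1,2,2)=g(0,2,1,2)=g(0,2,2,1)=2$; if $2\notin T$ then $\alpha(x_i)=2$ forces $i\in\{3,4\}\cap\{2,4\}\cap\{2,3\}=\emptyset$. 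Running the same check over the other possible placements of $\{0,1,2\}$ relative to $T$ (e.g.\ $0\in T$ forces $\alpha(1)=1$, $\alpha(2)=2$, and again the three inputs above rule out every $i$) shows no choice of $(i,\alpha,T)$ with $|T|\le 2$ works. The existence of such a polymorphism is exactly what allows $\Pol(\rel K_3,\rel K_5)$ to satisfy non-trivial minor conditions (Example~\ref{ex:cond-K3K5}), in contrast with the $(3,4)$ case where the Brakensiek--Guruswami trash-colour lemma does hold and yields a minion homomorphism to projections.

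The paper's actual argument is entirely different and does not analyse individual polymorphisms at all. It works with the indicator construction: an Olšák polymorphism is the same as a $(2k-1)$-colouring of the graph $\rel G$ obtained from $\rel K_k^6$ by identifying each triple $(x,x,y,y,y,x)$, $(x,y,x,y,x,y)$, $(y,x,x,x,y,y)$. The proof then exhibits an explicit $2k$-clique in $\rel G$, namely the vertices $a_i=(i,i+1,i+2,i+1,i+2,i)$ and $b_i=(i+1,i,i,i,i+1,i+1)$ for $i\in K_k$ (addition mod $k$), checking adjacency case by case using the three representatives of each $b_j$. Since $\rel G$ contains $\rel K_{2k}$, it has no $(2k-1)$-colouring, and the lemma follows. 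This is a short, self-contained combinatorial verification with no structural input about polymorphisms required.
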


\begin{proof}
  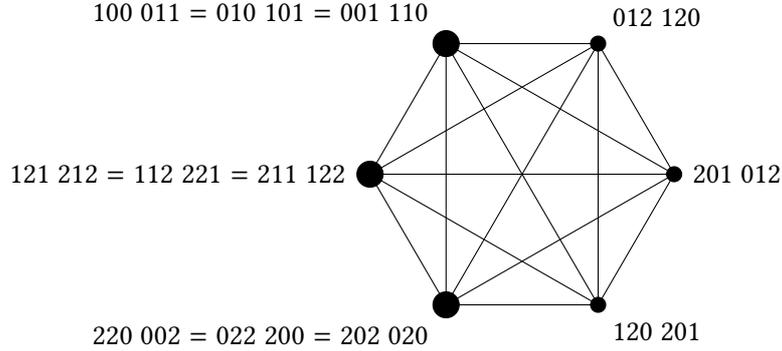
\begin{figure}
    \begin{tikzpicture}[scale = 2]
      \foreach \angle/\label/\size in {%
        120/{100 011${}={}$010 101${}={}$001 110}/3.46pt,
        180/{121 212${}={}$112 221${}={}$211 122}/3.46pt,
        240/{220 002${}={}$022 200${}={}$202 020}/3.46pt,
        60/{012 120}/2pt,
        300/{120 201}/2pt,
        0/{201 012}/2pt%
      }
      \node[circle,draw,fill,label={\angle:\label},inner sep = \size] at (\angle:1) {};

      \foreach \a in {0,60,...,300} {
          \draw (\a:1) -- (\a+60:1);
          \draw (\a:1) -- (\a+120:1);
        }
      \foreach \a in {0,60,120}
          \draw (\a:1) -- (\a+180:1);
    \end{tikzpicture}
    \caption{A 6-clique in the graph $\rel G$ for $k=3$. The vertices on the left were involved in the gluing that produced $\rel G$.}
    \label{fig:no-olsak-in-3,5}
  \end{figure}
  To show that $(\rel K_k,\rel K_{2k-1})$ does not have an Olšák polymorphism, we consider the indicator construction for the corresponding identities (see Section \ref{sec:two-prover-protocol}). Such a polymorphism can be viewed as $(2k-1)$-colouring of $\rel K_k^6$ which, for all $x,y\in K_k$, colours the vertices $(x,x,y,y,y,x)$, $(x,y,x,y,x,y)$, and $(y,x,x,x,y,y)$ by the same colour (which may depend on $x,y$). Since such vertices must be coloured the same, we can identify them.  So we construct a~graph $\rel G$ by considering the sixth power of $\rel K_k$, and then gluing together every triple of vertices of the form $(x,x,y,y,y,x)$, $(x,y,x,y,x,y)$, and $(y,x,x,x,y,y)$.

  We claim that $\rel G$ contains a~$2k$-clique. Namely, for all $i \in K_k$, we consider:
  \[
    a_i = (i,i+1,i+2,i+1,i+2,i) \text{ and }
    b_i = (i+1,i,i,i,i+1,i+1)
  \]
  where addition is modulo $k$. Note that the vertices $b_i$ are involved in the gluing that produced the graph, therefore they can be also represented by tuples $(i,i+1,i,i+1,i,i+1)$ and $(i,i,i+1,i+1,i+1,i)$. Fig.~\ref{fig:no-olsak-in-3,5} depicts this clique in $\rel G$ for $k=3$. We claim that all pairs of these vertices are connected by an edge. Clearly for any $i\neq j$, there is an edge between $a_i$ and $a_j$, as well as between $b_i$ and $b_j$.

  For edges between $a_i$ and $b_j$, we consider the following cases:
  \begin{itemize}
    \item $j \notin \{i,i+1\}$. Then also $j+1 \notin \{i+1,i+2\}$, so there is an edge between $b_j$ and $a_i$ since $b_j$ is represented by the tuple $(j,j,j+1,j+1,j+1,j)$.
    \item $j = i$. There is an edge between $a_i$ and $b_j$ since $b_j$ is represented by the tuple $(j+1,j,j,j,j+1,j+1)$.
    \item $j = i + 1$. There is an edge between $a_i$ and $b_j$ since $b_j$ is represented by the tuple $(j,j+1,j,j+1,j,j+1)$.
  \end{itemize}
  Altogether, we get that $\{a_i, b_i \mid i \in K_k \}$ is a~clique of size $2k$, and therefore $\rel K_k^6$ has no $(2k-1)$-colouring
which is an Olšák function. 
\end{proof}

As a~direct corollary of the above lemma and Corollary~\ref{cor:no-olsak}, we get the following.

\begin{theorem}\label{thm:hardness-k-vs-2k-1}
  Deciding whether a~given graph is $k$-colourable or not even $(2k-1)$-colourable is \NP-hard for any $k\geq 3$.
  \qed
\end{theorem}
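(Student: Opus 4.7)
The plan is to obtain the theorem as an immediate corollary of the two results that immediately precede it in the paper. The decision problem in the statement is, by definition, $\PCSP(\rel K_k, \rel K_{2k-1})$ (as in Example~\ref{ex:approx-graph-col}). So I only need to check that the hypothesis of Corollary~\ref{cor:no-olsak} holds for the template $(\rel K_k, \rel K_{2k-1})$ when $k\geq 3$, namely that this template admits no Olšák polymorphism.

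That hypothesis is exactly the content of Lemma~\ref{lem:k,2k-1}, which has just been proved by producing a $2k$-clique in the relevant indicator graph, so no $(2k-1)$-colouring of $\rel K_k^6$ can satisfy the Olšák identities. Combining the lemma with the corollary therefore finishes the proof in a single line.

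For transparency, I would also spell out the chain of reductions hidden inside Corollary~\ref{cor:no-olsak}, to emphasize that the result is a genuine algebraic reduction from approximate hypergraph colouring. The absence of an Olšák function in $\clo M := \Pol(\rel K_k, \rel K_{2k-1})$ yields, via Theorem~\ref{thm:no-olsak}, a minion homomorphism $\xi \colon \clo M \to \clo H_K$ for some $K \geq 2$ (concretely, one may take $K = |\clo M^{(2)}|$, using the free-structure construction on $\NAE_2$). Feeding this minion homomorphism into the main reduction Theorem~\ref{thm:main} gives a log-space reduction from $\PCSP(\NAE_2, \NAE_K)$ to $\PCSP(\rel K_k, \rel K_{2k-1})$, and Theorem~\ref{thm:DRS05} provides NP-hardness of the source problem.

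There is no real obstacle left, since all the work sits in Lemma~\ref{lem:k,2k-1}, and that combinatorial step has already been carried out above. The proof is thus purely a one-line invocation: apply Corollary~\ref{cor:no-olsak} to Lemma~\ref{lem:k,2k-1}.
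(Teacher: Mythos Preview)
Your proposal is correct and matches the paper's approach exactly: the theorem is stated with a \qed{} and introduced as ``a direct corollary of the above lemma and Corollary~\ref{cor:no-olsak}'', i.e., precisely the one-line combination of Lemma~\ref{lem:k,2k-1} with Corollary~\ref{cor:no-olsak} that you describe. Your additional unpacking of the reduction chain via Theorem~\ref{thm:no-olsak}, Theorem~\ref{thm:main}, and Theorem~\ref{thm:DRS05} is accurate and simply makes explicit what Corollary~\ref{cor:no-olsak} already encapsulates.
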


We remark that \cite{BKO19} contains a more direct proof of Theorem~\ref{thm:hardness-k-vs-2k-1}, which mimics the proofs from Section~\ref{sec:algebraic-reductions}, without using the theory that we developed in Section~\ref{sec:pp-constructions}. 

We also remark that the presented method does not work for $c\geq 2k$ because $\Pol(\rel K_k,\rel K_{2k})$, and therefore also $\Pol(\rel K_k,\rel K_c)$, contains an~Ol\v{s}\'ak polymorphism. For a~detailed discussion, see~Proposition \ref{prop:no-olsak-in-d-2d}.

\medskip

We now consider the problem of distinguishing whether a~given graph homomorphically maps to $\rel C_5$ (a~5-cycle), or it is not even $3$-colourable. This problem (mentioned above in Example~\ref{ex:approx-graph-hom}) was suggested as an intriguing open case in~\cite{BG18}. This problem can be understood as the `other' relaxation of 3-colourability of graphs. For example, in approximate graph colouring, we may be interested whether a~graph is 3-colourable, or not 4-colourable, i.e., we relax $(\rel K_3,\rel K_3)$ to $(\rel K_3,\rel K_4)$. If we want a~different relaxation, we change the first $\rel K_3$ into some graph $\rel G$ that is $3$-colourable. The odd cycles are then a~natural choice, since any graph that is not $2$-colourable contains an~odd cycle. Therefore, if we could prove that $\PCSP(\rel C_{2k+1},\rel K_3)$ is \NP-hard, we would have a~complete picture for problems of the form $\PCSP(\rel G,\rel K_3)$. We further discuss problems of the form $\PCSP(\rel G,\rel H)$ for two graphs $\rel G$ and $\rel H$ in Section~\ref{sec:promise-graph-colouring}.

\begin{lemma} \label{C_5,K_3}
  $\Pol(\rel C_5,\rel K_3)$ does not contain an~Olšák function.
\end{lemma}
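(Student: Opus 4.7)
I would mimic the strategy of Lemma~\ref{lem:k,2k-1}. Let $\rel G$ be the graph obtained from $\rel C_5^6$ by identifying, for each pair $x,y\in C_5$, the three tuples
\[
  T_1(x,y) = (x,x,y,y,y,x),\quad
  T_2(x,y) = (x,y,x,y,x,y),\quad
  T_3(x,y) = (y,x,x,x,y,y)
\]
to a single vertex $[x,y]$.  A $6$-ary Olšák polymorphism from $\rel C_5$ to $\rel K_3$ is by construction exactly a homomorphism from $\rel G$ to $\rel K_3$, i.e.\ a proper $3$-colouring of $\rel G$, so the lemma reduces to proving $\chi(\rel G)\ge 4$.

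Unlike in Lemma~\ref{lem:k,2k-1}, where a $2k$-clique was exhibited, the graph $\rel G$ here is triangle-free: each coordinate projection sends a triangle of $\rel G$ to a triangle of $\rel C_5$, and a case analysis of the six possible adjacency types $T_i(x,y)\sim T_j(x',y')$ shows that the Olšák gluings never produce a triangle either (one gets only two types of edges between $[x,y]$ and $[x',y']$: "Type~A" when $x\sim x'$ and $y\sim y'$, and "Type~B" when $|x-y|=2$ and $[x',y']$ equals the midpoint singleton $[m,m]$). Consequently, a plain $K_4$ obstruction is not available and one must instead exhibit a triangle-free subgraph of $\rel G$ whose chromatic number is already $4$.

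The natural candidate subgraph I would analyse is built from three pieces: the five diagonal singletons $[i,i]$ (already forming a copy of $\rel C_5$ via the diagonal of $\rel C_5^6$), the ten distance-$2$ classes $[i-1,i+1]$ and $[i+1,i-1]$ attached to the diagonals by both Type~A and Type~B edges, and a carefully chosen finite collection of non-Olšák singleton vertices whose function is to propagate colour conflicts between Olšák triples that would otherwise be compatible. The point is that on the $25$-vertex subgraph of $\rel G$ spanned by only the classes $[x,y]$ the assignment $[x,y]\mapsto c_{\rel C_5}(x)$ is a valid $3$-colouring (this can be verified directly), so the obstruction must genuinely use singleton classes lying outside $\{[x,y]\}$. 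The plan is then to pick such singletons so that any extension of a fixed $3$-colouring of the diagonal $\rel C_5$ propagates, via the Type~A/Type~B edges and via the singletons, to two adjacent vertices receiving the same colour. The argument proceeds by case analysis on $c([0,0])$, exploiting the rotational symmetry of $\rel C_5$ to reduce the number of cases.

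The main obstacle, and the reason this proof is genuinely harder than that of Lemma~\ref{lem:k,2k-1}, is precisely the triangle-freeness of $\rel G$: one cannot simply display four pairwise adjacent vertices, so the $4$-chromaticity has to be witnessed by a more global combinatorial obstruction, and identifying the right minimal witness, together with executing the case analysis without missing any cases, is the non-trivial part.
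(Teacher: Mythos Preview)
Your plan matches the paper's approach: both argue that an Olšák polymorphism, viewed as a $3$-colouring of $\rel C_5^{\,6}$ constant on each Olšák triple, cannot exist by exhibiting a small witness subgraph whose colouring constraints are inconsistent. Your observations that the clique method of Lemma~\ref{lem:k,2k-1} is unavailable here, and that singleton (non-Olšák) tuples must participate in the witness, are both correct and agree with what the paper does.

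The difference is that the paper actually carries the construction out. It anchors the argument not on the diagonal $5$-cycle $[i,i]$ you propose, but on the $5$-cycle $b_i=(i{+}1,i,i,i,i{+}1,i{+}1)=T_3(i,i{+}1)$ (either anchor would work), fixes its $3$-colouring without loss of generality to $1,0,1,2,0$, and then displays an explicit $12$-tuple subgraph of $\rel C_5^{\,6}$ drawn as a $3\times 4$ grid: the two outer columns consist of Olšák tuples whose colours are forced by the WLOG assumption, the two inner columns are six singleton tuples, and a one-line case split on the two middle vertices of the middle row yields the contradiction. So your proposal is a correct outline but stops at precisely the step you yourself flag as the non-trivial part --- identifying the concrete witness and executing the case analysis --- which in the paper turns out to be quite short once the right $12$ tuples are written down.
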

\begin{proof}
Suppose that $\Pol(\rel C_5,\rel K_3)$ contains an Olšák function $o$, which can be viewed as a 3-colouring of $\rel C_5^6$.
Assume that the vertices of $\rel C_5$ are $0,1,2,3,4$, in this cyclic order.
Consider the 6-tuples $b_i= (i+1,i,i,i,i+1,i+1), 0\le i \le 4,$ where the addition is modulo 5. 
These tuples form a 5-cycle in $\rel C_5^6$. Since there is only one 3-colouring of $\rel C_5$ up to automorphisms of $\rel C_5$ and $\rel K_3$, we can assume without loss of generality that
$o(b_0)=1, o(b_1)=0, o(b_2)=1, o(b_3)=2, o(b_4)=0.$

\begin{figure}
  \begin{tikzpicture}[every node/.style={draw, fill = white, inner sep = 0, minimum size = 1.2pc}, scale = 2 ]
    \newcommand{\ch}[2]{\scriptsize$#1,\!#2$}
    \draw (0,0) -- (3,0);
    \draw (0,1) -- (3,1);
    \draw (0,2) -- (3,2);
    \draw (1,0) -- (1,2);
    \draw (2,0) -- (2,2);
    \node [circle] at (0,2) [label = above:{001 110}] {1};
    \node [circle] at (0,1) [label = above:{112 221}] {0};
    \node [circle] at (0,0) [label = below:{433 344}] {2};

    \node [circle, fill = white] at (1,2) [label = above:{140 201}] {\ch02};
    \node [circle, fill = white] at (1,1) [label = 90:{201 312}] {\ch12};
    \node [circle] at (1,0) [label = below:{342 403}] {\ch01};

    \node [circle] at (2,2) [label = above:{234 340}] {\ch10};
    \node [circle] at (2,1) [label = 90:{340 401}] {\ch21};
    \node [circle] at (2,0) [label = below:{401 012}] {\ch20};

    \node [circle] at (3,2) [label = above:{343 434}] {2};
    \node [circle] at (3,1) [label = above:{404 040}] {0};
    \node [circle] at (3,0) [label = below:{010 101}] {1};
  \end{tikzpicture}
    \caption{No Olšák polymorphism for $(\rel C_5,\rel K_3)$.} 
    \label{fig:no-Olsak-C5-K3}
\end{figure}

It is easy to check that the graph in Fig.~\ref{fig:no-Olsak-C5-K3} is a subgraph of $\rel C_5^6$, with each node assigned a set of colours, i.e., elements of $\rel K_3$. Unique colours (in columns 1 and 4) correspond to the values of the assumed operation $o$ on these tuples; note that these colours are correct because $o$ is assumed to be an Olšák function.
The two-element lists of colours simply indicate the possible values for the corresponding tuples on the basis that each such tuple has a neighbour whose colour is known. Now, the two middle vertices in the middle row must be assigned different colours by $o$, but neither option extends to a full proper colouring of this graph.
\end{proof}

\begin{theorem}
  $\PCSP(\rel C_5,\rel K_3)$ is \NP-hard. \qed
\end{theorem}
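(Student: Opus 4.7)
The plan is essentially immediate: combine the preceding lemma with the general hardness criterion from Corollary~\ref{cor:no-olsak}. Lemma~\ref{C_5,K_3} establishes that $\Pol(\rel C_5, \rel K_3)$ contains no Olšák function, and Corollary~\ref{cor:no-olsak} asserts that for any finite template $(\rel A,\rel B)$ whose polymorphism minion omits an Olšák function, $\PCSP(\rel A,\rel B)$ is \NP-hard. Applying this with $\rel A = \rel C_5$ and $\rel B = \rel K_3$ finishes the proof in one line.

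If one wishes to unpack the route of the reduction, it proceeds as follows. By Theorem~\ref{thm:no-olsak}, the absence of an Olšák function in $\clo M = \Pol(\rel C_5,\rel K_3)$ is equivalent to the existence of some $K \geq 2$ and a minion homomorphism $\xi \colon \clo M \to \clo H_K = \Pol(\NAE_2,\NAE_K)$; concretely, $K$ can be chosen as $|\clo M^{(2)}|$ via the free structure construction of Lemma~\ref{lem:adjunction}. By Theorem~\ref{thm:main}, the existence of such a minion homomorphism yields a log-space reduction from $\PCSP(\NAE_2,\NAE_K)$ to $\PCSP(\rel C_5,\rel K_3)$. Since $\PCSP(\NAE_2,\NAE_K)$ is \NP-hard by Theorem~\ref{thm:DRS05}, we conclude that $\PCSP(\rel C_5,\rel K_3)$ is \NP-hard as well.

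There is no substantive obstacle: all the heavy lifting (the general reduction machinery of Section~\ref{sec:algebraic-reductions}, the characterisation of Olšák-free minions in Theorem~\ref{thm:no-olsak}, and the \NP-hardness of approximate hypergraph colouring in Theorem~\ref{thm:DRS05}) has already been done earlier in the paper, and the combinatorial analysis of the specific template $(\rel C_5,\rel K_3)$ was carried out in Lemma~\ref{C_5,K_3}. The theorem statement is therefore simply the corollary of assembling these pieces, and the one-sentence proof ``Apply Corollary~\ref{cor:no-olsak} to Lemma~\ref{C_5,K_3}'' is all that is required.
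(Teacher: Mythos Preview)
Your proposal is correct and matches the paper's approach exactly: the theorem carries a \qed{} with no explicit proof, because it is the immediate consequence of Lemma~\ref{C_5,K_3} together with Corollary~\ref{cor:no-olsak}. Your optional unpacking of the reduction route via Theorem~\ref{thm:no-olsak}, Theorem~\ref{thm:main}, and Theorem~\ref{thm:DRS05} is precisely how Corollary~\ref{cor:no-olsak} itself was proved, so nothing is added or lost.
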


\subsection{Implications of hardness of approximate graph colouring}
  \label{sec:promise-graph-colouring}

We proved above that the hardness of approximate hypergraph colouring implies hardness of any PCSPs satisfying an algebraic condition (namely, no Olšák polymorphism). In fact, this is one case of a more general pattern for such implications. To illustrate this, we now describe similar implications for the (yet unproved) hardness of approximate graph colouring and graph homomorphism problems.
Recall Examples~\ref{ex:approx-graph-col} and~\ref{ex:approx-graph-hom} and the conjectures mentioned there.

For the case of approximate graph colouring, the so-called Siggers  functions play a role similar to that played by Olšák functions in relation to hypergraphs. The original Siggers functions appeared in~\cite{Sig10}, but now there are several related versions of such functions (which are usually all called Siggers functions). We will use the one using 3-variable identities which appeared, e.g.\ in~\cite{BKOPP17}. 

\begin{definition}
\emph A {Siggers function} is a~6-ary function $s$ that satisfies
\[
  s(x,y,x,z,y,z) \equals s(y,x,z,x,z,y).
\]
\end{definition}

Just like the definition of an Olšák function relates to the six tuples in $\rel H_2$, the definition of a~Siggers polymorphism relates to the six edges of~$\rel K_3$ (viewed as a~directed graph).

We remark that, for problems $\CSP(\rel A)$ with finite $\rel A$, the different versions of a~Siggers function are equivalent in the sense that if $\Pol(\rel A)$ has one of them then it has all of them.  Moreover, for CSPs, this is also equivalent to having an Olšák polymorphism~\cite{Ols17} and this property exactly characterises tractable CSPs \cite{Bul17,Zhu17} (under $\Ptime\neq\NP$). However, it can be shown that such equivalences do not hold for PCSPs.

\begin{theorem} \label{thm:siggers}
  The following are equivalent.
  \begin{enumerate}
    \item For every finite template $(\rel A,\rel B)$ without a~Siggers polymorphism, $\PCSP(\rel A,\rel B)$ is \NP-hard.
    \item $\PCSP(\rel K_k,\rel K_c)$ is \NP-hard for all $c\geq k\geq 3$.
    \item $\PCSP(\rel K_3,\rel K_c)$ is \NP-hard for each $c\geq 3$.
  \end{enumerate}
\end{theorem}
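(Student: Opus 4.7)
My plan is to prove the three implications cyclically: $(1)\Rightarrow(2)\Rightarrow(3)\Rightarrow(1)$. The implication $(2)\Rightarrow(3)$ is immediate (it is the specialisation $k=3$), so the real work sits in the other two. Both remaining implications rest on the characterisation of Siggers functions as witnesses of loops in a specific free structure.

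For $(1)\Rightarrow(2)$, the plan is to verify directly that $(\rel K_k,\rel K_c)$ has no Siggers polymorphism whenever $c\geq k\geq 3$ and then invoke (1). If some $s\in\Pol^{(6)}(\rel K_k,\rel K_c)$ satisfied the Siggers identity, I would pick three pairwise distinct elements $x,y,z\in K_k$ (possible as $k\geq 3$); the tuples $(x,y,x,z,y,z)$ and $(y,x,z,x,z,y)$ then disagree in every coordinate, so they are adjacent in $\rel K_k^6$ and $s$ must send them to distinct vertices of $\rel K_c$, contradicting the Siggers identity.

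For $(3)\Rightarrow(1)$, I would fix a finite template $(\rel A,\rel B)$ whose polymorphism minion $\clo M=\Pol(\rel A,\rel B)$ contains no Siggers function, and study the free structure $\rel F = \rel F_{\clo M}(\rel K_3)$. The key step is to unfold the definition of $\rel F$ and observe that a loop at $f\in F$ corresponds exactly to a 6-ary $g\in\clo M$ with
\[
    f(x,y,z) = g(x,y,x,z,y,z) = g(y,x,z,x,z,y),
\]
since the six pairs of columns enumerate precisely the six directed edges of $\rel K_3$; in particular, $\rel F$ has a loop iff $\clo M$ has a Siggers function. Moreover, $\rel F$ is finite (because $\clo M^{(3)}$ is finite) and symmetric (swapping the two columns within each pair of $g$ yields a minor $g'\in\clo M$ witnessing the reverse edge). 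Hence $\rel F$ is a finite loopless graph, and therefore admits a homomorphism $\rel F\to\rel K_c$ for some $c\geq 3$. Lemma~\ref{lem:adjunction} converts this to a minion homomorphism $\clo M\to\Pol(\rel K_3,\rel K_c)$, and Theorem~\ref{thm:main} then reduces $\PCSP(\rel K_3,\rel K_c)$, which is \NP-hard by~(3), to $\PCSP(\rel A,\rel B)$.

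The only non-routine step I foresee is the correspondence between loops of $\rel F_{\clo M}(\rel K_3)$ and Siggers functions in $\clo M$; this is a direct bookkeeping check, but it is the bridge that allows the algebraic hypothesis of~(1) to be translated into the relational hypothesis needed for Theorem~\ref{thm:main}. Once this is in place, the rest of the argument is a straightforward application of the machinery from Section~\ref{sec:free}, together with the elementary combinatorial check in $(1)\Rightarrow(2)$.
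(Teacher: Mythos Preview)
Your proposal is correct and follows essentially the same route as the paper's proof: the same cyclic implications, the same direct check that a Siggers polymorphism of $(\rel K_k,\rel K_c)$ would force a loop in $\rel K_c$, and the same use of the free structure $\rel F_{\clo M}(\rel K_3)$ together with Lemma~\ref{lem:adjunction} and Theorem~\ref{thm:main} for $(3)\Rightarrow(1)$. Your remark that $\rel F$ is symmetric is true but not needed---looplessness alone already gives a homomorphism $\rel F\to\rel K_c$ for $c=\max(|F|,3)$.
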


\begin{proof}
  To prove (1)\textto(2), it is enough to show that $(\rel K_k,\rel K_c)$ has no Siggers polymorphisms for any $k\geq 3$. That is true, since any such polymorphism would force a~loop in $\rel K_c$ on the vertex $s(0,1,0,2,1,2) = s(1,0,2,0,2,1)$.

  The implication (2)\textto(3) is trivial.

  For (3)${}\to{}$(1), the proof is similar to that of Theorem~\ref{thm:no-olsak}.
  Assume that a template $(\rel A,\rel B)$ has no Siggers polymorphism, and let $\clo M = \Pol(\rel A,\rel B)$.  Consider the free structure $\rel F=\rel F_{\clo M}(\rel K_3)$.  Then $\rel F$ is a graph whose vertices are the ternary polymorphisms of $(\rel A,\rel B)$, and $(f,g)$ is an edge if there is a~6-ary $e\in \clo M$ such that
  \begin{align*}
    f(x,y,z) &= e(x,y,x,z,y,z) \\
    g(x,y,z) &= e(y,x,z,x,z,y)
  \end{align*}
  for all $x,y,z \in A$. Clearly, this graph is loopless, since a~loop would correspond to a~Siggers polymorphism. It is then $c$-colourable where $c=|F|$. So we have a homomorphism from $\rel F$ to $\rel K_c$, which by Lemma~\ref{lem:adjunction} implies that there is a minion homomorphism from $\clo M$ to $\Pol(\rel K_3,\rel K_c)$.  The result now follows from Theorem~\ref{thm:main}.
\end{proof}

We now generalise the above to the case of approximate graph homomorphism.

\begin{definition}[\cite{Ols17a}] \label{def:loop-condition}
  Fix a~loopless graph $\rel G$, with vertices $v_1,\dots,v_n$ and edges $(a_1,b_1),\dots,(a_m,b_m)$ where each edge $(u,v)$ is listed as both $(u,v)$ and $(v,u)$. The \emph{$\rel G$-loop condition} is the bipartite minor condition:
  \begin{align*}
    f(x_{v_1},\dots,x_{v_n}) &\equals e(x_{a_1},\dots,x_{a_m}) \\
    f(x_{v_1},\dots,x_{v_n}) &\equals e(x_{b_1},\dots,x_{b_m}).
  \end{align*}
\end{definition}

We remark that the $\rel G$-loop condition can be also constructed as $\Sigma(\rel G,\rel L)$ (recall Section~\ref{sec:two-prover-protocol}) where the graph $\rel L$ is `the loop', i.e., the graph with a~single vertex with a~loop.

\begin{example}
 Consider the $\rel K_3$-loop condition with edges of $\rel K_3$ listed as $(0,1)$, $(1,0)$, $(0,2)$, $(2,0)$, $(1,2)$, $(2,1)$.
Then the functions $e$ satisfying this condition (with some $f$) are exactly the Siggers functions.
\end{example}

\begin{theorem}
  The following are equivalent:
  \begin{enumerate}
    \item $\PCSP(\rel C_{2k+1},\rel K_c)$ is \NP-hard for all $c\geq 3$, $k \geq 1$.
    \item $\PCSP(\rel G,\rel H)$ is \NP-hard for any two non-bipartite loopless graphs $\rel G$, $\rel H$ with $\rel G\rightarrow \rel H$.
    \item For any finite template $(\rel A,\rel B)$ that does not satisfy the $\rel G$-loop condition for some non-bipartite loopless graph $\rel G$, $\PCSP(\rel A,\rel B)$ is \NP-hard.
  \end{enumerate}
\end{theorem}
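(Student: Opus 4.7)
The plan is to prove the cycle (2)\textto(1), (1)\textto(3), (3)\textto(2). The implication (2)\textto(1) is immediate: for every $k\ge 1$ and $c\ge 3$ both $\rel C_{2k+1}$ and $\rel K_c$ are non-bipartite loopless graphs with $\rel C_{2k+1}\to\rel K_3\to\rel K_c$, so (1) is a~specialisation of (2).

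For (3)\textto(2), let $\rel G,\rel H$ be non-bipartite loopless graphs with $\rel G\to\rel H$, and let $\rel L$ denote the one-vertex graph with a~loop. Matching Definition~\ref{def:loop-condition} against the construction in Section~\ref{sec:two-prover-protocol}, the $\rel G$-loop condition is precisely $\Sigma(\rel G,\rel L)$. Lemma~\ref{lem:pcsp-to-lc}(2) applied to the template $(\rel G,\rel H)$ and instance $\rel L$ then asserts that if $\Pol(\rel G,\rel H)$ satisfied this condition, we would have $\rel L\to\rel H$, forcing a~loop in $\rel H$---a~contradiction. Hence $\Pol(\rel G,\rel H)$ fails the $\rel G$-loop condition and (3), applied with witnessing graph $\rel G$, delivers \NP-hardness of $\PCSP(\rel G,\rel H)$.

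The main implication (1)\textto(3) mirrors the argument for Theorem~\ref{thm:siggers}, with the free structure playing the role of the indicator graph. Suppose $\clo M=\Pol(\rel A,\rel B)$ fails the $\rel G$-loop condition for some non-bipartite loopless graph $\rel G$. First I would show that the free structure $\rel F=\rel F_{\clo M}(\rel G)$ is a~loopless graph: indeed, a~loop at a~vertex $v\in F$ would, by the definition of its edge relation, exhibit $g\in\clo M^{(m)}$ (where $m$ is the number of edges of $\rel G$ listed with both orientations) such that $v$ and $g$ jointly satisfy the two identities of Definition~\ref{def:loop-condition}, contradicting the hypothesis.

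Since $\rel F$ is a~finite loopless graph, it admits a~homomorphism to $\rel K_c$ for some $c\ge \chi(\rel G)\ge 3$ (take, e.g., $c=\max(\chi(\rel F),\chi(\rel G))$). By Theorem~\ref{thm:minor-homomorphism-is-pp-constructibility}(\ref{it5:h1pp})\textto(\ref{it3:h1pp}) (equivalently Lemma~\ref{lem:adjunction}), this yields a~minion homomorphism $\clo M\to\Pol(\rel G,\rel K_c)$. Selecting an odd cycle of length $2k+1$ in $\rel G$ provides a~homomorphism $\rel C_{2k+1}\to\rel G$, making $(\rel C_{2k+1},\rel K_c)$ a~homomorphic relaxation of $(\rel G,\rel K_c)$ (via this map and the identity on $\rel K_c$); Lemma~\ref{lem:pp-gives-minor}(1) then supplies a~second minion homomorphism $\Pol(\rel G,\rel K_c)\to\Pol(\rel C_{2k+1},\rel K_c)$. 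Composing, we obtain $\clo M\to\Pol(\rel C_{2k+1},\rel K_c)$, and Theorem~\ref{thm:main} reduces $\PCSP(\rel C_{2k+1},\rel K_c)$---\NP-hard by~(1)---to $\PCSP(\rel A,\rel B)$. The main obstacle is pinning down the dictionary between the $\rel G$-loop condition and loops in $\rel F_{\clo M}(\rel G)$; once that is set up, the rest is a~routine assembly of the tools from Sections~\ref{sec:algebraic-reductions} and~\ref{sec:constructions}.
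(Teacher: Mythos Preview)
Your proof is correct and uses the same toolkit as the paper (the free-structure/loop correspondence, the odd-cycle relaxation, and Lemma~\ref{lem:pcsp-to-lc}), but you traverse the equivalence cycle in the opposite direction: the paper proves (1)\textto(2)\textto(3)\textto(1), while you prove (2)\textto(1)\textto(3)\textto(2). This has a small cost: your (1)\textto(3) must insert the extra relaxation from $(\rel G,\rel K_c)$ to $(\rel C_{2k+1},\rel K_c)$ because you only have (1) available, whereas the paper's (2)\textto(3) applies (2) directly to $(\rel G,\rel K_c)$ and defers the odd-cycle step to its (1)\textto(2). Conversely, your (2)\textto(1) is a triviality, and your (3)\textto(2) (the $\rel G$-loop condition fails in $\Pol(\rel G,\rel H)$ since $\rel H$ is loopless) is just as short as the paper's (3)\textto(1) (the $\rel C_{2k+1}$-loop condition fails in $\Pol(\rel C_{2k+1},\rel K_c)$). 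The ``obstacle'' you flag is not one: a loop at $f$ in $\rel F_{\clo M}(\rel G)$ is, by the definition of the edge relation in the free structure, precisely a pair $(f,e)$ witnessing that $\clo M$ satisfies the $\rel G$-loop condition.
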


\begin{proof}
  (2)\textto(3) is proven similarly to Theorem~\ref{thm:siggers}, and (3)\textto(1) is given by the fact that $(\rel C_{2k+1},\rel K_c)$ does not satisfy the $\rel C_{2k+1}$-loop condition.

  (1)\textto(2).  Assume that $\PCSP(\rel C_{2k+1},\rel K_c)$ is \NP-hard, and let $\rel G$, $\rel H$ be loopless non-bipartite graphs. Since $\rel G$ is not bipartite it contains an odd cycle, i.e., $\rel C_{2k+1} \to \rel G$ for some $k$, and since $\rel H$ is loopless, it is colourable by some finite number of colours, i.e., $\rel H\to\rel K_c$ for some $c$. Therefore, $(\rel C_{2k+1},\rel K_c)$ is a~relaxation of $(\rel G,\rel H)$, and the claim follows, e.g. by Lemma~\ref{lem:pp-gives-minor} and Theorem~\ref{thm:main}.  
\end{proof}
\section{Tractability from some CSPs}\label{sec:tractable}

In the previous section, we described some limitations on bipartite minor conditions satisfied in the polymorphism minion (such as the absence of an Olšák function) that imply hardness of the corresponding PCSP. The goal of this section is to describe some minor conditions that imply tractability of the corresponding PCSPs. Observe first that, after excluding the trivial cases that imply a~constant operation (the condition $f(x)\equals f(y)$), no other single bipartite minor condition can imply tractability, since any such condition involves functions of bounded arity and hence is satisfied in the minion of functions of some bounded essential arity whose corresponding PCSP is NP-hard (see Proposition~\ref{prop:bounded-arity}). Therefore, the conditions implying tractability always involve an infinite set of bipartite minor conditions.

We do not provide any new concrete tractability results here, our contribution is an algebraic characterisation of the power of several known algorithms. 
Each of the conditions that we present describes applicability of a~certain algorithm. Some of these algorithms are based on known algorithms for CSPs, e.g.~\cite{DP99,KOTYZ12}, and some on recent tractability results for PCSPs~\cite{BG18b}. In general, these algorithms are obtained from algorithms for a~fixed tractable CSP with template $\rel D$ (possibly with infinite domain), and we apply them to all PCSP templates that are pp-constructible from $\rel D$ (recall Definition~\ref{def:pp-constructible}). This can be viewed as a~(slight) generalisation of the `homomorphic sandwiching' method described by Brakensiek and Guruswami \cite{BG18b}.
We show in the next section that there exist a tractable finite fixed-template PCSP whose tractability cannot be explained by pp-constructibility from (or sandwiching) a~finite CSP template.

We consider three algorithms. The first one is based on a special case of local consistency, the second on the basic LP relaxation, and the last one on an affine integer relaxation. The last two types are closely related to the algorithms considered in \cite{BG18b}, and we refer to that paper for the latest algorithmic results for PCSPs that are based on relaxations over various numerical domains.
We remark that the complexity of (infinite-domain) CSP with numerical domains is receiving a good amount of attention, see~\cite{BM17}, but the connection with PCSPs has not been well-studied yet.

\subsection{Local consistency}
  \label{sec:local-consistency}

A~large family of algorithms for solving CSPs is based on some type of local propagation. Such an algorithm runs a propagation procedure which either refutes a given instance or transforms it to a locally consistent instance, without changing the set of solutions.
This algorithm is said to solve $\CSP(\rel A)$ if every locally consistent instance has a~solution. Such algorithms naturally generalise to $\PCSP(\rel A,\rel B)$: the algorithm interprets the input $\rel I$ as an instance of $\CSP(\rel A)$ and runs the appropriate consistency checking. A~negative answer means that there is no homomorphism from $\rel I$ to $\rel A$, and, for correctness, we only require that every instance that is consistent as an instance of~$\CSP(\rel A)$ admits a homomorphism to $\rel B$.
The most common type of such algorithms, the \emph{bounded width} algorithm~\cite{BK14}, works by inferring as much information about a solution as possible from considering fixed-size subsets of the instance, one at a time. 

\begin{definition}
For a structure $\rel I$ and a subset $X\subseteq I$, let $\rel I[X]$ denote the structure \emph{induced by} $X$ in $\rel I$ --- its domain is $X$, and each relation $R_i^{\rel I}$ is replaced by $R_i^{\rel I}\cap X^{\ar(R_i)}$. A \emph{partial homomorphism} from $\rel I$ to $\rel A$ with domain $X$ is any homo\-mor\-phism $\rel I[X]\rightarrow \rel A$.
\end{definition}

Given an instance $\rel I$ of $\CSP(\rel A)$ and $k\leq l$, the \emph{$(k,l)$-consistency algorithm} constructs the largest family $\mathcal F$ of partial homomorphisms from $\rel I$ to $\rel A$ with at most $l$-element domains satisfying the following two conditions:
\begin{itemize}
\item for any $f\in\mathcal F$, all restrictions of $f$ to smaller domains are also in $\mathcal F$, and
\item for any $f\in\mathcal F$ with at most $k$-element domain, there is an extension $g\in\mathcal F$ of $f$ to any $l$-element domain containing the domain of $f$.
\end{itemize}
On input $\rel I$, the algorithm starts with the set of all mappings from at most $l$-element subsets of $I$ to $A$ and repeatedly removes, until stable, mappings from $\mathcal F$ which are not partial homomorphisms or violate (at least) one of the two conditions above. 
We say that an instance $\rel I$ is a~\emph{$(k,l)$-consistent instance} of $\CSP(\rel A)$ if the output of the $(k,l$)-consistency algorithm is a~non-empty family of partial homomorphisms. 

\begin{definition}\label{def:bounded-width}
  A PCSP template $(\rel A, \rel B)$ has \emph{width $(k,l)$} if every instance $\rel I$ which is $(k,l)$-consistent as an instance of $\CSP(\rel A)$ maps homomorphically to $\rel B$. A~template has \emph{bounded width} if it has width $(k,l)$ for some $k\leq l$, and \emph{width~1} if it has width $(1,l)$ for some $l$. 
\end{definition}

Note that the $(k,l)$-consistency can be tested efficiently which provides a~polynomial time algorithm for PCSPs of bounded width. CSP templates of bounded width have been fully characterised by Barto and Kozik in~\cite{BK14}. For PCSPs, we can characterise width~1 which corresponds to solvability by (generalised) arc consistency, one of the most prevalent algorithms used in constraint programming. 

\begin{definition}
  A function is \emph{totally symmetric} if its output depends only on the set of input elements.
\end{definition}

Note that being totally symmetric can be described as a~bipartite minor condition: a~function $g$ of arity $n$ is totally symmetric if there are functions $f_1$, \dots, $f_n$ where each $f_i$ has arity $i$ such that for all $i=1,\dots,n$ and all surjective $\pi\colon [n] \to [i]$ we have
\(
  f_i (x_1,\dots,x_i) \equals
  g(x_{\pi(1)},\dots,x_{\pi(n)})
\).

The following theorem generalises the characterisation of width 1 for CSP templates from \cite{FV98,DP99}. We add descriptions using minion homomorphisms and pp-constructibility. Recall the template $\rel H$ (of \textsc{Horn 3-Sat}) as defined in Example~\ref{ex:C(A)} and let $\clo H = \Pol(\rel H)$.

\begin{theorem} \label{thm:width1-characterization}
Let $(\rel A,\rel B)$ be a PCSP template. The following are equivalent.
\begin{enumerate}
  \item\label{it:w1c1} $(\rel A,\rel B)$ has width 1.
  \item\label{it:w1c2} $\Pol(\rel A,\rel B)$ contains totally symmetric functions of all arities.
  \item\label{it:w1c3} There exists a~minion homomorphism from $\clo H$ to $\Pol(\rel A,\rel B)$.
  \item\label{it:w1c4} $(\rel A,\rel B)$ is pp-constructible from $\rel H$.
\end{enumerate}
\end{theorem}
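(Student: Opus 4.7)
The plan is to organise the equivalences into two groups: (2) $\Leftrightarrow$ (3) $\Leftrightarrow$ (4) are algebraic translations built on the theory of Section~\ref{sec:constructions}, while (1) $\Leftrightarrow$ (2) generalises the Feder--Vardi / Dalmau--Pearson characterisation of width 1 CSPs to the PCSP setting.

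The equivalence (3) $\Leftrightarrow$ (4) is an immediate specialisation of Theorem~\ref{thm:minor-homomorphism-is-pp-constructibility} to the templates $(\rel H, \rel H)$ and $(\rel A, \rel B)$, since $\Pol(\rel H) = \clo H$. For (2) $\Leftrightarrow$ (3), I will exploit the description of $\clo H$ from Example~\ref{ex:C(A)}: every $n$-ary function of $\clo H$ has the form $f_I(x_1, \dots, x_n) = \bigwedge_{i \in I} x_i$ for a non-empty $I \subseteq [n]$, and $f_{[n]}$ is itself totally symmetric, witnessed by the minor identities $f_{[m]}(x_1, \dots, x_m) = f_{[n]}(x_{\pi(1)}, \dots, x_{\pi(n)})$ for every surjection $\pi\colon [n] \to [m]$. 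Hence any minion homomorphism $\xi\colon \clo H \to \Pol(\rel A, \rel B)$ produces a family of totally symmetric polymorphisms $g_n := \xi(f_{[n]})$, as $\xi$ preserves these identities. Conversely, from totally symmetric polymorphisms $(g_n)_{n \geq 1}$ I define $\xi(f_I)(a_1, \dots, a_n) := g_{|I|}(a_{i_1}, \dots, a_{i_{|I|}})$ for any enumeration $I = \{i_1, \dots, i_{|I|}\}$; total symmetry renders this well-defined, and a direct computation shows that whenever $f_I$ arises as a minor of $f_J$ via $\pi\colon [m] \to [n]$ (so $I = \pi(J)$), both sides of the corresponding minor identity for $\xi$ evaluate, by total symmetry across arities, to $g_{|I|}$ applied to any enumeration of the set $\{a_i : i \in I\}$.

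For (2) $\Rightarrow$ (1), take $l$ at least the maximum arity of relations of $\rel A$. Given a $(1, l)$-consistent instance $\rel I$ of $\CSP(\rel A)$ with consistency family $\mathcal F$, define $X_v := \{h(v) : h \in \mathcal F,\ v \in \operatorname{dom}(h)\}$, which is non-empty by extendibility, and set $\tilde h(v) := g_{|X_v|}$ applied to the elements of $X_v$ (well-defined by total symmetry). For each constraint $(v_1, \dots, v_k) \in R^{\rel I}$, the $(1, l)$-consistency yields a set $T \subseteq R^{\rel A}$ such that $X_{v_i} = \{t_i : t \in T\}$ for each $i$; then, by total symmetry, $(\tilde h(v_1), \dots, \tilde h(v_k))$ equals the coordinatewise application of $g_{|T|}$ to the tuples in $T$, which lies in $R^{\rel B}$ because $g_{|T|}$ is a polymorphism.

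The main obstacle is (1) $\Rightarrow$ (2). The plan is, for each $n$, to apply width 1 to the substructure $\rel F^{(n)}$ of $\rel F_{\clo H}(\rel A)$ induced on non-empty subsets of $A$ of size at most $n$: a homomorphism $h\colon \rel F^{(n)} \to \rel B$ yields an $n$-ary totally symmetric polymorphism via $g_n(a_1, \dots, a_n) := h(\{a_1, \dots, a_n\})$, the polymorphism condition for $g_n$ being a direct consequence of the matrix-witness definition of $R^{\rel F^{(n)}}$. The most delicate step is to verify that $\rel F^{(n)}$ is $(1, l)$-consistent as an instance of $\CSP(\rel A)$; one examines the family of partial homomorphisms $h\colon S \to A$ with $h(X) \in X$ and checks extendibility by selecting, for each constraint of $\rel F^{(n)}$ restricted to an $l$-element subset, a column of the witnessing matrix compatible with the prescribed singleton assignment. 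With $(1, l)$-consistency in hand, width 1 produces the required homomorphism $\rel F^{(n)} \to \rel B$, completing the proof.
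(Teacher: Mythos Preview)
Your overall plan is sound, and (3) $\Leftrightarrow$ (4) via Theorem~\ref{thm:minor-homomorphism-is-pp-constructibility} together with (3) $\Rightarrow$ (2) are correct. There are, however, two gaps.

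The minor one: in both (2) $\Rightarrow$ (3) and (2) $\Rightarrow$ (1) you tacitly assume that the totally symmetric polymorphisms $(g_n)$ are \emph{compatible across arities}. In your (2) $\Rightarrow$ (3), for a minor identity with $I = \pi(J)$ and $|I| < |J|$, the right-hand side evaluates to $g_{|J|}$ applied to a $|J|$-tuple whose underlying set is $\{a_i : i \in I\}$, whereas the left-hand side is $g_{|I|}$ applied to an $|I|$-tuple with the same underlying set; total symmetry of each $g_m$ individually does not force these values to agree. The same mismatch occurs in (2) $\Rightarrow$ (1) when you compare $g_{|X_{v_i}|}$ with $g_{|T|}$. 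For (2) $\Rightarrow$ (1) this is easily repaired by replacing the $g_m$'s with the witnessing family $f_1,\dots,f_N$ of a single totally symmetric $g_N$ for $N \ge \max(|A|,\max_R |R^{\rel A}|)$. For (2) $\Rightarrow$ (3), where unbounded arities are in play, the cleanest route is to pass through the condition $\rel F_{\clo H}(\rel A) \to \rel B$ and invoke Lemma~\ref{lem:adjunction}, as the paper does.

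The substantial gap is in (1) $\Rightarrow$ (2). Your claim that $\rel F^{(n)}$ is $(1,l)$-consistent, witnessed by selections $h(X) \in X$, is false in general. A partial homomorphism on $S$ must satisfy \emph{all} constraints of $\rel F^{(n)}[S]$ simultaneously, including tuples with repeated entries. For $\rel A = \rel K_2$ one has $(\{0,1\},\{0,1\}) \in R^{\rel F}$ (witnessed by $J = \{(0,1),(1,0)\}$), so no map with $\{0,1\}$ in its domain can be a partial homomorphism, since $(a,a) \notin R^{\rel K_2}$; hence $\rel F^{(n)}$ is not $(1,l)$-consistent for any $l$ once $n \ge 2$. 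Your ``pick a compatible column of the witnessing matrix'' step establishes only \emph{arc consistency} of $\rel F^{(n)}$ --- each constraint individually extends any prescribed singleton --- which is strictly weaker than the $(1,l)$-consistency of Definition~\ref{def:bounded-width}. (This does not contradict the theorem, since $\rel K_2$ fails (1); but it breaks the unconditional step your argument needs.)

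The paper sidesteps both issues by centring everything on the single condition $\rel F_{\clo H}(\rel A) \to \rel B$: Theorem~\ref{thm:minor-homomorphism-is-pp-constructibility} gives the equivalence with (3) and (4), and the equivalence with (1) and (2) --- which in particular absorbs the passage between $(1,l)$-consistency and arc consistency --- is deferred to the classical CSP argument of Dalmau--Pearson~\cite{DP99}.
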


\begin{proof}
Let $\rel F=\rel F_{\clo H}(\rel A)$ be the free structure of $\clo H$ generated by $\rel A$ --- see Example~\ref{ex:C(A)} for detailed information about this structure. 
By Theorem~\ref{thm:minor-homomorphism-is-pp-constructibility}, items (3) and (4) are equivalent between themselves, and also equivalent to the condition $\rel F\rightarrow \rel B$. This last condition can be shown to be equivalent to both (1) and (2) essentially in the same way as the corresponding result for CSP \cite{DP99} (where notation $\mathcal C(\rel A)$ is used for the free structure).
\end{proof}

Let us move towards the general bounded width algorithm. The following lemma shows that the class of PCSP templates of bounded width is closed under minion homomorphisms. The proof builds on an~analogous result for CSPs obtained by Larose and Z\'adori \cite{LZ07}. To establish the hardness part of their then-conjectured characterisation of CSP templates of bounded width, they proved that bounded width is preserved under so-called `pp-interpretations' (pp-power is a~special case of pp-interpretation) of CSP templates. The generalisation to PCSPs suggests that it may be possible to characterise bounded width for PCSPs by minor conditions as well, although we are not able to formulate such a~conjecture yet.

\begin{lemma} \label{lemma:minor-preserves-bw}
Let $(\rel A, \rel B)$ and $(\rel A',\rel B')$ be templates such that there exists a minion homomorphism from $\Pol(\rel A, \rel B)$ to $\Pol(\rel A',\rel B')$. If $(\rel A, \rel B)$ has bounded width, then so does $(\rel A',\rel B')$.
\end{lemma}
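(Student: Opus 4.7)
By Theorem~\ref{thm:minor-homomorphism-is-pp-constructibility}, the hypothesis that a minion homomorphism $\Pol(\rel A,\rel B) \to \Pol(\rel A',\rel B')$ exists is equivalent to $(\rel A',\rel B')$ being pp-constructible from $(\rel A,\rel B)$. So it suffices to show that bounded width is preserved under the two elementary operations that make up pp-constructibility, namely homomorphic relaxations and pp-powers.

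The plan is to handle the relaxation case first, which is essentially immediate. Suppose $(\rel A',\rel B')$ is a~homomorphic relaxation of $(\rel A,\rel B)$ via $h_A\colon \rel A'\to\rel A$ and $h_B\colon \rel B\to\rel B'$, and assume $(\rel A,\rel B)$ has width $(k,l)$. Given an instance $\rel I$ of $\CSP(\rel A')$ that is $(k,l)$-consistent (as an instance of $\CSP(\rel A')$) with witnessing family $\mathcal F'$, the family $\mathcal F = \{ h_A\circ f \mid f\in\mathcal F' \}$ consists of partial homomorphisms from $\rel I$ to $\rel A$ and satisfies the two closure conditions defining $(k,l)$-consistency. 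Hence $\rel I$ is $(k,l)$-consistent for $\CSP(\rel A)$, so $\rel I\to\rel B$, and composing with $h_B$ yields $\rel I\to\rel B'$.

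Next I would handle the more delicate case of pp-powers, adapting the argument of Larose and Z\'adori~\cite{LZ07} for CSPs. Assume $(\rel A',\rel B')$ is an $n$-th pp-power of $(\rel A,\rel B)$ with $(\rel A,\rel B)$ of width $(k,l)$, and let $q$ be an upper bound on the number of existentially quantified variables appearing in any pp-formula $\Psi_R$ that defines a relation of $(\rel A',\rel B')$. Given an instance $\rel I'$ of $\CSP(\rel A')$, construct the ``unfolded'' instance $\rel I$ of $\CSP(\rel A)$ in the standard way: each variable $v\in I'$ is replaced by an $n$-tuple $(v^1,\dots,v^n)$ of variables over $A$, and each constraint $((v_1,\dots,v_t),R)$ of $\rel I'$ is replaced by a copy of $\Psi_R$ applied to the appropriate $tn$ variables, introducing its own fresh existentially quantified variables. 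Homomorphisms $\rel I'\to\rel A'$ (respectively $\rel I'\to\rel B'$) correspond bijectively to homomorphisms $\rel I\to\rel A$ (resp.\ $\rel I\to\rel B$) that are consistent with the quantified variables. The key step is to choose $k'$, $l'$ so that $(k',l')$-consistency of $\rel I'$ as an instance of $\CSP(\rel A')$ implies $(k,l)$-consistency of $\rel I$ as an instance of $\CSP(\rel A)$: taking $k' = k$ and $l'$ larger than $l + qn \cdot \binom{l}{n}$ (so that one can extend any partial map on $k$ original variables to an $l$-subset and bring along all the auxiliary existentially quantified variables used by the defining pp-formulas) suffices. The partial homomorphisms to $\rel A$ witnessing $(k,l)$-consistency of $\rel I$ can then be produced by unfolding a locally consistent family for $\rel I'$ and using the local existence of witnesses to the existential quantifiers.

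I expect the main obstacle to be exactly this bookkeeping in the pp-power case: one must carefully verify that the locally consistent family of partial homomorphisms from $\rel I'$ to $\rel A'$ can be lifted, on any $l$-subset, to a partial homomorphism from $\rel I$ to $\rel A$ (including on the existentially quantified variables) that is still coherent with the extension/restriction closure conditions. Once this is done, applying the bounded-width property of $(\rel A,\rel B)$ to $\rel I$ produces a homomorphism $\rel I\to\rel B$; projecting back gives $\rel I'\to\rel B'$, completing the argument. Finally, composing the bounds obtained in each step of the pp-construction yields a~single pair $(k'',l'')$ witnessing bounded width of $(\rel A',\rel B')$.
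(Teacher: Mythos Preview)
Your proposal is correct and follows essentially the same approach as the paper: both reduce via Theorem~\ref{thm:minor-homomorphism-is-pp-constructibility} to showing that bounded width is preserved under homomorphic relaxations and pp-powers, handle the relaxation case identically, and sketch the pp-power case by unfolding constraints via their pp-definitions and appealing to the Larose--Z\'adori argument~\cite{LZ07}. The only minor difference is in the width parameters chosen for the pp-power step: the paper takes $(kM,lM)$ with $M$ the maximal arity of a relation in $(\rel A',\rel B')$, whereas you propose $k'=k$ and a larger $l'$ accounting for existential variables; since both treatments are explicitly sketches deferring the bookkeeping to~\cite{LZ07}, this is not a substantive divergence.
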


\begin{proof}
Using Theorem \ref{thm:minor-homomorphism-is-pp-constructibility}, it is enough to show that bounded width is preserved under homomorphic relaxations and pp-powers. We present a complete proof for relaxations, and sketch a~proof for pp-powers, since the latter proof follows \cite{LZ07}.

Suppose that $(\rel A',\rel B')$ is a~homomorphic relaxation of a~template $(\rel A,\rel B)$, i.e., there exist homomorphisms $h_A\colon\rel A'\to\rel A$ and $h_B\colon\rel B\to\rel B'$, and that $(\rel A,\rel B)$ has width~$(k,l)$. We will prove that in this case, $(\rel A',\rel B')$ has width $(k,l)$ as well. 
Let $\rel I'$ be an instance of $\CSP(\rel A')$ which is $(k,l)$-consistent and let this fact be witnessed by a~nonempty family of partial homomorphisms $\mathcal F'$. It is easy to see that the set $\{h_Af'\mid f'\in\mathcal F'\}$ witnesses that $\rel I'$ is $(k,l)$-consistent as an instance of $\CSP(\rel A)$, and thus there exists a~homomorphism $s\colon\rel I'\to \rel B$. Consequently, $h_Bs: \rel I' \to \rel B'$.

We only sketch a proof for pp-powers: Assume that $(\rel A',\rel B')$ is a~pp-power of $(\rel A,\rel B)$, and the latter has width $(k,l)$. We claim that $(\rel A',\rel B')$ has width $(kM,lM)$ where $M$ is the maximal arity of a~relation of $(\rel A',\rel B')$. The key point is to observe that if we start with an instance $\rel I'$ of $\PCSP(\rel A',\rel B')$ that is $(kM,lM)$-consistent, and replace every constraint of $\rel I'$ by its pp-definition in $(\rel A,\rel B)$, we obtain an instance $\rel I$ of $\PCSP(\rel A,\rel B)$ that is $(k,l)$-consistent. Note that this construction follows the standard reduction from $\CSP(\rel A')$ to $\CSP(\rel A)$, therefore the arguments of \cite[Lemma 3.3]{LZ07} apply. Since $\rel I$ is~$(k,l)$-consistent and $(\rel A,\rel B)$ has width $(k,l)$, we get $\rel I \to \rel B$. This homomorphism witnesses that $\rel I'\to\rel B'$ since $\rel B'$ is defined from  $\rel B$ in the same way as $\rel A'$ from $\rel A$.
\end{proof}

\subsection{Linear programming relaxations}

Every CSP instance can be expressed as a 0-1 integer program in a canonical way. When we allow the variables in this program to attain any values from $[0,1]$ we obtain the so-called basic linear programming relaxation~\cite{KOTYZ12}.

\begin{definition}
Given an instance $\rel I$ of $\CSP(\rel A)$, let $\mathcal C = \{ (\tup v,R) \mid \tup v \in R^{\rel I} \}$. The \emph{basic linear programming relaxation} of $\rel I$ is the following linear program: The variables are $\mu_v(a)$ for every $v\in I$ and $a\in A$, and $\mu_{\tup v,R}(\tup a)$ for every $(\tup v,R)\in \mathcal C$, and every $\mathbf a\in A^{\ar(R)}$. Each of the variables is allowed to have values in the interval $[0,1]$.
The objective is to maximise 
\[
  \frac1{|\mathcal C|} \sum_{(\tup v,R)\in \mathcal C} 
    \sum_{\mathbf a\in R^{\rel A}} \mu_{\tup v,R}(\tup a)
\]
subject to:
\begin{align}
  \sum_{a\in A} \mu_v(a) &= 1 & v\in I, \label{eq:blp1} \\
  \sum_{\tup a \in A^{\ar(R)}, \tup a(i) = a} \mu_{\tup v,R}(\mathbf a) &= \mu_{\tup v(i)}(a) & a\in A, (\tup v,R)\in \mathcal C, i \in [\ar(R)]. \label{eq:blp2}
\end{align}
We denote the maximum possible value of the objective function by $\BLP_{\rel A}(\rel I)$.
\end{definition}

It is clear that the optimum value, $\BLP_{\rel A}(\rel I)$, of this LP is smaller than or equal to $1$, since (\ref{eq:blp1}) and (\ref{eq:blp2}) together imply that $\sum_{\tup a\in A^{\ar(R)}} \mu_{\tup v, R} (\tup a) = 1$, and therefore $\sum_{\tup a\in R} \mu_{\tup v,R} (\tup a) \leq 1$ for each of the constraints $(\tup v, R)$.
Given that the instance $\rel I$ has a~solution as an instance of $\rel A$, say a~homomorphism $s\colon \rel I\to \rel A$, there is an integral solution to the above linear program that achieves this optimum value: $\mu_v(s(v)) = 1$, $\mu_v(a) = 0$ for all $a\neq s(v)$, $\mu_{(v_1,\dots,v_k),R}(s(v_1),\dots,s(v_k)) = 1$, and $\mu_{\tup v,R}(\tup a) = 0$ for all other $\tup a$'s.

\begin{definition} \label{def:solvable-by-BLP}
Let $(\rel A,\rel B)$ be a PCSP template. We say that \emph{\textsc{BLP} solves $\PCSP(\rel A,\rel B)$} if every instance $\rel I$ with $\BLP_\rel A(\rel I)=1$ maps homomorphically to $\rel B$.
\end{definition}

It is easy to see that $\BLP_\rel A(\rel I)=1$ if an only if $\mu_{\tup v,R} (\tup a) = 0$ for each constraint $(\tup v, R)$ in $\rel I$ and $a\notin R^{\rel A}$. Therefore, if we add all such constraints to $\BLP_\rel A(\rel I)$, testing the feasibility of the obtained LP is equivalent to testing whether $\BLP_\rel A(\rel I)=1$.
Therefore, if BLP solves $\PCSP(\rel A,\rel B)$, then $\PCSP(\rel A,\rel B)$ reduces to an LP feasibility problem, where each LP constraint (except the non-negativity inequalities) is a~linear equation with $\pm 1$ coefficients and a~bounded number of variables.  Such a problem is expressible as $\CSP(\rel D)$ for an appropriate structure $\rel D$ with domain $\mathbb Q$ and finitely many relations. Note that the definition of a~CSP extends in a~straightforward way to such structures.
The template $\rel D$ is obtained from the structure $\rel Q_\conv$, whose domain is $\mathbb Q$ and whose (infinitely many) relations are all possible linear inequalities with rational coefficients (see \cite[Definition 4]{BM17}), by simply dropping all but finitely many relations. This is often expressed by saying that $\rel D$ is a~finite reduct of $\rel Q_\conv$.
Note that the relations pp-definable in $\rel Q_\conv$ are the convex polytopes in $\mathbb Q^k$, $k\ge 1$. It is easy to see that the polymorphisms of $\rel Q_\conv$ are exactly convex linear functions, i.e., functions $f\colon \mathbb Q^n \to \mathbb Q$ defined by
\(
  f(x_1,\dots,x_n) = \sum_{i\in[n]} \alpha_i x_i
\)
for some $\alpha_i$'s, $i\in [n]$, such that $\alpha_i \in [0,1]$, $\sum_{i\in[n]} \alpha_i = 1$. We denote the set of all such operations by $\clo Q_\conv$.

\begin{definition}
We say that a function is \emph{symmetric} if the output is independent of the order of the input elements.
\end{definition}

While total symmetry means that the output is dependent only on the set of the input elements, symmetry can be formulated as `the output is dependent only on  the multiset of the input elements'.
It can be also expressed as a~minor condition; a~function $f$ of arity $n$ is symmetric if 
\(
  f(x_1,\dots,x_n) \equals f(x_{\pi(1)},\dots,x_{\pi(n)})
\)
for all bijections $\pi\colon [n] \to [n]$.

Note that $\clo Q_\conv$ contains symmetric operations of all arities: For arity $n$, simply take $f(x_1,\dots,x_n)=\sum_{i\in [n]} x_i/n$.

We provide a~characterisation of the direct applicability of the basic linear programming relaxation that generalises \cite[Theorem 2(5) \& (6)]{KOTYZ12}. Note that the claim of \cite{KOTYZ12} that, for CSPs, solvability by BLP is equivalent to having width 1 (i.e., to items (1)--(3) of their Theorem 2) is false --- see \cite[Example~99]{KS16}.

\begin{theorem} \label{thm:lp-characterization}
Let $(\rel A,\rel B)$ be a PCSP template. The following are equivalent.
\begin{enumerate}
  \item \label{it1:blpc} \textsc{BLP} solves $\PCSP(\rel A,\rel B)$,
  \item \label{it2:blpc} $\Pol(\rel A,\rel B)$ contains symmetric functions of all arities.
  \item \label{it3:blpc} $\Pol(\rel A,\rel B)$ admits a minion homomorphism from $\clo Q_\conv$,
  \item \label{it4:blpc} $(\rel A,\rel B)$ is pp-constructible from a finite reduct of $\rel Q_\conv$. 
\end{enumerate}
\end{theorem}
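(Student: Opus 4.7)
The plan is to prove the cycle $(1) \Leftrightarrow (2)$ and $(2) \Leftrightarrow (3) \Leftrightarrow (4)$, where the first equivalence extends the CSP argument of \cite{KOTYZ12} and the latter two use the general theory of Sections~\ref{sec:algebraic-reductions}--\ref{sec:constructions}. Observe that $\clo Q_\conv$ itself contains the arithmetic means $m_n(y_1,\dots,y_n) = \tfrac{1}{n}\sum_i y_i$, which are symmetric functions of every arity, so all four conditions hold for $(\rel A,\rel B) = (\rel Q_\conv,\rel Q_\conv)$ and are preserved under reductions in the right direction.

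For (2)\textto(1), I would take an instance $\rel I$ with $\BLP_{\rel A}(\rel I) = 1$, fix a~rational optimal LP solution with common denominator $N$, and interpret each $\mu_v$ as a~multiset of elements of $A$ of size $N$. Applying a~symmetric $N$-ary polymorphism $s_N \in \Pol(\rel A,\rel B)$ to this multiset yields a~well-defined $h(v) \in B$; the marginalisation constraints~(\ref{eq:blp2}) arrange, for each constraint $(\tup v,R)$, the multisets $\mu_{v_1}, \dots, \mu_{v_k}$ as rows of a~$k \times N$ matrix whose columns all lie in $R^{\rel A}$, so $s_N$ applied row-wise gives $(h(v_1),\dots,h(v_k)) \in R^{\rel B}$. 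For the converse (1)\textto(2), I would use an~indicator instance $\rel I_n := \rel A^n / S_n$ obtained by identifying tuples of $\rel A^n$ that differ by a~permutation of coordinates; homomorphisms $\rel I_n \to \rel B$ correspond bijectively to symmetric $n$-ary polymorphisms of $(\rel A,\rel B)$, and the uniform mixture of the coordinate projections $\rel A^n \to \rel A$ gives a~BLP solution of value~$1$ whose marginals depend only on multisets and hence descend to $\rel I_n$.

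The heart of the argument is (2)\textto(3). The subtlety is that a~separate symmetric polymorphism of each arity need not be compatible with minor-taking, whereas a~minion homomorphism must respect \emph{every} minor identity. I would resolve this by a~compactness argument producing a~\emph{coherent} sequence $\sigma_k \in \Pol(\rel A,\rel B)^{(k!)}$ of symmetric polymorphisms satisfying
\[
  \sigma_k(y_1,\dots,y_{k!}) = \sigma_{k+1}(y_1^{k+1},\dots,y_{k!}^{k+1})
\]
(where the exponent denotes variable repetition). Existence of coherent finite initial segments is easy: starting from any symmetric $\sigma_k$ of arity~$k!$, the minor $\sigma_{k-1}(y_1, \dots, y_{(k-1)!}) := \sigma_k(y_1^k, \dots, y_{(k-1)!}^k)$ is again symmetric (block identification respects symmetry), and iteration gives $\sigma_1, \dots, \sigma_k$. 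Since each $\Pol(\rel A,\rel B)^{(k!)}$ is finite, König's lemma then yields an infinite coherent sequence. Given $f(x_1,\dots,x_j) = \sum_i (a_i/N) x_i \in \clo Q_\conv$, pick $k$ with $N \mid k!$ and set $\xi(f) := \sigma_k(x_1^{a_1 k!/N},\dots,x_j^{a_j k!/N})$; coherence and symmetry of the $\sigma_k$ make this independent of the choice of $k$, and preservation of arities and minors is then direct. The converse (3)\textto(2) is immediate since $\xi(m_n)$ is a~symmetric $n$-ary polymorphism.

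Finally, (3)\textto(4) and its converse follow from the version of Theorem~\ref{thm:minor-homomorphism-is-pp-constructibility} applicable to infinite-domain minions, invoking the remark after Lemma~\ref{lem:adjunction}. Because $\rel A$ is finite, the free structure $\rel F_{\clo Q_\conv}(\rel A)$ involves only $|A|$-ary elements of $\clo Q_\conv$, and the pp-power plus relaxation exhibited in Lemma~\ref{lem:free-is-pp} uses only finitely many operations of $\clo Q_\conv$ and, dually, finitely many relations of $\rel Q_\conv$; hence $(\rel A,\rel B)$ is pp-constructible from a~finite reduct of $\rel Q_\conv$, and conversely any such pp-construction yields a~minion homomorphism from $\clo Q_\conv$ via Lemma~\ref{lem:pp-gives-minor}. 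The main obstacle throughout is the coherence construction in (2)\textto(3); everything else is either routine linear programming or a~direct application of results already established in the paper.
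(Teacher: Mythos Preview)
Your arguments for (1)$\Leftrightarrow$(2) and (2)$\Leftrightarrow$(3) are correct and take a genuinely different route from the paper. The paper does not prove (1)$\Rightarrow$(2) or (2)$\Rightarrow$(3) directly; instead it routes everything through the structure $\LP(\rel A)\cong\rel F_{\clo Q_\conv}(\rel A)$ of Definition~\ref{def:free-lp}, showing that each of (1), (2), (4) is equivalent to the existence of a homomorphism $\LP(\rel A)\to\rel B$ (via compactness for (1) and (2), and via Lemma~\ref{lem:lp-free-is-pp} for (4)), with (3) then following from Lemma~\ref{lem:adjunction}. In particular, the paper's (2)$\Rightarrow$(3) detours through (1) and (4), whereas your K\"onig's-lemma construction of a coherent factorial sequence $(\sigma_k)$ produces an explicit minion homomorphism directly; likewise your indicator-instance argument $\rel I_n=\rel A^n/S_n$ for (1)$\Rightarrow$(2) is more elementary than the paper's path through pp-constructibility. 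Both approaches work; yours is more self-contained for these implications, while the paper's keeps the free structure central and reuses Section~\ref{sec:constructions} uniformly.

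Your (3)$\Rightarrow$(4), however, has a real gap. The construction in Lemma~\ref{lem:free-is-pp} builds an $N$-th pp-power with $N=|A_1|^{|A_2|}$, where $(\rel A_1,\rel B_1)$ is the template one pp-constructs \emph{from}; here $A_1=\mathbb Q$, so $N$ is infinite and the formula $\Psi_R$ quantifies over infinitely many variables --- the construction does not even parse, and the observation that ``only finitely many relations are used'' does not rescue it. The paper supplies a dedicated replacement, Lemma~\ref{lem:lp-free-is-pp}, which instead takes an $|A|$-th pp-power of $\rel Q_\conv$ (elements are $|A|$-tuples of rationals) and defines each $R^{\LP(\rel A)}$ by finitely many linear equalities and non-negativity constraints, all of which are relations of $\rel Q_\conv$; this is what makes the finite-reduct claim go through. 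You can close your cycle either by proving that lemma, or --- since you already have (3)$\Rightarrow$(2)$\Rightarrow$(1) --- by establishing (1)$\Rightarrow$(4) via $\LP(\rel A)\to\rel B$ together with Lemma~\ref{lem:lp-free-is-pp}.
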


The structure of the proof is the same as for Theorem~\ref{thm:width1-characterization}. Although, there are several important issues: mostly that Theorem~\ref{thm:minor-homomorphism-is-pp-constructibility} holds for finite structures, but in general it does not hold for infinite structures. Nevertheless, we will show that it holds for the structure $\rel Q_\conv$.
The proof uses an appropriate modification of the notion of a~free structure for $\clo Q_\conv$, which we define next. This modification is similar to the instance $\mathcal M(\Gamma)$ defined in \cite[Definitions 10 and 11]{KOTYZ12}. To keep some consistency in our notation, we will denote this structure by $\LP(\rel A)$.

\begin{definition} \label{def:free-lp}
The structures $\LP(\rel A)$ and $\rel A$ are similar. The universe $\LP(A)$ consists of rational probability distributions on $\rel A$, i.e., functions $\phi\colon \rel A\to \mathbb Q \cap [0,1]$ such that $\sum_{a\in A} \phi(a) = 1$. 
For a~$k$-ary relation $R^{\rel A}$, the corresponding relation $R^{\rel \LP(\rel A)}$ is defined as the set of all $k$-tuples $(\phi_1,\dots,\phi_k)$ of elements of $\LP(A)$ for which there exists a~rational probability distribution $\gamma$ on $R^{\rel A}$ such that
\begin{equation} \tag{$\vardiamondsuit$} \label{eq:lp-free}
  \sum_{\tup a\in R^{\rel A}, \tup a(i) = a} \gamma(\tup a) = \phi_i(a)
\end{equation}
for all $i\in [k]$ and $a\in A$.
\end{definition}

Note that the above structure relates closely to the BLP relaxation: It is easy to observe that $\rel I$ is an~instance such that $\BLP_\rel A(\rel I)=1$ if and only if $\rel I$ maps homomorphically to $\LP(\rel A)$. This has also been observed in \cite{KOTYZ12}.

\begin{remark} \label{rem:lp-is-free}
The structure $\LP(\rel A)$ is isomorphic to the free structure $\rel F_{\clo Q}(\rel A)$ of $\clo Q_\conv$ generated by $\rel A$. For a proof, assume that $A = [n]$. Recall that the elements of the free structure $\rel F_{\clo Q}(\rel A)$ are $n$-ary convex linear functions. An~isomorphism $h\colon \LP(\rel A)\to \rel F_{\clo Q}(\rel A)$ is given by
\(
  h(\phi)= f_\phi
\)
where $f_\phi$ is defined by $f_\phi(x_1,\dots,x_n) = \sum_{i\in [n]} \phi(i)x_i$. Note that $f_\phi$ is always a~convex linear function, therefore an element of $F_{\clo Q}(A)$, and also that $h$ is bijective since every such function is uniquely determined by its coefficients.
To prove that $h$ preserves a~relation $R$, assume $(\phi_1,\dots,\phi_k) \in R^{\LP(\rel A)}$, and let $\gamma$ be the probability distribution on $R^{\rel A}$ witnessing this fact. Further, let $R^{\rel A} = \{ \tup r_1,\dots,\tup r_m \}$, and define $g\colon \mathbb Q^m \to \mathbb Q$ by
\[
  g(x_1,\dots,x_m) = \sum_{i\in [m]} \gamma(\tup r_i) x_i.
\]
We claim that for each $j \in [k]$, we have
\[
  g( x_{\tup r_1(j)},\dots,x_{\tup r_m(j)} ) = h(\phi_j)( x_1,\dots,x_n )
\]
which is easily observed by comparing coefficients of the left- and right-hand side.
The fact that $h^{-1}$ is also a~homomorphism is obtained by reversing this argument, equating a~convex linear function $g\colon \mathbb Q^m \to \mathbb Q$ with the probability distribution $\gamma$ which maps a~tuple $\tup r_i$ to the $i$-th coefficient of $g$.
\end{remark}

In the light of the previous remark, the following can be understood as an~infinite case of Lemma~\ref{lem:free-is-pp}. 

\begin{lemma} \label{lem:lp-free-is-pp}
  Let $\rel A$ be a~finite relational structure, and let $\LP(\rel A)$ be the free structure of $\clo Q_\conv$ generated by $\rel A$. Then $(\rel A,\LP(\rel A))$ is a~relaxation of a~pp-power of $\rel Q_\conv$.
\end{lemma}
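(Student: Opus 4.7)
The plan is to mimic the proof of Lemma~\ref{lem:free-is-pp}, but with $\rel Q_\conv$ playing the role that $(\rel A_1,\rel B_1)$ played there, exploiting the fact that a pp-formula over $\rel Q_\conv$ can express an arbitrary conjunction of rational linear equations and inequalities. Assume $A=[n]$ and set $N=n$. We will first construct an $N$-th pp-power $\rel C$ of $\rel Q_\conv$ whose domain is $\mathbb Q^n$, and then exhibit homomorphisms $h_A\colon \rel A \to \rel C$ and $h_B\colon \rel C \to \LP(\rel A)$ witnessing that $(\rel A,\LP(\rel A))$ is a homomorphic relaxation of $(\rel C,\rel C)$.

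For each relational symbol $R$ of $\rel A$ of arity $k$, view a $k$-tuple on $\mathbb Q^n$ as a tuple $(\phi_1,\dots,\phi_k)$ with each $\phi_j\in \mathbb Q^n$, coordinatised as $(\phi_j(1),\dots,\phi_j(n))$. Let $R^{\rel A}=\{\tup r_1,\dots,\tup r_m\}$. Define $R^{\rel C}$ by the pp-formula (over $\rel Q_\conv$)
\[
  \Psi_R(\phi_1,\dots,\phi_k) \;=\; \exists\, \gamma_1,\dots,\gamma_m \;\;
  \Bigl(\bigwedge_{i=1}^m \gamma_i \ge 0\Bigr) \wedge
  \Bigl(\sum_{i=1}^m \gamma_i = 1\Bigr) \wedge
  \bigwedge_{j=1}^{k}\bigwedge_{a=1}^{n}
  \Bigl(\sum_{i:\,\tup r_i(j)=a}\gamma_i \;=\; \phi_j(a)\Bigr).
\]
Each conjunct is a linear (in)equality with integer coefficients and thus pp-definable in $\rel Q_\conv$; the resulting $(kN)$-ary relation on $\mathbb Q$ witnesses that $\rel C$ is an $N$-th pp-power of $\rel Q_\conv$ in the sense of Definition~\ref{def:strict-relaxation}. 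Observe that the formula forces each $\phi_j$ to be a rational probability distribution on $A$, i.e.\ $\phi_j\in \LP(A)$, and the existence of $\gamma$ is literally the condition~(\ref{eq:lp-free}) defining $R^{\LP(\rel A)}$.

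Define $h_A\colon A\to \mathbb Q^n$ by $h_A(a)=\delta_a$, the Dirac distribution concentrated at $a$. For each $(a_1,\dots,a_k)\in R^{\rel A}$, say $(a_1,\dots,a_k)=\tup r_{i_0}$, taking $\gamma$ to be the Dirac distribution at $\tup r_{i_0}$ satisfies $\Psi_R^{\rel Q_\conv}$, so $h_A$ is a homomorphism $\rel A\to \rel C$. Define $h_B\colon \mathbb Q^n \to \LP(A)$ to be the identity on $\LP(A)$ and any fixed value (say a uniform distribution) elsewhere. Since $\Psi_R$ forces the tuple arguments to lie in $\LP(A)$, and the formula then coincides with the definition of $R^{\LP(\rel A)}$, we get that $h_B$ is a homomorphism $\rel C \to \LP(\rel A)$.

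The main subtlety is not combinatorial but conceptual: pp-powers are defined relative to the finite language of $\rel A$, while $\rel Q_\conv$ has infinitely many relations. This is handled exactly as in the statement of Theorem~\ref{thm:lp-characterization}\,(\ref{it4:blpc}) by retaining only the finitely many relations of $\rel Q_\conv$ that actually appear in the pp-formulas $\Psi_R$ (one for each relation of $\rel A$); the construction above then takes place over a finite reduct of $\rel Q_\conv$, as required.
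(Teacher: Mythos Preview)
Your proof is correct and follows essentially the same approach as the paper's: construct an $n$-th pp-power of $\rel Q_\conv$ whose relations encode the existence of a witnessing distribution $\gamma$, use Dirac distributions for $h_A$, and the identity (extended arbitrarily off $\LP(A)$) for $h_B$. The only cosmetic difference is that the paper puts the constraints $\phi_j\ge 0$ and $\sum_a\phi_j(a)=1$ explicitly into the pp-formula, whereas you observe (correctly) that they are already forced by the constraints on $\gamma$; your added remark about passing to a finite reduct of $\rel Q_\conv$ is also appropriate and matches how the lemma is used in Theorem~\ref{thm:lp-characterization}(\ref{it4:blpc}).
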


\begin{proof} 
In this proof, we assume $A = [n]$ and equate a~probability distribution $\phi$ on $A$ with the tuple $(\phi(1),\dots,\phi(n))$.
Let us define an~$n$-th pp-power $\rel P$ of $\rel Q_\conv$, and its relaxation that will be isomorphic to $(\rel A,\LP(\rel A))$:
A~relation $R^{\rel P}$ is defined to contain all tuples $(\phi_1,\dots,\phi_k)$ of $n$-tuples such that $\phi_j \geq 0$ and $\sum_{i\in [n]} \phi_j(i) = 1$ for all $j$, and so that there exists $\gamma\colon R^{\rel A} \to \mathbf Q$ such that (\ref{eq:lp-free}) is satisfied. This is indeed a~pp-definition since each of the inequalities and identities define a~relation of $\rel Q_\conv$.
Finally, we argue as in Lemma~\ref{lem:free-is-pp}: $\rel A$ maps homomorphically into $\rel P$ by $a\mapsto \chi_a$ where $\chi_a(b) = 1$ for $b = a$ and $\chi_a(b) = 0$ for $b\neq a$, and $\rel P$ maps homomorphically to $\LP(\rel A)$ by $\phi\mapsto f_\phi$ whenever $\sum_{a\in A} \phi(a) = 1$ and $\phi(a) \geq 0$, and extending arbitrarily.
\end{proof}

\begin{remark} \label{rem:compactness}
In the proof below, we will need to find a~homomorphism from a~relational structure with an infinite universe, let us for now call it $\rel I$, to a~similar finite relational structure $\rel B$. It is well-known that in that case it is enough to find a~homomorphism from all finite substructures of $\rel I$. This fact is usually proven by a~standard compactness argument, e.g.\ using Tychonoff's theorem. For completeness, we present one such argument that uses König's lemma. This approach works only for countable structures $\rel I$ which is enough in our case.

Assuming that $I = \{1,2,\dots\}$ and that every finite substructure of $\rel I$ maps to $\rel B$, we construct an infinite, finitely branching tree: The nodes of the tree are partial homomorphisms defined on sets $[n]$ (starting with the empty set, so that the empty mapping is the root of the tree). We set that a~partial homomorphism $r_{n+1} \colon [n+1] \to \rel B$ is a~child of $r_n\colon [n] \to \rel B$ if the map $r_n$ is the restriction of $r_{n+1}$ to $[n]$. Clearly, this tree is finitely branching and infinite. König's lemma states that such a tree has an infinite branch, which in our case gives us a~sequence $r_1, r_2,\dots$ such that $r_{n+1}$ extends $r_n$. We define $r\colon \rel I\to \rel B$ as the union of these maps. It is a~homomorphism since all constraints are local, and therefore included in the domain of some $r_n$ which is a~partial homomorphism.
\end{remark}

\begin{proof}[Proof of Theorem~\ref{thm:lp-characterization}]
  First, we show that item (1) is equivalent to the existence of a~homomorphism from $\LP(\rel A)$ to $\rel B$. The direct implication follows by the same argument as \cite[Proposition 12]{KOTYZ12}. If BLP solves $\CSP(\rel A)$ and $\rel I$ is a~finite subinstance of $\LP(\rel A)$ then it is easy to see that $\BLP_{\rel A}(\rel I)=1$, and therefore $\rel I\to \rel B$. Using a~standard compactness argument (see the previous remark), this implies that $\LP(\rel A)$ homomorphically maps to $\rel B$. The converse is straightforward: indeed any instance $\rel I$ with $\BLP_\rel A(\rel I) = 1$ maps homomorphically to $\LP(\rel A)$, and therefore also $\rel I \to \rel B$ because $\LP(\rel A) \to \rel B$.

  Given a~homomorphism from $\LP(\rel A)$ to $\rel B$, we get item (4), i.e., that $(\rel A,\rel B)$ is pp-constructible from $\rel Q_\conv$, from Lemma~\ref{lem:lp-free-is-pp}. Further, (4)\textto(3) follows from Lemma~\ref{lem:pp-gives-minor} (note that the proof does not require the structures to be finite). We have (3)\textto(2) since $\clo Q_\conv$ contains symmetric operations of all arities and any minion homomorphism preserves this property. (Note that (1)\textto(2) can be also obtained by argument similar to \cite[Proposition 12]{KOTYZ12}).
  
  Let us prove that (2) implies that $\LP(\rel A)\to \rel B$, and hence item (1). To do that, we define homomorphisms from certain finite substructures of the structure $\LP(\rel A)$:
  We define $\LP_\ell(\rel A)$ to be a~structure similar to $\rel A$ whose universe consists of rational probability distributions on $\rel A$ with denominators dividing $\ell$, i.e., functions $\phi\colon A \to \mathbb Q$ where, for each $a\in A$, $\phi(a) = q/\ell$ for some $q\in \{0,1,\dots,\ell\}$ and $\sum_{a\in A}\phi(a) = 1$. The relations are defined the same way as in $\LP(\rel A)$ where we restrict to probability distributions $\gamma$ with denominators dividing $\ell$. (Note that, unlike instances $\mathcal M_\ell(\Gamma)$ defined in \cite[Definition 10]{KOTYZ12}, the structure $\LP_\ell(\rel A)$ is not an induced substructure of $\LP(\rel A)$.)
 
  We define a~homomorphism $h_\ell\colon \LP_\ell(\rel A)\to \rel B$ by fixing a~symmetric function $s_\ell\in \Pol(\rel A,\rel B)$ of arity $\ell$, and setting
  \[
    h_\ell(\phi) = s_\ell( a_1,\dots,a_\ell )
  \]
  where $a_1,\dots,a_\ell \in A$ are chosen so that each $a$ appears exactly $\phi(a)\ell$ times. Note that the order of $a_i$'s does not matter, since $s_\ell$ is symmetric.
  To show that $h_\ell$ is a~homomorphism, consider a~tuple $(\phi_1,\dots,\phi_k) \in R^{\LP_\ell{\rel A}}$ and let $\gamma\colon R^{\rel A} \to \mathbb Q$ be the witnessing probability distribution. We pick tuples $\tup r_1,\dots,\tup r_\ell\in R^{\rel A}$ such that each $\tup r\in R^{\rel A}$ appears exactly $\gamma(\tup r)\ell$ times. Now,
  \[
  (h_\ell(\phi_1),\dots,h_\ell(\phi_k))
  = s_\ell( \tup r_1,\dots,\tup r_\ell )
  \in R^{\rel B}
  \]
  where the equality follows since, for each $i\in [k]$, each $a\in A$ appears exactly $\phi_i(a)\ell = \sum_{r\in R^{\rel A}, \tup r(i) = a} \gamma(\tup r)\ell$ times among $\tup r_1(i),\dots,\tup r_\ell(i)$ (again we use symmetry of $s_\ell$).
  This establishes that every $\LP_\ell(\rel A)$ maps homomorphically to $\rel B$. Note that every finite substructure of $\LP(\rel A)$ is a~substructure of $\LP_\ell(\rel A)$ for a big enough $\ell$, and 
  hence every finite substructure of $\LP(\rel A)$ maps homomorphically to $\rel B$. Thus we have a~homomorphism from $\LP(\rel A)$ to $\rel B$ by a~standard compactness argument.
\end{proof}

\subsection{Affine Diophantine relaxations}

Another way to relax the natural 0-1 integer program expressing a~CSP instance is to allow the variables to attain any integer values, and relax constraints to linear equations. We get a~relaxation with very similar properties to the basic LP relaxation described in the previous section.

\begin{definition}
  Given an instance $\rel I$ of $\CSP(\rel A)$, let $\mathcal C = \{(\tup v, R) \mid v\in R^{\rel I} \}$. The \emph{basic affine integer relaxation} of $\rel I$, denoted $\AIP_{\rel A}(\rel I)$, is the following affine program.
  The variables are $\mu_v(a)$ for every $v\in I$ and $a\in A$, and $\mu_{\tup v,R}(\tup a)$ for every $(\tup v, R) \in \mathcal C$ and $\tup a \in R^{\rel A}$. The objective is to solve the following system over $\mathbb Z$:
  \begin{align}
    \sum_{a\in A} \mu_v(a) &= 1 & v\in I, \label{eq:lip1} \\
    \sum_{\tup a \in R^{\rel A}, \tup a(i) = a} \mu_{\tup v,R}(\mathbf a) &= \mu_{\tup v(i)}(a) & a\in A, (\tup v,R)\in \mathcal C, i \in [\ar(R)]. \label{eq:lip2}
  \end{align}
\end{definition}

Note that this system is an instance of the infinite CSP with template $\rel Z_\aff$ with domain $\mathbb Z$ whose relations are all affine equations over $\mathbb Z$. Let $\clo Z_\aff = \Pol(\rel Z_\aff)$.  
We claim that $\clo Z_\aff$ consists of all affine functions over $\mathbb Z$, i.e., all functions $g\colon \mathbb Z^n \to \mathbb Z$ described as $g(x_1,\dots,x_n) = \sum_{i\in [n]} \gamma(i) x_i$ where $\gamma$ is such that $\sum_{i\in [n]} \gamma(i) = 1$. Clearly, any such function is a~polymorphism. For the other inclusion assume that $f\colon \mathbb Z^n \to \mathbb Z$ is a~polymorphism. Then in particular, it preserves the relation $\{ (x,y,z) \mid x+y = z \}$ which implies that it is linear. Also, it preserves the unary singleton relation $\{1\}$ (the solution to the equation $x=1$), which implies that the sum of its coefficients is 1.

This situation is similar to the one with convex linear functions in the previous subsection. And in fact, using similar methods, we obtain similar results. Let us first show an example of the use of this relaxation.

\begin{example}
  \label{example:1in3-nae2} \label{ex:1in3-nae-aip}
Let us describe the $\AIP$ relaxation of 1-in-3- vs.\ Not-All-Equal-\Sat, and compare it with an algorithm for this PCSP described in \cite{BG18}. Recall that the PCSP template of this problem is $(\rel T,\rel H_2)$ as defined in Example~\ref{example:1in3-nae}, we denote the single ternary relation of these structures by $R$.

The basic affine integer relaxation of an instance $\rel I$ of $\PCSP(\rel T,\rel H_2)$ is a~system of equations using variables $\mu_v(0)$, $\mu_v(1)$ for $v\in I$ bound by $\mu_v(0) + \mu_v(1) = 1$, and $\mu_{(v_1,v_2,v_3)}(\tup a)$ for $(v_1,v_2,v_3) \in R^{\rel I}$ and $\tup a \in R^{\rel T}$:
\begin{align*}
    \mu_{v_1,v_2,v_3}(0,0,1) + \mu_{v_1,v_2,v_3}(0,1,0) &= \mu_{v_1}(0) \\
    \mu_{v_1,v_2,v_3}(1,0,0) &= \mu_{v_1}(1) \\
    \mu_{v_1,v_2,v_3}(0,0,1) + \mu_{v_1,v_2,v_3}(1,0,0) &= \mu_{v_2}(0) \\
    \mu_{v_1,v_2,v_3}(0,1,0) &= \mu_{v_2}(1) \\
    \mu_{v_1,v_2,v_3}(0,1,0) + \mu_{v_1,v_2,v_3}(1,0,0) &= \mu_{v_3}(0) \\
    \mu_{v_1,v_2,v_3}(0,0,1) &= \mu_{v_3}(1)
\end{align*}
for each $(v_1,v_2,v_3) \in R^{\rel I}$. Since the value of $\mu_v(0)$ is determined by the value of $\mu_v(1)$, and moreover $\sum_{\tup a\in R^{\rel T}} \mu_{v_1,v_2,v_3}(\tup a) = \mu_{v_1}(0) + \mu_{v_1}(1) = 1$, we can simplify this system by dropping variables $\mu_v(0)$ and $\mu_{v_1,v_2,v_3}(i)$ and replacing the six equations above with
\[
    \mu_{v_1}(1) + \mu_{v_2}(1) + \mu_{v_3}(1) = 1
.\]
Note that each satisfying tuple of $R^{\rel T}$ satisfies this constraint. The resulting system is the same as suggested by \cite[Remark 3.3]{BG18}.
\end{example}

\begin{definition}
  A~function $a$ of arity $2n+1$ is called \emph{alternating}, if
  \[
    a( x_{1},\dots,x_{2n+1} ) \equals
    a( x_{\pi(1)},\dots,x_{\pi(2n+1)} )
  \]
  for all permutations $\pi$ that preserve parity, and
  \[
    a( x_{1},\dots,x_{2n-1},y,y ) \equals 
    a( x_{1},\dots,x_{2n-1},z,z ).
  \]
\end{definition}

The property of being alternating can be also expressed as a~variant of symmetry: the value is independent of the order of its inputs on odd positions, and also of the order of its inputs on even positions. Putting this together with the second identity which expresses some form of cancellation, we get that the output is dependent only on the multiset (where we allow negative coefficients) of inputs where the odd inputs are counted positively, and even inputs negatively.

\begin{example}
  An~example of an alternating function is the function $a\colon \mathbb Z^{2n+1} \to \mathbb Z$ defined as the alternating sum, i.e.,
  \[
    a(x_1,\dots,x_n) = x_1 - x_2 + x_3 - \dots + x_{2n+1}.
  \]
  Clearly permuting $x_i$'s with odd indices as well as permuting those with even indices does not change the value. Also
  \[
    a(x_1,\dots,x_{2n-1},y,y) = x_1 - \dots + x_{2n-1} - y + y = x_1 - \dots + x_{2n-1}
  \]
  which concludes that this value does not depend on $y$. Also note that the function defined by
  \[
    a'(x_1,\dots,x_{2n-1}) = a(x_1,\dots,x_{2n-1},x_1,x_1)
  \]
  is an~alternating function of arity $2n-1$. This is always the case as can be easily derived from the two defining identities.

  Note that such alternating sum can be generalised from $\mathbb Z$ to any abelian group. One such example, though degenerate, would be the Boolean \emph{parity function} (see \cite[p.\ 9]{BG16a}) of odd arity; it is defined as $p(x_1,\dots,x_{2n+1}) = \sum_{i\in [2n+1]} x_i \bmod 2$.
\end{example}

\begin{example}
  \label{ex:alternating-threshold}
  Another important example is the \emph{alternating threshold} defined in \cite[p.\ 9]{BG16a} as the~Boolean function $t$ of arity $2n+1$ satisfying:
  \[
    t(x_1,\dots,x_n) = \begin{cases}
      1 & \text{if $x_1 - x_2 + x_3 - \dots + x_{2n+1} > 0$, and} \\
      0 & \text{otherwise.}
    \end{cases}
  \]
  This function is obtained from the alternating sum by reflection:
  \[
    t(x_1,\dots,x_{2n-1}) = r(a(e(x_1),\dots,e(x_{2n-1})))
  \]
  where $e\colon \{0,1\} \to \mathbb Z$ is the natural inclusion, and $r\colon \mathbb Z\to \{0,1\}$ maps positive integers to $1$ and non-positive to $0$.
\end{example}

Let us now formulate the main result of this section.

\begin{theorem} \label{thm:ip-characterization}
Let $(\rel A,\rel B)$ be a PCSP template. The following are equivalent.
\begin{enumerate}
  \item \label{it1:lipc} \textsc{AIP} solves $\PCSP(\rel A,\rel B)$,
  \item \label{it2:lipc} $\Pol(\rel A,\rel B)$ contains alternating functions of all odd arities.
  \item \label{it3:lipc} $\Pol(\rel A,\rel B)$ admits a minion homomorphism from $\clo Z_\aff$,
  \item \label{it4:lipc} $(\rel A,\rel B)$ is pp-constructible from (a finite reduct of) $\rel Z_\aff$.
\end{enumerate}
\end{theorem}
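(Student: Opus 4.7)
The plan is to mirror the proof of Theorem~\ref{thm:lp-characterization}, replacing rational convex distributions by integer affine ones. First, I~introduce the affine analogue of $\LP(\rel A)$: a~structure $\AIP(\rel A)$ similar to $\rel A$ whose universe is the set of integer-valued distributions $\phi\colon A \to \mathbb{Z}$ with $\sum_{a\in A}\phi(a) = 1$, and where, for each $k$-ary relation $R^{\rel A}$, the relation $R^{\AIP(\rel A)}$ consists of tuples $(\phi_1,\dots,\phi_k)$ for which there exists $\gamma\colon R^{\rel A} \to \mathbb{Z}$ with $\sum_{\tup a}\gamma(\tup a)=1$ and $\sum_{\tup a\colon \tup a(i)=a}\gamma(\tup a)=\phi_i(a)$ for all $i,a$. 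The argument of Remark~\ref{rem:lp-is-free}, with rational convex combinations replaced by integer affine ones (coefficients summing to~$1$, with no sign restriction), shows $\AIP(\rel A) \cong \rel F_{\clo Z_\aff}(\rel A)$; the direct analogue of Lemma~\ref{lem:lp-free-is-pp} then shows that $(\rel A,\AIP(\rel A))$ is a~homomorphic relaxation of a~pp-power of a~finite reduct of $\rel Z_\aff$ (the pp-definition uses only equality and finitely many affine equations of bounded arity).

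By the argument of the BLP case, (1) is equivalent to the existence of a~homomorphism $\AIP(\rel A) \to \rel B$: every instance solvable by $\AIP$ maps homomorphically to $\AIP(\rel A)$ via its integer solution, and every finite substructure of $\AIP(\rel A)$ is itself such an instance, so compactness (Remark~\ref{rem:compactness}) applied to the countable $\AIP(\rel A)$ yields the required homomorphism. Combined with the previous paragraph, this gives $(1)\Rightarrow(4)$. The implication $(4)\Rightarrow(3)$ is Lemma~\ref{lem:pp-gives-minor}, and $(3)\Rightarrow(2)$ follows since the alternating sum $x_1-x_2+\cdots+x_{2n+1}$ belongs to $\clo Z_\aff$ for every $n$ and is alternating, and the alternating identities form a~bipartite minor condition preserved by any minion homomorphism.

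The main technical step is $(2)\Rightarrow(1)$. I~would define $\AIP_\ell(\rel A)$ to consist of those $\phi$ with $\sum_a|\phi(a)|\le\ell$, with relations witnessed by $\gamma$ satisfying $\sum_{\tup a}|\gamma(\tup a)|\le\ell$; every finite substructure of $\AIP(\rel A)$ embeds into some $\AIP_\ell(\rel A)$. For each $\ell$, fix an~alternating polymorphism $a_{2n+1}\in\Pol(\rel A,\rel B)$ of arity $2n+1\ge\ell$ and a~reference element $a_0\in A$, and set $h_\ell(\phi) = a_{2n+1}(u_1,\dots,u_{2n+1})$, where the odd positions list each $a\in A$ with multiplicity $\phi^+(a)$ and the even positions list each $a$ with multiplicity $\phi^-(a)$ (for any decomposition $\phi=\phi^+-\phi^-$ with $\phi^\pm\ge 0$), and the remaining positions are padded by $a_0$ so that the odd and even counts are exactly $n+1$ and $n$ respectively (possible since $\sum_a\phi^+(a)-\sum_a\phi^-(a)=1$ and $\sum_a\phi^+(a)+\sum_a\phi^-(a)\le\ell\le 2n+1$). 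The symmetry part of the alternating identities makes $h_\ell$ independent of the listing order, and iterated use of $a(\ldots,y,y)\equals a(\ldots,z,z)$ makes it independent of both the choice of decomposition and of the padding. To check $h_\ell$ is a~homomorphism, take $(\phi_1,\dots,\phi_k)\in R^{\AIP_\ell(\rel A)}$ with witness $\gamma=\gamma^+-\gamma^-$, fix any $\tup r_0\in R^{\rel A}$, and apply $a_{2n+1}$ coordinatewise to the analogous list of tuples from $R^{\rel A}$ padded by copies of $\tup r_0$; the output lies in $R^{\rel B}$ because $a_{2n+1}$ is a~polymorphism, and its $i$-th coordinate matches $h_\ell(\phi_i)$ by the alternating identities applied to the $i$-th projections.

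The main obstacle is the careful use of sign and cancellation: unlike in the BLP case, where positive rational weights can be normalised to a~common denominator, here integer weights may be negative and no such normalisation exists. The cancellation identity in the definition of an alternating function is precisely what replaces the positivity constraint and makes $h_\ell$ well-defined; ensuring that the padding on the ``instance side'' (tuples from $R^{\rel A}$) is consistent with the padding on the ``distribution side'' (elements of $A$) is the bulk of the bookkeeping.
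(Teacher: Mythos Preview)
Your proposal is correct and follows essentially the same route as the paper's proof: the paper defines the analogous structure (called $\IP(\rel A)$ there), reduces (1) to $\IP(\rel A)\to\rel B$ via compactness, and for $(2)\Rightarrow(1)$ uses finite substructures $\IP_\ell(\rel A)$ (with the bound written as $2\ell+1$ rather than $\ell$) together with a $(2\ell+1)$-ary alternating polymorphism. The only cosmetic difference is that the paper phrases the construction of the input tuple directly as ``choose $a_1,\dots,a_{2\ell+1}$ so that the odd--even difference of occurrences of each $a$ is $\phi(a)$'', leaving the padding implicit, whereas you make the decomposition $\phi=\phi^+-\phi^-$ and the padding by $a_0$ (resp.\ $\tup r_0$) explicit; both rely on exactly the same cancellation identity to show well-definedness.
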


The proof of this theorem follows closely the proof of Theorem~\ref{thm:lp-characterization}.
As in the mentioned proof, we rely on an appropriate modification of the free structure of $\clo Z_\aff$. In this case, there is no straightforward interpretation as probability distributions, but the core idea remains the same.

\begin{definition} For a~structure $\rel A$, we define an infinite structure $\IP(\rel A)$ similar to $\rel A$ in the following way. The universe $\IP(A)$ is the set of all mappings (tuples) $\phi\colon A \to \mathbb Z$ such that $\sum_{a\in A} \phi(a) = 1$. For a~$k$-ary relation $R^{\rel A}$, we define the corresponding relation $R^{\IP(\rel A)}$ as the set of all $k$-tuples $(\phi_1,\dots,\phi_k)$ for which there exists a~mapping $\gamma\colon R^{\rel A} \to \mathbb Z$ such that $\sum_{\tup a\in R^{\rel A}} \gamma(\tup a) = 1$ and
\[
  \sum_{\tup a\in R^{\rel A}, \tup a(i) = a} \gamma(\tup a) = \phi_i(a)
\]
for all $a\in A$ and $i\in [k]$.
\end{definition}

\begin{remark}
  It can be proven in very similar way as in Remark~\ref{rem:lp-is-free} that the structure $\IP(\rel A)$ is isomorphic to the free structure of $\clo Z_\aff$ generated by $\rel A$.
\end{remark}

Similarly to Lemma~\ref{lem:lp-free-is-pp}, the following can be viewed as an infinite case of Lemma~\ref{lem:free-is-pp}.

\begin{lemma} \label{lem:lip-free-is-pp}
  Let $\rel A$ be a~finite relational structure, and let $\IP(\rel A)$ denote the free structure of $\clo Z_\aff$ generated by $\rel A$. Then $(\rel A,\IP(\rel A))$ is a~relaxation of a~pp-power of $\rel Z_\aff$.
\end{lemma}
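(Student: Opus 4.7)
The plan is to mirror the proof of Lemma~\ref{lem:lp-free-is-pp} almost verbatim, with the main adjustment that we now work over $\mathbb{Z}$ rather than $\mathbb{Q}\cap[0,1]$. The good news is that the affine integer setting is actually cleaner: there are no non-negativity inequalities to worry about, and every linear equation with $\pm 1$ integer coefficients (in particular the marginal equations and the ``sum to $1$'' equation) is directly a relation of~$\rel Z_\aff$, so pp-definability is immediate.

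Concretely, I would assume $A=[n]$ and identify $\phi\colon A\to\mathbb{Z}$ with the tuple $(\phi(1),\dots,\phi(n))\in\mathbb{Z}^n$. Then I define the $n$-th pp-power $\rel P$ of $\rel Z_\aff$ whose relations mirror those of $\IP(\rel A)$: for each $k$-ary relation $R^{\rel A}$, the corresponding $kn$-ary relation $R^{\rel P}$ consists of all $(\phi_1,\dots,\phi_k)\in(\mathbb{Z}^n)^k$ for which there exists a map $\gamma\colon R^{\rel A}\to\mathbb{Z}$ satisfying $\sum_{\tup a\in R^{\rel A}}\gamma(\tup a)=1$ together with the marginal equations
\[
  \sum_{\tup a\in R^{\rel A},\,\tup a(i)=a}\gamma(\tup a)\;=\;\phi_i(a)
  \qquad (i\in[k],\ a\in A).
\]
All of these conditions are affine $\mathbb{Z}$-equations, and the $\gamma$-variables are existentially quantified, so this is a valid pp-definition in $\rel Z_\aff$, turning $\rel P$ into a legitimate $n$-th pp-power.

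Next I would exhibit the two homomorphisms required by Definition~\ref{def:relaxation}. For $h_A\colon\rel A\to\rel P$, send $a\mapsto\chi_a\in\mathbb{Z}^n$, where $\chi_a(b)=1$ if $b=a$ and $0$ otherwise; given any $(a_1,\dots,a_k)\in R^{\rel A}$, the map $\gamma$ which takes value $1$ on the tuple $(a_1,\dots,a_k)$ and $0$ elsewhere witnesses $(\chi_{a_1},\dots,\chi_{a_k})\in R^{\rel P}$. For $h_B\colon\rel P\to\IP(\rel A)$, define $h_B(\phi)=\phi$ whenever $\sum_{a\in A}\phi(a)=1$ and extend arbitrarily otherwise. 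The key observation is that if $(\phi_1,\dots,\phi_k)\in R^{\rel P}$, then summing the marginal equations over $a\in A$ forces $\sum_a\phi_j(a)=\sum_{\tup a}\gamma(\tup a)=1$ for every $j$, so each $\phi_j$ already lies in $\IP(A)$; consequently the witnessing $\gamma$ from $R^{\rel P}$ is precisely the $\gamma$ required for membership in $R^{\IP(\rel A)}$, and $h_B$ preserves relations.

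I do not anticipate a real obstacle: the entire argument is a faithful transliteration of the BLP case, and in fact it is slightly simpler because $\rel Z_\aff$'s relational vocabulary (all affine equations over $\mathbb{Z}$) subsumes all of the defining constraints without any need for additional tricks. The only small point worth stating explicitly during the write-up is the reduction of the ``sum to one'' condition on the $\phi_j$ to a consequence of the marginal equations plus $\sum\gamma=1$, which ensures that $h_B$'s arbitrary extension on $\mathbb{Z}^n\setminus\IP(A)$ is never tested by an actual constraint.
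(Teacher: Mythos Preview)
Your proposal is correct and follows exactly the approach the paper takes: the paper's proof literally reads ``The lemma is proven the same way as Lemma~\ref{lem:lp-free-is-pp},'' and you have faithfully carried out that translation, even noting the simplification that non-negativity constraints disappear over~$\mathbb{Z}$. Your explicit observation that $\sum_a\phi_j(a)=1$ is forced by the marginal equations together with $\sum\gamma=1$ is a nice touch that makes the well-definedness of $h_B$ transparent.
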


\begin{proof}
  The lemma is proven the same way as Lemma~\ref{lem:lp-free-is-pp}.
\end{proof}

\begin{proof}[Proof of Theorem \ref{thm:ip-characterization}]
   We claim that (1) is equivalent to the existence of a~homomorphism from $\IP(\rel A)$ to $\rel B$. The key is that an instance $\rel I$ of $\PCSP(\rel A,\rel B)$ maps to $\IP(\rel A)$ if and only if the AIP relaxation of $\rel I$ is a~solvable system of equations --- a~solution $\mu$ defines such a~homomorphism by $v\mapsto f_{\mu_v}$ where $f_{\mu_v}(x_1,\dots,x_n) = \sum_{a\in A} \mu_v(a)x_a$ (here, we assume $A = [n]$), and vice-versa. This immediately implies that if $\IP(\rel A)$ maps homomorphically to $\rel B$ then $\AIP$ solves $\PCSP(\rel A,\rel B)$. In the other direction, we know that every finite substructure of $\IP(\rel A)$ maps to $\rel B$, a~global homomorphism then follows by the standard compactness argument.

  As in the proof of Theorem~\ref{thm:lp-characterization}, the combination of the previous lemma with the proof of Theorem~\ref{thm:minor-homomorphism-is-pp-constructibility} gives us that the existence of a homomorphism from $\IP(\rel A)$ to $\rel B$ implies item (4), and (4)\textto(3). Further, (3)\textto(2) since $\clo Z_\aff$ has alternating functions of all odd arities (the alternating sums). 

  We finish the proof by showing that item (2) implies the existence of a~homomorphism from $\IP(\rel A)$ to $\rel B$. This again follows the proof of Theorem~\ref{thm:lp-characterization}. We first define some substructures of $\IP(\rel A)$: $\IP_\ell(\rel A)$ is a~structure similar to $\rel A$ whose universe is the set of all functions $\phi\colon A \to \mathbb Z$ such that $\sum_{a\in A} |\phi(a)| \leq 2\ell+1$ and $\sum_{a\in A} \phi(a) = 1$. The relations of $\IP_\ell(\rel A)$ are defined in the same way as those of $\IP(\rel A)$ with the only difference that we require that the witnessing function $\gamma\colon R^{\rel A}\to\mathbb Z$ also satisfies $\sum_{\tup r\in R^{\rel A}} |\gamma(\tup r)| \leq 2\ell+1$. Now, we define a~homomorphism $h_\ell\colon \IP_\ell(\rel A) \to \rel B$ by fixing a~$(2\ell+1)$-ary alternating function $a_{2\ell+1}\in \Pol(\rel A,\rel B)$ and setting
  \[
    h_\ell( \phi ) =
      a_{2\ell+1} (a_1,\dots,a_{2\ell+1})
  \]
  where $a_1,\dots,a_{2\ell+1}$ are chosen in such a~way that for all $a\in A$ the difference of the number of times $a$ appears among $a_i$ with odd and even indices is exactly $\phi(a)$. This is possible thanks to $\sum |\phi(a)| \leq 2\ell+1$. For $h$ to be well-defined, we rely on the fact that $a_{2\ell+1}$ is alternating. To prove that $h$ is a~homomorphism, suppose that $(\phi_1,\dots,\phi_k) \in R^{\IP_\ell(\rel A)}$ and this fact is witnessed by $\gamma\colon R^{\rel A} \to \mathbb Z$. Again, we pick tuples $\tup r_1,\dots,\tup r_{2\ell +1}$ in such a~way that for all $\tup r\in R^{\rel A}$ the difference of the number of times $\tup r$ appears among $\tup r_i$ with odd and even indices is exactly $\phi(\tup r)$. Now,
  \[
    (h_\ell(\phi_1),\dots,h_\ell(\phi_k)) =
      a_{2\ell+1}(\tup r_1,\dots,\tup r_{2\ell+1}) \in R^{\rel B}
  \]
  where the equality follows from the fact that, for each $i\in [k]$, the difference between the number of times some $a\in A$ appears among $\tup r_j(i)$ with odd and even $j$ is
  \(
    \phi_i(a) = \sum_{\tup r\in R^{\rel A}, \tup r(i) = a} \gamma(\tup r)
  \)
  (again we use the fact that $a_{2\ell+1}$ is alternating). We get that each $\IP_\ell(\rel A)$ maps homomorphically to $\rel B$, and since every finite substructure of $\IP(\rel A)$ is included in $\IP_\ell(\rel A)$ for some $\ell$, we get that it also maps to $\rel B$. The homomorphism from $\IP(\rel A)$ to $\rel B$ is then given by a~standard compactness argument.
\end{proof}
\section{More on tractability of 1-in-3- vs.\ NAE-Sat}
  \label{sec:fin_intract}
  
All known tractability results for PCSPs, such as those in the previous section and those in \cite{BG18,BG18b}, are obtained by following the same scheme --- namely by showing how a PCSP template is pp-constructed from a tractable CSP template (recall Definition~\ref{def:pp-constructible}), possibly with an infinite domain. 
In this section we show that using infinite domains in this scheme can be necessary. Namely, we show that this is the case for the 1-in-3- vs.\ Not-All-Equal-\Sat\ problem (see Example \ref{example:1in3-nae}).
This problem is in {\Ptime} \cite{BG18}, since its template $(\rel T,\rel H_2)$ can be pp-constructed from $\rel Z_\aff$ (see Example~\ref{example:1in3-nae2} and Theorem~\ref{thm:ip-characterization}). Specifically, it is easy to check (or see \cite{BG18}) that this template is a relaxation of the CSP template $(\mathbb Z; x+y+z=1)$, which is a finite reduct of $\rel Z_\aff$.
The template $(\rel T,\rel H_2)$ is also a relaxation of other tractable CSP templates with an infinite domain (see \cite{BG18,BG18b,Bar19}).  

\begin{theorem} \label{thm:infinity-necessary}
Let $\rel D$ be a~finite relational structure  such that $(\OneInThree, \NAE_2)$ is pp-constructible from $\rel D$.
Then $\CSP(\rel D)$ is \NP-complete.
\end{theorem}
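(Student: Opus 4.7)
The plan is to apply Theorem~\ref{thm:minor-homomorphism-is-pp-constructibility} to reformulate the hypothesis as the existence of a~minion homomorphism $\xi \colon \Pol(\rel D) \to \Pol(\OneInThree, \NAE_2)$. Since membership of $\CSP(\rel D)$ in $\NP$ is immediate, we will argue $\NP$-hardness by contrapositive. Supposing $\CSP(\rel D)$ is tractable, we first replace $\rel D$ by its core and enrich its signature with all singleton unary relations---both moves preserve pp-constructibility of $(\OneInThree, \NAE_2)$ and the complexity class of $\CSP(\rel D)$---so we may assume $\Pol(\rel D)$ is a~clone of idempotent operations. The Bulatov--Zhuk algebraic CSP dichotomy then forces $\Pol(\rel D)$ to contain a~non-trivial Taylor term.

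The key combinatorial fact we plan to exploit is that $\Pol(\OneInThree, \NAE_2)$ contains no cyclic polymorphism of arity divisible by $3$. Indeed, for every $n$-ary polymorphism $f$ and every $3$-partition $(T_1, T_2, T_3)$ of $[n]$, the polymorphism condition demands $(f(\chi_{T_1}), f(\chi_{T_2}), f(\chi_{T_3})) \in \nae_2$. Taking $n = 3k$ and the cyclic residue classes $C_j = \{i \in [n] : i \equiv j \pmod 3\}$, which form a~$3$-partition of $[n]$ permuted transitively by the generator of $\mathbb{Z}_n$, any cyclic polymorphism of arity $3k$ would satisfy $f(\chi_{C_0}) = f(\chi_{C_1}) = f(\chi_{C_2})$, violating the NAE requirement. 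Feeding such a~cyclic polymorphism in $\Pol(\rel D)$ through $\xi$ would therefore yield the contradiction we seek.

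The primary technical step---and the main obstacle---will be to extract from the Taylor structure of $\Pol(\rel D)$ a~cyclic polymorphism of arity divisible by $3$. For $|D| \le 2$ the Barto--Kozik theorem on cyclic terms of prime arity $p > |D|$ supplies one with $p = 3$, so the argument is immediate. For $|D| \ge 3$, however, Barto--Kozik only guarantees cyclics at primes $p \ge 5$, none divisible by $3$, and bridging this gap is the heart of the proof. Although $\xi$ need only preserve minor identities (not compositions), the clone $\Pol(\rel D)$ is closed under composition; we plan to exploit this closure, together with Olšák's $6$-ary term and the absorbing-subalgebra machinery from tame-congruence theory, to combine cyclic terms of coprime prime arities into a~Taylor operation of $\rel D$ whose minor structure forces $\xi$ to deliver a~forbidden cyclic polymorphism of arity a~multiple of~$3$.
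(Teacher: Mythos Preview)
Your key combinatorial observation is correct: $\Pol(\OneInThree,\NAE_2)$ has no cyclic operation of arity divisible by $3$, by exactly the argument you give. However, the strategy built on it has a genuine gap that cannot be bridged along the lines you sketch.

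The problem is that a tractable finite template need not have any cyclic polymorphism of arity divisible by $3$. Take $\rel D=(\mathbb{Z}_3;\,x-y+z)$. Its polymorphism clone is the clone of all idempotent affine operations over $\mathbb{Z}_3$, so a cyclic term of arity $3k$ would have to be $x_1\mapsto a\sum_i x_i$ with $3ka\equiv 1\pmod 3$, which is impossible. Thus ``cyclic of arity a multiple of $3$'' is \emph{not} a minor condition satisfied by every tractable $\Pol(\rel D)$, and your plan to manufacture one by composing cyclic terms of large prime arities, using Ol\v{s}\'ak's term, or absorption cannot succeed in this example: no term operation of this clone is cyclic of such arity, period. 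More generally, a minion homomorphism $\xi$ only transports minor identities \emph{from} $\Pol(\rel D)$ \emph{to} $\Pol(\OneInThree,\NAE_2)$; you cannot ``force $\xi$ to deliver'' a cyclic image unless the source operation already satisfies the cyclic identity. So the last paragraph of your plan is not merely vague --- it is asking for something the minion-homomorphism machinery cannot provide.

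The paper's proof proceeds very differently and avoids this obstruction. Rather than looking for a cyclic term of arity divisible by~$3$, it first reduces (via Theorem~\ref{thm:minionhom-and-pp}) to the case where $(\OneInThree,\NAE_2)$ is a \emph{homomorphic relaxation} of $\rel D=(D;R)$, so that one has concrete homomorphisms $f\colon\OneInThree\to\rel D$ and $g\colon\rel D\to\NAE_2$. It then takes a cyclic polymorphism $s$ of \emph{large} prime arity $p>60|D|$ (guaranteed by Barto--Kozik), forms $t=s\circ(s,\dots,s)$ of arity $p^2$, and carries out a delicate combinatorial analysis of the map $X\mapsto g(t(X))$ on $p\times p$ zero-one matrices. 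The argument shows that certain ``almost rectangular'' matrices are ``tame'' (their $g\circ t$-value is determined by whether their density of ones is below or above $1/3$), and then exhibits two matrices with equal $t$-value but densities on opposite sides of $1/3$, yielding the contradiction. The point is that the cyclic polymorphism is used for its shift-invariance on long tuples, not to produce a cyclic identity of arity divisible by~$3$; the divisibility-by-$3$ phenomenon shows up instead as the threshold $1/3$ in the density argument.
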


The rest of this section is devoted to the proof of this theorem. 

\subsection{Proof outline}
Striving for a contradiction, assume that $\CSP(\rel D)$ is not \NP-complete and $(\OneInThree, \NAE_2)$ is pp-constructible from $\rel D$. We start by simplifying the latter assumption.

From (\ref{it2:h1pp}) $\rightarrow$ (\ref{it1:h1pp}) in Theorem~\ref{thm:minionhom-and-pp} we know that $(\OneInThree, \NAE_2)$ is a homomorphic relaxation of a pp-power of $\rel D$. Since a pp-power of a~finite tractable CSP template is a~finite tractable CSP template, we may assume that $(\OneInThree, \NAE_2)$ is a homomorphic relaxation of $\rel D$. Let $\rel D = (D; R)$, where $R \subseteq D^3$, and let  $f\colon \OneInThree \to \rel D$ and $g\colon \rel D \to \NAE_2$ be homomorphisms from the definition of homomorphic relaxation, Definition~\ref{def:relaxation}.

We simplify the situation a bit further. Since $gf$ is a homomorphism, this mapping applied component-wise to the 1-in-3 tuple $(0,0,1)$ is a not-all-equal tuple. In particular $f(0) \neq f(1)$. We rename the elements of $D$ so that $\{0,1\} \subseteq D$ and $f(0)=0$, $f(1)=1$. As $f$ and $g$ are homomorphisms, we get
$$
  \{0,1\} \subseteq D, \quad \{(1,0,0), (0,1,0), (0,0,1)\} \subseteq R
$$
and 
$$
  |\{g(a),g(b),g(c)\}|>1 \mbox{ whenever } (a,b,c) \in R.
$$

Now we employ the assumption that $\CSP(\rel D)$ is not \NP-complete. We use a~sufficient condition for \NP-completeness from~\cite{BK12}.

\begin{definition}
An operation $s \colon D^n \rightarrow D$ is called \emph{cyclic} if, for all $(a_1, \dots, a_n) \in D^n$, we have
\[
  s(a_1,a_2, \dots, a_n) = s(a_2, \dots, a_n, a_1)
.\]
\end{definition}

\begin{theorem}[\cite{BK12}] \label{thm:cyclic}
Let $\rel D$ be a~finite CSP template. If $\CSP(\rel D)$ is not NP-complete, then $\rel D$ has a~cyclic polymorphism of arity $p$ for every prime number $p > |D|$. 
\end{theorem}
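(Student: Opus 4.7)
The plan is to proceed in three stages: (i) reduce to the idempotent case, (ii) invoke the algebraic dichotomy theorem to obtain a Taylor term, and (iii) carry out the combinatorial core that converts a Taylor term into cyclic terms of all prime arities $p>|D|$.

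First I would reduce to the idempotent setting. Replace $\rel D$ by its homomorphic core, which has the same CSP up to polynomial-time equivalence and the same cyclic polymorphism lifts back. Then expand $\rel D$ by adding a unary singleton relation $\{a\}$ for each $a\in D$; this extended template $\rel D^+$ still has the property that $\CSP(\rel D^+)$ is not \NP-complete (since the singleton relations are pp-definable in a core of $\rel D$ expanded this way, or reduce to $\CSP(\rel D)$ by fixing constants). The key gain is that every polymorphism of $\rel D^+$ is idempotent, so $\Pol(\rel D^+)$ corresponds to an idempotent algebra $\mathbb A=(D;\Pol(\rel D^+))$. Any cyclic polymorphism of $\rel D^+$ is also a cyclic polymorphism of~$\rel D$, so it suffices to prove the theorem for $\rel D^+$.

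Second, I would invoke the dichotomy theorem of Bulatov, Jeavons, and Krokhin (and its strengthening by Maróti--McKenzie): if the idempotent algebra $\mathbb A$ does \emph{not} admit a Taylor term, then $\Pol(\rel D^+)$ consists essentially of projections modulo the cores of its unary part, which via pp-interpretations allows one to encode 3-SAT and forces $\CSP(\rel D^+)$ to be \NP-complete. Contrapositively, our hypothesis yields a Taylor term, equivalently a nontrivial idempotent Maltsev condition (a system of identities of the form $t(\bar x_i)\equals t(\bar y_i)$ where each coordinate of the identity is nontrivial). This is the algebraic statement we will feed into the combinatorial step.

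Third, and this is the main work, I would prove the combinatorial core: \emph{a finite idempotent algebra $\mathbb A$ with a Taylor term has a cyclic term of arity $p$ for every prime $p>|A|$.} The approach is to consider the cyclic shift $\sigma\colon A^p\to A^p$, $\sigma(a_1,\dots,a_p)=(a_2,\dots,a_p,a_1)$, and to seek an element in the subalgebra $\mathbb A^p$ whose orbit under $\sigma$ is a singleton. More precisely, let $S\le \mathbb A^p$ be the subalgebra generated by all tuples $(a,a,\dots,a)$ permuted through $\sigma$-orbits, and show the diagonal $\{(a,\dots,a)\}$ can be reached by a term from any $\sigma$-invariant tuple in $S$. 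The argument proceeds by induction on $|A|$: if $\mathbb A$ is minimal (no proper absorbing subuniverse, no proper subalgebra with a Taylor term, no nontrivial congruence), one uses tame congruence theory to classify the type of $\mathbb A$ and, using the Taylor condition to exclude type~$1$, exhibits a cyclic term directly; otherwise one applies the Absorption Theorem (Barto--Kozik) to a proper absorbing, subalgebra, or quotient, pulling the cyclic term upward. The primality of $p$ together with $p>|A|$ is essential here: it guarantees that any $\sigma$-invariant tuple in $A^p$ either is constant or has all $p$ entries distinct modulo the period, which by $p>|A|$ is impossible unless it is constant, preventing ``spurious'' almost-cyclic fixed points from blocking the induction.

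The main obstacle, by a wide margin, is the third step: the inductive classification via absorption and tame congruence theory is the technically demanding heart of~\cite{BK12}. The first two steps are relatively standard reductions within the algebraic approach to CSP; the payoff of the theorem rests entirely on making the cyclic-symmetry argument interact correctly with absorbing subuniverses, quotients, and the Taylor condition in each tame-congruence type, and in ensuring the bound $p>|A|$ actually suffices at each inductive step.
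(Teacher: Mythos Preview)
The paper does not prove this theorem at all; it is quoted from \cite{BK12} and used as a black box in Section~\ref{sec:fin_intract}. So there is no ``paper's own proof'' to compare against.

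Your outline is a fair high-level sketch of the Barto--Kozik argument, with the standard reduction to the idempotent core, the BJK hardness criterion to obtain a Taylor term, and then the absorption-based induction that forms the substance of \cite{BK12}. A couple of points in your step (iii) are imprecise, however. First, the subalgebra $S$ you describe (``generated by all tuples $(a,a,\dots,a)$ permuted through $\sigma$-orbits'') does not make sense as stated; the relevant object is rather the free cyclic extension, i.e., one works with $\sigma$-invariant subpowers and shows that the diagonal is reachable. Second, your explanation of why primality and $p>|A|$ matter is slightly off: a $\sigma$-invariant \emph{tuple} is automatically constant; what primality gives you is that every $\sigma$-orbit in $A^p$ has size $1$ or $p$, so nontrivial periodic tuples do not exist, and $p>|A|$ is used at a different point in the induction (to rule out certain small obstructions in the absorption argument). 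Finally, the actual proof in \cite{BK12} does not use tame congruence theory in the way you suggest; it is built around absorption and a careful analysis of subpowers, not a type-by-type TCT case split. These are refinements rather than fatal gaps, but if you intend this as more than a pointer to \cite{BK12}, the description of step (iii) would need to be tightened considerably.
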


\begin{remark}
Cyclic polymorphisms in fact characterise the borderline between \NP-complete and tractable CSPs conjectured in~\cite{BJK05} and proved in~\cite{Bul17,Zhu17}: $\CSP(\rel D)$ is tractable if and only if (assuming $\Ptime \neq \NP$) $\rel D$ has a cyclic polymorphism of arity at least 2 (if and only if $\rel D$ has a cyclic polymorphism of arity $p$ for every prime number $p > |D|$).
\end{remark}

By Theorem~\ref{thm:cyclic}, $\rel D$ has a cyclic polymorphism of any prime arity $p > |D|$. We fix a cyclic polymorphism $s$ of prime arity $p > 60 |D|$.

Recall that the polymorphisms of CSP template can be composed (to produce new polymorphisms).
Next we define an operation $t$ on $D$ of arity $p^2$ by
\begin{align*}
t&(x_{11}, x_{12}, \ldots, x_{1p}, x_{21}, x_{22}, \ldots x_{2p}, x_{31}, \ldots, \ldots, x_{pp}) \\
 &= s(s(x_{11},x_{21}, \ldots, x_{p1}), 
     s(x_{12},x_{22}, \ldots, x_{p2}), 
		 \dots 
		 s(x_{1p},x_{2p}, \ldots, x_{pp})).
\end{align*}

It will be convenient to organise the arguments of $t$ into a $p \times p$ matrix $X$ whose entry in the $i$-th row and $j$-th column is $x_{ij}$, so the value
$$
t 
\begin{pmatrix}
x_{11} & x_{12} & \cdots & x_{1p} \\
x_{21} & x_{22} & \cdots & x_{2p} \\
\vdots & \vdots & \ddots & \vdots \\
x_{p1} & x_{p2} & \cdots & x_{pp}
\end{pmatrix}
$$
is obtained by applying $s$ to the column vectors and then $s$ to the resulting row vector.

We introduce several  concepts for zero-one matrices, since only 0-1 values for the variables in $t$ will play a role in the proof. 

\begin{definition}
Let $X=(x_{ij}), Y$ be $p \times p$ zero-one matrices. The \emph{area of $X$} is the fraction of ones and is denoted
$$
\lambda(X) = \left(\sum_{i,j} x_{ij}\right)/p^2.
$$
The matrices $X,Y$ are called \emph{$g$-equivalent}, denoted $X \sim Y$, if $g(t(X)) = g(t(Y))$.
The matrix $X$ is called \emph{tame} if 
\begin{align*}
\text{either}\quad &X \sim 0_{p \times p}   \quad\text{and}\quad  \lambda(X) < 1/3, \\
\text{or}\quad &X \sim 1_{p \times p}  \quad\text{and}\quad  \lambda(X) > 1/3, 
\end{align*}
where $0_{p \times p}$ stands for the zero matrix and $1_{p \times p}$ for the all-ones matrix.
\end{definition}

Observe that the equivalence $\sim$ has two blocks, so, e.g., $X \not\sim Y \not\sim Z$ implies $X \sim Z$. Also recall that $p>3$ is a prime number, so the area of $X$ is never equal to $1/3$.

The proof now proceeds as follows. We show that certain matrices, called ``almost rectangles'', are tame. The proof is by induction (although the proof logic, as presented, is a bit different). Subsection~\ref{subsec:lines} provides the base case and Subsection~\ref{subsec:rect} handles the induction step.
In Subsection~\ref{subsec:contra}, we construct two tame matrices $X_1$, $X_2$ such that $\lambda(X_1)<1/3$ and $\lambda(X_2) > 1/3$, but $t(X_1) = t(X_2)$ (because the corresponding columns of $X_1$ and $X_2$ will be evaluated by $s$ to the same elements). This  gives us a contradiction since $0_{p \times p} \not\sim 1_{p \times p}$ as we shall see.

Before launching into the technicalities, we introduce an additional concept and state a consequence of the fact that $s$ is a polymorphism.

\begin{definition}
A triple $X,Y,Z$ of $p \times p$ zero-one matrices is called a \emph{cover} if, for every $1 \leq i,j \leq p$, exactly one of $x_{ij},y_{ij},z_{ij}$ is equal to one.
\end{definition}

\begin{lemma} \label{lem:nae}
If $X,Y,Z$ is a cover, then $X,Y,Z$ are not all $g$-equivalent. 
\end{lemma}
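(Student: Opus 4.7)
The plan is to chase the cover triple through the polymorphism $t$ and the homomorphism $g$, invoking the structural properties we have accumulated about $\rel D$, $R$, and $g$.

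First, I would note that $t$ is a polymorphism of $\rel D$. Indeed, $s$ is a polymorphism of $\rel D$ by Theorem~\ref{thm:cyclic}, polymorphisms of a single (CSP) template are closed under composition, and $t$ is obtained from $s$ by such a composition (apply $s$ column-wise, then apply $s$ to the resulting row). Hence $t$ preserves the relation $R$.

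Next, I would unpack the assumption that $X, Y, Z$ is a cover. At every position $(i,j)$, exactly one of $x_{ij}, y_{ij}, z_{ij}$ equals $1$, so the triple $(x_{ij}, y_{ij}, z_{ij})$ belongs to $\{(1,0,0),(0,1,0),(0,0,1)\}$. Because of the reduction we already performed (namely $\{0,1\} \subseteq D$ and $\{(1,0,0),(0,1,0),(0,0,1)\} \subseteq R$), each of these $p^2$ triples lies in $R$. Viewing the triple $(X, Y, Z)$ as a $3 \times p^2$ matrix whose columns are in $R$ and applying the polymorphism $t$ row-wise yields
\[
  (t(X), t(Y), t(Z)) \in R.
\]

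Finally, $g$ is a homomorphism from $\rel D = (D; R)$ to $\NAE_2 = (\{0,1\}; \nae_2)$, so
\[
  (g(t(X)), g(t(Y)), g(t(Z))) \in \nae_2,
\]
i.e.\ these three values are not all equal. By the definition of $g$-equivalence, this means that $X$, $Y$, $Z$ are not all $\sim$-equivalent, which is exactly the statement of the lemma.

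There is no real obstacle here; the lemma is a direct consequence of the fact that $t$ is a polymorphism of $\rel D$ together with the homomorphism property of $g$. The only thing to be careful about is to justify that $t$ is indeed a polymorphism (the observation that polymorphisms of CSP templates compose is essential and is exactly the feature available for $\rel D$ that is \emph{not} available for PCSP polymorphisms in general).
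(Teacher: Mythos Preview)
Your proposal is correct and follows essentially the same approach as the paper: use that $t$ is a polymorphism of $\rel D$ (since $s$ is and polymorphisms compose), apply it to the column triples in $R$ to get $(t(X),t(Y),t(Z))\in R$, and then push through the homomorphism $g$ to conclude not-all-equal. The paper's proof is just a terser version of exactly this argument.
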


\begin{proof}
By the definition of a cover, the triple $(x_{ij},y_{ij},z_{ij})$ is in $\{(0,0,1)$, $(0,1,0)$, $(1,0,0)\} \subseteq R$ for each $i,j$. 
Since $t$ preserves $R$ (because $s$ does), the triple $(t(X)$, $t(Y)$, $t(Z))$ is in $R$ as well. 
Finally, $g$ is a homomorphism from $\rel D$ to $\NAE_2$, therefore $g(t(X))$, $g(t(Y))$, $g(t(Z))$ are not all equal. 
In other words, $X$, $Y$, $Z$ are not all $g$-equivalent, as claimed.
\end{proof}

\subsection{Line segments are tame} \label{subsec:lines}

In this subsection it will be more convenient to regard the arguments of $t$ as a tuple $\vc{x} = (x_{11},x_{12}, \ldots)$ of length $p^2$ rather than a matrix. The concepts of the area, $g$-equivalence, tameness, and cover is extended to tuples in the obvious way.
Since $p>3$ is a prime number, $p^2$ is 1 modulo 3. Let $q$ be such that
$$
 p^2 = 3q+1.
$$
Moreover, let $\ttt{i}$ denote the following tuple of length $p^2$.
$$
\ttt{i} = (\underbrace{1,1, \cdots, 1}_{i \mbox{ positions} }, 0,0, \cdots 0)
$$
We prove in this subsection that all such tuples are tame.
We first recall a well-known fact.

\begin{lemma}
The operation $t$ is cyclic.
\end{lemma}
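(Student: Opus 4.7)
The plan is to check directly that
\[
  t(x_{11}, x_{12}, \ldots, x_{pp}) = t(x_{12}, x_{13}, \ldots, x_{pp}, x_{11})
\]
for all values of the $x_{ij}$, which is what cyclicity of arity $p^2$ means for $t$ when its arguments are listed in row-major order. First I would rewrite the right-hand side by re-reading the shifted tuple as a $p \times p$ matrix $X'$; its entries are $x'_{ij} = x_{i,j+1}$ for $j < p$, $x'_{ip} = x_{i+1,1}$ for $i < p$, and $x'_{pp} = x_{11}$.

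The key step is to read off the columns of $X'$ in terms of the columns of $X$. Columns $1, \ldots, p-1$ of $X'$ are literally the columns $2, \ldots, p$ of $X$, while the last column of $X'$ is $(x_{21}, x_{31}, \ldots, x_{p1}, x_{11})$, which is precisely the cyclic shift of the first column $(x_{11}, x_{21}, \ldots, x_{p1})$ of $X$. Since $s$ is cyclic, the inner application of $s$ to this last column of $X'$ yields the same element of $D$ as the application of $s$ to the first column of $X$.

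Consequently the length-$p$ row vector obtained from $X'$ by column-wise applications of $s$ is the cyclic shift of the analogous row vector for $X$. A second use of cyclicity of $s$, this time on the outer application, gives $t(X') = t(X)$, as desired. I do not foresee any real obstacle: the whole argument is careful index bookkeeping combined with two invocations of cyclicity of $s$. The essential combinatorial point is that a shift by one position in row-major order corresponds, at the column level, to a cyclic rotation of the columns together with a cyclic rotation within one of them, and both kinds of rotations are absorbed by cyclicity of $s$.
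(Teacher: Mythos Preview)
Your proof is correct and essentially identical to the paper's: both arguments observe that a one-step cyclic shift in row-major order leaves $p-1$ columns unchanged (up to a cyclic permutation of the column indices) and cyclically shifts the remaining column, so two applications of the cyclicity of $s$---one for the inner column evaluation and one for the outer row evaluation---suffice. The only cosmetic difference is the order in which you and the paper invoke the two uses of cyclicity of $s$, which is immaterial.
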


\begin{proof}
By cyclically shifting the arguments we get the same result:
\begin{multline*}
t(x_{12}, \cdots, x_{pp},x_{11}) =
t 
\begin{pmatrix}
x_{12} & x_{13} & \cdots & x_{1p} & x_{21} \\
x_{22} & x_{23} & \cdots & x_{2p} & x_{31} \\
\vdots & \vdots & \ddots & \vdots & \vdots\\
x_{p2} & x_{p3} & \cdots & x_{pp} & x_{11} 
\end{pmatrix}
= \\
t 
\begin{pmatrix}
x_{21} & x_{12} & x_{13} & \cdots & x_{1p}  \\
x_{31} & x_{22} & x_{23} & \cdots & x_{2p}  \\
\vdots & \vdots & \ddots & \vdots & \vdots \\
x_{11} & x_{p2} & x_{p3} & \cdots & x_{pp}   
\end{pmatrix}
=
t 
\begin{pmatrix}
x_{11} & x_{12} & \cdots & x_{1p} \\
x_{21} & x_{22} & \cdots & x_{2p}  \\
\vdots & \vdots & \ddots & \vdots \\
x_{p1} & x_{p2} & \cdots & x_{pp}  
\end{pmatrix}
=
t(x_{11}, x_{12}, \cdots, x_{pp}),
\end{multline*}
where the second equality uses the cyclicity of the outer ``$s$'' in the definition of $t$, while the third one the cyclicity of the first inner ``$s$''.
\end{proof}

\begin{lemma}
 $\ttt{0} \sim \ttt{1} \sim \dots \sim \ttt{q} \not\sim \ttt{q+1} \sim \dots \sim \ttt{2q} \sim \ttt{2q+1}$.
\end{lemma}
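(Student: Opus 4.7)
The plan is to combine three ingredients: Lemma~\ref{lem:nae}, the cyclicity of $t$ (so that any cyclic shift of a tuple is $\sim$-equivalent to the tuple itself), and the observation that the relation $\sim$ has at most two equivalence classes, since it is the pullback through the Boolean-valued composite $g\circ t$.

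The key combinatorial building block I would use is a cover principle: for any non-negative integers $a,b,c$ with $a+b+c = p^2 = 3q+1$, placing $\ttt{a}$ on positions $\{0,\dots,a-1\}$, a cyclic shift of $\ttt{b}$ on $\{a,\dots,a+b-1\}$, and a cyclic shift of $\ttt{c}$ on $\{a+b,\dots,p^2-1\}$ yields a cover. Then by Lemma~\ref{lem:nae} combined with cyclicity of $t$, the tuples $\ttt{a}$, $\ttt{b}$, $\ttt{c}$ cannot all lie in the same $\sim$-block. Specialising to $(a,b,c) = (q,q,q+1)$ gives the non-equivalence $\ttt{q}\not\sim\ttt{q+1}$ immediately, because the first two tuples in the triple are identical and hence automatically share a block.

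For the equivalences I would label the two $\sim$-blocks $0$ and $1$ so that $\ttt{q}$ lies in block $0$ and $\ttt{q+1}$ in block $1$, and then show by induction on $m \in \{0,1,\dots,q\}$ that $\ttt{q-j}$ is in block $0$ and $\ttt{q+1+j}$ in block $1$ for every $0\leq j\leq m$. For the inductive step at stage $m$, the cover principle applied to the triple $(q+1,\, q+1+m,\, q-m-1)$ will force $\ttt{q-(m+1)}$ into block $0$, since the first two tuples are in block $1$ by the inductive hypothesis; the cover principle applied next to $(q,\, q-(m+1),\, q+m+2)$ will then force $\ttt{q+m+2}$ into block $1$, since the first two are now in block $0$. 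Both triples sum to $3q+1$ and consist of indices in $[0,p^2]$ as long as $m\leq q-1$, so the induction is valid up to $m=q$; at that stage every $\ttt{i}$ with $i\leq q$ sits in block $0$ and every $\ttt{i}$ with $q+1\leq i\leq 2q+1$ in block $1$, which is equivalent to the full chain of equivalences and the central non-equivalence claimed in the lemma.

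The main subtlety, though really just bookkeeping, will be choosing the inductive covers so that each one uses only auxiliary indices whose block is already known at the current stage and so that all three indices stay within $[0,p^2]$. The specific triples above are tuned for this, and their asymmetry reflects the fact that the two seed indices $q$ and $q+1$ lie on opposite sides of $p^2/3$.
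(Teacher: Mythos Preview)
Your proposal is correct and takes essentially the same approach as the paper: both proofs use the cover principle with cyclicity of $t$, establish $\ttt{q}\not\sim\ttt{q+1}$ via the triple $(q,q,q+1)$, and then induct outward; your triples $(q+1,\,q+1+m,\,q-m-1)$ and $(q,\,q-(m+1),\,q+m+2)$ are exactly the paper's triples $(q-i,\,q+1,\,q+i)$ and $(q-i,\,q,\,q+i+1)$ under the substitution $i=m+1$ and reordering. Your explicit labelling of the two $\sim$-blocks as $0$ and $1$ is a slight presentational difference but not a substantive one.
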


\begin{proof}
By induction on $i=0,1, \dots, q$, we prove
\[
\ttt{q-i} \sim \ttt{q-i+1} \sim \cdots \sim \ttt{q} 
\not\sim \ttt{q+1} \sim \cdots \sim \ttt{q+i} \sim \ttt{q+i+1}.
\]

For the induction base, $i=0$,
let $\vc{x} = \ttt{q}$, let $\vc{y}$ be $\ttt{q}$ (cyclically) shifted $q$ times to the right (so the first 1 is at the $(q+1)$-st position), and let $\vc{z}$ be $\ttt{q+1}$ shifted $2q$ times to the right. 
The tuples $\vc{x},\vc{y},\vc{z}$ form a cover, therefore they are not all $g$-equivalent by Lemma~\ref{lem:nae}.
But $t$ is cyclic, thus $t(\vc{x})=t(\vc{y}) = t(\ttt{q})$ and $t(\vc{z}) = t(\ttt{q+1})$. It follows that $\ttt{q},\ttt{q},\ttt{q+1}$ are not all $g$-equivalent and we get $\ttt{q} \not\sim \ttt{q+1}$.

Now we prove the claim for $i>0$ assuming it holds for $i-1$. To verify $\ttt{q-i} \sim \ttt{q-i+1}$ consider
$\ttt{q-i}$, $\ttt{q+1}$, $\ttt{q+i}$. Since $(q-i)+(q+1)+(q+i) = 3q+1=p^2$, these tuples can be cyclically shifted to form a cover and then the same argument as above gives us that $\ttt{q-i}$, $\ttt{q+1}$, $\ttt{q+i}$ are not all $g$-equivalent. But $\ttt{q+1} \sim \ttt{q+i}$ by the induction hypothesis, therefore $\ttt{q-i} \not\sim \ttt{q+1}$. Since $\ttt{q+1} \not\sim \ttt{q-i+1}$ (again by the induction hypothesis), we get $\ttt{q-i} \sim \ttt{q-i+1}$, as required.

It remains to check $\ttt{q+i} \sim \ttt{q+i+1}$. This is done in a similar way, using the tuples $\ttt{q-i}$, $\ttt{q}$, $\ttt{q+i+1}$. 
\end{proof}

We have proved that $\ttt{0} \sim \dots \sim \ttt{q} \not\sim \ttt{q+1} \sim \dots \sim \ttt{2q+1}$.
Using the same argument as in the previous lemma once more for $\ttt{0},\ttt{p^2-i},\ttt{i}$ with $p^2 \geq i > 2q+1$ we get $\ttt{i} \not\sim \ttt{0}$. In summary,
$\ttt{i} \sim \ttt{0}$ whenever $i \leq q$ and $\ttt{i} \sim \ttt{p^2} \not\sim \ttt{0}$ when $i \geq q+1$. 
Observing that $\lambda(\ttt{i})<1/3$ if and only if $i \leq q$, we obtain the following lemma.

\begin{lemma} \label{lem:lines_tame}
Each $\ttt{i}$, $i \in \{0, 1, \cdots, p^2\}$, is tame and $\ttt{0} \not\sim \ttt{p^2}$. 
\end{lemma}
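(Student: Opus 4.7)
The plan is to extend the chain of $g$-equivalences established in the previous lemma to cover all indices $i \in \{0, 1, \ldots, p^2\}$, and then read off tameness from the area calculation. First, I would observe that the relation $\sim$ has at most two equivalence classes on $p^2$-tuples, since by definition $X \sim Y$ iff $g(t(X)) = g(t(Y))$ and $g$ takes values in $\{0,1\}$. Combined with $\ttt{q} \not\sim \ttt{q+1}$ from the previous lemma, there are therefore exactly two $\sim$-classes.

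Second, I would apply the same cover-and-cyclicity trick used throughout the previous lemma once more, this time to handle the range $2q+1 < i \leq p^2$. For each such $i$ one has $p^2 - i < q$, so the previous lemma gives $\ttt{p^2-i} \sim \ttt{0}$. Since $0 + (p^2-i) + i = p^2$, suitable cyclic shifts of $\ttt{0}$, $\ttt{p^2-i}$, and $\ttt{i}$ form a cover (the shift of $\ttt{0}$ is irrelevant, and we shift $\ttt{i}$ so that its block of ones fills in the positions not covered by $\ttt{p^2-i}$). Lemma~\ref{lem:nae} together with the cyclicity of $t$ then shows that $\ttt{0}$, $\ttt{p^2-i}$, $\ttt{i}$ cannot all be $\sim$-equivalent; since the first two are, we conclude $\ttt{i} \not\sim \ttt{0}$. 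Specialising to $i = p^2$ (where $\ttt{p^2-i} = \ttt{0}$) yields the second assertion $\ttt{0} \not\sim \ttt{p^2}$ of the lemma as a free byproduct.

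Third, I would combine the ingredients to deduce tameness. The previous lemma gives $\ttt{i} \sim \ttt{0}$ for $0 \leq i \leq q$, while the previous lemma (for $q+1 \leq i \leq 2q+1$) together with the extension above (for $2q+1 < i \leq p^2$) gives $\ttt{i} \not\sim \ttt{0}$ for all $i \geq q+1$; since there are exactly two $\sim$-classes and $\ttt{p^2}$ lies in the class distinct from $\ttt{0}$, this upgrades to $\ttt{i} \sim \ttt{p^2}$ for $i \geq q+1$. Tameness is then immediate from $\lambda(\ttt{i}) = i/p^2$: because $p^2 = 3q+1$ is not divisible by $3$, one has $\lambda(\ttt{i}) < 1/3$ precisely when $i \leq q$ and $\lambda(\ttt{i}) > 1/3$ precisely when $i \geq q+1$, which matches the two cases of the tameness definition exactly.

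There is essentially no obstacle; all the combinatorial work was done in the previous lemma, and the only genuinely new content is the single cover argument needed to push $\not\sim\ttt{0}$ beyond $i = 2q+1$ up to $i = p^2$. That step simultaneously establishes $\ttt{0} \not\sim \ttt{p^2}$ and legitimises the use of ``the other equivalence class'' of $\sim$, tying the proof together.
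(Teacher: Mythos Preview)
Your proposal is correct and follows essentially the same approach as the paper: extend the chain from the previous lemma to $i>2q+1$ via the cover $\ttt{0},\ttt{p^2-i},\ttt{i}$ (using $p^2-i<q$ so $\ttt{p^2-i}\sim\ttt{0}$), then read off tameness from $\lambda(\ttt{i})=i/p^2$ and the fact that $\sim$ has exactly two blocks. The only cosmetic difference is that the paper states the two-block observation earlier in the subsection rather than inside the proof, and does not single out $i=p^2$ as a special case.
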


\subsection{Almost rectangles are tame} \label{subsec:rect}

We start by introducing a special type of zero-one matrices. 

\begin{definition}
Let $1 \leq k_1, \dots, k_p \leq p$. By
$\rrr{k_1,k_2,\dots,k_p}$
we denote the matrix whose $i$-th column begins with $k_i$ ones followed by ($p-k_i$) zeros, for each $i \in \{1, \dots, p\}$.

An \emph{almost rectangle} is a matrix of the form $\rrr{k,k, \dots, k, l, l, \dots, l}$ (the number of $k$'s can be arbitrary, including 0 or $p$) where $0 \leq k-l \leq 5|D|$. The quantity $k-l$ is referred to as the \emph{size of the step}. 
\end{definition}

In the remainder of this subsection we prove the following proposition.

\begin{proposition} \label{prop:tame}
Each almost rectangle is tame.
\end{proposition}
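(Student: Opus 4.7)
The plan is to proceed by induction on the step size $d = k - l$. The base case $d = 0$ is immediate: the matrix $\rrr{k, k, \ldots, k}$ has all columns identical, and reading its entries in row-major order yields precisely the line segment $\ttt{kp}$; tameness then follows from Lemma~\ref{lem:lines_tame}, and the area $\lambda(\rrr{k,\ldots,k}) = k/p$ lies below $1/3$ exactly when $kp \le q$, in agreement with the $g$-equivalence class given by that lemma.

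For the inductive step, fix $1 \le d \le 5|D|$ and assume that every almost rectangle of step strictly less than $d$ is tame. Given $A = \rrr{k^m, l^{p-m}}$ with step $d$, the area $\lambda(A) = l/p + dm/p^2$ changes by $d/p^2$ as $m$ increases, so the cutoff $\lambda = 1/3$ is crossed at most once as $m$ ranges over $\{0,1,\dots,p\}$; the endpoints $m = 0$ and $m = p$ are step-$0$, hence tame by the base case. The plan is then to run an inner induction on $m$ starting from each of these endpoints and to propagate tameness inward by constructing, for each $m$, a cover $(A, B, C)$ in which $B$ and $C$ are cyclic row/column shifts of almost rectangles of step strictly less than $d$ (hence tame by the outer induction hypothesis). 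Lemma~\ref{lem:nae} forces $A, B, C$ not to be all $g$-equivalent; combined with the already-known $\sim$-classes of $B$ and $C$ and the fact that $\sim$ has only two blocks, this pins down the class of $A$. Verifying that the two inner inductions starting from $m = 0$ and $m = p$ meet consistently at the unique area-$1/3$ crossing then certifies that $A$ is tame in the area-matching sense, not merely that $A$ belongs to some $\sim$-block.

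The principal obstacle I anticipate is the explicit cover construction. The entry-wise complement of a step-$d$ almost rectangle in $1_{p\times p}$ is itself of step $d$ (after a row shift), so a naïve split of the complement into two pieces yields matrices of the same step $d$ rather than smaller. Reducing the steps of $B$ and $C$ below $d$ must therefore come from combining all three symmetries of $t$ (cyclic shifts of rows, of columns, and of the full row-major tuple), arranging the shifts so that the residual regions decompose into two strictly smaller-step almost rectangles. The hypothesis $p > 60 |D|$ is precisely what provides the arithmetic slack for this: since $p$ is much larger than $d \le 5|D|$, one can place multiple row and column shifts of size up to $k + l$ without overlap. I expect this choreography — isolating a small-step correction from the complement by a well-chosen sequence of cyclic shifts — to be the crux of Subsection~\ref{subsec:rect}, and to be precisely where the upper bound $5|D|$ on the step is used.
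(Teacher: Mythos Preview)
Your overall architecture---induction on the step size $d$, with covers and Lemma~\ref{lem:nae} driving the inductive step---is the right shape, and your base case for $d\le 1$ (step $0$ or $1$ gives exactly a line segment $\ttt{i}$) is correct. The gap is in the inductive step, specifically your requirement that the cover pieces $B,C$ have step \emph{strictly less than} $d$.

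A natural ``halving'' of the complement does produce two almost rectangles $Y_1,Y_2$ of step roughly $d/2$, so for $d\ge 2$ the outer induction hypothesis applies and both are tame. But tameness only tells you the $\sim$-class of $Y_i$ via its area, and $\lambda(Y_1)+\lambda(Y_2)=1-\lambda(X)$ with $|\lambda(Y_1)-\lambda(Y_2)|$ possibly as large as $1/p$. When $\lambda(X)$ is within $O(1/p)$ of $1/3$---and such almost rectangles certainly exist---one of $\lambda(Y_1),\lambda(Y_2)$ can land above $1/3$ and the other below, so $Y_1\not\sim Y_2$. In that case Lemma~\ref{lem:nae} is vacuous (the triple $X,Y_1,Y_2$ is automatically not all $g$-equivalent) and you learn nothing about the class of $X$. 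No amount of cyclic shifting fixes this: the obstruction is arithmetic, not geometric.

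The paper's remedy is exactly the piece you are missing. When $\lambda(X)$ is not far enough from $1/3$ for the halving to work, it instead takes $Y$ to be a \emph{row shift of $X$ itself}, so that $t(Y)=t(X)$ and hence $X\sim Y$ automatically. The third member $Z$ of the cover is then forced to satisfy $Z\not\sim X$. This $Z$ turns out to be (a shift of) an almost rectangle with the \emph{same} step $d$, but with $|\lambda(Z)-1/3|=2\,|\lambda(X)-1/3|$. To handle $Z$ one therefore needs a secondary induction parameter: the paper orders counterexamples first by step size, and then, among those of minimal step, by \emph{decreasing} $|\lambda-1/3|$. Your inner induction on $m$ could be made to play this role, but only if you allow the cover to use a same-step piece known from the inner hypothesis---not, as you wrote, only smaller-step pieces from the outer hypothesis.
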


Let 
$$
X=\rrr{\underbrace{k, k, \cdots, k}_{m \text{ positions}}, l,l, \dots, l}
$$
be a minimal counterexample in the following sense.
\begin{itemize}
\item $X$ has the minimum size of the step and,
\item among such counterexamples, $\abs{\lambda(X)-1/3}$ is maximal.
\end{itemize}

\begin{lemma}
The size of the step of $X$ is at least 2.
\end{lemma}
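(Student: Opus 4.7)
The plan is to proceed by contradiction: assuming $X$ has step size $0$ or $1$, I will show $X$ is actually tame, contradicting its status as a counterexample. The underlying observation is that when the step size is at most $1$, the $p\times p$ matrix $X$ has exactly the same columns, in the same order, as the row-by-row reshape of some tuple $\ttt{i}$. Since $t$ first applies $s$ column-wise and then $s$ to the resulting row of column-values, this equality of column sequences gives $t(X)=t(\ttt{i})$ and hence $X\sim \ttt{i}$. Combined with $\lambda(X)=\lambda(\ttt{i})$, Lemma~\ref{lem:lines_tame} will force $X$ to be tame, the desired contradiction.

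For the step-$0$ case $X=\rrr{k,\ldots,k}$, every column of $X$ equals the $p$-vector consisting of $k$ ones followed by $p-k$ zeros. Reshape $\ttt{kp}$ row-by-row into a $p\times p$ matrix: its first $k$ rows are all ones and the remaining $p-k$ rows all zeros, so each of its columns is that same vector. Hence $t(X)=t(\ttt{kp})$ and $\lambda(X)=k/p=\lambda(\ttt{kp})$, so $X$ inherits tameness from $\ttt{kp}$ via Lemma~\ref{lem:lines_tame}.

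For the step-$1$ case $X=\rrr{k,\ldots,k,k-1,\ldots,k-1}$, let $m$ be the number of columns of height $k$; we may assume $1\le m\le p-1$, since $m\in\{0,p\}$ collapses to step $0$. Set $i=p(k-1)+m$. The row-by-row reshape of $\ttt{i}$ has rows $1,\ldots,k-1$ all ones, row $k$ equal to $m$ ones followed by $p-m$ zeros, and rows below all zero. Reading columns, the first $m$ columns contain $k$ ones at the top and the remaining $p-m$ columns contain $k-1$ ones at the top, matching the columns of $X$ in order. Therefore $t(X)=t(\ttt{i})$ and $\lambda(X)=i/p^2=\lambda(\ttt{i})$, so again Lemma~\ref{lem:lines_tame} yields that $X$ is tame.

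Both cases contradict the assumption that $X$ is a counterexample, so the step size of $X$ must be at least $2$. The argument is essentially a combinatorial match-up, and I expect no real obstacle beyond correctly identifying the tuple $\ttt{i}$ whose matrix reshape reproduces the column structure of $X$; once this is spotted, tameness follows immediately from the base-case lemma.
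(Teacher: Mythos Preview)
Your proof is correct and follows essentially the same approach as the paper: the paper's proof is a single sentence observing that an almost rectangle with step size $0$ or $1$ represents the same tuple of arguments as some $\ttt{i}$, whence tameness follows from Lemma~\ref{lem:lines_tame}. You simply spell out explicitly which $i$ works in each case and verify the column-by-column match, which is exactly the content of that observation.
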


\begin{proof}
This lemma is just a different formulation of Lemma~\ref{lem:lines_tame}, since an almost rectangle with step of size 0 or 1 represents the same choice of arguments as $\ttt{i}$ for some $i$. 
\end{proof}

We handle two cases $\lambda(X) \geq 5/12$ and $\lambda(X) \leq 5/12$ separately, but the basic idea for both of them is the same as in the proof of Lemma~\ref{lem:lines_tame}. To avoid puzzles, let us remark that any number strictly between $1/3$ and $1/2$ (instead of $5/12$) would work with a sufficiently large $p$. 

\begin{lemma} \label{lem:c}
The area of $X$ is less than $5/12$.
\end{lemma}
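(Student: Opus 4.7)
The plan is to argue by contradiction: suppose $\lambda(X) \geq 5/12$. Since $X$ is a~counterexample to tameness with $\lambda(X) > 1/3$, the definition of tameness forces $X \not\sim 1_{p \times p}$; and because $g$ takes only two values, $\sim$ has at most two classes, with Lemma~\ref{lem:lines_tame} (specifically $\ttt{0} \not\sim \ttt{p^2}$) providing two distinct ones. Hence $X \sim 0_{p \times p}$. The goal will be to build a~cover $X, Y, Z$ in which both $Y$ and $Z$ are almost rectangles of step strictly less than $d := k - l$ and of area strictly less than $1/3$. By minimality of $X$ on step size, both $Y$ and $Z$ will be tame, so area $<1/3$ forces $Y \sim 0_{p \times p} \sim Z$, and all three members of the cover will be $g$-equivalent --- contradicting Lemma~\ref{lem:nae}.

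To construct $Y$ and $Z$, note first $d \geq 2$ (step $0$ or $1$ would make $X$ a~rearrangement of some $\ttt{i}$, already tame by Lemma~\ref{lem:lines_tame}). Fix an integer $a$ with $0 \leq a \leq p - k$, to be chosen. Place $Y$'s ones disjointly from $X$'s as follows: in each of the first $m$ columns put $a$ ones in rows $k+1, \ldots, k+a$; in each of the last $p - m$ columns put $a + 1$ ones in rows $l+1, \ldots, l+a+1$. Let $Z$ occupy all remaining cells, so $X, Y, Z$ is a~cover. Since $t$ is invariant under cyclic shifts of columns, $Y$ has the same $t$-value, hence the same $\sim$-class and area, as the almost rectangle obtained by moving the $p - m$ columns of value $a + 1$ to the front; the resulting almost rectangle has step $1$. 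Similarly $Z$ is $g$-equivalent and equal in area to an almost rectangle of step $d - 1$. Both steps are strictly less than $d$, so minimality of $X$ forces $Y$ and $Z$ tame.

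The remaining task is to choose $a$ so that $\lambda(Y), \lambda(Z) < 1/3$; tameness then yields $Y \sim 0_{p \times p} \sim Z$. Counting ones gives $\lambda(Y) = a/p + (p-m)/p^2$ and $\lambda(Z) = 1 - \lambda(X) - \lambda(Y)$, so the requirement is $\lambda(Y) \in (\, 2/3 - \lambda(X),\, 1/3\,)$. If $\lambda(X) > 2/3$ the lower bound is vacuous and $a = 0$ gives $\lambda(Y) \leq 1/p < 1/12$, done. Otherwise $5/12 \leq \lambda(X) \leq 2/3$, so the target interval has length $\geq 1/12$; the admissible values of $\lambda(Y)$ form an arithmetic progression of spacing $1/p < 1/(60|D|)$, so some $a$ lands in the target interval provided the range of $\lambda(Y)$, essentially $[0, (p-k)/p]$, extends above $2/3 - \lambda(X)$. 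The only obstruction is $k$ close to $p$; but then $l \geq k - d \geq k - 5|D|$ gives $\lambda(X) \geq l/p$, and for $k \geq 3p/4$ with $p > 60|D|$ this yields $\lambda(X) > 2/3$, returning us to the easy case.

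The conceptually clean ingredient is the cover construction with smaller-step $Y$ and $Z$. The main technical obstacle is enforcing $\lambda(Y), \lambda(Z) < 1/3$ simultaneously across the whole admissible range of $\lambda(X)$; this is precisely where the threshold $5/12$ in the statement and the hypothesis $p > 60|D|$ enter, through the case split on whether $\lambda(X)$ exceeds $2/3$ and (implicitly) whether $k$ exceeds $3p/4$.
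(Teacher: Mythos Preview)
Your argument is correct. The overall strategy matches the paper's --- build a cover $X,Y,Z$ in which the two complementary pieces are (equivalent to) almost rectangles of strictly smaller step and of area below $1/3$, hence tame and $\sim 0_{p\times p}$, then invoke Lemma~\ref{lem:nae} --- but the concrete split is different.

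The paper halves the complement. It chooses $k_1,k_2,l_1,l_2$ with $k_1+k_2=p-l$, $l_1+l_2=p-k$, and $|k_1-k_2|,|l_1-l_2|\le 1$, and sets $Y_i=\rrr{l_i,\dots,l_i,k_i,\dots,k_i}$ (with $m$ copies of $l_i$). After a row shift each $Y_i$ is an almost rectangle of step $k_i-l_i$, and since the two step sizes sum to $d$ and differ by at most $1$, both are strictly below $d$ once $d\ge 2$. The areas then satisfy $\lambda(Y_i)\le (1-\lambda(X))/2+1/(2p)\le 7/24+1/(2p)<1/3$ directly from $\lambda(X)\ge 5/12$ and $p>12$. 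No free parameter and no case split are needed.

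Your construction instead uses a step-$1$ piece $Y$ and a step-$(d-1)$ piece $Z$, with the height $a$ as a tunable parameter, and then argues by cases ($\lambda(X)>2/3$ versus $5/12\le\lambda(X)\le 2/3$, and within the latter $k\ge 3p/4$ versus $k<3p/4$) that some $a$ places $\lambda(Y)$ in $(2/3-\lambda(X),\,1/3)$. This is valid; the spacing $1/p$ and the interval length $\ge 1/12$ guarantee a hit, and your handling of the edge case $k$ close to $p$ via $\lambda(X)\ge l/p>2/3$ is correct. The paper's halving is simply the cleaner device: it forces both areas under $1/3$ automatically and dispenses with the parameter search.
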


\begin{proof}
Assume that $\lambda(X) \geq 5/12$. Let $k_1$, $k_2$, $l_1$, and $l_2$ be the non-negative integers such that 
\begin{eqnarray}
l_1+l_2+k = p = k_1+k_2+l, \label{eq:a} \\
1 \geq k_1 - k_2 \geq 0, \mbox{ and }
1 \geq l_1-l_2 \geq 0 . \label{eq:b}
\end{eqnarray} 
We have $k_1 \geq l_1$ and $k_2 \geq l_2$. Moreover,
since $k-l \geq 2$ by the previous lemma, it follows that both $k_1-l_1$ and $k_2-l_2$ are strictly smaller than $k-l$.

Consider the matrices
$$
Y_i = \rrr{\underbrace{l_i,l_i, \dots, l_i}_{m \text{ positions}}, k_i, k_i, \dots, k_i}, \quad i=1,2.
$$
By shifting all the rows of $Y_i$, $i \in \{1,2\}$, $m$ times to the left we obtain an almost rectangle with a smaller step size than $X$, which is thus tame by the minimality assumption on $X$. 
Since such a shift changes neither the value of $t$ (as the outer ``$s$'' in the definition of $t$ is cyclic) nor the area, both $Y_1$ and $Y_2$ are tame matrices.

Let $Y_1'$ ($Y_2'$, resp.) be the matrices obtained from $Y_1$ ($Y_2$, resp.) by shifting the first $m$ columns $k$ times ($k+l_1$ times, resp.) down and the remaining columns $l$ times ($l+k_1$ times, resp.) down. Since $X,Y_1',Y_2'$ is a cover (by~(\ref{eq:a})) and
cyclically shifting columns does not change the value of $t$ (as the inner occurrences of ``$s$'' in the definition of $t$ are cyclic), Lemma~\ref{lem:nae} implies that $X$, $Y_1$, $Y_2$ are not all $g$-equivalent. 

From $X, Y_1', Y_2'$ being a cover, it also follows that
$$
\lambda(X) + \lambda(Y_1') + \lambda(Y_2') = \lambda(X) + \lambda(Y_1) + \lambda(Y_2) = 1.
$$
Moreover, by~(\ref{eq:b}), we have $\lambda(Y_2) \leq \lambda(Y_1)$ and these areas differ by at most $p/p^2=1/p$. Therefore
$$
\lambda(Y_1) = 1 - \lambda(X) - \lambda(Y_2) \leq 1 - 5/12 - \lambda(Y_1) + 1/p
$$
and, since $p > 12$ by the choice of $p$, we obtain
$$
\lambda(Y_2) \leq \lambda(Y_1) < 1/3.
$$

The tameness of $Y_i$ now gives us $Y_1 \sim Y_2 \sim 0_{p \times p}$ and then, since $Y_1,Y_2,X$ are not all $g$-equivalent and $0_{p \times p} \not\sim 1_{p \times p}$ (by the second part of Lemma~\ref{lem:lines_tame}), 
we get $X \sim 1_{p \times p}$. But $\lambda(X) \geq 5/12 > 1/3$, hence $X$ is tame, a contradiction with the choice of $X$. 
\end{proof}

It remains to handle the case $\lambda(X) < 5/12$. 

We first claim that $2k$ (and thus $k+l$ and $2l$) is less than $p$. Indeed, since the step size of $X$ is at most $5|D|$ (by the definition of an almost rectangle) and $p > 60|D|$, we get
\begin{align*}
5/12 > \lambda(X) &\geq \frac{p(k - 5|D|)}{p^2}, \mbox{ and hence } \\
k & \leq 5p/12 + 5|D| < 5p/12 + p/12 = p/2.
\end{align*}

We now again need to distinguish two cases.
Assume first that $m < p/2$. 

Let 
\begin{align*}
Y &= \rrr{\underbrace{l, \cdots,l}_{m \text{ positions} },\underbrace{k, \cdots, k}_{m \text{ positions} }, l, \cdots, l},  \\
Z &= \rrr{\underbrace{p-k-l, \cdots, p-k-l}_{2m \text{ positions}}, p-2l, \cdots, p-2l} .
\end{align*}
The definition of $Z$ makes sense since $p-k-l, p-2l \geq 0$ by the inequality $2k < p$ derived above. 

The triple $X,Y,Z$ (similarly to $X,Y_1,Y_2$ in the proof of Lemma~\ref{lem:c}) is such that we can obtain a cover by shifting the columns down. Therefore $X$, $Y$, $Z$ are not all $g$-equivalent and $\lambda(X)+\lambda(Y) + \lambda(Z)=1$.

On the other hand, by shifting all the rows of $Y$ $m$ times to the left we obtain $X$.
We get $\lambda(X) = \lambda(Y)$ and $t(X)=t(Y)$, therefore  $Z \not\sim X$ by the previous paragraph. 

Moreover,
 by shifting all the rows of $Z$ $2m$ times to the left we obtain an almost rectangle $Z'$ with $t(Z)=t(Z')$ and $\lambda(Z)=\lambda(Z')$. The step size of $Z'$ is $(p-2l) - (p-k-l) = k-l$, which is the same as the step size of $X$. However, the distance of its area from $1/3$ is strictly greater as shown by the following calculation. 
\[
\frac{\abs{\lambda(Z)-1/3}}{\abs{\lambda(X)-1/3}} =
\frac{\abs{(1 - 2\lambda(X)) - 1/3}}{\abs{\lambda(X)-1/3}} =
\frac{\abs{2(1/3 - \lambda(X))}}{\abs{\lambda(X)-1/3}} = 2 > 1.
\]
By the minimality of $X$, the almost rectangle $Z'$ is tame and so is $Z$. It is also apparent from the calculation that the signs of $\lambda(X)-1/3$ and $\lambda(Z)-1/3$ are opposite. Combining these two facts with $Z \not\sim X$ derived above, we obtain that $X$ is tame, a contradiction.

In the other case, when $m > p/2$, the proof is similar using the tuples
\begin{align*}
Y &= (l, \cdots, l, \underbrace{k, \cdots, k}_{m \text{ positions}}), \\
Z &= (\underbrace{p-k-l, \cdots, p-k-l}_{(p-m) \text{ positions}}, p-2k, \cdots, p-2k, 
  \underbrace{p-k-l, \cdots, p-k-l}_{(p-m) \text{ positions}}) .
\end{align*}
The proof of Proposition~\ref{prop:tame} is concluded.

\subsection{Contradiction} \label{subsec:contra}

Let 
$$
m = (p-1)/2
$$
and choose natural numbers $l_1$ and $l_2$ so that
$$
p/3 - 2|D| < l_1 < l_2 < p/3 
$$
and
$$
s(\underbrace{1, \cdots, 1}_{l_1 \text{ positions}}, 0, \cdots, 0)
=
s(\underbrace{1, \cdots, 1}_{l_2 \text{ positions}}, 0, \cdots, 0).
$$
This is possible by the pigeonhole principle, since there are $2|D| > D$ integers in the interval and $p/3 - 2|D| > 0$ by the choice of  $p$.

The required contradiction will be obtained by considering the two matrices
$$
X_i = \rrr{ \underbrace{k, \dots, k}_{m \text{ positions}}, l_i, \dots, l_i}, \ i=1,2 ,
$$ 
where $k$ will be specified soon.

Before choosing $k$, we observe that $t(X_1) = t(X_2)$. Indeed, the first $m$ columns of these matrices are the same (and thus so are their images under $s$) and the remaining columns have the same image under $s$ by the choice of $l_1$ and $l_2$. The claim thus follows from the definition of $t$. 

Next, note that for $k \leq p/3$ the area of both matrices is less than $1/3$ since $l_i < p/3$. On the other hand, for $k \geq p/3 + 3|D|$ the area is greater:
\begin{multline*}
\lambda(X_i) 
= \frac{mk+(p-m)l_i}{p^2} 
\geq \frac{\frac{p-1}{2} (p/3 + 3|D|) + \frac{p+1}{2} (p/3-2|D|)}{p^2} \\
= \frac{p^2/3 + |D| (p-5)/2}{p^2} > 1/3.
\end{multline*}   

Choose the maximum $k$ so that $\lambda(X_1) < 1/3$. The derived inequalities and the choice of $l_i$ implies
$$
l_1 < l_2 \leq k < p/3 + 3|D| \leq l_1 + 5|D| < l_2 + 5|D| ,
$$
therefore both $X_1$ and $X_2$ are almost rectangles. By Proposition~\ref{prop:tame}, $X_1$ and $X_2$ are tame.

Since the area of $X_1$ is less than $1/3$, we get $X_1 \sim 0_{p \times p}$.   
We chose $k$ so that increasing $k$ by 1 makes the area of $X_1$ greater than $1/3$. 
From $m < p/2$ it follows that increasing $l_1$ by 1 makes the area even greater, hence $\lambda(X_2) > 1/3$ (recall that $l_2>l_1$)
and we obtain $X_2 \sim 1_{p \times p}$. 

Recall that $0_{p \times p} \not\sim 1_{p \times p}$ by the second part of Lemma~\ref{lem:lines_tame}.
Therefore $X_1 \not\sim X_2$, contradicting $t(X_1) = t(X_2)$ and concluding the proof of Theorem~\ref{thm:infinity-necessary}.
\medskip

\begin{remark}
The proof of Theorem~\ref{thm:infinity-necessary} could be simplified if we had stronger or more suitable polymorphisms than cyclic operations. Alternative versions of Theorem~\ref{thm:cyclic} could also help in simplifying the proof of the CSP dichotomy conjecture. 
In particular, the following question seems open.
Does every finite $\rel D$ with a cyclic polymorphism of arity at least 2 necessarily have a polymorphism $s$ of arity $n > 1$ such that, for any $a,b \in D$ and $(x_1, \dots, x_n) \in \{a,b\}^n$, the value $s(x_1, \dots, x_n)$ depends only on the number of occurrences of $a$ in $(x_1, \dots, x_n)$?
Note that a more optimistic version involving evaluations with $|\{x_1, \dots, x_n\}|=3$ 
is false, a counterexample is the disjoint union of a~directed 2-cycle and a~directed 3-cycle.
\end{remark}

\section{Algebraic constructions}
  \label{sec:wonderland}

The main goal of this section is to provide a more wholesome picture of the algebraic theory of minions and to align the algebraic approach to PCSP more closely with the standard theory as presented in \cite{BKW17}. The theory that we present is a~natural generalisation of \cite{BOP18} from clones to minions. We note that, as in \cite{BOP18}, many of these results can be generalised to minions over infinite sets. Nevertheless, we keep our focus on the case of finite domains.

Let us start with describing algebraic counterparts of pp-constructions.

\begin{definition}
Let $\clo M_1$ and $\clo M_2$ be two minions on $(A_1,B_1)$ and $(A_2,B_2)$, respectively. We use the following terminology.
\begin{itemize}
  \item $\clo M_2$ is a~\emph{reflection} of $\clo M_1$ if there are maps $h_A\colon A_2 \to A_1$ and $h_B\colon B_1 \to B_2$ such that
\[
      \clo M_2 = \{ (x_1,\dots,x_{\ar f}) \mapsto h_B(f(h_A(x_1),\dots,h_A(x_{\ar f}))) \mid f\in \clo M_1 \},
    \]
  \item $\clo M_2$ is an~\emph{expansion} of $\clo M_1$ if $A_1 = A_2$, $B_1 = B_2$, and $\clo M_1 \subseteq \clo M_2$, and
  \item $\clo M_2$ is the~\emph{$n$-th power} of $\clo M_1$ (sometimes denoted $\clo M_2=\clo M_1^n$) if $A_2 = A_1^n$, $B_2 = B_1^n$, and $\clo M_2$ is the set of coordinate-wise actions of all functions in $\clo M_1$. More precisely, for each $t\ge 1$, $\clo M_2^{(t)}$ consist of all functions of the form 
  \[
    ((x_{11},\ldots,x_{1n}),\ldots,(x_{t1},\ldots,x_{tn})) \mapsto 
      (f(x_{11},\ldots,x_{t1}),\ldots,f(x_{1n},\ldots,x_{tn}))
  \]  
where $f$ is a~$t$-ary function from $\clo M_1$.
\end{itemize}
Following the standard notation, we denote the class of all reflections of $\clo M$ by $\Ref \clo M$, the class of all expansions by $\E \clo M$, the class of all powers by $\Pow \clo M$, and the class of all finite powers by $\Pfin \clo M$.
\end{definition}

We remark that a reflection of a minion is always a minion. This is in contrast with a reflection of a~clone (as defined in \cite[Definition 4.3]{BOP18}), which is not necessarily a clone (see \cite[p.\ 379]{BOP18}).

The following lemma relates the above to the relational constructions
from Sections~\ref{sec:pp-formulas} and~\ref{sec:pp-constr}. 

\begin{lemma} \label{lem:alg-rel-connection}
  Let $(\rel A,\rel B)$ and $(\rel A',\rel B')$ be two templates. Denote their polymorphism minions by $\clo M$ and $\clo M'$, respectively.
  \begin{enumerate}
    \item $(\rel A',\rel B')$ is ppp-definable from $(\rel A,\rel B)$ if and only if $\clo M' \in \E \clo M$.
    \item $(\rel A',\rel B')$ is a~strict relaxation of a~pp-power of $(\rel A,\rel B)$ if and only if $\clo M' \in \EPfin \clo M$.
    \item $(\rel A',\rel B')$ is a~relaxation of a~structure that is
      pp-definable from  $(\rel A,\rel B)$ if and only if $\clo M' \in \ER \clo M$.
  \end{enumerate}
\end{lemma}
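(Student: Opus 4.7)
The plan is to translate each relational construction into an inclusion between polymorphism minions, invoking the Galois correspondence of Theorem~\ref{thm:galois-corespondence} where needed. Part~(1) is then immediate: $\clo M' \in \E \clo M$ unfolds by definition to $(A',B') = (A,B)$ and $\clo M \subseteq \clo M'$, which is precisely the polymorphism condition equivalent to ppp-definability in Theorem~\ref{thm:galois-corespondence}.

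For part~(2), the forward implication is direct. If $(\rel A',\rel B')$ is a~strict relaxation of an $n$-th pp-power $(\rel A'', \rel B'')$ of $(\rel A,\rel B)$, then the coordinate-wise action of any $f \in \clo M$ preserves every pp-definable relation, so $\clo M^n \subseteq \Pol(\rel A'', \rel B'')$; the strict relaxation can only enlarge the polymorphism minion, yielding $\clo M' \in \E \clo M^n \subseteq \EPfin \clo M$. For the converse, assume $\clo M^n \subseteq \clo M'$ with $(A',B')=(A^n,B^n)$, and identify each $k$-ary relation on $A^n$ (resp. $B^n$) with the corresponding $kn$-ary relation on $A$ (resp. $B$); this produces an ``unfolded'' template $(\rel A'_*, \rel B'_*)$ on $(A,B)$. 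Under the bijection sending an $m$-ary function $A^m \to B$ to its coordinate-wise action $(A^n)^m \to B^n$, the inclusion $\clo M^n \subseteq \clo M'$ becomes $\clo M \subseteq \Pol(\rel A'_*, \rel B'_*)$, so Theorem~\ref{thm:galois-corespondence} makes $(\rel A'_*, \rel B'_*)$ ppp-definable in $(\rel A, \rel B)$. Re-folding this ppp-definition back to the domain $(A^n, B^n)$ produces a~pp-power of $(\rel A, \rel B)$ of which $(\rel A',\rel B')$ is a~strict relaxation.

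The forward direction of part~(3) works similarly: given homomorphisms $h_A\colon \rel A' \to \rel A''$ and $h_B\colon \rel B'' \to \rel B'$ with $(\rel A'', \rel B'')$ pp-definable in $(\rel A, \rel B)$, pp-definability forces $\clo M \subseteq \Pol(\rel A'',\rel B'')$, and therefore the composition $\bar f(x_1,\ldots,x_n) = h_B(f(h_A(x_1),\ldots,h_A(x_n)))$ is a~polymorphism of $(\rel A',\rel B')$ for every $f \in \clo M$. The collection of all such $\bar f$ is precisely a~reflection of $\clo M$ contained in $\clo M'$, hence $\clo M' \in \ER \clo M$.

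The backward direction of~(3) is where the main work lies. Given a~reflection $\clo N$ of $\clo M$ via maps $h_A\colon A' \to A$ and $h_B\colon B \to B'$ with $\clo N \subseteq \clo M'$, I would define an auxiliary template $(\widetilde{\rel A}, \widetilde{\rel B})$ over $(A, B)$ by $R^{\widetilde{\rel A}} = h_A(R^{\rel A'})$ and $R^{\widetilde{\rel B}} = h_B^{-1}(R^{\rel B'})$ for each relation symbol $R$. A~direct calculation translates $\clo N \subseteq \clo M'$ into $\clo M \subseteq \Pol(\widetilde{\rel A}, \widetilde{\rel B})$, and Theorem~\ref{thm:galois-corespondence} then certifies $(\widetilde{\rel A}, \widetilde{\rel B})$ as ppp-definable in $(\rel A, \rel B)$. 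The remaining obstacle, and the main technical step, is to convert this ppp-definition into a~\emph{pure} pp-definition plus a~homomorphic relaxation. For this, I would first establish a~normal-form lemma: every ppp-definable template is a~strict relaxation of some pp-definable template. This follows because pp-formulas are positive and hence monotone in their atomic relations, so any pp-definition can be commuted past a~preceding strict relaxation (with the roles of ``strict'' and ``lax'' sides preserved). Applying this normal form to $(\widetilde{\rel A}, \widetilde{\rel B})$ produces a~pp-definable $(\rel A'', \rel B'')$ with $h_A(R^{\rel A'}) \subseteq R^{\rel A''}$ and $h_B(R^{\rel B''}) \subseteq R^{\rel B'}$, which is precisely the assertion that $h_A$ and $h_B$ witness $(\rel A', \rel B')$ as a~homomorphic relaxation of the pp-definable template $(\rel A'', \rel B'')$.
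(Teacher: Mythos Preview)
Your proof is correct and follows essentially the same skeleton as the paper's, with two minor technical differences.

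For part~(2), the paper takes a slightly different route: rather than unfolding $(\rel A', \rel B')$ directly, it constructs an explicit template $(\rel A_n, \rel B_n)$ on $(A^n, B^n)$ --- enriched with ``coordinate-restriction'' relations $R_{i_1\dots i_k}$ and coordinate-equality relations $E_{i,j}$ --- whose polymorphism minion is \emph{exactly} $\clo M^n$; item~(1) then applies verbatim to $(\rel A_n,\rel B_n)$. Your unfolding argument achieves the same effect and is equally valid, but note that your ``re-folding'' step already relies on the normal-form lemma (ppp-definable $=$ strict relaxation of pp-definable) that you state only in part~(3); you should invoke it in part~(2) as well.

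For the backward direction of part~(3), your intermediate template $(\widetilde{\rel A}, \widetilde{\rel B})$ is exactly the paper's $(\rel A_0, \rel B_0)$. Where you prove the normal-form lemma directly by commuting pp-definitions past strict relaxations via monotonicity, the paper instead shortcuts by observing that $(\rel A', \rel B')$ is pp-constructible from $(\rel A, \rel B)$ and invoking Theorem~\ref{thm:minor-homomorphism-is-pp-constructibility}. Your direct argument is more self-contained; the paper's shortcut leans on the machinery of Section~\ref{sec:constructions} to avoid re-proving the normal form.
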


\begin{proof}
Item (1) follows directly from Theorem~\ref{thm:galois-corespondence} (see \cite{Pip02,BG16a} for a~proof).

The proof of item (2) is similar to \cite[Proposition 3.1]{Bar13}.
Let us first define templates $(\rel A_n,\rel B_n)$ that will have the property that all $n$-th pp-powers of $(\rel A,\rel B)$ are ppp-definable in $(\rel A_n,\rel B_n)$. We set
$A_n = A^n$, $B_n = B^n$, and for each relation $R$ of $(\rel A,\rel B)$ of arity $k$ and each tuple $(i_1,\dots,i_k) \in [n]^k$ we define a~new $k$-ary relation $R_{i_1\dots i_k}$ by
\[
  ((a_{11},\dots,a_{1n}),\dots,(a_{k1},\dots,a_{kn})) \in R_{i_1\dots i_k}^{\rel A_n}
    \mathrel{\text{if}}
  (a_{1i_1},\dots,a_{ki_k}) \in R^{\rel A},
\]
and similarly for $R_{i_1 \dots i_k}^{\rel B_n}$. Further, for all $i,j\in [n]$, we add a~new relation symbol $E_{i,j}$ interpreted as \( ((a_1,\dots,a_n),(b_1,\dots,b_n)) \in E_{i,j}^{\rel A_n} \) if $a_i = b_j$, and similarly for $E_{i,j}^{\rel B_n}$.  It follows directly from definitions that any structure that is ppp-definable in $(\rel A_n,\rel B_n)$ is an~$n$-th pp-power of $(\rel A,\rel B)$ and vice-versa. Further, we claim that $\clo M^n = \Pol(\rel A_n,\rel B_n)$.
It is straightforward to verify that $\clo M^n \subseteq \Pol(\rel A_n,\rel B_n)$. For the reverse inclusion, first observe that every polymorphism of $(\rel A_n,\rel B_n)$ acts independently on each of the coordinates; however the actions on different coordinates are identical, as follows from the compatibility with the the relations $E_{ij}$. From the compatibility with the relations $R_{i_1\dots i_k}$ we obtain that the action on each coordinate is a~polymorphism of $(\rel A,\rel B)$.   
We conclude that $\Pol(\rel A_n,\rel B_n) = \clo M^n$.

Now, assume that $(\rel A',\rel B')$ is an~$n$-th ppp-power of $(\rel A,\rel B)$. Since it is ppp-definable in $(\rel A_n,\rel B_n)$, we have $\clo M' \in \E \clo M^n \subseteq \EPfin \clo M$.
The converse is also proved using the template $(\rel A_n,\rel B_n)$. Given that $\clo M' \in \EPfin \clo M$, there is $n$ such that $\clo M' \in \E \clo M^n$. The rest again follows from the first item.

To prove item (3), assume that $(\rel A',\rel B')$ is a~relaxation of a~template $(\rel A_0,\rel B_0)$ that is pp-definable from $(\rel A,\rel B)$, and let $h_A\colon \rel A' \to \rel A_0$ and $h_B\colon \rel B_0 \to \rel B'$ be the relaxing homomorphisms. We claim that all mappings of the form $(x_1,\dots,x_n) \to h_B(f(h_A(x_1),\dots,h_A(x_n)))$, where $f\in \clo M$, are polymorphisms from $\rel A'$ to $\rel B'$. Since $(\rel A_0,\rel B_0)$ is pp-definable in $(\rel A,\rel B)$, each $f\in \clo M$ is also a~polymorphism from  $\rel A_0$ to $\rel B_0$. Consequently, we can prove that the above composition is a polymorphism from $\rel A'$ to $\rel B'$ in the same way as proving that a~composition of homomorphisms of relational structures is a~homomorphism. This concludes that $\clo M' \in \ER \clo M$.

For the other implication, suppose that $\clo M'\in \ER \clo M$, and let the reflection be defined by mappings $h_A$ and $h_B$. We define an~intermediate template $(\rel A_0,\rel B_0)$ by putting $A_0 = A$, $B_0 = B$, $R^{\rel A_0} = h_A(R^{\rel A'})$, and $R^{\rel B_0} = h_B^{-1}(R^{\rel B'})$ for all $R$.  This definition ensures that $(\rel A',\rel B')$ is a~relaxation of $(\rel A_0,\rel B_0)$.  To prove that $(\rel A_0,\rel B_0)$ is ppp-definable in~$(\rel A,\rel B)$, it is enough to show that $\clo M \subseteq \Pol(\rel A_0,\rel B_0)$; the rest follows from item (1).  

Let $f\in \clo M$ be of arity $n$ and $\tup c_1,\dots,\tup c_n \in R^{\rel A_0}$. From the definition of $R^{\rel A_0}$, we get $\tup b_1,\dots,\tup b_n \in R^{\rel A'}$ such that $\tup c_i = h_A(\tup b_i)$ for each $i = 1,\dots,n$.  Now,
\[
  h_B(f(h_A(\tup b_1),\dots,h_A(\tup b_n))) \in R^{\rel B'}
\]
since the composition $h_B(f(h_A,\dots,h_A))\in \clo M'$, and consequently 
\[
 f(\tup c_1,\dots,\tup c_n) = f(h_A(\tup b_1),\dots,h_A(\tup b_n)) \in h_B^{-1}(R^{\rel B'}) = R^{\rel B_0}
.\]
This shows that $(\rel A',\rel B')$ is a~relaxation of a~structure that is ppp-definable from $(\rel A,\rel B)$, therefore it is pp-constructible from $(\rel A,\rel B)$ which is the same as being a~relaxation of a~structure that is pp-definable from $(\rel A,\rel B)$ by Theorem~\ref{thm:minor-homomorphism-is-pp-constructibility}.
\end{proof}

The following theorem is an algebraic version of Theorem~\ref{thm:minor-homomorphism-is-pp-constructibility}. Its proof is therefore very similar, and a~connection to constructions from Section~\ref{sec:pp-constructions} should be immediate.

\begin{theorem} \label{thm:birkhoff}
  Let $\clo M$ and $\clo M'$ be two minions, respectively. Then $\clo M' \in \ERPfin \clo M$ if and only if there exists a~minion homomorphism from $\clo M$ to $\clo M'$.
\end{theorem}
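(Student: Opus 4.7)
The strategy is to establish both directions by algebraic analogues of the corresponding steps in Theorem~\ref{thm:minor-homomorphism-is-pp-constructibility}, without routing through polymorphism minions of specific templates.

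For the forward direction, the plan is to check that each basic construction in $\ERPfin$ comes with a canonical minion homomorphism from the source. For the $n$-th power, the map $f \mapsto f^{(n)}$ works; for an expansion $\clo M \subseteq \clo N$, the inclusion works; and for a reflection via $(h_A,h_B)$, the surjection $f \mapsto h_B \circ f \circ (h_A,\dots,h_A)$ works. In each case arity is preserved and minor-preservation is immediate by unwinding the definitions. Composing such witnesses along an $\ERPfin$ sequence yields a minion homomorphism $\clo M \to \clo M'$; this is the algebraic counterpart of Lemma~\ref{lem:pp-gives-minor}.

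For the backward direction, suppose $\xi\colon \clo M \to \clo M'$ is a minion homomorphism with $\clo M$ on $(A,B)$ and $\clo M'$ on $(A',B')$, and set $n = |A'|$, $k = |A|$, and identify $A' = [n]$. My first observation, mirroring Lemma~\ref{lem:adjunction}, is that $\xi$ is fully determined by the map $c_\xi \colon \clo M^{(n)} \to B'$ given by $c_\xi(g) = \xi(g)(1,\dots,n)$: for any $r$-ary $f \in \clo M$ and $(a_1',\dots,a_r') \in (A')^r$, if $g$ denotes the $n$-ary minor $g(x_1,\dots,x_n) = f(x_{a_1'},\dots,x_{a_r'})$, then minor-preservation gives $\xi(f)(a_1',\dots,a_r') = c_\xi(g)$.

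The construction of $\clo M'$ inside $\ERPfin\clo M$ then mirrors the proof of Lemma~\ref{lem:free-is-pp}. I take the finite power $\clo M^{k^n}$ on $(A^{k^n}, B^{k^n})$ and, via a fixed enumeration of $A^n$, identify $A^{k^n}$ with the space of functions $A^n \to A$ and $B^{k^n}$ with the space of functions $A^n \to B$ (which contains $\clo M^{(n)}$ as a subset). The reflection is then defined by $h_A(a') = ((c_1,\dots,c_n) \mapsto c_{a'})$, the $a'$-th coordinate projection, and $h_B \colon B^{k^n} \to B'$ extending $c_\xi$ to the whole of $B^{k^n}$ in any way. A direct computation shows that for every $r$-ary $f \in \clo M$, the element $f^{(k^n)}(h_A(a_1'),\dots,h_A(a_r'))$ is exactly the $n$-ary minor $g$ above, so applying $h_B$ recovers $c_\xi(g) = \xi(f)(a_1',\dots,a_r')$. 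Thus the reflection equals $\xi(\clo M)$, which is itself a minion (by minor-preservation of $\xi$) and a subset of $\clo M'$; a final expansion produces $\clo M'$, witnessing $\clo M' \in \E\Ref\Pfin\clo M$.

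The main obstacle will be the clean identification $A^{k^n} \cong A^{A^n}$ that makes $h_A$ and $h_B$ natural enough for the reflection calculation to collapse onto $\xi$ via minor-preservation. A brief degenerate case $|A|=1$ needs separate handling, since then $k^n = 1$ and the power is trivial; however, minor-preservation forces every $\xi(f)$ to be a constant function in that case (because an arity-$2$ constant arises as a minor of a unary one in two distinct ways), and the desired reflection can then be arranged directly by choosing $h_B$ to match these constants.
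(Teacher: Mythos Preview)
Your proposal is correct and follows essentially the same route as the paper. Both directions match: the forward direction composes the canonical minion homomorphisms attached to power, reflection, and expansion; the backward direction takes the finite power $\clo M^{|A|^{|A'|}}$, uses coordinate projections $A' \to A^{A^{A'}}$ for $h_A$, and defines $h_B$ via $\xi$ restricted to $|A'|$-ary functions. Your organisation is mildly cleaner in one respect --- by defining $h_B$ directly as an extension of $c_\xi$ on $\clo M^{(n)} \subseteq B^{k^n}$ (following Lemma~\ref{lem:adjunction}), you sidestep the explicit well-definedness check that the paper carries out via minor identities. The separate treatment of $|A|=1$ is harmless but unnecessary: your main computation already goes through uniformly, since two $n$-ary functions in $\clo M$ that agree as elements of $B^{k^n}$ are literally the same function.
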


\begin{proof}
Assume that $\clo M \subseteq\clo O(A,B)$ and $\clo M' \subseteq \clo O(A',B')$.
If $\clo M' \in \ERPfin \clo M$ then there is a~natural mapping $\xi\colon \clo M \to \clo M'$ that is clearly a~minion homomorphism. 

For the converse, suppose that $\xi\colon \clo M \to \clo M'$ is a~minion homomorphism.  We will use this mapping to find a~reflection of a~suitable power of $\clo M$ that gives a~subset of $\clo M'$. A~suitable exponent of the power is $N = A^{A'}$. For the reflection, we need to define two mappings, $h_A\colon A'\to A^N$ and $h_B\colon B^N \to B'$. We will choose them in such a~way that the following holds
\begin{equation}
  h_B\bigl(f^{\clo M^N}(h_A(x_1),\dots,h_A(x_n))\bigr) = \xi(f)(x_1,\dots,x_n)
    \label{eq:birkhoff}\tag{$\diamondsuit$}
\end{equation}
for all $f\in \clo M$ and $x_1,\dots,x_n\in A'$.
First, we define $h_A\colon A' \to A^N$ as the long code encoding of an element of $A'$ over the alphabet $A$. More formally, we put $h_A(a') = p_{a'}$ where $p_{a'}\colon A^{A'} \to A$ is the $a'$-th projection (i.e., $p_{a'}(f) = f(a')$).

To define the mapping $h_B\colon B^N \to B'$, we view an element of $B^N$ of the form $f^{\clo M^N}(h_A(a'_1),\dots,h_A(a'_n))$ as a~function from $N =A^{A'}$ to $B$ that is obtained as the~composition of the projection functions $p_{a'_i}\colon A^{A'} \to A$, for $i\in[n]$, with $f\colon A^n \to B$. Note that this function has at most $n$ essential variables because every $p_{a'_i}$ depends only on the coordinate $a'_i$. If $b\in B^N$ can be expressed as $b = f(p_{a'_1},\dots,p_{a'_n})$, then we define 
\(
  h_B(b) = \xi(f)(a'_1,\dots,a'_n),
\)
and for $b$ for which such $f$ and $a'_1,\dots,a'_n$ do not exist we define $h_B$ arbitrarily. 

We need to show that $h_B$ is well-defined, i.e., if $f(a'_1,\dots,a'_n) = g(b'_1,\dots,b'_m)$, then $\xi(f)(a'_1,\dots,a'_n) = \xi(g)(b'_1,\dots,b'_m)$. But this is true since the premise gives a~minor identity between $f$ and $g$ and $\xi$ preserves such identities. It is also easy to see that $h_A$ and $h_B$ satisfy (\ref{eq:birkhoff}). Thus we obtain that $\clo M' \in \ER \clo M^N$ which gives $\clo M' \in \ERPfin \clo M$.
\end{proof}

\begin{remark}
  The finiteness of sets $B$ and $B'$ is not used in the proof above, and therefore the theorem is true as stated as long as both $A$ and $A'$ are finite. Dropping even that assumption, following the same proof, one can show that the statement is true for minions over infinite sets if $\Pfin$ is replaced by $\Pow$.
\end{remark}

Finally, returning to polymorphism minions of PCSP templates, we can relate the above to Theorem~\ref{thm:minor-homomorphism-is-pp-constructibility}.

\begin{corollary} \label{cor:pp-constructible-minor-homomorphism}
  Let $(\rel A,\rel B)$ and $(\rel A',\rel B')$ be two templates and let $\clo M$ and $\clo M'$, respectively, be their polymorphism minions. The following are equivalent.
  \begin{enumerate}
    \item $(\rel A',\rel B')$ is pp-constructible from $(\rel A,\rel B)$ (consequently, $\PCSP(\rel A',\rel B')$ reduces in log-space to $\PCSP(\rel A,\rel B)$).
    \item $\clo M' \in \ERPfin \clo M$.
    \item There exists a~minion homomorphism $\xi\colon \clo M \to \clo M'$.
  \end{enumerate}
\end{corollary}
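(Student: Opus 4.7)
The plan is essentially to compose three results already established earlier in the paper, with only a small amount of bookkeeping required. The equivalence between items (1) and (3) is nothing more than the implication (\ref{it2:h1pp})\textto(\ref{it3:h1pp}) and (\ref{it3:h1pp})\textto(\ref{it2:h1pp}) in Theorem~\ref{thm:minor-homomorphism-is-pp-constructibility}, which is stated for the polymorphism minions of finite templates and therefore applies directly. The parenthetical reduction claim is immediate from Theorem~\ref{thm:main}.

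The equivalence between items (2) and (3) is exactly Theorem~\ref{thm:birkhoff} applied with $\clo M = \Pol(\rel A,\rel B)$ and $\clo M' = \Pol(\rel A',\rel B')$; no finiteness issues arise since both domains are finite. Thus the chain (1)\textto(3)\textto(2)\textto(3)\textto(1) gives all three equivalences in one line.

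As a sanity check (and to expose the correspondence more transparently, which is really the point of this corollary), I would remark that one can also derive (1)\textto(2) and (2)\textto(1) directly from Lemma~\ref{lem:alg-rel-connection}. Unwinding Definition~\ref{def:pp-constructible}, pp-constructibility is the closure of the template under homomorphic relaxations and pp-powers; by items (2) and (3) of the lemma, a homomorphic relaxation of a pp-power on the relational side corresponds precisely to passing to an expansion of a finite power on the algebraic side, so iterating yields the class $\ERPfin \clo M$. (Collapsing arbitrarily long sequences of $\E$, $\Ref$, and $\Pfin$ to the single normal form $\ERPfin$ uses the routine fact that $\Pfin$ commutes with $\Ref$ and $\E$ up to inclusion, the same way it is done for clones in~\cite{BOP18}.)

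There is no real obstacle; all the technical work has been done in Theorems~\ref{thm:minor-homomorphism-is-pp-constructibility} and~\ref{thm:birkhoff}. The only thing to be mildly careful about is that Theorem~\ref{thm:birkhoff} is stated for arbitrary minions on finite sets (not just polymorphism minions of templates), so one must remember that every polymorphism minion of a finite template \emph{is} a minion on finite sets, which makes the instantiation legitimate.
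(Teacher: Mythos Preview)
Your proposal is correct and takes essentially the same approach as the paper: cite Theorem~\ref{thm:minor-homomorphism-is-pp-constructibility} for (1)$\Leftrightarrow$(3) and Theorem~\ref{thm:birkhoff} for (2)$\Leftrightarrow$(3). The paper also offers the alternative (2)$\Rightarrow$(1) route via Lemma~\ref{lem:alg-rel-connection}, but it does so concretely through the auxiliary template $(\rel A_n,\rel B_n)$ rather than by appealing to commutation of $\Pfin$ with $\Ref$ and $\E$; your sanity-check paragraph is a bit loose on this point (Lemma~\ref{lem:alg-rel-connection}(2) is about \emph{strict} relaxations), though it does not affect the main argument.
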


\begin{proof} The equivalence of (2) and (3) is given by the previous theorem. The equivalence of (1) and (3) follows from Theorem~\ref{thm:minor-homomorphism-is-pp-constructibility}. Alternatively we can argue that (1) is equivalent with the rest of the items in the following way: (1)\textto(3) is the easier direction, it follows easily from Lemma~\ref{lem:pp-gives-minor} (this mirrors the proof of Theorem~\ref{thm:minor-homomorphism-is-pp-constructibility}). We prove the other implication by proving that (2) implies (1):
Since $\clo M' \in \ERPfin \clo M$ there is a~minion $\clo M''\in \Pfin \clo M$, say that $\clo M'' = \clo M^n$, such that $\clo M' \in \ER \clo M''$.
Furthermore, this minion is the polymorphism minion of the template $(\rel A_n,\rel B_n)$ introduced in the proof of Lemma~\ref{lem:alg-rel-connection}(2).
Finally, Lemma~\ref{lem:alg-rel-connection}(3) concludes that $(\rel A',\rel B')$ is a~relaxation of a~structure pp-definable from $(\rel A_n,\rel B_n)$, and therefore it is pp-constructible from $(\rel A,\rel B)$. (In fact, this directly shows that $(\rel A',\rel B')$ is a~relaxation of a~pp-power of $(\rel A,\rel B)$.)
\end{proof}
\section{Bipartite minor conditions satisfied in \texorpdfstring{$\Pol(\rel K_k,\rel K_c)$}{Pol(K\_k,K\_c)}} \label{sec:graph_coloring}

The key in our approach to the complexity of PCSPs is the analysis
of bipartite minor conditions satisfied in their polymorphism minions.
While we cannot yet fully resolve the complexity of approximate graph colouring, in this section we provide some analysis and comparison of 
bipartite minor conditions satisfied in $\Pol(\rel K_k,\rel K_c)$ for various $k$ and $c$.
We remark that all hard CSPs satisfy the same (i.e., only trivial)
bipartite minor conditions, but this is not the case for PCSPs, as seen from our results in this section.

The hardness of $\PCSP(\rel K_k,\rel K_{c})$ with $c\le 2k-2$ was proved in \cite{BG16}, essentially by showing that, for such $c$, $\Pol(\rel K_k,\rel K_{c})$ has a minion homomorphism to the minion $\clo P_2$ of projections (though they did not use this terminology). In other words, all bipartite minor conditions satisfied in $\Pol(\rel K_k,\rel K_{c})$ for such $k$ and $c$ are trivial.  It was noted in \cite{BG16} that their approach cannot be extended to $c\ge 2k-1$ because then the polymorphisms satisfy some non-trivial minor conditions (we gave a specific example above, see Example~\ref{ex:cond-K3K5}).  We settled the case $c=2k-1$ by proving that the bipartite minor conditions satisfied in $\Pol(\rel K_k,\rel K_{c})$, though possibly non-trivial, are limited.  Specifically, all such conditions are also satisfied in $\Pol(\rel H_2,\rel H_K)$ for some $K$ because $\Pol(\rel K_k,\rel K_{2k-1})$ does not contain an Olšák function. We now show that this does not hold for $c\ge 2k$.

\begin{proposition} \label{prop:no-olsak-in-d-2d}
  $\Pol(\rel K_k,\rel K_{2k})$ contains an~Olšák function.
\end{proposition}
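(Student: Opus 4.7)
The plan is to exhibit an explicit Olšák polymorphism $o\colon K_k^6 \to K_{2k}$. Let me write $\tup r_i^{(x,y)}$ for the three Olšák tuples $(x,x,y,y,y,x)$, $(x,y,x,y,x,y)$, $(y,x,x,x,y,y)$ and let $S_1 = \{1,2,6\}$, $S_2 = \{1,3,5\}$, $S_3 = \{2,3,4\}$ denote the positions of $x$ in them. My first observation is that whenever $x \neq y$ the triple $(i,x,y)$ is uniquely recoverable from the tuple $\tup r_i^{(x,y)}$: the two-element value set $\{x,y\}$ together with the set of positions carrying one specific value determines $i$, $x$ and $y$. Hence the three-element \emph{Olšák classes} $\{\tup r_1^{(x,y)}, \tup r_2^{(x,y)}, \tup r_3^{(x,y)}\}$ indexed by ordered pairs $(x,y)$ with $x \neq y$ are pairwise disjoint, leaving me free to assign a single colour to each.

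Identifying $K_{2k}$ with $\{0,1,\dots,2k-1\}$ so that $K_k = \{0,\dots,k-1\}$ embeds as the lower half, I will define
\[
    o(\tup a) \;=\; \begin{cases} x + k & \text{if } \tup a = \tup r_i^{(x,y)} \text{ for some } i \in \{1,2,3\} \text{ and some } x \neq y, \\ a_1 & \text{otherwise.} \end{cases}
\]
The Olšák identity $o(\tup r_1^{(x,y)}) = o(\tup r_2^{(x,y)}) = o(\tup r_3^{(x,y)}) = x + k$ then holds by construction, and the degenerate case $x=y$ reduces to the trivial equality $o((x,x,x,x,x,x)) = x$.

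The remaining task is to verify that $o$ is a homomorphism $\rel K_k^6 \to \rel K_{2k}$, i.e., $o(\tup a) \neq o(\tup b)$ whenever $a_\ell \neq b_\ell$ for every $\ell$. If neither $\tup a$ nor $\tup b$ lies in an Olšák class, then $o(\tup a) = a_1 \neq b_1 = o(\tup b)$. If exactly one of them does, then the two outputs lie in the disjoint halves $\{0,\dots,k-1\}$ and $\{k,\dots,2k-1\}$ of $K_{2k}$ and are therefore distinct. The only substantive case is when both are Olšák tuples, say $\tup a = \tup r_i^{(x,y)}$ and $\tup b = \tup r_j^{(x',y')}$, where I need to show $x \neq x'$.

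The hard part will be this last case analysis. The crucial combinatorial input is that $|S_i \cap S_j| = 1$ for $i \neq j$ and that no $S_i$ equals $[6] \setminus S_j$, both immediate by inspection of the three three-element sets. Splitting according to $|\{x,y\} \cap \{x',y'\}|$, I expect the following: if the value sets are disjoint, $x \neq x'$ is automatic; in the bit-complement case $(x',y') = (y,x)$, adjacency forces $S_i = S_j$ and hence $x' = y \neq x$; and in the overlap cases where $\{x,y\} \cap \{x',y'\}$ is a singleton, the requirement that every column of the matrix with rows $\tup a, \tup b$ is an edge of $\rel K_k$ eliminates most pairings (since a shared value at a position in $S_i \cap S_j$, or in $[6] \setminus (S_i \cup S_j)$, would violate adjacency), and for each surviving subcase a direct check yields $x \neq x'$. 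Combined, this exhausts the ways two Olšák tuples can be edge-adjacent and shows $o(\tup a) = x+k \neq x'+k = o(\tup b)$, completing the verification that $o$ is the desired Olšák polymorphism.
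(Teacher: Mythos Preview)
Your proposal is correct and uses a genuinely different Ol\v{s}\'ak function from the paper's. Your $o$ places the Ol\v{s}\'ak tuples in the upper half of $K_{2k}$ (value $x+k$) and everything else in the lower half (value $a_1$). The paper instead defines $o$ to depend only on the first three coordinates: it outputs the repeated value among $x_1,x_2,x_3$ when one exists (landing in $K_k$) and $x_1+k$ otherwise. So the paper's function is a disguised ternary ``majority-with-escape'' operation, whereas yours genuinely depends on all six coordinates. Both verifications are short; the paper's uses that any two of the index pairs $\{1,2\},\{1,3\},\{2,3\}$ intersect, and yours uses the analogous fact about the sets $S_1,S_2,S_3$.

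Two remarks on your case analysis. First, the Ol\v{s}\'ak--Ol\v{s}\'ak case is simpler than you anticipate: since $S_i\cap S_j\neq\emptyset$ for \emph{all} $i,j\in\{1,2,3\}$ (including $i=j$), choosing any $\ell\in S_i\cap S_j$ gives $a_\ell=x$ and $b_\ell=x'$, so adjacency immediately forces $x\neq x'$. No split by $|\{x,y\}\cap\{x',y'\}|$ is needed. Second, your outline under ``overlap two'' lists only the bit-complement case $(x',y')=(y,x)$ and omits $(x',y')=(x,y)$; the fact you recorded that no $S_i$ equals $[6]\setminus S_j$ would dispose of it, but the one-line intersection argument above already covers it and makes both that fact and the subcase split unnecessary.
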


\begin{proof}
Define a~6-ary operation $o$ from $K_k$ to $K_{2k}$ in the following way:
  \[
    o(x_1,\dots,x_6) = \begin{cases}
      x_1 & \text{ if $x_1 = x_2$, or $x_1 = x_3$, }  \\
      x_2 & \text{ if $x_2 = x_3$, and } \\
      x_1 + k& \text{ otherwise. }
    \end{cases}
  \]
Note that $o$ depends essentially only on its first three variables.
It is straightforward to check that $o$ is an Olšák function. To see that it is a~polymorphism suppose that $x_1\neq y_1$, \dots, $x_6 \neq y_6$ while $o(x_1,\dots,x_6) = o(y_1,\dots,y_6) = a$. If $a$ is obtained using the last case of the definition of $o$, then $a = x_1 + k = y_1 + k$, and consequently $x_1 = y_1$, a~contradiction. If $a \in K_k$ then only the first two rows of the definition had to be used to compute the value for both $o(x_1,\dots,x_6)$ and $o(y_1,\dots,y_6)$ but since any two of the sets $\{1,2\}$, $\{1,3\}$, and $\{2,3\}$ intersect, we get that $a = x_i = y_i$ for any $i$ in the intersection, a contradiction again.
\end{proof}

We now show that Theorem~\ref{thm:no-epsilon-robust-is-hard} cannot be directly applied to to prove hardness of $\PCSP(\rel K_k,\rel K_{2k})$.

\begin{proposition}\label{prop:epsilon-robust-k,2k}
  For any $\epsilon > 0$, $\Pol(\rel K_k,\rel K_{2k})$ satisfies some $\epsilon$-robust bipartite minor condition.
\end{proposition}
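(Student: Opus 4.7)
The plan is to produce, for each $\epsilon>0$, a bipartite minor condition that is satisfied in $\clo M:=\Pol(\rel K_k,\rel K_{2k})$ but in which only a vanishing fraction of identities can be satisfied by projections. The sole ingredient from the minion side will be the Olšák polymorphism $o\in\clo M$ supplied by Proposition~\ref{prop:no-olsak-in-d-2d}; everything else is a PCP-style amplification.

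First I would record the baseline fact that for every $3$-uniform hypergraph $\rel I$, the bipartite minor condition $\Sigma(\rel H_2,\rel I)$ constructed in Example~\ref{ex:nae-sat} is satisfied in $\clo M$. Indeed, define the binary function $h(x,y):=o(x,x,y,y,y,x)$ and set $\zeta(g_C)=o$ for every constraint $C$ and $\zeta(f_v)=h$ for every variable $v$; by the Olšák identity $h$ is equal to $o(x,y,x,y,x,y)$ and to $o(y,x,x,x,y,y)$ as well, so every identity of $\Sigma(\rel H_2,\rel I)$ is satisfied. Crucially, this works for \emph{any} 3-uniform hypergraph $\rel I$.

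Next I would analyse projection satisfiability of $\Sigma(\rel H_2,\rel I)$ by a direct per-constraint count. A projection assignment corresponds to a Boolean labelling $c$ of the variables of $\rel I$ together with a choice of an NAE tuple per constraint, and on each constraint $C=(v_1,v_2,v_3)$ the number of trivially satisfied identities equals $3$ if $c|_C$ is NAE and exactly $2$ otherwise (by pigeonhole, no NAE column can agree with a monochromatic target in all three positions). Hence the trivial fraction equals $(2+\delta_c)/3$, where $\delta_c$ is the NAE-SAT value attained by $c$ on $\rel I$. For every $\epsilon>2/3$ it therefore suffices to choose $\rel I$ with max NAE-SAT value strictly below $3\epsilon-2$, and such $\rel I$ exists by applying the PCP theorem to gap NAE-$3$-SAT.

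The main obstacle is pushing this below the intrinsic $2/3$-barrier of $\Sigma(\rel H_2,\cdot)$, needed when $\epsilon\le 2/3$: the constant Boolean assignments always achieve trivial fraction exactly $2/3$, independently of $\rel I$. To break it, I would replicate the strategy underlying the hardness proof of approximate hypergraph colouring in Subsection~\ref{sec:hardness-from-lglc}. Starting from a hard $\GLLC_{\epsilon'}(L,N)$ instance provided by Theorem~\ref{thm:gllc}, one builds an $L$-layered bipartite condition whose layer-local pieces are still Olšák gadgets, so satisfaction in $\clo M$ is witnessed by $o$ applied layer-wise. The weak-density property of $\GLLC$ then forces every projection assignment to be simultaneously ``bad'' on a positive fraction of layer pairs; combining this with the layer-local $(2+\delta)/3$ estimate and letting $L$ grow drives the global trivial fraction below any prescribed $\epsilon>0$. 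The delicate step, which I would expect to be the hardest, is carrying the Olšák witness through the layered construction and showing that the weak-density amplification preserves the layer-local bound quantitatively, paralleling the argument in the proof of Theorem~\ref{thm:no-piece_epsilon-robust-is-hard} but reversed in direction.
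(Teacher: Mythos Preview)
There is a genuine gap in the amplification step, and a numerical slip in the baseline.

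First the slip. Your per-constraint count is correct, so the best projection fraction on $\Sigma(\rel H_2,\rel I)$ equals $(2+\mathrm{opt}(\rel I))/3$, where $\mathrm{opt}(\rel I)$ is the NAE-3-SAT value of $\rel I$. But $\mathrm{opt}(\rel I)\ge 3/4$ for \emph{every} $\rel I$ (a uniformly random Boolean assignment NAE-satisfies each clause with probability $3/4$), so the satisfiable fraction is always at least $11/12$. Thus $\Sigma(\rel H_2,\rel I)$ is never $\epsilon$-robust for $\epsilon\le 11/12$; your claim ``for every $\epsilon>2/3$'' should read ``for every $\epsilon>11/12$'', and the barrier you need to break is $11/12$, not $2/3$.

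More seriously, the layered amplification you sketch cannot break this barrier. Any condition in which every right-hand symbol is $6$-ary and every identity is of Ol\v{s}\'ak shape is, up to relabelling, of the form $\Sigma(\rel H_2,\rel I)$ for some $3$-uniform $\rel I$, and the $11/12$ bound applies regardless of how the variables are shared across layers. To push below $11/12$ you must use symbols of growing arity, and then you need \emph{polymorphisms of growing arity} in $\clo M$ to witness satisfaction. The $6$-ary $o$ alone cannot supply these: you cannot compose $o$ with itself since its codomain is $K_{2k}$, not $K_k$. Theorem~\ref{thm:no-piece_epsilon-robust-is-hard} and $\GLLC$ are devices for showing that a minion \emph{fails} to satisfy robust conditions; ``reversing the direction'' does not manufacture the missing polymorphisms.

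The paper's proof supplies exactly this missing ingredient. For each $n$ it exhibits an explicit $3^n$-ary polymorphism $g_n\in\Pol(\rel K_k,\rel K_{2k})$ built from the iterated ternary majority $m_n$: output $a$ whenever $\{i:x_i=a\}$ is a winning coalition of $m_n$, and output $x_1+k$ otherwise. The accompanying condition $\Sigma_n$ has one identity per \emph{minimal} winning coalition of $m_n$, and $(g_n,p_1)$ satisfies all of them. Robustness is then the elementary combinatorial fact that each fixed coordinate lies in only a $(2/3)^n$-fraction of minimal winning coalitions, so no projection assignment can satisfy more than a $(2/3)^n$-fraction of identities. No PCP, $\GLLC$, or hardness machinery is needed; the key idea you are missing is the direct construction of a family of high-arity polymorphisms, not an amplification argument built on the single Ol\v{s}\'ak operation.
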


\begin{proof}
Let $m$ be the ternary Boolean majority function, i.e., $m(x,y,z)$ outputs the repeated value among $x,y,z$. Further, let $m_1=m$ and, for $n\ge 1$, define 
\[
  m_n(x_1,\ldots,x_{3^n})=m_{n-1}(m(x_1,x_2,x_3), \ldots, m(x_{3^n-2},x_{3^n-1},x_{3^n})).
\]
For example, $m_2(x_1,\ldots,x_9)=m(m(x_1,x_2,x_3),m(x_4,x_5,x_6),m(x_7,x_8,x_9))$.

For a subset $I\subseteq \{1,\ldots,3^n\}$, let $\mathbf{x}_I$ be tuple in $\{0,1\}^{3n}$ such that $x_i=1$ if and only if $i\in I$.  Let $S_n=\{I\subseteq \{1,\ldots,3^n\}\mid m_n(\mathbf{x}_I)=1\}$, and let $MS_n$ be the set of all minimal (under inclusion) sets in $S_n$.  
Let $\Sigma_n$ be the bipartite minor condition which contains, for each $I\in MS_n$, the identity
\[
  f_n(x_0,y_1,\ldots,y_\ell)\equals g_n(x_{\pi_I(1)},\ldots,x_{\pi_I(3^n)})
\]
where 
\begin{itemize}
  \item $y_1,\ldots,y_\ell$ is the list of all variables of the form $x_{(J,i)}$ with $J\in MS_n$ and $1\le i\le 3^n$ such that $i\not\in J$, and
  \item $\pi_I(i)=0$ if $i\in I$ and $\pi_I(i)=(I,i)$ otherwise.
\end{itemize}
Note that the function symbols $f_n$ and $g_n$ are the same in all identities in $\Sigma_n$.

Fix any $\epsilon>0$ and choose $n$ so that $\epsilon>(2/3)^n$. We will show that $\Sigma_n$ is $\epsilon$-robust. Note that each variable of the form $x_{(I,i)}$ appears on the right-hand side of only one identity in $\Sigma_n$ (namely, the one corresponding to $I$).
Assume first that $f_n$ is assigned a projection not on the first variable (i.e., not on $x_0$). Then this projection corresponds to some variable of the form $x_{(I,i)}$ on the left-hand side, and any assignment of a projection to $g_n$ would satisfy at most one identity in $\Sigma_n$ (again, the one corresponding to $I$), which is certainly less than an $\epsilon$-fraction of all identities in $\Sigma_n$.  So assume that $f_n$ is assigned a projection on the first variable $x_0$. It is not hard to check that, for a fixed $1\le i\le 3^n$, the probability that $\pi_I(i)=0$ (i.e., that $i\in I$) for a randomly chosen set $I\in MS_n$ is $(2/3)^n$.
Thus any assignment of a projection to $g_n$ satisfies less than an $\epsilon$-fraction of identities in $\Sigma_n$ and we conclude that $\Sigma_n$ is $\epsilon$-robust.

It remains to show that $\Sigma_n$ is satisfied in $\Pol(\rel K_k,\rel K_{2k})$.
Define the following functions on $\rel K_k$:
\[
  g_n(x_1,\dots,x_{3^n}) = \begin{cases}
    a & \text{ if $\{i\mid x_i=a\}\in S_n$ for some $a$,} \\
    x_1+k & \text{ otherwise. } \\
  \end{cases}
\]
Note that, by the choice of $S_n$, the condition in the first case of the definition of $g_n$ cannot hold for more than one value $a$, so this function is well-defined. Let $f_n(x_0,y_1,\ldots,y_\ell)=x_0$.  It is straightforward to check that $f_n,g_n\in\Pol(\rel K_k,\rel K_{2k})$ and that these functions satisfy all identities in $\Sigma_n$.
\end{proof}

We now show that, for any $k\ge 3$, there is a bipartite minor condition that is satisfied in $\Pol(\rel K_k,\rel K_c)$ for some $c$, but not in $\Pol(\rel K_{k'},\rel K_{c'})$ for any $k<k'\le c'$.  This means that even if one proves \NP-hardness of $\PCSP(\rel K_{k'},\rel K_{c'})$ for some \emph{fixed} $k'>3$ and \emph{all} $c'\ge k'$, Theorem~\ref{thm:main} would not immediately imply the same for any fixed $k$ with $3\le k<k'$. 

\begin{proposition}\label{prop:no-hom-k,c-to-k',c'}
  For each $k\geq 3$, there exists $c\geq k$ such that $\Pol(\rel K_k,\rel K_c)$ has no~minion homomorphism to $\Pol(\rel K_{k'},\rel K_{c'})$ for any $k<k'\le c'$.
\end{proposition}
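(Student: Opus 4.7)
The plan is to invoke Corollary~\ref{cor:pp-constructible-minor-homomorphism}: a minion homomorphism $\Pol(\rel K_k, \rel K_c) \to \Pol(\rel K_{k'}, \rel K_{c'})$ exists if and only if the free structure $\rel F_{k'} := \rel F_{\Pol(\rel K_k, \rel K_c)}(\rel K_{k'})$ admits a homomorphism to $\rel K_{c'}$. Since $\rel F_{k'}$ is a finite graph, ruling out such a homomorphism for \emph{every} $c' \ge k'$ is equivalent to $\rel F_{k'}$ containing a loop. So the proposition reduces to the following: for each $k \ge 3$, choose $c \ge k$ such that $\rel F_{k'}$ contains a loop for every $k' > k$.

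Unfolding the definition of the free structure, a loop in $\rel F_{k'}$ amounts to a pair $(f, g)$ with $f \in \Pol^{(k')}(\rel K_k, \rel K_c)$ and $g \in \Pol^{(k'(k'-1))}(\rel K_k, \rel K_c)$ such that, for all $x_1, \dots, x_{k'} \in K_k$,
\[
  f(x_1, \dots, x_{k'}) \equals g\bigl((x_a)_{(a,b) \in [k']^2,\, a \neq b}\bigr) \equals g\bigl((x_b)_{(a,b) \in [k']^2,\, a \neq b}\bigr).
\]
The strategy is to design $g$ so that, on inputs of the specific ``first-projection-constant'' and ``second-projection-constant'' form appearing in these identities, its value depends only on the symmetrised profile of the input, namely the multiset of unordered pairs $\{w_{(a,b)}, w_{(b,a)}\}$ indexed by unordered $\{a,b\}$. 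Both input tuples in the loop identities yield the same symmetrised profile, so $g$ automatically agrees on them, and $f$ is taken to be the corresponding minor. The key combinatorial observation that makes this compatible with the polymorphism property of $g$ is that $k' > k$ forces, by pigeonhole, some pair $x_a = x_b$ with $a \neq b$; hence the $2 \times k'(k'-1)$ matrix formed by the two input tuples has a column of equal entries and is not a valid $\neq_k$-matrix, so the polymorphism constraint imposes no obstruction to the equality $g(\vec u) = g(\vec v)$.

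The main obstacle is to extend $g$ beyond this restricted subset of inputs to a global polymorphism of $(\rel K_k, \rel K_c)$ with a single value of $c$ depending only on $k$. On inputs outside the restricted subset, $g$ must contain a disambiguator that distinguishes pairs related by the involution $(a,b) \leftrightarrow (b,a)$ (so that the polymorphism property survives whenever all columns actually do lie in $\neq_k$), while leaving the restricted subset involution-invariant. A promising choice is $c = 2k$: by Proposition~\ref{prop:no-olsak-in-d-2d} and Proposition~\ref{prop:epsilon-robust-k,2k}, the minion $\Pol(\rel K_k, \rel K_{2k})$ is rich enough to provide auxiliary polymorphisms such as an Olšák function and witnesses for many $\epsilon$-robust minor conditions, giving the flexibility needed to encode the disambiguator in the extra colours of $K_c$. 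Verification of the polymorphism property then reduces to a case analysis sorting pairs of inputs to $g$ by which columns lie in $\neq_k$, with the pigeonhole fact $k' > k$ handling the cases that intersect the restricted subset.
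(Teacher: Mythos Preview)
Your reduction via the free structure is sound and leads to essentially the same task as the paper's: showing that $\Pol(\rel K_k,\rel K_c)$ satisfies the $\rel K_{k'}$-loop condition for every $k'>k$ (the paper works with the single $\rel K_{k+1}$-loop condition, noting it fails in every $\Pol(\rel K_{k'},\rel K_{c'})$ with $k'>k$; these formulations are equivalent via an easy monotonicity you do not state). You have also correctly isolated the crucial pigeonhole observation---that $k'>k$ forces some $x_a=x_b$, so the two tuples arising in the loop identities never form a valid $\neq_k$-pair---which is exactly the combinatorial crux of the paper's verification.

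The genuine gap is that you never construct $g$. Citing Propositions~\ref{prop:no-olsak-in-d-2d} and~\ref{prop:epsilon-robust-k,2k} as evidence of ``richness'' does not produce the required polymorphism: an Ol\v{s}\'ak function is $6$-ary and the $\epsilon$-robust witnesses have nothing to do with the $\rel K_{k'}$-loop shape, so neither yields a function of arity $k'(k'-1)$ satisfying the two identities. The closing ``reduces to a case analysis'' is a promise, not an argument. Moreover, the choice $c=2k$ is speculative---it is known to work for $k=3$ via a separate, non-trivial construction, but not for general $k$, and establishing it would be strictly stronger than the proposition. The paper sidesteps this entirely by being generous with colours: it sets $t(a_1,\dots,a_{k+1})=(a_1,\dots,a_{k+1})$ and lets $s$ assign a \emph{fresh} colour to every input not forced by the identities, giving $c\le k^{k+1}+k^{k(k+1)}$; the pigeonhole fact then settles the only non-trivial case in checking that $s$ is a proper colouring. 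Your compact ``disambiguator in $2k$ colours'' would require the very case analysis you have left undone, and nothing in your proposal indicates it can succeed for all $k$.
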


\begin{proof}
  A~simple bipartite minor condition that is not satisfied in $\Pol (\rel K_{k'},\rel K_{c'})$ is the $\rel K_{k+1}$-loop condition --- this follows from the fact that $\rel K_{k'}$ contains $\rel K_{k+1}$, but $\rel K_{c'}$ has no loop. The $\rel K_{k+1}$-loop condition is described by the following two identities
  \begin{align*}
    t(x_1,\dots,x_{k+1}) &\equals s( x_1, x_2, \dots, x_k,x_{k+1} )   \\
    t(x_1,\dots,x_{k+1}) &\equals s( x_2, x_1, \dots, x_{k+1},x_{k} ).
  \end{align*}
  The variables on the right-hand side correspond to the edges of $\rel K_{k+1}$ (viewed as a digraph), we will index them accordingly: the first one is $x_{1,2}$, the second $x_{2,1}$, and so on, so the arity of $s$ is $k(k+1)$. We claim that this condition is satisfied in $\Pol(\rel K_k,\rel K_c)$ for some $c\geq k$. To show that, we define functions $t$ and $s$ without regard to the size of their range by picking a~new colour whenever we are not forced by any of the identities to do otherwise. More precisely, we set
  \[
    t(a_1,\dots,a_{k+1}) = (a_1,\dots,a_{k+1}).
  \]
  In other words, if we view $t$ as a colouring of $\rel K_k^{k+1}$ then each vertex gets its own colour. Further, let
  \[
    s(a_{1,2},\dots,a_{k+1,k}) = \begin{cases}
      t(a_1,\dots,a_{k+1}) & \text{ if $a_{i,j} = a_i$ for all $i\neq j$, } \\
      t(a_1,\dots,a_{k+1}) & \text{ if $a_{i,j} = a_j$ for all $i\neq j$, and } \\
      (a_{1,2},\dots,a_{k+1,k}) & \text{ otherwise. } \\
    \end{cases}
  \]
  If we view $s$ as a colouring of $\rel K_k^{k(k+1)}$, then the last case in the definition of $s$ assigns each vertex (satisfying the condition of the case) its own colour, and that colour is not used in $t$.
  Note that $s$ is well defined, because if both $a_{i,j} = a_i$ and $a_{i,j} = a'_j$ for all $i\neq j$, then $a_{i,j} = a_{i',j'}$ for all $i\neq j$ and $i'\neq j'$, and in that case both rows give the same value.  It is straightforward to check that these functions satisfy the $\rel K_{k+1}$-loop condition, since the tuples that require checking are exactly those in the first two rows of the definition of $s$.
  We also claim that both $t$ and $s$ are proper colourings. This is obvious for $t$. For $s$, we have to show that if
  \(
    s(a_{1,2},\dots,a_{k+1,k}) = s(b_{1,2},\dots,b_{k+1,k}),
  \)
  then $a_{i,j} = b_{i,j}$ for some $i\neq j$. Clearly, if the two values of $s$ agree, then the resulting colours must have been obtained using one, or both of the first two rows of definition of $s$. If the same row was used for both, there is nothing to prove as the resulting colour is uniquely determined by the tuples $(a_1,\dots,a_{k+1})$ and $(b_1,\dots,b_{k+1})$, respectively. The remaining case is when both rows were used. Since the situation is symmetric, we may assume that $a_{i,j} = a_i$ and $b_{i,j} = b_j$ for all $i\neq j$. Further since the result is $t(a_1,\dots,a_{k+1}) = t(b_1,\dots,b_{k+1})$ we have that $a_i = b_i$ for all $i$. Finally, since there are only $k$ possible values for $a_i$, we have that $a_i = a_j$ for some $i\neq j$ which implies that $a_{i,j} = a_i = a_j = b_j = b_{i,j}$.
  The total number $c$ of colours used in the range of $t$ and $s$ is bounded by $k^{k+1} + k^{k(k+1)}$.
\end{proof}

The above proposition does not give the best possible value of $c$. In fact, it was shown by Olšák \cite{Ols18} that for $k=3$, a~possible value is $c=6$. We include a~proof for completeness.

\begin{proposition}[\cite{Ols18}]
  $\Pol(\rel K_3,\rel K_6)$ does not map by a~minion homomorphism to $\Pol(\rel K_{k'},\rel K_{c'})$ for any $3<k'\leq c'$.
\end{proposition}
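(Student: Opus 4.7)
The plan is to mimic the strategy of the preceding proposition but with a much more economical construction: I want to exhibit a single bipartite minor condition that is satisfied in $\Pol(\rel K_3, \rel K_6)$ but cannot be satisfied in $\Pol(\rel K_{k'}, \rel K_{c'})$ for any $4 \le k' \le c'$. The natural and in fact optimal choice is the $\rel K_4$-loop condition of Definition~\ref{def:loop-condition}. Once such witnesses $t$ and $s$ are produced inside $\Pol(\rel K_3, \rel K_6)$, their images under any hypothetical minion homomorphism $\xi\colon \Pol(\rel K_3,\rel K_6) \to \Pol(\rel K_{k'},\rel K_{c'})$ would still satisfy the same condition, so it suffices to rule out the $\rel K_4$-loop condition in the target.

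The negative half is immediate and identical in spirit to the argument at the start of the proof of Proposition~\ref{prop:no-hom-k,c-to-k',c'}: for $k'\ge 4$ the graph $\rel K_{k'}$ contains $\rel K_4$ as a subgraph, so substituting four pairwise distinct elements $x_1,\dots,x_4\in K_{k'}$ into the $\rel K_4$-loop identities turns the two twelve-tuples on the right-hand side into adjacent vertices of $\rel K_{k'}^{12}$; any polymorphism $s$ would then need to assign them different values, but both are forced to equal $t(x_1,x_2,x_3,x_4)$, which is impossible since $\rel K_{c'}$ is loopless.

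The heart of the proof is the positive direction, i.e., producing explicit $t\in \Pol^{(4)}(\rel K_3,\rel K_6)$ and $s\in \Pol^{(12)}(\rel K_3,\rel K_6)$ that satisfy
\begin{align*}
  t(x_1,x_2,x_3,x_4) &\equals s(x_1,x_2,x_1,x_3,x_1,x_4,x_2,x_3,x_2,x_4,x_3,x_4), \\
  t(x_1,x_2,x_3,x_4) &\equals s(x_2,x_1,x_3,x_1,x_4,x_1,x_3,x_2,x_4,x_2,x_4,x_3),
\end{align*}
viewed as a~$6$-colouring of $\rel K_3^4$ and a~$6$-colouring of $\rel K_3^{12}$ respectively. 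The key observation motivating the construction is that $6 = |K_3|\cdot(|K_3|-1)$, so it is natural to identify $K_6$ with the set of ordered pairs of distinct elements of $K_3$; moreover, any tuple $(a_1,a_2,a_3,a_4)\in K_3^4$ must, by pigeonhole, have two equal coordinates, providing a canonical way to extract an element of $K_6$. Concretely, I would try a rule such as: $t(a_1,a_2,a_3,a_4) = (a_i, a_j)$, where $(i,j)$ is chosen by a fixed tie-breaking convention among the coordinates witnessing the repetition and their complementary coordinates. Given $t$, the value of $s$ is forced on all twelve-tuples of the two shapes above (by the identity); on all remaining twelve-tuples (those where the first column is not ``collinear'' with the second via a swap of two of the $x_i$'s) I can again exploit pigeonhole on the twelve-tuple to pick a pair, choosing the rule carefully so that it restricts correctly to the forced values.

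The main obstacle will be step 3: verifying that the resulting $t$ and $s$ are genuine polymorphisms, i.e., proper colourings of $\rel K_3^4$ and $\rel K_3^{12}$. For $t$, one must rule out monochromaticity on every edge of $\rel K_3^4$, which by the adjacency relation means: if $(a_1,\dots,a_4)$ and $(b_1,\dots,b_4)$ satisfy $a_i\neq b_i$ for all $i$, then the ordered pairs $t$ assigns to them differ. For $s$, the analogous verification involves twelve coordinates and interacts with the forced values coming from $t$, so the case analysis is considerably more delicate and is where the specific tie-breaking convention must be pinned down. I expect the bulk of the work to be the combinatorial check that no edge in $\rel K_3^{12}$ becomes monochromatic under the chosen rule --- in particular, that no forced value on a ``shape-1'' twelve-tuple collides with the value on an adjacent shape-2 twelve-tuple unless the underlying $4$-tuples are themselves non-adjacent in $\rel K_3^4$. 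Once this is done, the conclusion follows directly from Corollary~\ref{cor:pp-constructible-minor-homomorphism}(3)${}\Rightarrow{}$(false assumption), closing the proposition.
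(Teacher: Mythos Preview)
Your high-level strategy is correct and matches the paper exactly: the distinguishing condition is the $\rel K_4$-loop condition, and the negative half (failure in $\Pol(\rel K_{k'},\rel K_{c'})$ for $k'\ge 4$) goes just as you say.

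Where the proposal diverges is in the construction of the witnesses $t,s$, and here your sketch has a genuine gap. The paper does \emph{not} encode $K_6$ as ordered pairs of distinct elements of $K_3$. Instead it splits $K_6 = K_3 \cup (K_3+3)$ and sets
\[
  t(x_1,x_2,x_3,x_4) = \begin{cases} x_1 & \text{if } x_1=x_2=x_3,\\ x_4+3 & \text{otherwise,}\end{cases}
\]
while $s$ returns $t(x'_1,\dots,x'_4)$ on inputs of either forced shape and otherwise projects to its first coordinate $x_{1,2}$. The point of this asymmetric design is that, once the case is determined, the output depends on a \emph{single fixed coordinate}; coordinate-wise disequality of adjacent tuples then transfers directly, making the polymorphism check for $t$ trivial and reducing the check for $s$ to a short case analysis (the only delicate case being one endpoint of shape~1 and the other of shape~2 with value in $K_3+3$).

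Your ordered-pair idea, by contrast, is underspecified and already hits trouble on constant tuples $(a,a,a,a)$, which supply no pair of distinct values to output. More seriously, natural tie-breaking rules of the kind you describe fail: for instance $(0,0,1,2)$ and $(1,2,0,0)$ are adjacent in $\rel K_3^4$, and a rule that outputs ``(repeated value, first other value)'' assigns both the pair $(0,1)$. Any position-based convention will face similar collisions because adjacent tuples can carry their repetition in different coordinate slots while realising the same pair of values. The paper's fixed-coordinate trick is precisely what sidesteps this; I would recommend adopting it rather than pursuing the pair encoding.
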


\begin{proof}
  As in the case of the previous proposition, the distinguishing bipartite minor condition is the $\rel K_4$-loop condition. Again, it is clear to see that it is not satisfied in $\Pol(\rel K_{k'},\rel K_{c'})$, so it is enough to define two polymorphisms from $\rel K_3$ to $\rel K_6$ that satisfy this condition. We consider:
  \begin{align*}
    t (x_1,x_2,x_3,x_4) &= \begin{cases}
      x_1 & \text{if $x_1 = x_2 = x_3$, and} \\
      x_4 + 3 & \text{otherwise;}
    \end{cases} \\
    s (x_{1,2},x_{2,1},x_{1,3},\dots,x_{4,3}) &= \begin{cases}
      t(x'_1,\dots,x'_4) & \text{if $x_{i,j} = x'_i$ for all $i\neq j$,} \\
      t(x'_1,\dots,x'_4) & \text{if $x_{i,j} = x'_j$ for all $i\neq j$, and} \\
      x_{1,2} & \text{otherwise}
    \end{cases}
  \end{align*}
  (to ease readability we label variables of $s$ by edges of $\rel K_4$).  Clearly, $s$ and $t$ satisfy the required identities. It is straightforward to check that $t$ is a~polymorphism; let us then prove that $s$ is. As before, it is enough to show that if
  \[
    s (x_{1,2},\dots,x_{3,4}) =
    s (y_{1,2},\dots,y_{3,4}).
  \]
  for some $x_{i,j}$'s and $y_{i,j}$'s then there are $i\neq j$ such that $x_{i,j} = y_{i,j}$. Let us assume that the above common value of $s$ is $a$, and consider two cases:
  First, $a\in K_3$. Such value can be obtained for $x$'s if $x_{1,2} = a$, or $x_{i,j} = a$ for all $i\neq j$ and $i \in \{1,2,3\}$, or $x_{i,j} = a$ for all $i\neq j$ and $j\in \{1,2,3\}$. Either way, $x_{1,2} = a$. By the same argument, we get $y_{1,2} = x_{1,2} = a$.
  Second, $a \in K_6\setminus K_3$. This implies that one of the two rows of the definition of $s$ have been used for both values. 
  If the first row have been used for both, we have $x_{4,i}=y_{4,i}=a-3$ for all $i\neq 4$; similarly, if the second row have been used for both, then $x_{i,4}=y_{i,4}=a-3$ for all $i\neq 4$.
  Hence, the first row was used for one value and the second for the other. Using the symmetry, we may assume that the first row was used for the $x$'s and the second row for the $y$'s, that is
  \begin{align*}
    a=x_4 + 3 &= t(x_1,x_2,x_3,x_4)= s( x_1, x_2, x_1, x_3, x_1, x_4, x_2, x_3, x_2, x_4, x_3 , x_4 ) \\
    a=y_4 + 3 &= t(y_1,y_2,y_3,y_4)= s ( y_2, y_1, y_3, y_1, y_4, y_1, y_3, y_2, y_4, y_2, y_4 , y_3 ).
  \end{align*}
Since the second row of the definition of $t$ was used, we know that $y_1,y_2,y_3$ are not all equal, so assume by symmetry that $y_2 \neq y_3$. Note that, we now need to show that $x_i = y_j$ for some $i \neq j$. If it is not the case, then $y_4 = x_4 \neq y_2$ and also $y_4 = x_4 \neq y_3$ which means that $y_2$, $y_3$ and $y_4$ take three different values from $K_3$, and consequently $x_1$ must be equal to one of them, a contradiction.
\end{proof}

Note that, for any fixed $k\ge 3$, the case $c=2k$ is the smallest one for which \NP-hardness of $\PCSP(\rel K_k,\rel K_c)$ is still open, unless $k$ is large enough for Huang's result~\cite{Hua13} to apply.  The smallest (in terms of $k,c$) open case is $\PCSP(\rel K_3,\rel K_6)$, and the above result shows that even if one proves \NP-hardness of $\PCSP(\rel K_{k'},\rel K_{c'})$ for all $4\le k'\le c'$, Theorem~\ref{thm:main} would not immediately imply \NP-hardness of $\PCSP(\rel K_3,\rel K_6)$. 

Finally, we prove is that if we fix $k$ and increase $c$ then the family of bipartite minor conditions satisfied in $\Pol(\rel K_k,\rel K_c)$ also grows.

\begin{proposition}
  For any $c\geq k>2$, there exists $C>c$ such that $\Pol(\rel K_k,\rel K_C)$ has no~minion homomorphism to $\Pol(\rel K_k,\rel K_c)$.
\end{proposition}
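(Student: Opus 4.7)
My plan is to use the equivalence (\ref{it3:h1pp})$\Leftrightarrow$(\ref{it5:h1pp}) of Theorem~\ref{thm:minor-homomorphism-is-pp-constructibility}: a minion homomorphism from $\Pol(\rel K_k,\rel K_C)$ to $\Pol(\rel K_k,\rel K_c)$ exists if and only if the free graph $\rel F_C:=\rel F_{\Pol(\rel K_k,\rel K_C)}(\rel K_k)$ admits a homomorphism to $\rel K_c$, i.e.\ $\chi(\rel F_C)\leq c$. Hence the proposition reduces to finding, for each $c\geq k$, some $C>c$ for which $\chi(\rel F_C)>c$. The trivial bounds $k\leq\chi(\rel F_C)\leq C$ follow from the homomorphism $\rel F_C\to\rel K_C$ induced by the identity minion homomorphism on $\Pol(\rel K_k,\rel K_C)$ (via Lemma~\ref{lem:adjunction}), and from the fact that the $k$-ary projections lie in $\Pol(\rel K_k,\rel K_C)$ and form a $k$-clique in $\rel F_C$ (for $a\neq b$, the projection $p_a^{(k)}$ and $p_b^{(k)}$ are witnessed as connected by taking $g$ to be the projection onto the coordinate corresponding to the edge $(a,b)$ in the chosen enumeration of $E(\rel K_k)$).

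To push $\chi(\rel F_C)$ above $c$ I would proceed by an explicit combinatorial construction, in the spirit of the proofs of Proposition~\ref{prop:no-hom-k,c-to-k',c'} and Lemma~\ref{lem:k,2k-1}. Fix a graph $\rel G$ with $\chi(\rel G)>c$---the simplest choice being $\rel G=\rel K_{c+1}$---and construct a graph homomorphism $\rel G\to\rel F_C$ for sufficiently large $C$. Unfolding the definition of $\rel F_C$, this amounts to producing, for each vertex $v$ of $\rel G$, a $k$-ary polymorphism $f_v\in\Pol(\rel K_k,\rel K_C)$, and for each edge $(u,v)$ of $\rel G$, a polymorphism $g_{u,v}$ of arity $k(k-1)$ whose ``source-indexed'' and ``target-indexed'' minors with respect to a fixed enumeration of $E(\rel K_k)$ are $f_u$ and $f_v$ respectively. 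Equivalently, by Lemma~\ref{lem:sigma-conditions} such data witness that $\Pol(\rel K_k,\rel K_C)$ satisfies the bipartite minor condition $\Sigma(\rel K_k,\rel G)$, while Lemma~\ref{lem:pcsp-to-lc}(2) rules out its satisfaction in $\Pol(\rel K_k,\rel K_c)$, since satisfaction there would force a homomorphism $\rel G\to\rel K_c$ contradicting $\chi(\rel G)>c$.

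The main obstacle is the explicit construction of the polymorphisms $f_v$ and $g_{u,v}$, and the determination of how large $C$ must be. A natural attempt is a ``fresh colour'' technique: define each $f_v$ to agree with a common backbone polymorphism (say a projection) except on a small $v$-specific set of inputs where it takes a fresh colour, and then piece together $g_{u,v}$ by case analysis, using colours dictated by the required minor identities on a small part of the domain and pair-specific fresh colours elsewhere. The required $C$ will then absorb one colour for the backbone plus one per vertex and per edge of $\rel G$, i.e.\ $C$ polynomial in $|V(\rel G)|+|E(\rel G)|$. The delicate step is verifying that each $g_{u,v}$ remains a valid $C$-colouring of $\rel K_k^{k(k-1)}$---that is, that the case-based definition does not force equal colours on adjacent vertices---and this is where the additional freedom afforded by a large codomain should be exploited, essentially mirroring the casework in the proof of Proposition~\ref{prop:no-hom-k,c-to-k',c'} but scaled up from a single $\rel K_{k+1}$-loop condition to the $\Sigma(\rel K_k,\rel G)$-condition for a high-chromatic target.
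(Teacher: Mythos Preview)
Your proposal is correct and follows essentially the same route as the paper: both use the bipartite minor condition $\Sigma(\rel K_k,\rel K_{c+1})$, appeal to Lemma~\ref{lem:pcsp-to-lc}(2) to show it fails in $\Pol(\rel K_k,\rel K_c)$, and construct explicit polymorphisms $f_u,g_{\{u,v\}}$ via a fresh-colour scheme to show it is satisfied in $\Pol(\rel K_k,\rel K_C)$ for large $C$. The only minor divergence is that the paper's fresh-colour construction is more lavish than your sketch suggests---each $f_u$ agrees with the others only on the diagonal and takes a distinct colour $(u;a_1,\dots,a_k)$ on every non-constant tuple, rather than deviating from a common backbone on a ``small'' set---which is what makes the edge-witness verification for $g_{\{u,v\}}$ go through cleanly.
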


\begin{proof}
  To simplify the notation, let $\clo K_n = \Pol(\rel K_k,\rel K_n)$. We need to find a~bipartite minor condition which is not satisfied in $\clo K_c$, but is satisfied in $\clo K_C$ for some $C>c$.
  A~simple condition that is not satisfied in $\clo K_c$ is the $\rel K_k$-loop condition. However, this condition is also not satisfied in $\clo K_C$ for any $C$ either. Nevertheless, Theorem~\ref{thm:minionhom-and-pp}(3)$\leftrightarrow$(1) implies that there is a~condition of the form $\Sigma(\rel K_k,\rel F)$ where $\rel F$ is some graph, that witnesses that there is no~minion homomorphism. We consider the condition $\Sigma(\rel K_k,\rel K_{c+1})$. Recall that $\Sigma(\rel K_k,\rel K_{c+1})$ is constructed as follows: Introduce a $k$-ary symbol $f_u$ for each $u\in [{c+1}]$, and for each $\{u,v\} \subseteq [{c+1}]$, $u < v$ a~$k(k-1)$-ary symbol $g_{\{u,v\}}$ and all identities of the following form:
  \begin{align*}
    f_u(x_1,\dots,x_k) &\equals g_{\{u,v\}}( x_{1}, x_2, \dots, x_{k-1} , x_{k} )   \\
    f_v(x_1,\dots,x_k) &\equals g_{\{u,v\}}( x_{2}, x_1, \dots, x_{k}, x_{k-1} ) 
  \end{align*}
  where the variables on the right-hand side are distributed in such a~way that $x_i$ and $x_j$ appear together in some column for all $i\neq j$.
  First, we claim that this condition is not satisfied in $\clo K_c$. This follows directly from Lemma~\ref{lem:pcsp-to-lc}(2): if it was, then there would be a~homomorphism from $\rel K_{c+1}$ to $\rel K_c$.

We now show that this bipartite minor condition is satisfied in $\clo K_C$ for some $C$. As in the proof of Proposition~\ref{prop:no-hom-k,c-to-k',c'}, we define functions first and determine $C$ later.
Define functions $f_u$ and $g_{\{u,v\}}$ without regard to the size of the range by setting $f_u(a,\dots,a) = a$, $g_{\{u,v\}}(a,\dots,a) = a$, and then picking a~fresh colour whenever we are not forced by any of the identities to do otherwise. More precisely, we set
\[
  f_u(a_1,\dots,a_k) = \begin{cases}
    a & \text{ if $a_1 = \dots = a_k = a$, and } \\
    (u;a_1,\dots,a_k) & \text{ otherwise, } \\
  \end{cases}
\]
and similarly,
\[
  g_{u,v}(a_{1,2},\dots,a_{k,k-1}) = \begin{cases}
    f_u(a_1,\dots,a_k) & \text{ if $a_{i,j} = a_i$ for all $i\neq j$, } \\
    f_v(a_1,\dots,a_k) & \text{ if $a_{i,j} = a_j$ for all $i\neq j$, and } \\
    (u,v;a_{1,2},\dots,a_{k,k-1}) & \text{ otherwise. } \\
  \end{cases}
\]
Note that $g_{u,v}$ is well defined, because if both $a_{i,j} = a_i$ and $a_{i,j} = a'_j$ for all $i\neq j$, then $a_{i,j} = a_{i',j'}$ for all $i\neq j$ and $i'\neq j'$, and in that case both rows give the same value. It is straightforward to check that these functions satisfy the minor condition.
We also claim that both $f_u$ and $g_{u,v}$ are proper colourings, but that is easy to see since for a~fixed $u$ ($\{u,v\}$, respectively) no colour is used twice for a~value of $f_u$ ($g_{\{u,v\}}$, respectively). The total number $C$ of colours used is bounded by $k + (c+1)k^k + (c+1)^2k^{k(k-1)}$.
\end{proof}

We remark that $\Sigma(\rel K_3,\rel K_4)$ is not satisfied in $\Pol(\rel K_3,\rel K_4)$ since it is a~non-trivial condition (follows from Lemma~\ref{lem:pcsp-to-lc}(2)) and $\Pol(\rel K_3,\rel K_4)$ maps to $\clo P_2$ by a~minion homomorphism (see Example~\ref{ex:pol-3,4-maps-to-projections}). So in the previous proof, the obvious lower bound on $C$ cannot be always met.

\section{Conclusion}
  \label{sec:conclusion}
  
This paper deals with the Promise CSP framework, which is a~significant generalisation of the (finite-domain) CSP. The PCSP provides a~nice interplay between the study of approximability and universal-algebraic methods in computational complexity. We presented a~general abstract algebraic theory that captures the complexity of PCSPs with a~fixed template $(\rel A,\rel B)$. The key element in our approach is the bipartite minor conditions satisfied in the polymorphism minion $\Pol(\rel A,\rel B)$ of a~template. We have shown that such conditions determine the complexity $\PCSP(\rel A,\rel B)$. We gave some applications of our general theory, in particular, in approximate graph colouring. 

The complexity landscape of PCSP (beyond CSP) is largely unknown, even in the Boolean case (despite some progress in~\cite{BG18b,FKOS19}), and includes many specific problems of interest. We hope that our theory will provide the basis for a~fruitful research programme of charting this landscape. Below we discuss some of the possible directions within this programme.

Let us first discuss how the complexity classification quest for PCSPs compares with that for CSPs. As we said above, the gist of the algebraic approach is that lack or presence of (high-dimensional) symmetries determines the complexity. For CSPs, there is a~sharp algebraic dichotomy: having only trivial symmetries (i.e., satisfying only those systems of minor identities that are satisfied in polymorphisms of every CSP) leads to \NP-hardness, while  any non-trivial symmetry  implies rather strong symmetry and thus leads to tractability. Moreover, the algorithms for tractable cases are (rather involved) combinations of only two basic algorithms --- one is based on local propagation \cite{BK14} and the other can be seen as a~very general form of Gaussian elimination \cite{IMMVW10}.
It is already clear that the situation is more complicated for PCSPs: there are hard PCSPs with non-trivial (but limited in some sense) symmetries, and tractable cases are more varied \cite{AGH17,BG18,BG18b,DRS05}.  This calls for more advanced methods, and we hope that our paper will provide the basis for such methods.
There is an obvious question whether PCSPs exhibit a~dichotomy as CSPs do, but there is not enough evidence yet to conjecture an answer. More specifically, it is not clear whether there is any PCSP whose polymorphisms are not limited enough (in terms of satisfying systems of minor identities) to give \NP-hardness, but also not strong enough to ensure tractability. Classifications for special cases such as Boolean PCSPs and graph homomorphisms would help to obtain more intuition about the general complexity landscape of PCSPs, but these special cases are currently open. 

The sources of hardness in PCSP appear to be much more varied than in CSP (that has a~unique such source), and much remains to be understood there. What limitations on the bipartite minor conditions satisfied in polymorphisms lead to \NP-hardness?  We gave some general and some specific results in this direction, but it is clear that our general theory needs to be further developed.
Currently, variants of the Gap Label Cover provide the source of hardness, but it is possible that new versions of \GLC{} may need to be used. It is not clear in advance what these versions would be --- their form would be dictated by the analysis of polymorphisms and minor conditions. However, it would be interesting to eventually go even further and avoid dependency on deep approximation results such as the PCP theorem and parallel repetition, which are more appropriate for quantitative approximation concerning the number of (un)satisfied constraints. Instead, one would aim to provide a~self-contained theory that includes a~new type of reduction between PCSPs so that one could bypass the PCP Theorem and parallel repetition altogether. In particular, can one come up with purely algebraic reductions (e.g.\ by extending pp-constructions) that create and amplify the algebraic gap in problems $\PMC_{\clo M}(N)$, which is what the PCP theorem and parallel repetition do for the quantitative gap in Gap Label Cover? 

The analysis of polymorphisms of approximate graph colouring problems (and their relatives) may provide further intuition as to what limitations on minor conditions  can be used for hardness proofs.  
To give another specific problem, it may be interesting to understand for what structures $\rel B$ the problem $\PCSP(\rel T,\rel B)$ is hard.  Here $\rel T$ is the Boolean ``1-in-3'' structure from Example~\ref{example:1in3-nae} and $\rel B$ is a~(not necessarily Boolean) structure with a~single ternary relation. We know that the problem is \NP-hard for $\rel B=\rel T$ and tractable for $\rel B=\rel H_2$. 

What algorithmic techniques are needed to solve tractable PCSPs?  One general approach to proving tractability of $\PCSP(\rel A,\rel B)$ is presented in~\cite{BG18b}. The main idea is to find a~structure $\rel D$ such that $\rel A\rightarrow \rel D\rightarrow \rel B$ and $\CSP(\rel D)$ is tractable.  Such structures are called ``homomorphic sandwiches'' in~\cite{BG18b}. It is clear that an algorithm for $\CSP(\rel D)$ solves $\PCSP(\rel A,\rel B)$. In general, $\rel D$ may have infinite domain (but instances of $\CSP(\rel D)$ are still finite). Indeed, for $\PCSP(\rel T,\rel H_2)$ from Example~\ref{example:1in3-nae} no such finite $\rel D$ exists (Theorem~\ref{thm:infinity-necessary}), but there are several infinite ones \cite{BG18,BG18b,Bar19}, such as $\rel D=(\mathbb{Z}; x+y+z=1)$. 
It would be interesting to find out what other tractable PCSPs (e.g., those considered in\cite{BG18}, \cite{FKOS19}) require infinite-domain CSPs. Another natural question is ``how infinite'' such a template $\rel D$ needs to be --- while the class of all infinite-domain CSPs essentially covers all decision problems~\cite{BG08}, some parts of the theory can be generalised from finite structures to the rich class of so called $\omega$-categorical structures~\cite{Bod08,Pin15}. Can a structure $\rel D$ such that $\rel A \rightarrow \rel D \rightarrow \rel B$ be $\omega$-categorical, say, for $(\rel A, \rel B) = (\rel T,\rel H_2)$?
It would be also very interesting to develop a~general theory of how such a~structure $\rel D$ can be constructed from bipartite minor conditions satisfied in $\Pol(\rel A,\rel B)$ and what properties of such conditions guarantee tractability of $\CSP(\rel D)$.  
Overall, it is another interesting feature of the (finite-domain) PCSP framework that it seems to require research into infinite-domain CSP. 

The polynomial-time algorithms of Bulatov and Zhuk \cite{Bul17,Zhu17} that solve all tractable finite-domain CSPs are quite involved and both use deep structural analysis of finite algebras. Can algorithms for tractable PCSPs that involve infinite-domain CSPs (e.g. such as those in \cite{BG18b}) lead to a new simpler algorithm that solves all tractable CSPs?

Every tractable finite-domain CSP can be solved \cite{Bul17,Zhu17} by a (rather involved)~combination of local consistency checking~\cite{BK14} and the ``few subpowers'' algorithm that handles compact representation of constraints~\cite{IMMVW10}.  The study of local consistency checking algorithms generally played a~very important role in the algebraic theory of CSP. Apart from being one of the two main algorithmic approach, it connected, via homomorphism dualities~\cite{BKL08}, the algebraic theory of CSP with the combinatorial~\cite{HN04} and logical~\cite{KV08} approaches.  As discussed in Section~\ref{sec:tractable}, it makes sense to use local consistency checking for $\PCSP(\rel A,\rel B)$.  Although it is obviously an interesting problem to characterise PCSPs solvable by local consistency, it is not clear yet whether this method still plays a~key role for PCSP or must be superseded by more powerful algorithms in this context.  The study of duality for CSP is intimately connected with investigating CSPs inside $\Ptime$, e.g.\ those in \NL, \Lspace, or $\AC^0$ --- it is natural to extend this line of research to PCSP.  There are many other (both open and resolved) questions about local consistency for CSP that can be investigated for PCSP.

Unlike with the local consistency method, it is not even clear how to transfer the ``few subpowers'' algorithm to the realm of PCSP. It may be that a~different algorithmic approach is required that will handle all that ``few subpowers'' could and that can be used for PCSP.

There are many results in CSP exactly characterising the power of a~given algorithm. We generalised some of such results to PCSP in Section~\ref{sec:tractable}, but this line of research will need to continue, especially as new algorithmic techniques are being developed for PCSP.

Finally, the PCSP framework itself can be extended by incorporating approximation based on counting (un)satisfied constraints, see~\cite{AH13} for an example --- this would give another range of interesting open questions, but now we will not discuss this direction further.

  \subsection*{Acknowledgements}
  We would like to thank Josh Brakensiek, Venkat Guruswami, Alex Kazda, Mirek Olšák, Erkko Lehtonen, Manuel Bodirsky, and Marcello Mamino for valuable discussions. We also thank referees of the conference version for useful suggestions.

  \bibliographystyle{alpha}
\newcommand{\etalchar}[1]{$^{#1}$}

\end{document}